\newtheorem{theorem}{Theorem}[section]
\newtheorem{lemma}[theorem]{Lemma}
\newtheorem{proposition}[theorem]{Proposition}
\newtheorem{definition}{Definition}[section]
\newtheorem{example}{Example}
\newtheorem{remark}{Remark}[section]
\newcommand{\abs}[1]{\left| #1 \right|} % for absolute value
\newcommand{\Abs}[1]{\lVert #1 \rVert} % for Absolute value
\newcommand{\ket}[1]{\left| #1 \right>} % for Dirac bras
\newcommand{\bra}[1]{\left< #1 \right|} % for Dirac kets
\newcommand{\ceil}[1] {\left\lceil #1 \right\rceil}
\newcommand{\tabincell}[2]{\begin{tabular}{@{}#1@{}}#2\end{tabular}}
\newcommand{\footremember}[2]{%
    \footnote{#2}
    \newcounter{#1}
    \setcounter{#1}{\value{footnote}}%
}
\begin{document}
	%\begin{CJK*}{GBK}{song}

    \title{Quantum Random Access Stored-Program Machines}
        \author{
            Qisheng Wang \footremember{alley}{Qisheng Wang is with the Department of Computer Science and Technology, Tsinghua University, China (e-mail: \url{QishengWang1994@gmail.com}).}
            \and Mingsheng Ying \footremember{trailer}{Mingsheng Ying is with the State Key Laboratory of Computer Science, Institute of Software, Chinese Academy of Sciences, China, and also with the Department of Computer Science and Technology, Tsinghua University, China (e-mail: \url{yingms@ios.ac.cn}).}
        }
        \date{}
        \maketitle

    \begin{abstract}
    Random access machines (RAMs) and random access stored-program machines (RASPs) are models of computing that are closer to the architecture of real-world computers than Turing machines (TMs). They are also convenient in complexity analysis of algorithms. The relationships between RAMs, RASPs and TMs are well-studied. However, clear relationships between their quantum counterparts are still missing in the literature.
    We fill in this gap by formally defining the models of quantum random access machines (QRAMs) and quantum random access stored-program machines (QRASPs) and clarifying the relationships between QRAMs, QRASPs and quantum Turing machines (QTMs).
    In particular, we show that $\textbf{P} \subseteq \textbf{EQRAMP} \subseteq \textbf{EQP} \subseteq \textbf{BQP} = \textbf{BQRAMP}$, where \textbf{EQRAMP} and \textbf{BQRAMP} stand for the sets of problems that can be solved by polynomial-time QRAMs with certainty and bounded-error, respectively. At the heart of our proof, is a standardisation of QTM with an extended halting scheme, which is of independent interest.
    \end{abstract}

    \textbf{Keywords: quantum computing, random access machine, stored-program machine, Turing machine.}

    \newpage

    \tableofcontents
    \newpage
    
    \section{Introduction}

\quad \quad \textbf{Models of Quantum Computing}:  Various traditional models of computing have been generalised to the quantum setting as the models of quantum computing (cf. \cite{Nie02}), including quantum Turing machines (QTMs) \cite{Deu85} and quantum circuits \cite{Deu89}. Several novel quantum computing models that have no classical counter-parts have also been proposed, e.g. measurement-based and one-way quantum computing \cite{Rau01, Rau03},  adiabatic quantum computing \cite{Far00}. Furthermore, the relationships between these models have been thoroughly studied \cite{Yao93, Ber97, Aha07}.

 {\vskip 3pt}

\textbf{Quantum Random Access Machines}: Random access machines (RAMs)\footnote{In the quantum computation and information literature, QRAM is also used to refer Quantum Random Access Memory
proposed by Giovannetti, Lloyd, and Maccone (\textit{Phys. Rev. Lett.} 100(2018), 160501).} and random access stored-program machines (RASPs) are another model of computing that is closer to the architecture of real-world computers than Turing machines (TMs).
They are also convenient in complexity analysis of algorithms \cite{Coo73, Aho74}.

The notion of quantum random access machine (QRAM) was first introduced in \cite{Kni96}, as a basis of the studies of quantum programming. Essentially, it is a RAM in the traditional sense with the ability to perform a set of quantum operations on quantum registers, including: (1) state preparation, (2) certain unitary operations, and (3) quantum measurements.

Recently, several quantum computer architectures have been proposed based on the QRAM model with some practical quantum instruction sets, including IBM OpenQSAM \cite{IBM18}, Rigetti's guil \cite{Rob16Quil} and Delft's eQASM \cite{Fu19}.

{\vskip 3pt}

\textbf{Contributions of This Paper}: However, a clear relationship between QRAMs and QTMs is still missing in the literature. The aim of this paper is to fill in this gap.
In \cite{Kni96}, QRAMs were described only in an informal way, and to our best knowledge, QRASPs have not been introduced in the existing literature. For our purpose, we first formally define the models of QRAMs and QRASPs as appropriate generalisation of RAMs and RASPs \cite{Aho74}.

It is worth mentioning that the formal model of QRASPs also provides us with a theoretical foundation of quantum programming (see \cite{Mis11} and Section 8.1 of \cite{Ying16} for a discussion of the significance of such a foundation).
In particular, currently a quantum computer is essentially a quantum processor controlled by a classical computers, and quantum programs are stored in the classical computer (as classical data). In this sense, our QRASPs are an appropriate model of the existing quantum computers.

Based on the formal models of QRAMs and QRASPs, we then clarify the relationships between QTMs, QRAMs and QRASPs. Our main results are:
\iffalse
    \begin{enumerate}
      \item QRAMs and QRASPs can simulate (with no error) each other with slowdown by a constant factor;
      \item QRAMs can simulate QTMs within error $\varepsilon > 0$ with polynomial slowdown;
      \item QTMs can simulate QRAMs (with no error) with polynomial slowdown.
    \end{enumerate}
\fi
\begin{enumerate}
  \item A $T(n)$-time QRAM (resp. QRASP) can be simulated by an $O(T(n))$-time QRASP (resp. QRAM).
  \item A $T(n)$-time QRAM under the logarithmic (resp. constant) cost criterion can be simulated by an $\tilde O(T(n)^4)$-time (resp. $\tilde O(T(n)^8)$-time) QTM.
  \item A $T(n)$-time QTM can be simulated within error $\varepsilon > 0$ by an $O(T(n)^2 \operatorname{polylog}(T(n), 1/\varepsilon))$-time QRAM (under both the logarithmic and constant cost criterions).
\end{enumerate}

In comparison with the classical counterparts \cite{Coo73}, $T(n)$-time RAMs under the logarithmic (resp. constant) criterion can be simulated by $O(T(n)^2)$-time (resp. $O(T(n)^3)$-time) TMs. Conversely, $T(n)$-time TMs can be simulated by $\tilde O(T(n))$-time RAMs.

The above results have some immediate corollaries on computational complexity. We define two complexity classes:\begin{itemize}\item \textbf{EQRAMP} stands for exact quantum random access machine polynomial-time, and \item \textbf{BQRAMP} stands for bounded-error quantum random access machine polynomial-time.\end{itemize}
Then it holds that
    \begin{equation}\label{inclusions}
        \textbf{P} \subseteq \textbf{EQRAMP} \subseteq \textbf{EQP} \subseteq \textbf{BQP} = \textbf{BQRAMP}.
    \end{equation}
Not much has been known in the literature about the relationship between \textbf{EQP} and other complexity classes. Here, an inclusion between \textbf{EQP} and \textbf{EQRAMP} is established. However, we still do not know which inclusion in (\ref{inclusions}) is proper.

{\vskip 3pt}

\textbf{Major Challenge}:
    The main difficulty in comparing the computational power of QTMs, QRAMs and QRASPs comes from the difference between their halting schemes:
    \begin{itemize}\item There have been a bunch of discussions about the halting scheme of QTMs \cite{Mye97, Lin98, Oza98, Shi02, Miy05}, where QTMs are required to \textit{terminate exactly at a fixed time} (depending on the input) with certainty.
    \item On the other hand, it is reasonable to allow QRAMs and QRASPs to terminate at any time with an appropriate probability. \end{itemize}

    We resolve this issue by extending the model of QTMs \cite{Ber97} so that QTMs are allowed to terminate at any time with a non-zero probability. This extension of QTMs makes that QTMs match the halting scheme of QRAMs and QRASPs, and thus it is convenient to simulate QRAMs and QRASPs by QTMs. In order to deal with the extended halting scheme of QTMs, on the other hand, we introduce the notion of \textit{standard} QTM as a stepping stone and give a constructive proof that every well-formed QTM within time $T(n)$ can be simulated by a \textit{standard} QTM within time $O(T(n)\log^2 T(n))$, called a standardisation of QTM. Based on the notion of standard QTM, we are also able to give an alternative definition of complexity classes \textbf{EQP} and \textbf{BQP} in terms of our extended QTMs.

{\vskip 3pt}

\textbf{Organisation of the Paper}: In Section \ref{sec:qtm}, we recall from \cite{Ber97} the definition of QTMs and then introduce the notions of extended QTMs and standard QTMs. In Section \ref{sec:qram-qrasp}, the formal definitions of QRAMs and QRASPs are given. Our main results  are presented in Section \ref{sec:main-results}.

The remaining sections are then devoted to provide all of the details. The computations of QRAMs and QRASPs are carefully described in Sections \ref{sec:def-qram} and \ref{sec:def-qrasp}, respectively. The simulations of QRAMs and QRASPs with each other are described in Section \ref{sec:sim-qram-qrasp}, and the simulations of QRAMs and QTMs with each other are given in Section \ref{sec:sim-qram-qtm}. The standardisation of QTM is given in Section \ref{sec:standard-qtm}.
%A brief conclusion is drawn in Section \ref{sec:conclusion}.
\section{QTMs} \label{sec:qtm}

   The purpose of this section is two-fold. For convenience of the reader, in the first two subsections, we review some basic notions of quantum Turing machines (QTMs). Our exposition is mainly based on \cite{Ber97}. In the last subsection, we define the notion of standard QTM and show that every well-formed QTM can be efficiently simulated by a standard QTM. This result will serve as a step stone in comparing the computational power of QTMs with that of QRAMs and QRASPs.

    \subsection{Single-Track QTMs}

    Let $T: \mathbb{N} \to \mathbb{N}$ be a mapping from natural numbers to themselves. We write $\mathbb{C}(T(n))$ for the set of all $T(n)$-time computable complex numbers, i.e. for every $x \in \mathbb{C}(T(n))$, there is a $T(n)$-time deterministic Turing machine $M$ such that
    $
        \abs{M(1^n) - x} < 2^{-n},
    $
    where $M(1^n)$ denotes the output floating point complex number of $M$ on input $1^n$.
    Let $\tilde{\mathbb{C}}$ be the set of all polynomial-time computable complex numbers, i.e.
    $
        \tilde{\mathbb{C}} = \bigcup_{k=1}^\infty \mathbb{C}(n^k).
    $

\begin{definition}    A Quantum Turing Machine (QTM) is a $5$-tuple $M = (Q, \Sigma, \delta, q_0, q_f)$,
    where:
    \begin{enumerate}
      \item $Q$ is a finite set of states;
      \item $\Sigma$ is a finite alphabet with blank symbol $\#$;
      \item $\delta: Q \times \Sigma \times \Sigma \times Q \times \{L, R\} \to \tilde{\mathbb{C}}$ is the transition function;
      \item $q_0 \in Q$ is the initial state; and
      \item $q_f \in Q$ is the final state ($q_0 \neq q_f$).
    \end{enumerate}\end{definition}

A configuration of the tape is described by a function $\mathcal{T}: \mathbb{Z} \to \Sigma$ such that $\mathcal{T}(m) = \#$ except for finitely many integers $m$. Thus, the symbol at position $m$ on the tape is denoted $\mathcal{T}(m)$. We write $\Sigma^\# \subseteq \Sigma^\mathbb{Z}$ for the set of all possible tape configurations. Moreover, a (computational) configuration of $M$ is a $3$-tuple $c = (q, \mathcal{T}, \xi) \in Q \times \Sigma^\# \times \mathbb{Z}$, where $q$ is the current state, $\mathcal{T}$ is the tape configuration and $\xi$ is the head position. It represents a basis state $\ket c = \ket{q}_{Q} \ket{\mathcal{T}}_{\Sigma^\#} \ket{\xi}_\mathbb{Z} = \ket {q, \mathcal{T}, \xi}$ of the quantum machine. Therefore, the state Hilbert space of $M$ is
$\operatorname{span}\{\ket c\}$, where $c$ ranges over all configurations of $M$.
The time evolution operator $U$ of $M$ is defined by $\delta$ as follows:
    \[
        U \ket{p, \mathcal{T}, \xi} = \sum_{\sigma, q, d} \delta(p, \mathcal{T}(\xi), \sigma, q, d) \ket{q, \mathcal{T}_{\xi}^{\sigma}, \xi+d},
    \]
    where
    \[
    \mathcal{T}_{\xi}^{\sigma}(m) = \begin{cases}
        \mathcal{T}(m) & m \neq \xi \\
        \sigma & \text{otherwise}.
    \end{cases}
    \]
    The relation $\delta(p, \tau, \sigma, q, d) = \alpha$ can be interpreted as follows: if the machine is in state $p$ and the symbol on the tape head is $\tau$, then with amplitude $\alpha$ the machine writes the symbol $\sigma$ on the tape head, changes its state to $q$ and moves to the direction $d$. It is reasonable to require that $M$ is \textit{well-formed} in the sense that its time evolution operator $U$ is unitary, i.e. $U^\dag U = U U^\dag = I$. A deterministic Turing machine (DTM) can be regarded as a QTM with transition function $\delta: Q \times \Sigma \times \Sigma \times Q \times \{L, R\} \to \{0,1\}$ such that for every $p \in Q$ and $\sigma \in \Sigma$, there is a unique triple $(\tau,q,d) \in \Sigma\times Q \times \{L, R\}$ with $\delta(p, \sigma, \tau, q, d) = 1$. Moreover, a reversible Turing machine (RTM) is a well-formed DTM.
    %\cite{Ber97} provides a necessary and sufficient condition for well-formedness of a QTM.
    %\begin{theorem} [\cite{Ber97}] \label{thm-well-formed}
       % $M$ is well-formed if and only if the following conditions hold:
        %\begin{enumerate}
        %  \item (unit length) for every $p \in Q$ and $\tau \in \Sigma$,
         % \[
           % \sum_{\sigma, q, d} \abs{\delta(p, \tau, \sigma, q, d)}^2 = 1.
         % \]
          % \item (orthogonality) for every $p_1, p_2 \in Q$ and $\tau_1, \tau_2 \in \Sigma$ with $(p_1, \tau_1) \neq (p_2, \tau_2)$,
        %  \[
           % \sum_{\sigma, q, d} \delta(p_1, \tau_1, \sigma, q, d)^* \delta(p_2, \tau_2, \sigma, q, d) = 0.
         % \]
          % \item (separability) for every $p_1, p_2 \in Q$ and $\tau_1, \tau_2, \sigma_1, \sigma_2 \in \Sigma$,
       %   \[
       %     \sum_{q} \delta(p_1, \tau_1, \sigma_1, q, L)^* \delta(p_2, \tau_2, \sigma_2, q, R) = 0.
       %   \]
      %  \end{enumerate}
  %  \end{theorem}
 %   It is noted that a deterministic Turing machine (DTM) can be regarded as a QTM satisfying the unit length condition in Theorem \ref{thm-well-formed} with transition function $\delta: Q \times \Sigma \times \Sigma \times Q \times \{L, R\} \to \{0,1\}$. Instead, a reversible Turing machine (RTM) is a well-formed DTM.

    The computation of $M$ begins at time $t = 0$. The initial configuration is prepared to be
        $\ket{c_0} = \ket{q_0, \mathcal{T}_0, 0}.$
    At each step, $M$ performs $U$ on the current configuration $\ket c$ and makes a measurement to determine whether the configuration is in the final state $q_f$. The measurement result is ``yes'' with probability $p = \Abs{P_FU\ket{c}}^2$ and ``no'' with probability $1-p$, where $P_F = \ket{q_f}_Q \bra{q_f}$. \begin{itemize}\item If the result is ``yes'', then $M$ halts with configuration $\frac 1 {\sqrt{p}} P_FU\ket{c}$; \item If the result is ``no'', then $M$ continues running with configuration $\frac 1 {\sqrt{1-p}} P_F^\perp U \ket{c}$, where $P_F^\perp = I-P_F$.\end{itemize}The probability that $M$ halts on $\ket{c_0}$ exactly at time $t$ is
    $
        p(t) = \Abs{\ket{c_t}}^2,
    $
    where $\ket{c_t} = P_F U (P_F^\perp U)^{t-1} \ket{c_0}$. $M$ is said to halt on $\ket{c_0}$ within time $T$ if
        $\sum_{t=1}^T p(t) = 1;$ in this case, the configuration after $M$ halts is a mixed state
 $\rho = \sum_{t=1}^T \ket{c_t} \bra{c_t}.$
    Especially, $M$ is said to halt on $\ket{c_0}$ exactly at time $T$, denoted $\ket{c_0} \xrightarrow[T]{M} \ket{c_T}$, if $p(T) = 1$.

    \begin{remark} It should be pointed out that the above halting scheme of QTMs is different from the traditional ones (e.g. \cite{Ber97, Oza98, Nis02}). Traditional QTMs are supposed to be either in a final state or in a non-final state (see Definition 3.11 of \cite{Ber97}, for example), and therefore always halt exactly at some time, which results in that the final configuration is always a pure quantum state. In our halting scheme, this restriction is removed and QTMs are allowed to be in a superposition of final and non-final states. Therefore, our halting scheme allows QTMs to halt at any time with a non-zero probability, and as a result, the final configuration is an ensemble of pure configurations, which is a mixed quantum state.
    \end{remark}

    \subsection{Multi-Track QTMs}

    The alphabet $\Sigma$ of a $k$-track QTM is regarded as the Cartesian product $\Sigma = \Sigma_1 \times \Sigma_2 \times \dots \times \Sigma_k$. In particular, let $\#_i$ be the blank symbol in $\Sigma_i$ for $1 \leq i \leq k$. For convenience, for every $x \in \Sigma^*$, we write $x$ to indicate the tape $\mathcal{X}$:
    \[
        \mathcal{X}(m) = \begin{cases}
            x(m) & 0 \leq m < \abs{x}, \\
            \# & \text{otherwise},
        \end{cases}
    \]
    where $x(m)$ is the $m$-th symbol of $x$. The joint of $k$ tapes $\mathcal{X}_1, \mathcal{X}_2, \dots, \mathcal{X}_k$ is the tape of $k$ tracks:
    \[
        (\mathcal{X}_1;\mathcal{X}_2;\dots;\mathcal{X}_k)(m) = (\mathcal{X}_1(m), \mathcal{X}_2(m), \dots, \mathcal{X}_k(m))
    \]
    for every $m \in \mathbb{Z}$. On input $x \in \{0, 1\}^*$, we put $x$ on the first track and leave the other tracks empty; that is, the initial tape $\mathcal{T}_x = x;\epsilon;\dots;\epsilon$, where $\epsilon$ denotes the empty string.
    Let $T: \mathbb{N} \to \mathbb{N}$. A $k$-track QTM $M$ is said to be within time $T(n)$, if for every $x \in \{0, 1\}^*$, $M$ halts on $\ket{q_0, \mathcal{T}_x, 0}$ within time $T(\abs{x})$, where $\abs{x}$ is the length of $x$. Especially, $M$ is said to be with exact time $T(n)$, if for every $x \in \{0, 1\}^*$, $M$ halts on $\ket{q_0, \mathcal{T}_x, 0}$ exactly at some time $\tau_x \leq T(\abs{x})$, where $\tau_x$ depends on $x$.

    Let $M$ be a QTM within time $T(n)$, and $\rho_x$ the configuration after $M$ halts on input $x$. The tape contents of $\rho_x$ are obtained by performing a measurement on each position $-T(\abs{x}) \leq m \leq T(\abs{x})$ (the head position $\xi$ never goes beyond $\abs{\xi} > T(\abs{x})$). We define function $\mathcal{T}_M: \{0, 1\}^* \times \Sigma^\# \to [0, 1]$ by letting for every $x \in \{0,1\}^*$, $\mathcal{T}_M(x, \mathcal{Z})$ be the probability that on input $x$, $M$ halts with tape $\mathcal{Z}$ after measurement. Formally, $\mathcal{T}_M(x, \mathcal{Z}) = \operatorname{tr}\left(M_{\mathcal{Z}} \rho_x M_{\mathcal{Z}}^\dag\right)$, where $M_{\mathcal{Z}} = \ket{\mathcal{Z}}_{\Sigma^\#}\bra{\mathcal{Z}}$. We note that if $\mathcal{Z}(m) \neq \#$ for some position $m$ with $\abs{m} > T(\abs{x})$, then $\mathcal{T}_M(x, \mathcal{Z}) = 0$.

    \iffalse
    Formally, the measurement $M_m = \{ M_m(\sigma): \sigma \in \Sigma \}$ on position $m$ is defined by
    \[
        M_m(\sigma) = \sum_{\mathcal{T}: \mathcal{T}(m) = \sigma} \ket{\mathcal{T}}_{\Sigma^\#} \bra{\mathcal{T}}.
    \]
    Let $z_m$ be the result of the measurement $M_m$. Then the probability that $z_m = \sigma$ is
    \[
        \Pr[z_m = \sigma] = \operatorname{tr}(M_m(\sigma) \rho_x M_m(\sigma)^\dag).
    \]
    For convenience, we write $\mathcal{Z} \in \Sigma^\#$ to denote the resulting tape, i.e.
    \[
        \mathcal{Z}(m) = \begin{cases}
            z_m & -T(\abs{x}) \leq m \leq T(\abs{x}), \\
            \# & \text{otherwise}.
        \end{cases}
    \]
    We define function $\mathcal{T}_M: \{0, 1\}^* \times \Sigma^\# \to [0, 1]$ by letting for every $x \in \{0,1\}^*$, $\mathcal{T}_M(x, \mathcal{Z})$ be the probability that on input $x$, $M$ halts with tape $\mathcal{Z}$ after measurement. {\color{blue}Not clear here!!!!}
    \fi

    We design the output track to be the second track. Suppose $\mathcal{Z}$ is the tape after measurement. For $-T(\abs{x}) \leq m \leq T(\abs{x})$, let $y_m \in \Sigma_2$ be the symbol of $\mathcal{Z}(m)$ in the second track.
    Then the output $y$ is defined to be the concatenation of $y_m$'s for $-T(\abs{x}) \leq m \leq T(\abs{x})$ after ignoring blank symbols $\#_2$. We write $y = \text{extract}(\mathcal{Z})$ if the contents of tape $\mathcal{Z}$ implies output $y$. We assume that $y$ consists of only symbols $0$ and $1$, i.e. $y \in \{0, 1\}^*$. In this way, QTM $M$ defines a function $M: \{0, 1\}^* \times \{0, 1\}^* \to [0, 1]$ so that $M$ on input $x$ outputs $y$ with probability $M(x, y)$, i.e.
    \[
        M(x, y) = \sum_{\text{extract}(\mathcal{Z}) = y} \mathcal{T}_M(x, \mathcal{Z}).
    \]
    In case $M(x, y) = 1$ for some $x, y \in \{0,1\}^*$, we may write $M(x) = y$, indicating that $M$ on input $x$ outputs $y$ with certainty.

    \subsection{Standard QTMs}

    Note that the above definition of QTM is different from that in \cite{Ber97}, where QTMs are prevented from reaching a superposition in which some configurations are in state $q_f$ but others are not, and therefore intermediate measurements on the state of the QTM (i.e. to see whether the state is in $q_f$ or not) will not modify the configurations during the computation. However, QTMs defined above are allowed to reach a superposition of the final state and the other states. For our purpose, it is natural to assume that each time the configuration is measured (and therefore changes), the configuration will collapse to one of the configurations either in $q_f$ or not with probability according to the amplitudes.
In this subsection, we establish a connection between these two kinds of QTMs. Let us first introduce several terminologies.

    \begin{definition}[Stationary QTMs]
        A QTM $M$ is said to be stationary, if it halts on every $x \in \{0,1\}^*$ and
        \[
            \operatorname{tr}(P_0\rho_x) = 1,
        \]
        where $P_0 = \ket 0_\mathbb{Z} \bra 0$, and $\rho_x$ is the configuration after $M$ halts on $x$.
    \end{definition}

 %   {\color{red} COMMENTS: In the original and related papers of QTMs, they often call it \textit{normal form} QTM rather than \textit{normal} QTM.}
    \begin{definition} [Normal Form QTMs]
        A QTM $M$ is said to be in normal form, if for every $\tau, \sigma \in \Sigma$, $q \in Q$ and $d \in \{L,R\}$,
        \[
        \delta(q_f, \tau, \sigma, q, d) = \begin{cases}
            1 & (\sigma, q, d) = (\tau, q_0, R), \\
            0 & \text{otherwise}.
        \end{cases}
        \]
    \end{definition}

    \begin{definition} [Unidirectional QTMs]
        A QTM $M$ is said to be unidirectional, if for every $q \in Q$, there is a direction $d_q \in \{L, R\}$ such that
        \[
            \delta(p, \tau, \sigma, q, \bar d_q) = 0
        \]
        for every $p \in Q$ and $\tau, \sigma \in \Sigma$, where $\bar d$ denotes the reverse direction of $d$.
    \end{definition}

    Intuitively, a stationary QTM always halts with tape head at position $0$, i.e. the starting position. In a normal form QTM, the transitions from $q_f$ are technically specified for convenience, since every QTM always halts before any transition out of $q_f$. In a unidirectional QTM, any state can be entered from only one direction.

    Let $M$ be a normal form QTM and $\ket{c_0} = \ket{q_0, \mathcal{T}_0, \xi_0}$ its initial configuration. If the configuration $\ket{c}$ becomes $\ket{c'}$ after $t$ steps, i.e. $U (P_F^\perp U)^{t-1} \ket{c} = \ket{c'}$, we write
    $
        \ket{c} \xrightarrow[t]{M} \ket{c'}.
    $
    If $M$ halts on $\ket{c_0}$ exactly at time $T$ with the final state $\ket{c_f} = \ket{q_f, \mathcal{T}_f, \xi_f}$, then we write
    $
        \ket{\mathcal{T}_0, \xi_0} \xrightarrow[T]{M} \ket{\mathcal{T}_f, \xi_f}.
    $
    If $M$ is stationary (and thus) $\xi_0 = \xi_f = 0$, we simply write
    $
        \ket{\mathcal{T}_0} \xrightarrow[T]{M} \ket{\mathcal{T}_f}.
    $
    Moreover, if both $\ket{\mathcal{T}_0}$ and $\ket{\mathcal{T}_f}$ are in the computational basis, we often write
    $
        \mathcal{T}_0 \xrightarrow[T]{M} \mathcal{T}_f.
    $

    \begin{definition} [Standard QTM]
        A QTM $M$ is standard, if it is well-formed, normal form, stationary and unidirectional and there is a function $T: \mathbb{N} \to \mathbb{N}$ such that for every $x \in \{0,1\}^*$, $M$ on input $x$ halts exactly at time $T(\abs{x})$.
    \end{definition}

    For comparing different QTM models, we need the notion of time constructible function.

    \begin{definition} [Time Constructible Functions] \label{def-time-qtm}
        Let $T: \mathbb{N} \to \mathbb{N}$ and $T(n) \geq n$ for every $n \in \mathbb{N}$. $T(n)$ is said to be time constructible, if there is a standard QTM $M$ with exact time $O(T(n))$ such that for every $x \in \{0,1\}^*$,
        \begin{equation}\label{time-cons}
            x;\epsilon \xrightarrow[O(T(\abs{x}))]{M} x;T(\abs{x}).
        \end{equation}
    \end{definition}

    Note that in (\ref{time-cons}), a natural number $n \in \mathbb{N}$ written on a tape or track indicates a binary string $a = a_0a_1\dots a_{k-1} \in \{0,1\}^k$ such that $k$ is the smallest positive integer that $2^k > n$ and $n = \sum_{i=0}^{k-1} 2^{k-i-1} a_{i}.$

    Now we are able to show our first result that every well-formed QTM can be efficiently simulated by a standard QTM.  For any QTM $M$, we write $\mathbb{C}(M)$ for the set of its transition coefficients of a QTM $M$; that is,
    \[
      \mathbb{C}(M)   = \{ \delta(p, \sigma, \tau, q, d): p, q \in Q, \sigma, \tau \in \Sigma, d \in \{L, R\} \}.
    \]

    \begin{theorem}[Standardisation] \label{thm-qtm}
        Let $T: \mathbb{N} \to \mathbb{N}$ be a function time constructible by QTM. For every well-formed and normal form QTM $M$ within time $T(n)$, there is a standard QTM $M'$ with exact time $O(T(n)\log^2 T(n))$ such that
        \begin{enumerate}
          \item $M(x, y) = M'(x, y)$ for every $x, y \in \{0,1\}^*$.
          \item $\mathbb{C}(M') \subseteq \mathbb{C}(M) \cup \{0, 1\}$.
        \end{enumerate}
    \end{theorem}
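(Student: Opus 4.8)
The plan is to convert the per-step halting measurements of $M$ into a single deterministic halt by \emph{deferring} them onto an ancillary register. Concretely, I would enlarge $M$ with three auxiliary tracks: a \emph{clock} counting elapsed steps, a \emph{done-flag}, and a \emph{timestamp} register. The target is a measurement-free, purely unitary evolution whose terminal computational-basis state is, up to normalisation, $\sum_{t} \ket{c_t}\otimes\ket{t}_{\mathrm{stamp}}$, where $\ket{c_t} = P_F U (P_F^\perp U)^{t-1}\ket{c_0}$. Since distinct stamp values $\ket{t}$ are mutually orthogonal, tracing them out yields exactly the mixed state $\rho = \sum_t \ket{c_t}\bra{c_t}$, so a final measurement of the output track reproduces $M(x,y)$ for every $y$. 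This is the principle of deferred measurement, specialised to the repeated $P_F$-versus-$P_F^\perp$ measurement that defines $M$'s halting.

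First I would use the time-constructibility of $T$ (Definition \ref{def-time-qtm}) to write $T(\abs{x})$ onto a track in time $O(T(\abs{x}))$, fixing a known global deadline. I would then run a monitored simulation of $U$ for $T(\abs{x})$ super-steps: while the done-flag is $0$, apply $U$ once and then copy the indicator $[q = q_f]$ of the current state into the flag; on the first passage into $q_f$ I copy the current tape (the prospective output) to a protected track, stamp the current clock value into the timestamp register, and route that branch into a \emph{waiting} state distinct from $M'$'s own final state; once the done-flag is $1$ the branch merely idles while the clock advances. A short induction shows that the newly-flagged component at super-step $t$ is exactly $\ket{c_t}$ while the still-running component is $(P_F^\perp U)^{t}\ket{c_0}$, so because $\sum_{t=1}^{T(\abs{x})} p(t)=1$ all amplitude carries the done-flag after $T(\abs{x})$ super-steps, and branches halting at distinct times never re-interfere.

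To upgrade this to a \emph{standard} QTM I would then (i) synchronise: since every branch now sits in a waiting state, one final block of transitions brings them simultaneously into $M'$'s final state at a single deadline $T'(\abs{x}) = O(T(\abs{x})\log^2 T(\abs{x}))$; (ii) restore stationarity by a Bennett-style reverse pass that returns the head to position $0$ and uncomputes the clock and scratch while leaving the protected output untouched; and (iii) apply the normal-form and unidirectionalisation constructions of \cite{Ber97}, which only relabel states and split transitions by entry direction. For the coefficient bound, I would observe that every transition I add --- clock increments, the flag test, timestamp copies, idling, head return, unidirectional relabelling --- is a reversible classical operation, hence carries amplitude $0$ or $1$; the only amplitude-bearing transitions are inherited verbatim from $U$, whence $\mathbb{C}(M')\subseteq\mathbb{C}(M)\cup\{0,1\}$. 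The polylogarithmic slowdown is entirely bookkeeping: each super-step manipulates $O(\log T)$-bit counters, and the reversible clock and timestamp management accounts for the $\log^2 T(\abs{x})$ factor.

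The hard part will be making the deferral faithful while respecting all four structural constraints at once. I must guarantee that (a) once a branch is flagged done its protected output is never disturbed and it cannot re-enter the running subspace; (b) the idle-and-wait behaviour is genuinely reversible and unidirectional, so that well-formedness is preserved; and (c) the stamping decoheres branches exactly as the repeated $P_F$-measurements do, producing $\rho=\sum_t\ket{c_t}\bra{c_t}$ with no spurious cross terms and no loss of amplitude. Reconciling this ``freeze, wait, synchronise'' scheme with well-formedness, unidirectionality, and a single deterministic halt time --- and doing so with only $O(\log^2 T)$ counter overhead rather than the naive $O(T)$ cost of returning the head from position up to $T$ --- is where the real work lies.
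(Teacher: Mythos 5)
Your outline coincides with the paper's proof in all of its load-bearing ideas: use time-constructibility to write the deadline $T(\abs{x})$ onto a track, simulate $M$ one step at a time against a running clock, record the halting time of each branch so that branches halting at different times stay orthogonal (the paper does this by printing a run of $@$ symbols on a dedicated track, which is exactly your timestamp), keep halted branches idling in a non-final state until the deadline, and observe that every added transition is a reversible classical move with amplitude $0$ or $1$. So this is the right strategy. But two of the places where you defer the work are precisely where the construction is delicate, and one of your concrete suggestions would not survive as stated.

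First, the counters. A clock written on an auxiliary track near position $0$ is useless once the simulated head has wandered a distance $\Theta(T)$ away; naively consulting it costs $O(T)$ per super-step and blows the budget to $O(T^2)$. The paper's fix is to make the counter tracks \emph{travel with the head}: after each simulated transition the second, third and fourth tracks are shifted by one cell in the direction of the move (Lemma \ref{prop-shift}), so that the increment and the equality test against $T$ (Lemmas \ref{prop-inc} and \ref{prop-eq}) are always local, costing $\Delta(\ell)=O(\log^2 T)$ per step. Without this your claimed per-step cost is unjustified. Second, stationarity. A Bennett-style reverse pass is the wrong tool here: running the history backwards would undo the simulation of $U$ itself, and any uncomputation that erases the timestamp would allow branches that halted at different times but reached the same tape contents to re-interfere, destroying the identity $\rho=\sum_t\ket{c_t}\bra{c_t}$ and hence $M'(x,y)=M(x,y)$. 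Moreover the quantity you must equalise across branches is not just ``head back at $0$'' but the \emph{exact global halting time}, while the head position $\xi_1$ at the end of the idling phase is branch-dependent. The paper resolves this not by uncomputing but by padding: every branch keeps moving right until a fixed time $T_1=4T$, computes a branch-dependent turnaround time $T_2=(T_1+T_3-\xi_1)/2$ with $T_3=10T$, moves right until $T_2$ and then left until $T_3$, landing at position $0$ at the \emph{same} time $T_3$ on every branch, with each padding step still printing a marker to keep the evolution injective. You correctly identify this synchronisation as ``where the real work lies''; the point is that the mechanism is padding to a computed turnaround, not reversal, and the timestamp must be kept, not cleaned up.
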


   The proof of Theorem \ref{thm-qtm} is given in Section \ref{sec:standard-qtm}.

    \iffalse
    The restrictions on $M'$ to be normal form and stationary are needed in order to compare complexity classes in Section \ref{sec:qtm-bqp}.
    \fi

    \section{QRAMs and QRASPs} \label{sec:qram-qrasp}

    It is a common sense in the existing literature that a practical quantum computer (of the first generation at least) consists of a classical computer with access to quantum registers, where the classical part performs classical computations and controls the evolution of quantum registers, and the quantum part can be initialised in certain states (e.g. basis state $\ket0$), perform elementary unitary operations (e.g. Hadamard, $\pi/8$ and CNOT gates), and be measured with the outcomes sent to the classical machine. The models of QRAMs and QRASPs are defined based on this intuition.

    \subsection{QRAMs}

Formally, a quantum random access machine (QRAM) is a program $P$, i.e. a finite sequence of QRAM instructions operating on an infinite sequence of both classical and quantum registers. Each classical register holds an arbitrary integer (positive, negative, or zero), while each quantum register holds a qubit (in state $\ket 0$, $\ket 1$ or their superposition). The contents of the $i$-th ($i \geq 0$) classical (resp. quantum) register is denoted by $X_i$ (resp. $Q_i$).

Associated with the machine is a cost function $l(n)$, which denotes the memory required to store, or the time required to load the number $n$. Two forms of $l(n)$ commonly used in studying classical RAMs are:
    \begin{enumerate}
      \item $l(n)$ is a constant, i.e. $l(n) = O(1)$; and
      \item $l(n)$ is logarithmic, i.e. $l(n) = O(\log \abs{n})$, where $\abs{n}$ is the absolute value of $n$.
    \end{enumerate}
    \iffalse
    For the result stated here, if $l(n)$ is not given explicitly, $l(n)$ is only assumed to be
    \begin{enumerate}
      \item Positive: for every $n \in \mathbb{N}$, $l(n) > 0$;
      \item Nondecreasing on the positive integers: for every $n, m \in \mathbb{N}^+$, if $n > m$, then $l(n) \geq l(m)$;
      \item Symmetric: for every $n \in \mathbb{N}$, $l(n) = l(-n)$; and
      \item Subadditive: for every $n, m \in \mathbb{N}$, $l(n+m) \leq l(n)+l(m)$.
    \end{enumerate}
    \fi

\begin{definition} The instructions for QRAM and their execution times are given in Table \ref{tab1}.
    \begin{table}[!htp]
        \centering
        \caption{QRAM instructions}
        \begin{tabular}{lll}
        \hline
        Type & Instruction & Execution time \\
        \hline
        Classical & $X_i \gets C$, $C$ any integer & $1$ \\
        Classical & $X_i \gets X_j+X_k$ & $l(X_j)+l(X_k)$ \\
        Classical & $X_i \gets X_j-X_k$ & $l(X_j)+l(X_k)$ \\
        Classical & $X_i \gets X_{X_j}$ & $l(X_j)+l(X_{X_j})$ \\
        Classical & $X_{X_i} \gets X_j$ & $l(X_i)+l(X_j)$ \\
        Classical & TRA $m$ if $X_j > 0$ & $l(X_j)$ \\
        Classical & READ $X_i$ & $l(\text{input})$ \\
        Classical & WRITE $X_i$ & $l(X_i)$ \\
        \hline
        Quantum & $\text{CNOT}[Q_{X_i}, Q_{X_j}]$ & $l(X_i)+l(X_j)$ \\
        Quantum & $H[Q_{X_i}]$ & $l(X_i)$ \\
        Quantum & $T[Q_{X_i}]$ & $l(X_i)$ \\
        \hline
        Measurement & $X_i \gets M[Q_{X_j}]$ & $l(X_j)$ \\
        \hline
        \end{tabular}
        \label{tab1}
    \end{table}\end{definition}

The QRAM instructions in Table \ref{tab1} are divided into two types. The classical-type instructions  are the same as those adopted in \cite{Coo73}. Here, $i, j, k$ are any nonnegative integers, and $m$ are integers between $0$ and $L$ (inclusive), where $L$ is the length of the QRAM program and also denotes termination (see Section \ref{sec:qram-operation} for details about $m$). The effect of most of the instructions are obvious. For example, $X_i \gets C$ causes $X_i$ to hold $C$, while $X_i \gets X_j \pm X_k$ causes $X_i$ to hold the calculation result of $X_j \pm X_k$. The instruction TRA $m$ if $X_j > 0$ causes the $m$-th instruction to be the next instruction to execute if $X_j > 0$. READ $X_i$ causes $X_i$ to hold the next input number on the input tape, while WRITE $X_i$ causes $X_i$ to be printed on the output tape. The indirect instruction $X_i \gets X_{X_j}$ causes $X_i$ to hold $X_{X_j}$, provided $X_j \geq 0$, while $X_{X_i} \gets X_j$ causes $X_{X_i}$ to hold $X_j$, provided $X_i \geq 0$. The indirect instructions allow a fixed program to access unbounded registers. It should be noted that classical registers are needed as classical address to indirectly access quantum registers.

 The quantum-type instructions include quantum gates and measurements. For simplicity of presentation, we choose to use a minimal but universal set of quantum gates: CNOT (the Controlled-NOT gate), $H$ (the Hadamard gate) and $T$ (the $\pi/8$ gate). Indeed, any finite universal set of quantum gates is acceptable. The measurement instruction $X_i \gets M[Q_{X_j}]$ is a bridge between classical and quantum registers, which causes $X_i$ to hold the measurement result of $Q_{X_j}$ in the computational basis, provided $X_j \geq 0$.

    \subsection{QRASPs}

    A quantum random access stored-program machine (QRASP) is a program $P$, i.e. a finite sequence of QRASP instructions operating on infinite sequences of both classical and quantum registers.

    \begin{definition} The instructions for QRASP are given in Table \ref{tab2}.
    \begin{table}[!htp]
        \centering
        \caption{QRASP instructions}
        \begin{tabular}{cclcll}
        \hline
        Type & Operation & Mnemonic & code & Description & Execution time \\
        \hline
        \specialrule{0em}{1.5pt}{1.5pt}
        Classical & load constant & LOD, $j$ & 1 & \tabincell{l}{$\text{AC} \gets j$; \\ $\text{IC} \gets \text{IC}+2$ } & $l(\text{IC})+l(j)$ \\
        \specialrule{0em}{3pt}{3pt}
        Classical & add & ADD, $j$ & 2 & \tabincell{l}{$\text{AC} \gets \text{AC}+X_j$; \\ $\text{IC} \gets \text{IC}+2$ } & \tabincell{l}{$l(\text{IC})+l(j)+$\\$l(\text{AC})+l(X_j)$} \\
        \specialrule{0em}{3pt}{3pt}
        Classical & subtract & SUB, $j$ & 3 & \tabincell{l}{$\text{AC} \gets \text{AC}-X_j$; \\ $\text{IC} \gets \text{IC}+2$ } & \tabincell{l}{$l(\text{IC})+l(j)+$\\$l(\text{AC})+l(X_j)$} \\
        \specialrule{0em}{3pt}{3pt}
        Classical & store & STO, $j$ & 4 & \tabincell{l}{$X_j \gets \text{AC}$; \\ $\text{IC} \gets \text{IC}+2$ } & \tabincell{l}{$l(\text{IC})+l(j)+$\\$l(\text{AC})$} \\
        \specialrule{0em}{3pt}{3pt}
        Classical & \tabincell{c}{branch on\\positive\\accumulator} & BPA, $j$ & 5 & \tabincell{l}{if $\text{AC} > 0$ then \\ $\text{IC} \gets j$; otherwise \\$\text{IC} \gets \text{IC}$+2 } & \tabincell{l}{$l(\text{IC})+l(j)+$\\$l(\text{AC})$} \\
        \specialrule{0em}{3pt}{3pt}
        Classical & read & RD, $j$ & 6 & \tabincell{l}{$X_j \gets \text{next input}$; \\ $\text{IC} \gets \text{IC}+2$ } & \tabincell{l}{$l(\text{IC})+l(j)+$\\$l(\text{input})$} \\
        \specialrule{0em}{3pt}{3pt}
        Classical & print & PRI, $j$ & 7 & \tabincell{l}{output $X_j$; \\ $\text{IC} \gets \text{IC}+2$ } & \tabincell{l}{$l(\text{IC})+l(j)+$\\$l(X_j)$} \\
        \specialrule{0em}{1.5pt}{1.5pt}
        \hline
        \specialrule{0em}{1.5pt}{1.5pt}
        Quantum & CNOT & CNOT, $j$, $k$ & 8 & \tabincell{l}{$\text{CNOT}[Q_{j}, Q_{k}]$; \\ $\text{IC} \gets \text{IC}+3$ } & \tabincell{l}{$l(\text{IC})+l(j)+$\\$l(k)$} \\
        \specialrule{0em}{3pt}{3pt}
        Quantum & H & H, $j$ & 9 & \tabincell{l}{$H[Q_j]$; \\ $\text{IC} \gets \text{IC}+2$ } & $l(\text{IC})+l(j)$ \\
        \specialrule{0em}{3pt}{3pt}
        Quantum & T & T, $j$ & 10 & \tabincell{l}{$T[Q_j]$; \\ $\text{IC} \gets \text{IC}+2$ } & $l(\text{IC})+l(j)$ \\
        \specialrule{0em}{1.5pt}{1.5pt}
        \hline
        \specialrule{0em}{1.5pt}{1.5pt}
        Measurement & measure & MEA, $j$ & 11 & \tabincell{l}{$\text{AC} \gets M[Q_j]$; \\ $\text{IC} \gets \text{IC}+2$ } & $l(\text{IC})+l(j)$ \\
        \specialrule{0em}{1.5pt}{1.5pt}
        \hline
        \specialrule{0em}{1.5pt}{1.5pt}
        Termination & halt & HLT & - & stop & $l(\text{IC})+l(X_{\text{IC}})$ \\
        \specialrule{0em}{1.5pt}{1.5pt}
        \hline
        \end{tabular}
        \label{tab2}
    \end{table}\end{definition}

  Note that the classical-type QRASP instructions are the same as RASP instructions defined in \cite{Coo73}, and the quantum-type QRASP instructions are the same as in QRAMs.

 Strictly speaking, a QRASP is a finite sequence of integers that are to be interpreted into QRASP instructions during the execution rather than an explicit program. The reason is that QRASP may modify itself during the execution and causes unpredictable interpreted QRASP instructions. Our machine has an accumulator (AC), which holds an arbitrary integer, an instruction counter (IC), and two infinite sequences of both classical and quantum registers. Each classical register $X_i$ holds an arbitrary integer, while each quantum register $Q_i$ holds a qubit. An instruction is stored in two or three consecutive classical registers depending on its operation code. The first classical register contains an operation code (shown in Table \ref{tab2}). In case that the operation code is beyond the range 1 to 11, the execution immediately terminates. The second (and the third if needed) classical register contains the parameter of the instruction. In fact, only the CNOT operation needs two parameters while other operations do not. It is noted that indirect addressing is not allowed in QRASP, the programs need to modify themselves in order to access unbounded number of (both classical and quantum) registers.

    \section{Main Results} \label{sec:main-results}

    In this section, we state our main results, which clarify the relationships between QTMs, QRAMs and QRASPs.

    The relationship between QRAMs and QRASPs is simple. We prove that QRAMs and QRASPs can simulate each other with constant slowdown. For a QRAM (or QRASP) $P$ and $x, y \in \{0,1\}^*$, let $P(x, y)$ denote the probability that $P$ on input $x$ outputs $y$ (see Section \ref{sec:def-qram} and Section \ref{sec:def-qrasp} for its formal definition).

    \begin{theorem}\label{thm-qram-qrasp}
        Let $T: \mathbb{N} \to \mathbb{N}$ with $T(n) \geq n$.
        \begin{enumerate}
          \item For every $T(n)$-time QRAM $P$, there is a $O(T(n))$-time QRASP $P'$ such that for every $x, y \in \{0, 1\}^*$, $P(x, y) = P'(x, y)$.
          \item For every $T(n)$-time QRASP $P$, there is a $O(T(n))$-time QRAM $P'$ such that for every $x, y \in \{0, 1\}^*$, $P(x, y) = P'(x, y)$.
        \end{enumerate}
    \end{theorem}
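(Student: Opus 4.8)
The plan is to reduce both simulations to the classical RAM--RASP equivalence of \cite{Coo73, Aho74}, observing that the quantum part requires essentially no new ideas. Indeed, QRAMs and QRASPs share exactly the same quantum instruction set (CNOT, $H$, $T$ and measurement, see Tables \ref{tab1} and \ref{tab2}), so as long as the simulating machine applies the same gates to the same-indexed qubits in the same order and performs the same measurements with results routed to the corresponding classical locations, the quantum state evolves identically and the output distribution $P(x, y)$ is reproduced exactly. The only discrepancy in the quantum part is the addressing mode: a QRAM addresses qubits \emph{indirectly} through classical registers ($Q_{X_i}$), whereas a QRASP addresses them \emph{directly} by literal parameters ($Q_j$). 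This is the very same mismatch that distinguishes classical indirect addressing ($X_i \gets X_{X_j}$) from a self-modifying stored program, and it can be resolved by the same technique.

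For Part 1 (QRAM $\to$ QRASP), I would build $P'$ as a fixed prologue followed by a sequence of constant-size code blocks, one per instruction of $P$, mapping each classical register $X_i$ of the QRAM to $X_{i+c}$ of the QRASP for a fixed offset $c$ that reserves the low registers for the accumulator, instruction counter and scratch, while keeping the quantum registers $Q_i$ aligned. Direct classical instructions translate immediately. The indirect instructions $X_i \gets X_{X_j}$ and $X_{X_i} \gets X_j$, as well as \emph{all} quantum instructions (whose qubit addresses $X_i, X_j$ are runtime values), are simulated by \emph{self-modification}: the block first computes the required address into the accumulator, stores it into the parameter field of a subsequent ADD/STO (resp. CNOT/H/T/MEA) instruction, and then falls through to execute that now-patched instruction; for measurement the result is moved from AC to the target register. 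Control transfers TRA are implemented with BPA after precomputing the QRASP address of each block.

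For Part 2 (QRASP $\to$ QRAM), I would have the QRAM hold the QRASP's registers at a fixed offset and dedicate low registers to the simulated AC, IC and scratch, and then run a fetch--decode--execute loop: read the opcode $X_{\mathrm{IC}}$ by indirect addressing, dispatch to one of twelve handlers by a cascade of subtractions and TRA tests (a constant number, since the opcode range $1$--$11$ is fixed), and in each handler read the parameter(s) indirectly, update the simulated AC/registers and advance IC. Because the QRASP ``program'' is simply data read through indirect addressing on every cycle, its self-modification is reflected automatically with no extra work. The quantum handlers load the literal qubit indices into classical scratch registers and then issue the corresponding indirect gate $\mathrm{CNOT}[Q_{X_a}, Q_{X_b}]$, $H[Q_{X_a}]$, $T[Q_{X_a}]$ or $X \gets M[Q_{X_a}]$.

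The routine correctness of these constructions is inherited from the classical case together with the gate-set identity noted above; the main obstacle is the time accounting. In each direction I must verify that every instruction of the simulated machine is emulated by $O(1)$ instructions of the simulating machine whose execution-time cost (Tables \ref{tab1} and \ref{tab2}) is within a constant factor of the original, under \emph{both} the logarithmic and the constant cost criteria. Here the delicate points are that the offset $c$ perturbs every cost term $l(\cdot)$ by at most the additive constant $l(c) = O(1)$ (using that $l$ is subadditive), and that the self-modification overhead and the opcode-dispatch cascade contribute only constantly many additional terms of the form $l(\mathrm{IC})$, $l(\mathrm{AC})$, $l(j)$, $l(X_j)$, each bounded by a cost term already charged to the simulated instruction; summing over the computation then yields the claimed $O(T(n))$ bound.
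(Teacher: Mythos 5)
Your proposal takes essentially the same route as the paper in both directions: QRASPs simulate QRAMs by translating each instruction into a constant-size block that uses self-modification to realize indirect (classical and quantum) addressing with jump targets relabelled, and QRAMs simulate QRASPs by a fetch--decode--execute interpreter that reads the stored program through indirect addressing, with the identical quantum instruction sets guaranteeing that the output distributions agree exactly and the constant-factor cost accounting giving $O(T(n))$ time. The one detail the paper treats explicitly that you gloss over under ``routine correctness'' is the termination-on-invalid-address semantics: since register $j$ of the simulated machine lives at address $j+c$ of the simulator, a negative $j$ can still map to a valid address, so both constructions must insert an explicit validity guard (the paper's address-safety lemma, and the trick of touching $X_j$ before $X_{j+\delta}$) to reproduce the halting behaviour and hence the exact equality $P(x,y)=P'(x,y)$.
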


   To further compare QRAMs (and thus QRASPs) with QTMs, we need a QRAM variant of time constructible functions.

    \begin{definition} [QRAM-Time Constructible Functions]
        Let $T: \mathbb{N} \to \mathbb{N}$ and $T(n) \geq n$ for every $n \in \mathbb{N}$. $T(n)$ is said to be QRAM-time constructible, if there is an $O(T(n))$-time QRAM $P$ such that for every $x \in \{0,1\}^*$,
        \[
            P(x, T(\abs{x})) = 1,
        \]
        where $T(\abs{x})$ denotes its binary form as in Definition \ref{def-time-qtm}.
    \end{definition}

    \iffalse
    \textbf{(WARNING: time constructible by RAM is undefined.)}
    \fi

    The relationship between QRAMs and QTMs is then established in the following:

    \begin{theorem} \label{thm-qram-qtm}
        \begin{enumerate}
          \item Let $T: \mathbb{N} \to \mathbb{N}$. Suppose $P$ is a $T(n)$-time QRAM. Then there is a well-formed and normal form QTM $M$ within time $T'(n)$ such that
          \begin{enumerate}
            \item $P(x, y) = M(x, y)$ for every $x, y \in \{0,1\}^*$.
            \item $\mathbb{C}(M) = \{ 0, 1, \frac 1 {\sqrt2}, -\frac 1 {\sqrt2}, \exp(i\pi/4) \}$.
          \end{enumerate}
          Moreover,
          \begin{enumerate}
            \item If $l(n)$ is logarithmic, then $T'(n) = O(T(n)^4)$.
            \item If $l(n)$ is constant, then $T'(n) = O(T(n)^8)$.
          \end{enumerate}
          \item Let $T: \mathbb{N} \to \mathbb{N}$ be a QRAM-time constructible function, and $\lambda: \mathbb{N} \to \mathbb{N}$ with $\lambda(n) \geq n$. For every standard QTM $M$ with exact time $T(n)$ and $\mathbb{C}(M) \subseteq \mathbb{C}(\lambda(n))$, there is a constant $c > 0$ such that for every $0 < \varepsilon < 1$, there is a $O(T(n)^2 (\lambda(\log (T(n)/\varepsilon)))^c)$-time QRAM $P$ such that $\abs{M(x, y)-P(x, y)} < \varepsilon$ for every $x, y \in \{0,1\}^*$.
        \end{enumerate}
    \end{theorem}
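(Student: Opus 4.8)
The two directions of Theorem~\ref{thm-qram-qtm} rest on quite different ideas, so the plan is to treat them separately.

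For the first direction (QRAM $\to$ QTM) I would begin by eliminating the intermediate measurements, which is the only obstacle to a unitary description. Since QRAM control may branch on the outcome of $X_i \gets M[Q_{X_j}]$, I would keep the \emph{entire} classical register file as computational-basis quantum data on dedicated tracks and replace each measurement instruction by a coherent copy (a CNOT from $Q_{X_j}$ into the qubits encoding $X_i$). By the deferred-measurement principle, the distribution obtained from a single terminal measurement of the WRITE-track equals $P(x,y)$, and because every value that ever controls a branch stays in the computational basis, the reversibilisation ``garbage'' is merely classically correlated and cannot disturb interference among the genuine quantum registers. The remaining step is to simulate the now-reversible classical RAM: I would invoke the classical Cook--Reckhow simulation of a logarithmic-cost $T(n)$-time RAM by an $O(T(n)^2)$-time multitape TM, make each step reversible at constant cost by a Bennett-style history track, store the quantum registers as an (index, qubit) association list on an extra track, and implement the classically-indexed gates $\mathrm{CNOT},H,T$ and the measurement-copy by a reversible search-and-swap routine that brings the addressed qubit(s) to a fixed work cell, applies the fixed gate, and swaps back. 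Input is supplied from the first track in answer to READ, and WRITE results are deposited on the output (second) track in the format expected by $\mathrm{extract}(\cdot)$.

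The two polynomial blow-ups are then accounted for as follows. A multitape machine must be compressed onto the single head of a multi-track (hence single-tape) QTM, which squares the running time, carrying $O(T(n)^2)$ to $O(T(n)^4)$; this is the logarithmic-cost bound. For the constant-cost bound I would first observe that a constant-cost $T(n)$-time QRAM is a logarithmic-cost $O(T(n)^2)$-time QRAM, since each of its at most $T(n)$ operands has bit-length $O(T(n))$ (the only way to enlarge a value is addition, which adds one bit); feeding this into the logarithmic-cost result gives $O\big((T(n)^2)^4\big)=O(T(n)^8)$. Well-formedness holds because the construction is a composition of the unitaries $\mathrm{CNOT},H,T$ with reversible classical maps, so the only amplitudes are $0,1,\pm 1/\sqrt2,\exp(i\pi/4)$; the normal-form condition is imposed by convention on the $q_f$-transitions; and ``within time $T'(n)$'' holds because every branch of the QRAM, hence every superposed configuration, halts within $T(n)$. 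Note that the standardisation theorem (Theorem~\ref{thm-qtm}) is \emph{not} needed here, since only a normal-form, within-time machine is required.

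For the second direction (standard QTM $M$ $\to$ QRAM within error $\varepsilon$) the idea is to store a quantum encoding of the QTM configuration directly in the QRAM's quantum registers and to iterate the one-step operator $U$. Using unidirectionality of $M$, I would use the Bernstein--Vazirani decomposition $U=S\cdot V$ from \cite{Ber97}, where $V$ is a fixed finite-dimensional unitary on (state $\otimes$ scanned-symbol) carrying the transition amplitudes and $S$ is a head shift whose direction depends only on the new state. To respect the QRAM's strictly classical addressing while the head is in superposition, I would lay out cells $-T(\abs{x}),\dots,T(\abs{x})$ as fixed qubit blocks equipped with a one-hot ``head-present'' marker, and implement one step by a \emph{classical} loop over all $O(T(n))$ cells that applies a marker-controlled $V$ to each cell followed by a state-controlled shift of the markers; exactly one marker is active in each branch, so this reproduces $U$ coherently. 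Since $M$ is standard and $T$ is QRAM-time constructible, the QRAM computes $T(\abs{x})$, runs exactly that many iterations, and then measures the output track. For the synthesis of $V$: because $\mathbb{C}(M)\subseteq\mathbb{C}(\lambda(n))$, its entries are computable to precision $2^{-m}$ in time $\lambda(m)^{O(1)}$, so I would compute them to precision $\delta_0$ and apply Solovay--Kitaev to get an $O(\operatorname{polylog}(1/\delta_0))$-gate approximation $\tilde V$ over $\{\mathrm{CNOT},H,T\}$, done once. There are $O(T(n))$ applications of $\tilde V$ per step and $T(n)$ steps, so the total error is $O(T(n)^2\,\delta_0)$; taking $\delta_0=\Theta(\varepsilon/T(n)^2)$, i.e. $\log(1/\delta_0)=O(\log(T(n)/\varepsilon))$, keeps it below $\varepsilon$. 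Collecting the $O(T(n)^2)$ gate applications, each costing $O(\operatorname{polylog}(T(n)/\varepsilon))$ elementary gates times an addressing cost $l(\cdot)$ of $O(\log T(n))$ (logarithmic) or $O(1)$ (constant), plus the one-time synthesis cost $\lambda(O(\log(T(n)/\varepsilon)))^{O(1)}$, yields the claimed $O\big(T(n)^2(\lambda(\log(T(n)/\varepsilon)))^c\big)$ bound under both cost criteria.

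I expect the genuine difficulties to be the following. In the first direction, maintaining well-formedness while performing data-dependent searches and head motions across branches that, after the measurement-copies, hold different classical contents: this is what forces the oblivious, reversible search-and-swap discipline and ultimately the quadratic single-head compression responsible for the $T^4$/$T^8$ exponents. In the second direction, the coherent handling of a superposed head under the QRAM's purely classical addressing, which the marker-plus-classical-loop encoding is designed to circumvent, together with propagating the per-gate approximation error of $\tilde V$ through $O(T(n)^2)$ controlled applications without inflating the running time beyond $O(T(n)^2)$ up to the $(\lambda(\log(T(n)/\varepsilon)))^c$ factor.
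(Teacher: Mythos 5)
Your proposal is correct and follows the same two-stage skeleton as the paper in both directions, though several intermediate waypoints differ. For QRAM $\to$ QTM, the paper first converts $P$ into an address-safe, \emph{measurement-postponed} QRAM (each measurement is copied by a CNOT onto a fresh ancilla qubit that is then actually measured, rather than your fully coherent deferral), simulates that by an intermediate single-tape multi-track model it calls a TM with a quantum device, running in $O(T^2)$ (logarithmic cost) or $O(T^4)$ (constant cost, obtained by re-doing the track-length analysis rather than by a reduction), and only then makes the classical part reversible by walking a history track --- which is where its final quadratic blow-up to $T^4$/$T^8$ arises. You attribute that last squaring to multitape-to-single-tape compression instead; these are two views of the same $O\bigl(\sum_t t\bigr)$ overhead and both are sound, and your observation that a constant-cost $T$-time QRAM is a logarithmic-cost $O(T^2)$-time QRAM is arguably a cleaner route to the exponent $8$ than the paper's direct argument. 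For QTM $\to$ QRAM, your marker-plus-classical-loop construction is exactly the Yao/Nishimura--Ozawa circuit family that the paper invokes as a black box (Theorem \ref{thm-qcf-qtm}) before applying Solovay--Kitaev with the same $\delta_0=\Theta(\varepsilon/T^2)$ error budget, so there the proofs essentially coincide. The one point to make explicit in a write-up is why the coherent treatment of measurements still yields the correct probabilities $P(x,y)$: branches with different measurement records stay orthogonal because the copied record qubits are never erased, which is precisely the role played by the ``extra'' track in the paper's QTM construction.
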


 It is well-known \cite{Bre75} that square roots are computable in polynomial time in $n$ with precision $n$, and thus $\sqrt2 \in \tilde{\mathbb{C}}$. So, it holds that $\mathbb{C}(M) \subseteq \tilde{\mathbb{C}}$ in the first part of the above theorem.

 A combination of the above two theorems and Theorem \ref{thm-qtm} indicates that QTMs, QRAMs and QRASPs can simulate each other with polynomial slowdown.

The above results have some simple corollaries on quantum complexity classes. To present them, let us first recall the definitions of \textbf{EQP} and \textbf{BQP} from \cite{Ber97,Adl97}.

    \begin{definition}
        Let $L \subseteq \{0,1\}^*$. The language $L$ is said to be in $\textbf{EQP}$, if there is a well-formed, normal form and stationary multi-track QTM $M$ with exact time $T(n)$, satisfying:
        \begin{enumerate}
          \item $x \in L$ if and only if $M(x, 1) = 1$;
          \item $x \notin L$ if and only if $M(x, 1) = 0$;
          \item $T(n)$ is a polynomial in $n$.
        \end{enumerate}
        The language $L$ is said to be in $\textbf{BQP}$, if there is a well-formed, normal form, stationary, multi-track QTM $M$ with exact time $T(n)$,
        \begin{enumerate}
          \item $x \in L$ if and only if $M(x, 1) \geq \frac 2 3$;
          \item $x \notin L$ if and only if $M(x, 1) \leq \frac 1 3$;
          \item $T(n)$ is a polynomial in $n$.
        \end{enumerate}
    \end{definition}

    The complexity classes \textbf{EQP'} and \textbf{BQP'} are defined by removing the stationary condition and allowing QTMs to be within time $T(n)$ in the above definition. Immediately from Theorem \ref{thm-qtm}, we have:
    \begin{proposition} \label{prop-eqp-bqp}
        $\textbf{EQP} = \textbf{EQP'}$ and $\textbf{BQP} = \textbf{BQP'}$.
    \end{proposition}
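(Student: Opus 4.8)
The plan is to prove the two non-trivial inclusions $\textbf{EQP'} \subseteq \textbf{EQP}$ and $\textbf{BQP'} \subseteq \textbf{BQP}$ by a direct appeal to the Standardisation theorem (Theorem \ref{thm-qtm}); the reverse inclusions $\textbf{EQP} \subseteq \textbf{EQP'}$ and $\textbf{BQP} \subseteq \textbf{BQP'}$ are immediate, because a witnessing machine for an unprimed class is by definition well-formed, normal form, stationary and runs in exact time, so it already satisfies the strictly weaker requirements of the corresponding primed class (well-formed, normal form, within time). Thus only the forward direction carries content, and the whole point is that the predicate ``standard'' bundles together exactly the structural properties (normal form, stationary, exact halting, well-formedness) demanded by the unprimed definitions.

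For the forward direction, suppose $L \in \textbf{EQP'}$, witnessed by a well-formed, normal form, multi-track QTM $M$ within time $T(n)$ with $T(n)$ polynomial, satisfying $M(x,1)=1$ for $x \in L$ and $M(x,1)=0$ for $x \notin L$. A multi-track QTM is an instance of a QTM (with product alphabet), so $M$ is exactly of the form required in the hypothesis of Theorem \ref{thm-qtm}. Applying that theorem I obtain a standard QTM $M'$ with exact time $O(T(n)\log^2 T(n))$ --- still a polynomial --- satisfying $M'(x,y) = M(x,y)$ for all $x,y \in \{0,1\}^*$, where the function value is taken with respect to the same track-$2$ output extraction. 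By definition a standard QTM is well-formed, normal form, stationary and unidirectional, and halts in exact time; hence $M'$ meets every structural condition in the definition of $\textbf{EQP}$, and since $M'(x,1)=M(x,1)$ it decides $L$ with the identical exact acceptance behaviour. Therefore $L \in \textbf{EQP}$. The argument for $\textbf{BQP'} \subseteq \textbf{BQP}$ is verbatim the same: the only acceptance data entering the definition are the scalars $M(x,1)$, which $M'$ reproduces exactly, so the two-sided thresholds $M(x,1) \geq \tfrac{2}{3}$ and $M(x,1) \leq \tfrac{1}{3}$ transfer unchanged to $M'$.

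The single point requiring care is the time-constructibility hypothesis of Theorem \ref{thm-qtm}, since the definition of $\textbf{EQP'}$ (and $\textbf{BQP'}$) only guarantees that $T(n)$ is a polynomial, not that it is time constructible by a QTM. I would handle this by passing to a QTM-time-constructible polynomial upper bound $T'(n) \geq T(n)$, say of the monomial form $c\,n^k$: because $\sum_{t=1}^{T(\abs{x})} p(t) = 1$ already holds, the machine $M$ is automatically within time $T'(n)$ as well, and standard binary-counter constructions give a standard QTM realising $x;\epsilon \xrightarrow[O(T'(\abs{x}))]{} x;T'(\abs{x})$ in the sense of Definition \ref{def-time-qtm}. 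I expect this bookkeeping --- exhibiting or citing a time-constructible polynomial dominating the running time --- to be the only genuine obstacle; once it is in place, the equality of the classes follows formally from the function-preservation clause $M'(x,y) = M(x,y)$ together with the observation that ``standard'' implies precisely the stationary, exact-halting, normal-form and well-formed conditions of the unprimed definitions.
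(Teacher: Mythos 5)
Your argument is correct and is essentially the paper's own proof: the paper justifies the proposition with the single phrase ``Immediately from Theorem \ref{thm-qtm}'', i.e.\ exactly the reduction you give, with the trivial reverse inclusions left implicit. Your additional care about the time-constructibility hypothesis (replacing the polynomial bound $T(n)$ by a dominating QTM-time-constructible polynomial) addresses a detail the paper silently elides, and your fix is the right one.
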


    The QTM in the first part of Theorem \ref{thm-qram-qtm} is only guaranteed to be well-formed and normal form, but not to halt at an exact time. But Theorem \ref{thm-qtm} can be employed to strengthen it by a standard QTM with exact time $O(T(n)^4 \log^2 T(n))$ for logarithmic $l(n)$, and $O(T(n)^8 \log^2 T(n))$ for constant $l(n)$, provided $T(n)$ is time constructible.

    Let \textbf{EQRAMP} and \textbf{BQRAMP} denote the classes of languages that are computable by exact and bounded-error quantum random access machine in polynomial time, respectively (see Section \ref{sec:qram-computing} for their formal definitions). Then by Theorem \ref{thm-qram-qtm} and Proposition \ref{prop-eqp-bqp}, immediately we have:
    \begin{theorem} \label{thm-relation-bqp}
        $\textbf{P} \subseteq \textbf{EQRAMP} \subseteq \textbf{EQP}$ and $\textbf{BQP} = \textbf{BQRAMP}$.
    \end{theorem}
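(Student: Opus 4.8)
The plan is to prove the three ``easy'' inclusions $\textbf{P} \subseteq \textbf{EQRAMP}$, $\textbf{EQRAMP} \subseteq \textbf{EQP}$ and $\textbf{BQRAMP} \subseteq \textbf{BQP}$ by model embeddings combined with the Standardisation Theorem (Theorem~\ref{thm-qtm}), and to obtain the remaining inclusion $\textbf{BQP} \subseteq \textbf{BQRAMP}$ from the approximate simulation of Theorem~\ref{thm-qram-qtm}(2) followed by probability amplification. I expect the only genuinely delicate direction to be $\textbf{BQP} \subseteq \textbf{BQRAMP}$, since the QRAM there reproduces the QTM's acceptance probability only up to an adjustable error $\varepsilon$ rather than exactly.

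For $\textbf{P} \subseteq \textbf{EQRAMP}$, I would observe that the classical-type QRAM instructions of Table~\ref{tab1} are precisely the RAM instructions of \cite{Coo73}, so a polynomial-time RAM deciding $L \in \textbf{P}$ is literally a polynomial-time QRAM using no quantum or measurement instructions. Such a QRAM is deterministic, hence on input $x$ it outputs $1$ with probability exactly $1$ when $x \in L$ and with probability exactly $0$ when $x \notin L$; this is exact acceptance, so $L \in \textbf{EQRAMP}$.

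For $\textbf{EQRAMP} \subseteq \textbf{EQP}$ (and likewise $\textbf{BQRAMP} \subseteq \textbf{BQP}$), start from a polynomial-time QRAM $P$ computing $L$ exactly (resp.\ with bounded error). Theorem~\ref{thm-qram-qtm}(1) yields a well-formed, normal form QTM $M$ within some polynomial time $T'(n)$ with $M(x,y) = P(x,y)$ for all $x,y$ and coefficients in $\{0,1,\pm\frac{1}{\sqrt2},\exp(i\pi/4)\} \subseteq \tilde{\mathbb{C}}$. Because $M$ is only guaranteed to halt \emph{within} time $T'(n)$ and need not be stationary, I would then invoke the Standardisation Theorem (Theorem~\ref{thm-qtm}), using a time-constructible polynomial upper bound on $T'(n)$, to obtain a standard QTM $M'$ of exact polynomial time $O(T'(n)\log^2 T'(n))$ with $M'(x,y) = P(x,y)$. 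Being standard, $M'$ is in particular well-formed, normal form and stationary, so it directly witnesses $L \in \textbf{EQP}$ (resp.\ $L \in \textbf{BQP}$).

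For the reverse inclusion $\textbf{BQP} \subseteq \textbf{BQRAMP}$, begin with the well-formed, normal form, stationary QTM $M$ of exact polynomial time $T(n)$ with coefficients in $\tilde{\mathbb{C}}$ furnished by the definition of $\textbf{BQP}$. Since $M$ may fail to be unidirectional, I would first standardise it via Theorem~\ref{thm-qtm} to a standard QTM $M''$ of exact time $T''(n) = O(T(n)\log^2 T(n))$ whose finitely many coefficients still lie in $\tilde{\mathbb{C}}$; taking $k$ large enough gives $\mathbb{C}(M'') \subseteq \mathbb{C}(\lambda(n))$ for $\lambda(n)=n^k$, and $T''(n)$ is QRAM-time constructible. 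Theorem~\ref{thm-qram-qtm}(2) then produces, for any target $\varepsilon$, a polynomial-time QRAM approximating $M''(x,1)$ within $\varepsilon$. The obstacle is that this separates acceptance from rejection only by a gap shrunk around $1/2$, not by the $2/3$--$1/3$ thresholds. I would resolve this in the standard way: fix a constant $\varepsilon < 1/6$ (say $\varepsilon = 1/12$), run the approximating QRAM a constant number of independent times on fresh quantum registers, and have the classical control output the majority answer; a Chernoff estimate restores the $2/3$--$1/3$ guarantee while keeping the running time polynomial. Hence $L \in \textbf{BQRAMP}$, and together with the previous inclusion this yields $\textbf{BQP} = \textbf{BQRAMP}$.
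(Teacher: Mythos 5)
Your proof is correct and follows essentially the route the paper intends: the paper presents Theorem \ref{thm-relation-bqp} as a direct corollary of Theorems \ref{thm-qram-qtm} and \ref{thm-qtm} without writing out a proof, and your four inclusions are exactly that combination (classical RAM as a quantum-free QRAM for $\textbf{P} \subseteq \textbf{EQRAMP}$; simulation by a QTM followed by standardisation for the two forward inclusions; approximate simulation for $\textbf{BQP} \subseteq \textbf{BQRAMP}$). Your explicit treatment of the $2/3$--$1/3$ gap via constant-round majority amplification is a detail the paper leaves implicit, and you handle it correctly.
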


 Note that \textbf{BQP} and \textbf{BQRAMP} coincide, but it seems that \textbf{EQP} and \textbf{EQRAMP} do not. This is because a QRAM only has a finite number of quantum gate operations, while a QTM can have an infinite (but countable) number of quantum gate operations. It is almost impossible to simulate an infinite set of quantum gates by a finite set of quantum gates with no errors. On the other hand, in the definitions of \textbf{EQP} and \textbf{EQRAMP}, the probabilities are restricted to $0$ and $1$, which makes it unlikely that the two complexity classes coincide.

% {\color{red}On the other hand, \textbf{EQP} is a very brittle complexity class, because it depends on the specific coefficients one allows. We denote $\textbf{EQP}_S$ (resp. $\textbf{BQP}_S$) to be the \textbf{EQP} (resp. \textbf{BQP}) class whose coefficients %are restricted in a set $S$, then the \textbf{EQP} and \textbf{BQP} we are dealing with are actually $\textbf{EQP}_{\tilde{\mathbb{C}}}$ and $\textbf{BQP}_{\tilde{\mathbb{C}}}$. The relationships between \textbf{EQP} and \textbf{BQP} with their coefficients
% restricted in a field $\mathbb{K}$ are studied in \cite{Adl97}, where they show that $\textbf{EQP}_{\mathbb{C}} \neq \textbf{BQP}_{\mathbb{C}}$ and suggest that $\textbf{EQP}_{\mathbb{Q}} = \textbf{BQP}_{\mathbb{Q}}$ should be unlikely to hold. }

    \section{Computations of QRAMs} \label{sec:def-qram}

For this section on, we provide all of the details for proving our main results. In this section, we carefully describe the computations of QRAMs. It is presented in the forms of operational and denotational semantics of QRAMs in Subsections \ref{sec:qram-operation} and \ref{sec:qram-denotational}, respectively. QRAM computation and the notion of two complexity classes \textbf{EQRAMP} and \textbf{BQRAMP} are formally defined in Section \ref{sec:qram-computing}.

Section \ref{sec:qram-address-shifting} gives a useful method for shifting addresses, based on which we show that every QRAM can be simulated by an address-safe QRAM in the sense that it never accesses invalid addresses. Postponing measurements is a widely used technique in quantum computing. In Section \ref{sec:qram-measurement-postponed}, we introduce the notion of measurement-postponed QRAMs.

    \subsection{Operational semantics} \label{sec:qram-operation}

    Formally, a QRAM is represented by a sequence $P = P_0, P_1, P_2, \dots, P_{L-1}$ of instructions with $L = \abs{P}$ being the length of $P$, i.e. the number of instructions in this QRAM. In the execution of a QRAM, there is an instruction counter (IC) indicating which instruction to be executed. A configuration of the QRAM is a tuple $(\xi, \mu, \ket\psi, x, y)$, where:
    \begin{enumerate}
      \item $\xi \in \mathbb{N} \cup \{\downarrow\}$ denotes the current IC, with $\downarrow$ indicating the end of execution;
      \item $\mu: \mathbb{N} \to \mathbb{Z}$ is the description of all contents of classical registers;
      \item $\ket\psi \in \mathcal{H}= \bigotimes_{i=0}^\infty \mathcal{H}_i$ is the state of quantum registers, with $\mathcal{H}_i = \operatorname{span} \{ \ket 0_i, \ket 1_i \}$;
      \item $x \in \mathbb{Z}^\omega$ is a sequence of integers to read on the input tape;
      \item $y \in \mathbb{Z}^*$ is a sequence of printed integers on the output tape.
    \end{enumerate}
We write $\mathcal{C} = (\mathbb{N} \cup \{\downarrow\}) \times \mathbb{Z}^\mathbb{N} \times \mathcal{H} \times \mathbb{Z}^\omega \times \mathbb{Z}^*$ for the set of all configurations. A configuration $c = (\xi, \mu, \ket\psi, x, y) \in \mathcal{C}$ is called a terminal configuration if $\xi = \downarrow$. Let $\mathcal{C}_f \subseteq \mathcal{C}$ denote the set of all terminal configurations.

    The execution transition is a function $\rightarrow: \mathcal{C} \times \mathcal{C} \to [0, 1] \times \mathbb{N}$.
    For two configurations $c$ and $c'$, $\rightarrow(c, c') = (p, T)$
    means that configuration $c$ is changed to configuration $c'$ in time $T$ with probability $p$ after executing one instruction. For readability, we write
    \[
        c \xrightarrow[T]{p} c',
    \]
    wherein $p$ can be ignored if $p = 1$. Then the operational semantics of QRAMs is defined by the following transitional rules:
    \begin{enumerate}
      \item If $\xi$ is out of range $[0, L)$,
      \[
      (\xi, \mu, \ket\psi, x, y) \xrightarrow[1]{} (\downarrow, \mu, \ket\psi, x, y).
      \]
      \item If $P_\xi$ has the form $X_i \gets C$,
      \[
      (\xi, \mu, \ket\psi, x, y) \xrightarrow[1]{} (\xi+1, \mu_{i}^C, \ket\psi, x, y),
      \]
      where
      \[
        \mu_a^b = \begin{cases}
            \mu(j) & j \neq a, \\
            b & j = a.
        \end{cases}
      \]
      \item If $P_\xi$ has the form $X_i \gets X_j+X_k$,
      \[
      (\xi, \mu, \ket\psi, x, y) \xrightarrow[l(\mu(j))+l(\mu(k))]{} (\xi+1, \mu_{i}^{\mu(j)+\mu(k)}, \ket\psi, x, y).
      \]
      \item If $P_\xi$ has the form $X_i \gets X_j-X_k$,
      \[
      (\xi, \mu, \ket\psi, x, y) \xrightarrow[l(\mu(j))+l(\mu(k))]{} (\xi+1, \mu_{i}^{\mu(j)-\mu(k)}, \ket\psi, x, y).
      \]
      \item If $P_\xi$ has the form $X_i \gets X_{X_j}$, whenever $\mu(j) \geq 0$, then
      \[
      (\xi, \mu, \ket\psi, x, y) \xrightarrow[l(\mu(j))+l(\mu(\mu(j)))]{} (\xi+1, \mu_{i}^{\mu(\mu(j))}, \ket\psi, x, y);
      \]
      otherwise,
      \[
      (\xi, \mu, \ket\psi, x, y) \xrightarrow[l(\mu(j))]{} (\downarrow, \mu, \ket\psi, x, y).
      \]
      \item If $P_\xi$ has the form $X_{X_i} \gets X_j$, whenever $\mu(i) \geq 0$, then
      \[
      (\xi, \mu, \ket\psi, x, y) \xrightarrow[l(\mu(i))+l(\mu(j))]{} (\xi+1, \mu_{\mu(i)}^{\mu(j)}, \ket\psi, x, y);
      \]
      otherwise,
      \[
      (\xi, \mu, \ket\psi, x, y) \xrightarrow[l(\mu(i))]{} (\downarrow, \mu, \ket\psi, x, y).
      \]
      \item If $P_\xi$ has the form TRA $m$ if $X_j > 0$, whenever $\mu(j) > 0$, then
      \[
      (\xi, \mu, \ket\psi, x, y) \xrightarrow[l(\mu(j))]{} (m, \mu, \ket\psi, x, y);
      \]
      otherwise,
      \[
      (\xi, \mu, \ket\psi, x, y) \xrightarrow[l(\mu(j))]{} (\xi+1, \mu, \ket\psi, x, y).
      \]
      \item If $P_\xi$ has the form READ $X_i$, let $a$ be the first integer in $x$, then
      \[
      (\xi, \mu, \ket\psi, x, y) \xrightarrow[{l(a)}]{} (\xi+1, \mu_i^{a}, \ket\psi, x', y),
      \]
      where $x'$ denotes the string obtained by deleting the first integer $x$.
      \item If $P_\xi$ has the form WRITE $X_i$,
      \[
      (\xi, \mu, \ket\psi, x, y) \xrightarrow[{l(\mu(i))}]{} (\xi+1, \mu, \ket\psi, x, y'),
      \]
      where $y'$ denotes the string obtained by appending an integer $\mu(i)$ to $y$.
      \item If $P_\xi$ has the form $\text{CNOT}[Q_{X_i}, Q_{X_j}]$, whenever $\mu(i) \geq 0$ and $\mu(j) \geq 0$, then
      \[
      (\xi, \mu, \ket\psi, x, y) \xrightarrow[{l(\mu(i))+l(\mu(j))}]{} (\xi+1, \mu, \text{CNOT}_{\mu(i), \mu(j)}\ket\psi, x, y),
      \]
      where for $a_0, a_1, a_2, \dots \in \{0, 1\}$, $\text{CNOT}_{i,j} \ket{a_0, a_1, a_2, \dots} = \ket{b_0, b_1, b_2, \dots}$ with $$b_k =\begin{cases} a_k &{\rm if}\ k \neq j,\\ a_i \oplus a_j &{\rm otherwise},\end{cases}$$ and $\oplus$ denotes modulo-2 addition;
      otherwise,
      \[
      (\xi, \mu, \ket\psi, x, y) \xrightarrow[{l(\mu(i))+l(\mu(j))}]{} (\downarrow, \mu, \ket\psi, x, y).
      \]
      \item If $P_\xi$ has the form $A[Q_{X_i}]$ with $A=H$ or $T$, whenever $\mu(i) \geq 0$, then
      \[
      (\xi, \mu, \ket\psi, x, y) \xrightarrow[{l(\mu(i))}]{} (\xi+1, \mu, A_{\mu(i)}\ket\psi, x, y),
      \]
      where
      \[
      A_i \bigotimes_{j=0}^\infty \ket{a_j} = \bigotimes_{j=0}^{i-1} \ket{a_j} \otimes A\ket{a_i} \otimes \bigotimes_{j=i+1}^{\infty} \ket{a_j}
      \]
      for $a_0, a_1, a_2, \dots \in \{0, 1\}$;
      otherwise,
      \[
      (\xi, \mu, \ket\psi, x, y) \xrightarrow[{l(\mu(i))}]{} (\downarrow, \mu, \ket\psi, x, y).
      \]
          \item If $P_\xi$ has the form $X_i \gets M[Q_{X_j}]$, whenever $\mu(j) \geq 0$, then
      \[
      (\xi, \mu, \ket\psi, x, y) \xrightarrow[{l(\mu(j))}]{\Abs{P_{j}\ket\psi}^2} (\xi+1, \mu_i^0, \frac{P_{j}\ket\psi} {\Abs{P_{j}\ket\psi}^2}, x, y),
      \]
      \[
      (\xi, \mu, \ket\psi, x, y) \xrightarrow[{l(\mu(j))}]{1-\Abs{P_{j}\ket\psi}^2} (\xi+1, \mu_i^1, \frac{(I-P_{j})\ket\psi} {1-\Abs{P_{j}\ket\psi}^2}, x, y),
      \]
      where
      \[
      P_j = \bigotimes_{i=0}^{j-1} I_i \otimes \ket{0}_j\bra{0} \otimes \bigotimes_{i=j+1}^{\infty} I_i;
      \]
      otherwise,
      \[
      (\xi, \mu, \ket\psi, x, y) \xrightarrow[{l(\mu(i))}]{} (\downarrow, \mu, \ket\psi, x, y).
      \]
    \end{enumerate}

    \subsection{Denotational semantics} \label{sec:qram-denotational}

    An execution path $\pi = c_0, c_1, \dots, c_n$ is a non-empty sequence of configurations, i.e. $\pi \in \mathcal{C}^+$. We define the length of $\pi$ as $\abs{\pi} = n$. A path is called terminal, if $c_n$ is a terminal configuration. For readability, an execution path $\pi$ is usually written as
    \[
        \pi: c_0 \xrightarrow[T_1]{p_1} c_1 \xrightarrow[T_2]{p_2} c_2 \xrightarrow[T_3]{p_3} \dots \xrightarrow[T_{n-1}]{p_{n-1}} c_{n-1} \xrightarrow[T_{n}]{p_{n}} c_n,
    \]
    where for every $1 \leq i \leq n$, $c_{i-1} \xrightarrow[T_i]{p_i} c_i$ is a transition defined in the above subsection.
    We may simply write:
    \[
        \pi: c_0 \xrightarrow[T]{p}^n c_n,
    \]
    where
    \[
        T = \sum_{i=1}^n T_i, \qquad p = \prod_{i=1}^n p_i.
    \]
    It should be noted that there could be multiple paths of the form $c_0 \xrightarrow[T]{p}^n c_n$ with different pairs of $p$ and $T$.

    For simplicity of presentation, let us introduce several abbreviations:
    \begin{itemize}\item $\pi.p_1$ (resp. $\pi.T_1$) denotes the transition probability (resp. time) in the first step of $\pi$.
    \item $\pi.p$ (resp. $\pi.T$) denotes the transition probability (resp. time) of $\pi$.
    \item $\pi.c_{\abs{\pi}}.\xi =\ \downarrow$ means that the last configuration of $\pi$ is a terminal configuration; that is, $\pi.c_{\abs{\pi}} \in \mathcal{C}_f$, where $\abs{\pi}$ is the length of $\pi$.
\end{itemize}
    Moreover, for a QRAM $P$, we use the following notations: \begin{itemize}\item $\mathcal{P}$ denotes the set of all execution paths of $P$.
    \item $\mathcal{P}(c)$ denotes the set of all execution paths starting from $c$ (with positive probabilities), i.e. $$\mathcal{P}(c) = \{ \pi \in \mathcal{P}: \pi.c_0 = c \text{ and } \pi.p > 0 \}.$$
    \item $\mathcal{P}_f(c)$ denotes the set of all terminal execution paths starting from $c$, i.e. $$\mathcal{P}_f(c) = \{ \pi \in \mathcal{P}(c): \pi.c_{\abs{\pi}} \in \mathcal{C}_f \}.$$
    \item $\mathcal{P}^{=n}(c)$ denotes the set of all execution paths of length $n$, starting from $c$, i.e. $$\mathcal{P}^{=n}(c) = \{ \pi \in \mathcal{P}(c): \abs{\pi} = n \}.$$
    \item $\mathcal{P}^{=n}_f(c)$ denotes the set of all terminal execution paths of length $n$, starting from $c$, i.e. $$\mathcal{P}^{=n}_f(c) = \{ \pi \in \mathcal{P}^{=n}(c): \pi.c_n \in \mathcal{C}_f \}.$$
    \item $\mathcal{P}^n(c)$ denotes the set of all execution paths starting from $c$ within $n$ steps, i.e.
    \[
    \mathcal{P}^n(c) = \bigcup_{m=0}^{n} \mathcal{P}^{=m}_f(c).
    \]\end{itemize}

    \begin{definition}\begin{enumerate}\item The $n$-step semantics function $\llbracket P \rrbracket^n: \mathcal{C} \to (\mathcal{C} \to [0, 1])$ is defined by
    \[
        \llbracket P \rrbracket^n (c) = \sum_{\pi \in \mathcal{P}^n(c)} \pi.p \cdot \mathbb{I}_{\pi.c_{\abs{\pi}}},
    \]
    where
    \[
        \mathbb{I}_c(c') = \begin{cases}
            1 & c = c', \\
            0 & \text{otherwise}.
        \end{cases}
    \]
    \item The semantic function $\llbracket P \rrbracket: \mathcal{C} \to (\mathcal{C} \to [0, 1])$ is defined by
    \[
        \llbracket P \rrbracket(c) = \lim_{n \to \infty} \llbracket P \rrbracket^n(c).
    \]\end{enumerate}\end{definition}

    It should be noted that $\llbracket P \rrbracket(c)$ may not exist.

    \begin{definition} The worst case running time $\tau_P: \mathcal{C} \to \mathbb{N} \cup \{ \infty \}$
    is defined by
    \[
        \tau_P (c) = \sup \{ \pi.T: \pi \in \mathcal{P}(c) \}.
    \]\end{definition}

    The following lemma is straightforward:

    \begin{lemma}\label{lem-worst-time}
       For any $c \in \mathcal{C}$, we have:\begin{enumerate}
        \item if $\tau_P(c) < \infty$, then $\llbracket P \rrbracket(c) = \llbracket P \rrbracket^{\tau_P(c)}(c)$;
        \item if $\tau_P(c) < \infty$ and $\llbracket P \rrbracket(c)(c') > 0$ for some $c' \in \mathcal{C}$, then $c' \in \mathcal{C}_f$.
   \end{enumerate} \end{lemma}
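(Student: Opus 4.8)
The plan is to exploit the fact that every single-step transition in the operational semantics consumes at least one unit of time, which forces the length of any execution path to be bounded by its total running time. Concretely, I would first record the elementary observation that in each of the transition rules of Section \ref{sec:qram-operation} the associated time $T_i$ is either the constant $1$ or a sum of terms of the form $l(\cdot)$. Since the cost function is positive-valued (so that $l(n) \ge 1$ for every integer $n$, under both the constant and the logarithmic cost criterion), every single-step transition satisfies $T_i \ge 1$. Consequently, for every execution path $\pi$,
\[
\pi.T = \sum_{i=1}^{\abs{\pi}} T_i \ge \abs{\pi}.
\]
This inequality is the sole quantitative ingredient; the remainder is bookkeeping on the definitions.

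Next, write $T^* = \tau_P(c)$ and assume $T^* < \infty$. By the definition of $\tau_P$, every $\pi \in \mathcal{P}(c)$ satisfies $\pi.T \le T^*$, and combining this with the displayed inequality yields $\abs{\pi} \le T^*$. In particular no path in $\mathcal{P}(c)$ — and hence no terminal path counted in any $\mathcal{P}^{=m}_f(c)$, since $\mathcal{P}^{=m}_f(c) \subseteq \mathcal{P}(c)$ — can have length exceeding $T^*$, so $\mathcal{P}^{=m}_f(c) = \emptyset$ for every $m > T^*$.

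For the first claim, I would then observe that for every $n \ge T^*$,
\[
\mathcal{P}^n(c) = \bigcup_{m=0}^{n} \mathcal{P}^{=m}_f(c) = \bigcup_{m=0}^{T^*} \mathcal{P}^{=m}_f(c) = \mathcal{P}^{T^*}(c),
\]
because all the additional index sets are empty. Since $\llbracket P \rrbracket^n(c)$ is by definition a sum indexed by $\mathcal{P}^n(c)$, the sequence $\bigl(\llbracket P \rrbracket^n(c)\bigr)_n$ is constant for $n \ge T^*$; hence the limit defining $\llbracket P \rrbracket(c)$ exists and equals $\llbracket P \rrbracket^{T^*}(c) = \llbracket P \rrbracket^{\tau_P(c)}(c)$. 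For the second claim, applying the first gives
\[
\llbracket P \rrbracket(c)(c') = \sum_{\pi \in \mathcal{P}^{T^*}(c)} \pi.p \cdot \mathbb{I}_{\pi.c_{\abs{\pi}}}(c') = \sum_{\substack{\pi \in \mathcal{P}^{T^*}(c) \\ \pi.c_{\abs{\pi}} = c'}} \pi.p,
\]
so if $\llbracket P \rrbracket(c)(c') > 0$ there must exist at least one $\pi \in \mathcal{P}^{T^*}(c)$ with $\pi.c_{\abs{\pi}} = c'$. But every path in $\mathcal{P}^{T^*}(c) = \bigcup_{m} \mathcal{P}^{=m}_f(c)$ is terminal by construction, so $\pi.c_{\abs{\pi}} \in \mathcal{C}_f$, whence $c' \in \mathcal{C}_f$.

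I do not expect any genuine obstacle here, matching the paper's assessment that the lemma is straightforward; the only point deserving care is the positivity of the cost function $l$. If $l$ were allowed to vanish, a path could take infinitely many steps in finite time and the crucial bound $\pi.T \ge \abs{\pi}$ would fail, so $\tau_P(c) < \infty$ would no longer control the number of steps and neither the eventual-constancy argument nor the termination conclusion would go through. Stating explicitly that this positivity is exactly what makes the worst-case running time a meaningful bound on path length is the one thing I would be sure to highlight.
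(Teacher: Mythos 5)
Your proof is correct, and since the paper dismisses this lemma as ``straightforward'' without supplying an argument, what you have written is precisely the intended one: the bound $\pi.T \ge \abs{\pi}$ coming from each transition costing at least one time unit, the consequent emptiness of $\mathcal{P}^{=m}_f(c)$ for $m > \tau_P(c)$, and the eventual constancy of $\llbracket P \rrbracket^n(c)$. Your closing remark that everything hinges on the positivity of the cost function $l$ is a genuinely worthwhile observation that the paper leaves implicit (it is stated only in a commented-out list of assumptions on $l$), and it is exactly the right caveat to flag.
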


    \subsection{QRAM computations} \label{sec:qram-computing}

    We are interested in the time required for a QRAM to recognize a language on a finite alphabet $\Sigma = \{\sigma_0, \sigma_1, \sigma_2, \dots, \sigma_{m-1}\}$. An input string $x = \sigma_{i_0} \sigma_{i_1} \dots \sigma_{i_{n-1}}$ is represented in the machine as the sequence of integers $i_0, i_1, \dots, i_{n-1}, (-1)^\omega$, where the infinite occurrences of $-1$ at the end of the sequence indicate the end of the string. This input convention guarantees that whenever an instruction of the form READ $X_i$ is executed, $X_i$ always obtain $-1$ if the string has been read up. We use $\mathit{in}: \Sigma^* \to \mathbb{Z}^\omega$ to denote this conversion from an input string $x$ to the contents on input tape $\mathit{in}(x)$.

    After the execution of the QRAM, a finite sequence $y$ of integers is obtained on the output tape. In order to extract the output string on $\Sigma$ from the contents on the output tape, we define $\mathit{out}: \mathbb{Z} \to \Sigma$ by
    \[
        \mathit{out}(n) = \begin{cases}
            \sigma_n & 0 \leq n < m-1, \\
            \sigma_{m-1} & \text{otherwise}.
        \end{cases}
    \]
    This function can be extended to $\mathit{out}: \mathbb{Z}^* \to \Sigma^*$ by concatenation of each single conversion.

    \begin{example} Consider the simplest alphabet $\Sigma = \{0, 1\}$. The input string $x = 0101$ is converted to $\mathit{in}(x) = 0, 1, 0, 1, -1, -1, -1, \dots$. An output string $\mathit{out}(y) = 0111$ is extracted from the contents $y = 0, 1, 2, 3$ on the output tape.
    \end{example}

    With the above input/output conventions, we can now describe the notion of QRAM computation. Before the computation, $\text{IC}$ is set to $0$ initially, with all classical registers being zero and all quantum registers being $\ket 0$; that is, $\mu_0 = 0$ and
    \[
    \ket{\psi_0} = \bigotimes_{i=0}^\infty \ket 0_i.
    \]
    Suppose the input string is $x$, then the initial configuration is $c_x = (0, \mu_0, \ket{\psi_0}, \mathit{in}(x), \epsilon)$. The distribution $\mathcal{D}: \Sigma^* \to (\mathcal{C} \to [0, 1])$ of terminal configurations is:
    \[
        \mathcal{D}(x) = \llbracket P \rrbracket (c_x).
    \]
    The computational result $P: \Sigma^* \times \Sigma^* \to [0, 1]$ of QRAM $P$ is defined by
    \[
        P(x, y) = \sum_{c \in \mathcal{C}_f: \mathit{out}(c.y) = y} \mathcal{D}(x)(c),
    \]
    and the worst case running time $T_P: \Sigma^* \to \mathbb{N} \cup \{\infty\}$ is defined by
    \[
        T_P(x) = \tau_P (c_x).
    \]

 \begin{definition} Let $T: \mathbb{N} \to \mathbb{N}$. $P$ is said to be a $T(n)$-time QRAM, if for every $x \in \Sigma^*$, $T_P(x) \leq T(\abs{x})$.\end{definition}

 In particular, $P$ is said to be a polynomial-time QRAM, if it is a $p(n)$-time QRAM for some polynomial $p$. Furthermore, two complexity classes are defined as follows:
 \begin{itemize}\item \textbf{EQRAMP} stands for Exact Quantum Random Access Machine Polynomial-time. More precisely, a language $L \subseteq \{0,1\}^*$ is said to be in \textbf{EQRAMP}, if there is a polynomial-time QRAM $P$ such that for every $x \in \{0,1\}^*$,
    \begin{enumerate}
      \item $x \in L \Longleftrightarrow P(x, 1) = 1$,
      \item $x \notin L \Longleftrightarrow P(x, 1) = 0$.
    \end{enumerate}
\item \textbf{BQRAMP} stands for Bounded-error Quantum Random Access Machine Polynomial-time. More precisely, a language $L \subseteq \{0,1\}^*$ is said to be in \textbf{BQRAMP}, if there is a polynomial-time QRAM $P$ such that for every $x \in \{0,1\}^*$,
    \begin{enumerate}
      \item $x \in L \Longleftrightarrow P(x, 1) \geq \frac 2 3$,
      \item $x \notin L \Longleftrightarrow P(x, 1) \leq \frac 1 3$.
    \end{enumerate}\end{itemize}

    \subsection{Address Shifting and Address-safe QRAMs} \label{sec:qram-address-shifting}

    In order to describe algorithms more conveniently, we introduce the technique of address shifting, which enables us to flexibly deal with free variables.

    \begin{lemma} [Address Shifting] \label{lemma-qram-address-shifting}
        Let $T: \mathbb{N} \to \mathbb{N}$. For every $T(n)$-time QRAM $P$ and every integer $k > 0$, there is an $O(T(n))$-time QRAM $P'$ such that
        \begin{enumerate}
          \item $P(x, y) = P'(x, y)$ for every $x, y \in \Sigma^*$.
          \item $P'$ never accesses to the classical registers $X_1, X_2, \dots, X_k$.
        \end{enumerate}
    \end{lemma}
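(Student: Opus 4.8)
The plan is to obtain $P'$ from $P$ by a single \emph{register renaming} $\varphi$ on classical register indices, together with a short runtime \emph{address-translation} routine inserted only for the two indirect classical instructions. Concretely, set $\varphi(i) = i + k + 1$, reserve $X_0$ as the sole scratch register, and maintain throughout the simulation the invariant that $\mu'(\varphi(i)) = \mu(i)$ for every $i \ge 0$ (where $\mu,\mu'$ are the classical contents of $P,P'$), while the quantum state and the input/output tapes of $P'$ coincide with those of $P$. Since the image of $\varphi$ is $\{k+1,k+2,\dots\}$ and the only other register $P'$ ever names is $X_0$, the registers $X_1,\dots,X_k$ are never accessed, which is property (2) — provided the indirect steps below never dereference an address in the range $[1,k]$, exactly what a sign guard will ensure. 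The crucial point is the asymmetry between the two kinds of registers: classical indices are shifted by $\varphi$, so whenever $P$ uses a register \emph{content} as a classical address it must be translated by $+(k+1)$ at runtime; quantum indices are \emph{not} shifted, so the quantum and measurement instructions need only their classical address-holding operands relabeled by $\varphi$, with no runtime translation.

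First I would compile the ``easy'' instructions as the one-to-one relabeling $X_i \mapsto X_{\varphi(i)}$: this covers $X_i \gets C$, $X_i \gets X_j \pm X_k$, $\mathrm{TRA}$, $\mathrm{READ}$, $\mathrm{WRITE}$, and all of $\mathrm{CNOT}[Q_{X_i},Q_{X_j}]$, $H[Q_{X_i}]$, $T[Q_{X_i}]$, $X_i \gets M[Q_{X_j}]$. For the quantum and measurement instructions this is correct precisely because $\mu'(\varphi(i)) = \mu(i)$, so they address the same quantum register $Q_{\mu(i)}$ and exhibit the same termination behaviour on a negative address. As each easy instruction becomes a single instruction while each indirect instruction expands into a fixed-length block, I would record the map $g$ sending line $\ell$ of $P$ to the first line of its block in $P'$, set $g(L) = \abs{P'}$, and rewrite every $\mathrm{TRA}$ target $m$ as $g(m)$; a jump to $L$ (off the end of $P$) thus becomes a jump off the end of $P'$, preserving termination.

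The only instructions needing real work are $X_i \gets X_{X_j}$ and $X_{X_i} \gets X_j$. For the former, writing $a = \mu(j)$ (so $X_{\varphi(j)}$ holds $a$), I would emit the block: (1) $X_0 \gets 1$; (2) $X_0 \gets X_{\varphi(j)} + X_0$, giving $X_0 = a+1$; (3) $\mathrm{TRA}$ to the label $\textsc{proceed}$ if $X_0 > 0$; (4) $X_{\varphi(i)} \gets X_{X_{\varphi(j)}}$; (5) $\textsc{proceed}\colon X_0 \gets k+1$; (6) $X_0 \gets X_{\varphi(j)} + X_0$, giving $X_0 = a+(k+1) = \varphi(a)$; (7) $X_{\varphi(i)} \gets X_{X_0}$. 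Since $a \ge 0 \iff a+1 > 0$, line (3) branches to (5)--(7) exactly when $P$'s access is legal, and then (7) reads $P'$-register $\varphi(a)$, i.e.\ $\mu'(\varphi(a)) = \mu(a)$, matching $P$. When $a<0$, control falls to (4), an indirect load whose \emph{own} address $X_{\varphi(j)}=a$ is negative; by the QRAM semantics this halts $P'$ with time $l(a)$, exactly as $P$ halts on the illegal access with time $l(\mu(j))=l(a)$, and crucially it never touches register $a$. This is what keeps $P'$ clear of $X_1,\dots,X_k$: an \emph{unguarded} translation would compute $a+(k+1)\in[1,k]$ for $a\in[-k,-1]$ and wrongly address a reserved register. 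The store $X_{X_i}\gets X_j$ is handled by the symmetric block, using $X_{\varphi(i)}$ for the address and $X_{\varphi(j)}$ for the stored value. In both blocks $X_{\varphi(j)}$ (resp.\ $X_{\varphi(i)}$) is never clobbered before it is used, so the single scratch register $X_0$ suffices.

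For correctness I would prove, by induction on the number of executed instructions, a probability- and output-preserving correspondence between the execution paths of $P$ and of $P'$ under the invariant; the only branching transitions are measurements, simulated verbatim on the same quantum registers, so the induced distributions on terminal configurations agree and hence $P(x,y)=P'(x,y)$ for all $x,y$. For the time bound, each step of $P$ is simulated by $O(1)$ steps of $P'$, and I would verify that the per-step cost grows only by a constant factor: the new costs involve $l(a+(k+1))$ in place of $l(a)$, and since $k$ is fixed and $l$ is either constant or logarithmic, $l(a+(k+1)) = O(l(a))$; as every step of $P$ costs at least $1$, the number of steps is at most $T(n)$ and the total running time of $P'$ is $O(T(n))$. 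I expect the main obstacle to be exactly the indirect-address handling — matching $P$'s halting behaviour \emph{and} halting time on negative addresses while never dereferencing a reserved register, and keeping straight that classical addresses are translated but quantum addresses are not — whereas the time estimate and the path correspondence are routine.
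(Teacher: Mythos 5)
Your proposal is correct and follows essentially the same route as the paper's proof: shift every classical index by $\delta=k+1$, keep $X_0$ as the sole scratch register, translate indirect classical addresses at runtime behind a sign guard so that addresses in $[1,k]$ are never dereferenced, leave quantum addresses untranslated, and remap \textrm{TRA} targets through the block-offset map. The only (immaterial) divergence is how the illegal-address case is realised --- the paper jumps past the end of $P'$ so that the out-of-range IC forces termination, whereas you fall through to an indirect instruction with the original negative address so that the invalid-address rule forces termination; both yield the same terminal configuration and output.
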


    %\subsection{Address-safe QRAMs} \label{sec:qram-address-safe}

    With the help of Lemma \ref{lemma-qram-address-shifting}, we show that address-safety can be enforced for QRAMs.

    \begin{lemma} \label{lemma-qram-address}
        Let $T: \mathbb{N} \to \mathbb{N}$. For every $T(n)$-time QRAM $P$, there is a $O(T(n))$-time QRAM $P'$ such that
        \begin{enumerate}
          \item $P(x, y) = P'(x, y)$ for every $x, y \in \Sigma^*$.
          \item $P'$ never accesses to an invalid address in the execution. We call such a QRAM $P'$ address-safe.
        \end{enumerate}
    \end{lemma}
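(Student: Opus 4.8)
The plan is to guard every instruction that can reach the ``otherwise'' (terminating) branch in the operational semantics of Section \ref{sec:qram-operation}, so that the guarded program tests the relevant address before using it and explicitly terminates whenever it is negative. First I would apply Lemma \ref{lemma-qram-address-shifting} with $k = 1$, obtaining an $O(T(n))$-time QRAM $P_1$ with $P_1(x,y) = P(x,y)$ that never accesses the classical register $X_1$, which thereby becomes a free scratch register. Inspecting the construction in the proof of Lemma \ref{lemma-qram-address-shifting}, the two indirect instructions $X_i \gets X_{X_j}$ and $X_{X_i} \gets X_j$ are already translated into blocks that first test $\mu(j) \geq 0$ (resp. $\mu(i) \geq 0$), branch to termination otherwise, and in the surviving branch access only the address $\delta + \mu(j) > 0$ (resp. $\delta + \mu(i) > 0$). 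Hence in $P_1$ the only instructions that can still reach an invalid (negative) address are the quantum and measurement instructions $\mathrm{CNOT}[Q_{X_i},Q_{X_j}]$, $H[Q_{X_i}]$, $T[Q_{X_i}]$ and $X_i \gets M[Q_{X_j}]$; every direct classical instruction uses fixed nonnegative register indices and is already safe.

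Next I would replace each such quantum/measurement instruction by a guarded block using $X_1$ as scratch. For a single-address instruction $A[Q_{X_j}]$ (with $A \in \{H,T\}$) I would emit
\[
\begin{aligned}
& X_1 \gets 0 \\
& X_1 \gets X_1 - X_j \\
& \text{TRA } L' \text{ if } X_1 > 0 \\
& A[Q_{X_j}],
\end{aligned}
\]
where $L'$ is the length of the constructed program $P'$; the measurement $X_i \gets M[Q_{X_j}]$ is guarded identically, and $\mathrm{CNOT}[Q_{X_i},Q_{X_j}]$ is guarded by two such tests, first on $X_i$ and then on $X_j$. After the subtraction $X_1$ holds $-\mu(j)$, so the jump fires exactly when $\mu(j) < 0$; jumping to $L' = \abs{P'}$ puts the instruction counter out of range $[0,L')$, which by the first transitional rule sends the configuration to $\downarrow$, reproducing the termination that $P_1$ would take on the invalid address. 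If instead $\mu(j) \geq 0$, control reaches $A[Q_{X_j}]$ with a valid address and the operation is performed exactly as in $P_1$. As in the proof of Lemma \ref{lemma-qram-address-shifting}, all jump targets TRA $m$ inherited from $P_1$ must be relabelled through a $\mathit{label}(\cdot)$ function accounting for the inserted instructions.

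For correctness I would set up a step-by-step correspondence between execution paths of $P_1$ and $P'$: every guarded block is deterministic and modifies only $X_1$ and the instruction counter, and $X_1$ is never read by the simulated program, so a guarded block induces the same transition on the output tape, the quantum state and the remaining registers (with identical probabilities, including the two measurement branches) as the single instruction it replaces, while terminating blocks match $P_1$'s invalid-address terminations with the same output tape. Since $P(x,y)$ depends only on the output-tape contents, this yields $P'(x,y) = P_1(x,y) = P(x,y)$, and $P'$ never accesses an invalid address because every indirect and quantum access is now preceded by a successful nonnegativity test while the guard instructions themselves use only direct addressing. The timing is a routine constant-factor estimate: each guard costs $O(l(\mu(j)))$ (using $l(0)=O(1)$ and $l(m)\geq 1$), which is of the same order as the instruction it protects, so $P'$ remains $O(T(n))$-time. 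I expect the main obstacle to be the bookkeeping that makes the correspondence exact, in particular arguing that ignoring the auxiliary register $X_1$ yields a genuine simulation relation preserving both the full output distribution and the invalid-address terminations, rather than merely matching behaviour on the ``valid'' runs.
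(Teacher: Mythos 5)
Your proposal is correct and follows essentially the same route as the paper: free a scratch register via Lemma \ref{lemma-qram-address-shifting}, prepend a nonnegativity test that jumps to $L'$ (out of range, hence termination) before each potentially unsafe access, and relabel the jump targets accordingly. You are in fact slightly more explicit than the paper's own proof, which only exhibits the guard for classical indirect addressing, whereas you also spell out the guards for the quantum and measurement instructions $\mathrm{CNOT}[Q_{X_i},Q_{X_j}]$, $H[Q_{X_i}]$, $T[Q_{X_i}]$ and $X_i \gets M[Q_{X_j}]$ --- the only remaining sources of invalid addresses after shifting.
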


    The proofs of Lemma \ref{lemma-qram-address-shifting} and Lemma \ref{lemma-qram-address} are given in Appendix \ref{sec:proofs-of-qram-address}.

    \subsection{Measurement-postponed QRAMs} \label{sec:qram-measurement-postponed}

    In this subsection, we generalise the technique of postponing measurements, which has been widely used in quantum computing, to QRAMs.

    \begin{definition}
        A QRAM is said to be measurement-postponed, if no further operations are performed on the quantum registers once they are measured.
    \end{definition}

    \begin{theorem} \label{lemma-qram-measurement}
        Let $T: \mathbb{N} \to \mathbb{N}$. For every $T(n)$-time QRAM $P$, there is a $O(T(n))$-time QRAM $P'$ such that
        \begin{enumerate}
          \item $P(x, y) = P'(x, y)$ for every $x, y \in \Sigma^*$.
          \item $P'$ is measurement-postponed.
        \end{enumerate}
    \end{theorem}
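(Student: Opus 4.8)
The plan is to adapt the principle of deferred measurement to the QRAM model. The only obstacle to being measurement-postponed is an execution in which a quantum register is measured and then acted on again (by a gate or a further measurement), so I would remove such re-use by interposing a layer of indirection between the \emph{logical} addresses appearing in $P$ and the \emph{physical} addresses actually operated on, retiring each physical qubit the moment it is measured. By Lemma \ref{lemma-qram-address} I may first assume $P$ is address-safe, so every quantum address it touches is nonnegative and the translation below is well-defined. Then, by a simple relabelling of register indices in the spirit of Lemma \ref{lemma-qram-address-shifting}, I would reserve a region of classical registers disjoint from those used by $P$ to hold a counter $F$ for the next unused physical qubit and a table $\mathit{loc}$ sending each logical quantum address $a$ to its current physical address. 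I use even physical indices $2a$ as the default location of logical qubit $a$ and odd physical indices for qubits freshly allocated during the run, so the two pools are automatically disjoint; the table stores only the remapped (odd) entries, the default $2a$ being used whenever no entry is present. Each quantum instruction of $P$ acting on logical address $a=\mu(i)$ is translated, via one indirect table lookup, into the same instruction acting on $\mathit{loc}(a)$.

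The heart of the construction is the measurement instruction $X_i \gets M[Q_{X_j}]$ with $a=\mu(j)$. I would (1) measure the physical qubit $\mathit{loc}(a)$ and store the outcome $b\in\{0,1\}$ in $X_i$ exactly as $P$ does; (2) allocate a fresh physical qubit $p$ by incrementing $F$, its index being odd and hence never previously touched; (3) if $b=1$, apply the bit flip $X$ to $p$, which lies in our gate set since $X=HT^4H$ (using $T^4=Z$ and $HZH=X$), the conditional being realised by a TRA on $X_i$; and (4) update $\mathit{loc}(a)\gets p$. Afterwards the old physical qubit is never addressed again: operations on logical $a$ now go to $p$, operations on other logical addresses go to their own even defaults or odd remappings, and freshly allocated indices are distinct by construction. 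Hence every physical qubit is measured at most once and untouched thereafter, so $P'$ is measurement-postponed.

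For correctness I would argue by induction on the execution that at every step the active physical qubits (the range of $\mathit{loc}$ together with the untouched defaults) carry exactly the quantum state of $P$, the classical registers of $P$ carry their original contents, and the retired physical qubits sit in an unentangled product state that is simply ignored. The only nontrivial point is the measurement step, and it works precisely because a computational-basis measurement leaves the measured qubit in the product state $\ket{b}$, disentangled from the rest: the original post-measurement state is $\ket{b}_a\otimes(\text{rest})$, while in $P'$ the ``rest'' lives on the unchanged physical qubits and logical $a$ is now the fresh $p$ prepared in $\ket{b}$, so the two active states agree and $b$ (hence all subsequent classical control flow, timing, and output) is sampled with identical probability $\Abs{P_{\mathit{loc}(a)}\ket\psi}^2$. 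Summing over execution paths gives $P(x,y)=P'(x,y)$.

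Finally, the overhead is a constant number of classical instructions and quantum gates per original instruction; since at most $O(T(n))$ instructions run, the counter $F$ and all new addresses stay $O(T(n))$, so every added $l(\cdot)$ cost is within a constant factor of costs already incurred and the total running time is $O(T(n))$. I expect the main difficulty to be exactly the disjointness and retirement bookkeeping: one must guarantee, across repeated measurements of the same logical address and arbitrary interleaving with gates, that no retired physical qubit is ever revisited while still preserving the exact post-measurement joint state. Everything else — the gate identity $X=HT^4H$, the indirect lookups, and the constant-factor timing — is routine.
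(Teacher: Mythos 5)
Your construction is sound in substance but takes a genuinely different route from the paper's. The paper never measures the qubit that carries the logical data: it keeps logical qubit $a$ permanently at the fixed even address $2a$ and replaces each measurement by $\text{CNOT}[Q_{2a}, Q_{2\mathit{mea}+1}]$ followed by measuring the fresh odd ancilla $Q_{2\mathit{mea}+1}$, where $\mathit{mea}$ is a running measurement counter; measuring the ancilla collapses $Q_{2a}$ to $\ket{b}$ as a side effect, so $Q_{2a}$ remains the carrier forever and no gate ever touches a measured register. You instead measure the current carrier, retire it, and classically re-prepare $\ket{b}$ on a fresh odd qubit via a TRA-controlled $X = HT^4H$, which forces you to maintain a \emph{dynamic} table $\mathit{loc}$ from logical to physical addresses and to consult it before every gate. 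Both arguments rest on the same fact -- the measured qubit ends in an unentangled product state $\ket{b}$ -- but the paper's fixed remapping $a \mapsto 2a$ buys a much lighter implementation: no table, just one counter and a couple of doublings per instruction. Two points in your write-up need care. First, Lemma \ref{lemma-qram-address-shifting} only shifts addresses by a \emph{constant}, so it does not by itself give you a table indexed by unbounded logical addresses living disjointly from $P$'s own unbounded, indirectly addressed classical registers; you need an interleaving relabelling $i \mapsto 2i + c$ together with the corresponding fix-up of indirect addressing, which is routine but is precisely the bookkeeping you defer, and it is the part your approach pays for that the paper's avoids. Second, your claim that every added $l(\cdot)$ cost is within a constant factor of a cost already incurred is not literally true under the logarithmic criterion (a fresh address of size $\Theta(T(n))$ costs $\Theta(\log T(n))$ even when the original measurement cost $O(1)$); the paper's $\mathit{mea}$ counter has the same feature and the paper likewise only asserts a constant-factor slowdown, so you are no worse off, but the per-instruction comparison should not be stated as exact.
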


    \begin{proof}
        Suppose $P$ consists of $L$ instructions $P_0, P_1, \dots, P_{L-1}$. By Lemma \ref{lemma-qram-address}, we can assume that $P$ is address-safe.
 In order to postpone measurements, we recall the technique shown in Figure \ref{fig1}.
        \begin{figure}[!htp]
        \caption{Quantum circuits for postponing measurements}
        \label{fig1} \centering
        \[
        \Qcircuit @C=1em @R=.7em {
            & \qw & \meter & \gate{U} & \qw \\
        }
        =
        \Qcircuit @C=1em @R=.7em {
            &        & & \qw & \ctrl{1} & \gate{U} & \qw \\
            & \ket 0 & & \qw & \targ & \meter & \qw \\
        }
        \]
        \end{figure}
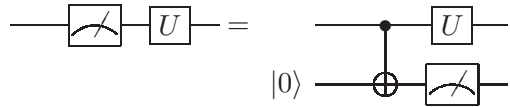
        Inspired by this, we use a special variable $\mathit{mea}$ to count how many measurements are performed. We split quantum registers into two disjoint parts, one of which is of even addresses and the other is of odd addresses. The quantum registers of even addresses are used for quantum gate operations while the rest (i.e. those of odd addresses) are used only for measurements. For a better understanding, we first give an intuition behind our construction. We use two functions $f(x) = 2x$ and $g(x) = 2x+1$. For a quantum gate (CNOT, Hadamard and $\pi/8$ gates), say $\text{CNOT}[Q_a, Q_b]$, we can perform $\text{CNOT}[Q_{f(a)}, Q_{f(b)}]$. For a measurement, say $M[Q_a]$, let $\mathit{mea}$ be the current number of measurements that have been performed, then we can perform $\text{CNOT}[Q_{f(a)}, Q_{g(\mathit{mea})}]$ and then measure $Q_{g(\mathit{mea})}$, i.e. $M[Q_{g(\mathit{mea})}]$.

        In order to precisely describe how to postpone measurements, we first list the lengths needed for all instructions in Table \ref{tab-postpone-measurement}. For $0 \leq l < L$, we write $\mathit{length}(l)$ for the length needed for postponing measurements according to Table \ref{tab-postpone-measurement}. To label the instructions in $P'$, we define:
        \[
            \mathit{label}(l) = \sum_{i=0}^{l-1} \mathit{length}(i)
        \]
        for $0 \leq l \leq L$. Especially, the length of $P'$ is defined to be $L' = \mathit{label}(L)$.

        \begin{table}[!htp]
            \centering
            \caption{Lengths of QRAM instructions for postponing measurements}
            \begin{tabular}{llc}
            \hline
            Type & Instruction & length \\
            \hline
            Classical & $X_i \gets C$, $C$ any integer & $1$ \\
            Classical & $X_i \gets X_j+X_k$ & $1$ \\
            Classical & $X_i \gets X_j-X_k$ & $1$ \\
            Classical & $X_i \gets X_{X_j}$ & $1$ \\
            Classical & $X_{X_i} \gets X_j$ & $1$ \\
            Classical & TRA $m$ if $X_j > 0$ & $1$ \\
            Classical & READ $X_i$ & $1$ \\
            Classical & WRITE $X_i$ & $1$ \\
            \hline
            Quantum & $\text{CNOT}[Q_{X_i}, Q_{X_j}]$ & $7$ \\
            Quantum & $H[Q_{X_i}]$ & $4$ \\
            Quantum & $T[Q_{X_i}]$ & $4$ \\
            \hline
            Measurement & $X_i \gets M[Q_{X_j}]$ & $12$ \\
            \hline
            \end{tabular}
            \label{tab-postpone-measurement}
        \end{table}

        Now we are ready to describe our construction of $P'$. For every $0 \leq l < L$, we convert $P_l$ to one or more instructions in $P'$. Note that we need three extra variables $\mathit{mea}, a$ and $b$.

        {\vskip 3pt}

        \textbf{Case 1}. If $P_l$ is $\text{CNOT}[Q_{X_i}, Q_{X_j}]$, the instructions for $P'$ are as follows.
        \begin{align*}
          \mathit{label}(l): & a \gets 0 \\
          & a \gets a+X_i \\
          & a \gets a+a \\
          & b \gets 0 \\
          & b \gets b+X_j \\
          & b \gets b+b \\
          & \text{CNOT}[Q_a, Q_b] \\
        \end{align*}

        \textbf{Case 2}. If $P_l$ is $H[Q_{X_i}]$ (resp. $T[X_{X_i}]$), the instructions for $P'$ are as follows.
        \begin{align*}
          \mathit{label}(l): & a \gets 0 \\
          & a \gets a+X_i \\
          & a \gets a+a \\
          & H[Q_a] (\text{resp. } T[Q_a]) \\
        \end{align*}

        \textbf{Case 3}. If $P_l$ is $X_i \gets M[Q_{X_j}]$, the instructions for $P'$ are as follows.
        \begin{align*}
          \mathit{label}(l): & a \gets 1 \\
          & \mathit{mea} \gets \mathit{mea}+a \\
          & b \gets 0 \\
          & b \gets b+\mathit{mea} \\
          & b \gets b+b \\
          & a \gets 1 \\
          & b \gets b+a \\
          & a \gets 0 \\
          & a \gets a+X_j \\
          & a \gets a+a \\
          & \text{CNOT}[Q_a, Q_b] \\
          & X_i \gets M[Q_b] \\
        \end{align*}

        \textbf{Case 4}. If $P_l$ is jumping, i.e. TRA $m$ if $X_j > 0$, we use a single modified instruction
        \begin{align*}
          \mathit{label}(l): & \text{TRA } m' \text{ if } X_{j+\delta} > 0
        \end{align*}
        with $m' = \mathit{label}(m)$.

        {\vskip 3pt}

        \textbf{Case 5}. For other cases, use the same instructions as in $P$.

        {\vskip 3pt}

        It can be seen that the constructed QRAM $P'$ simulates QRAM $P$ with a constant factor slowdown. Since $\mathit{mea}$ always increases, no quantum register will be measured more than once. To the end, according to Lemma \ref{lemma-qram-address-shifting}, the variables $\mathit{mea}, a$ and $b$ are involved by shifting the address to the right by $\delta = 4$.
    \end{proof}

    \begin{remark}
        Since the CNOT gate is considered to be more expensive than a single-qubit gate in the current generation of quantum hardware (for more details about the cost of quantum gates, see \cite{Lee06}), one may be concerned about how many CNOT gates are used in the measurement-postponed simulation. In the proof of Theorem \ref{lemma-qram-measurement}, we see that every time a measurement is performed in the simulated QRAM, an additional qubit is introduced and an extra CNOT gate is performed in the simulation. Therefore, if the simulated QRAM performs measurements for $m$ times, the measurement-postponed simulation will need $m$ additional qubits and will require $m$ extra CNOT gates.
        %Instead of the previous mentioned CNOT gates and additional qubits, the simulation performs the same number of quantum gates as in the simulated QRAM.
    \end{remark}

    \section{Computations of QRASPs} \label{sec:def-qrasp}

    In this section, we define the computations of QRASPs in terms of operational and denotational semantics, in parallel with what we did for QRAMs in the last section.

    \subsection{Operational semantics} \label{sec:qrasp-operational}

    A configuration of a QRASP $P$ is a tuple $(\xi, \zeta, \mu, \ket\psi, x, y)$, where:
    \begin{enumerate}
      \item $\xi \in \mathbb{N} \cup \{\downarrow\}$ denotes the current IC, with $\downarrow$ indicating the end of execution.
      \item $\zeta \in \mathbb{Z}$ denotes the current AC.
      \item $\mu: \mathbb{N} \to \mathbb{Z}$ is the description of all contents of classical registers;
      \item $\ket\psi \in \mathcal{H}= \bigotimes_{i=0}^\infty \mathcal{H}_i$ is the state of all quantum registers, with $\mathcal{H}_i = \operatorname{span} \{ \ket 0_i, \ket 1_i \}$;
      \item $x \in \mathbb{Z}^\omega$ is a sequence of integers to read on the input tape;
      \item $y \in \mathbb{Z}^*$ is a sequence of printed integers on the output tape.
    \end{enumerate}

We write $\mathcal{C} = (\mathbb{N} \cup \{\downarrow\}) \times \mathbb{Z} \times \mathbb{Z}^\mathbb{N} \times \mathcal{H} \times \mathbb{Z}^\omega \times \mathbb{Z}^*$ for the set of all configurations. A configuration $c = (\xi, \zeta, \mu, \ket\psi, x, y) \in \mathcal{C}$ is a terminal configuration if $\xi =\ \downarrow$. Let $\mathcal{C}_f \subseteq \mathcal{C}$ denote the set of all terminal configurations.

    Similar to the case of QRAMs, the execution transition of a QRASP $P$ is a function $\rightarrow: \mathcal{C} \times \mathcal{C} \to [0, 1] \times \mathbb{N}$ defined by the following rules:
    \begin{enumerate}
      \item When $\mu(\xi)$ is beyond $[1, 11]$,
      \[
        (\xi, \zeta, \mu, \ket\psi, x, y) \xrightarrow[l(\xi)+l(\mu(\xi))]{} (\downarrow, \zeta, \mu, \ket\psi, x, y).
      \]
      \item When $\mu(\xi) = 1$,
      \[
        (\xi, \zeta, \mu, \ket\psi, x, y) \xrightarrow[l(\xi)+l(\mu(\xi+1))]{} (\xi+2, \mu(\xi+1), \mu, \ket\psi, x, y).
      \]
      \item When $\mu(\xi) = 2$, if $\mu(\xi+1) \geq 0$, then
      \[
        (\xi, \zeta, \mu, \ket\psi, x, y) \xrightarrow[l(\xi)+l(\mu(\xi+1))+l(\zeta)+l(\mu(\mu(\xi+1)))]{} (\xi+2, \zeta+\mu(\mu(\xi+1)), \mu, \ket\psi, x, y);
      \]
      otherwise,
      \[
        (\xi, \zeta, \mu, \ket\psi, x, y) \xrightarrow[l(\xi)+l(\mu(\xi+1))]{} (\downarrow, \zeta, \mu, \ket\psi, x, y).
      \]
      \item When $\mu(\xi) = 3$, if $\mu(\xi+1) \geq 0$, then
      \[
        (\xi, \zeta, \mu, \ket\psi, x, y) \xrightarrow[l(\xi)+l(\mu(\xi+1))+l(\zeta)+l(\mu(\mu(\xi+1)))]{} (\xi+2, \zeta-\mu(\mu(\xi+1)), \mu, \ket\psi, x, y);
      \]
      otherwise,
      \[
        (\xi, \zeta, \mu, \ket\psi, x, y) \xrightarrow[l(\xi)+l(\mu(\xi+1))]{} (\downarrow, \zeta, \mu, \ket\psi, x, y).
      \]
      \item When $\mu(\xi) = 4$, if $\mu(\xi+1) \geq 0$, then
      \[
        (\xi, \zeta, \mu, \ket\psi, x, y) \xrightarrow[l(\xi)+l(\mu(\xi+1))+l(\zeta)]{} (\xi+2, \zeta, \mu_{\mu(\xi+1)}^\zeta, \ket\psi, x, y);
      \]
      otherwise,
      \[
        (\xi, \zeta, \mu, \ket\psi, x, y) \xrightarrow[l(\xi)+l(\mu(\xi+1))]{} (\downarrow, \zeta, \mu, \ket\psi, x, y).
      \]
      \item When $\mu(\xi) = 5$, if $\zeta > 0$, then:
      \begin{enumerate}
        \item if $\mu(\xi+1) \geq 0$,
          \[
            (\xi, \zeta, \mu, \ket\psi, x, y) \xrightarrow[l(\xi)+l(\mu(\xi+1))+l(\zeta)]{} (\mu(\xi+1), \zeta, \mu, \ket\psi, x, y),
          \]
        \item if $\mu(\xi+1) < 0$,
          \[
            (\xi, \zeta, \mu, \ket\psi, x, y) \xrightarrow[l(\xi)+l(\mu(\xi+1))+l(\zeta)]{} (\downarrow, \zeta, \mu, \ket\psi, x, y);
          \]
      \end{enumerate}
      otherwise,
      \[
        (\xi, \zeta, \mu, \ket\psi, x, y) \xrightarrow[l(\xi)+l(\zeta)]{} (\xi+2, \zeta, \mu, \ket\psi, x, y).
      \]
      \item When $\mu(\xi) = 6$, if $\mu(\xi+1) \geq 0$, then let $a$ be the first integer in $x$, and
      \[
        (\xi, \zeta, \mu, \ket\psi, x, y) \xrightarrow[l(\xi)+l(\mu(\xi+1))+l(a)]{} (\xi+2, \zeta, \mu_{\mu(\xi+1)}^a, \ket\psi, x', y),
      \]
      where $x'$ denotes the string obtained by deleting the first integer from $x$;
      otherwise,
      \[
        (\xi, \zeta, \mu, \ket\psi, x, y) \xrightarrow[l(\xi)+l(\mu(\xi+1))]{} (\downarrow, \zeta, \mu, \ket\psi, x, y).
      \]
      \item When $\mu(\xi) = 7$, if $\mu(\xi+1) \geq 0$, then
      \[
        (\xi, \zeta, \mu, \ket\psi, x, y) \xrightarrow[l(\xi)+l(\mu(\xi+1))+l(\mu(\mu(\xi+1)))]{} (\xi+2, \zeta, \mu, \ket\psi, x, y'),
      \]
      where $y'$ denotes the string obtained by appending $\mu(\mu(\xi+1))$ to $y$;
      otherwise,
      \[
        (\xi, \zeta, \mu, \ket\psi, x, y) \xrightarrow[l(\xi)+l(\mu(\xi+1))]{} (\downarrow, \zeta, \mu, \ket\psi, x, y).
      \]
      \item When $\mu(\xi) = 8$, if $\mu(\xi+1) \geq 0$ and $\mu(\xi+2) \geq 0$, then
      \[
        (\xi, \zeta, \mu, \ket\psi, x, y) \xrightarrow[l(\xi)+l(\mu(\xi+1))+l(\mu(\xi+2))]{} (\xi+3, \zeta, \mu, \text{CNOT}_{\mu(\xi+1), \mu(\xi+2)}\ket\psi, x, y);
      \]
      otherwise,
      \[
        (\xi, \zeta, \mu, \ket\psi, x, y) \xrightarrow[l(\xi)+l(\mu(\xi+1))]{} (\downarrow, \zeta, \mu, \ket\psi, x, y).
      \]
      \item When $\mu(\xi) = 9$, if $\mu(\xi+1) \geq 0$, then
      \[
        (\xi, \zeta, \mu, \ket\psi, x, y) \xrightarrow[l(\xi)+l(\mu(\xi+1))]{} (\xi+2, \zeta, \mu, H_{\mu(\xi+1)}\ket\psi, x, y);
      \]
      otherwise,
      \[
        (\xi, \zeta, \mu, \ket\psi, x, y) \xrightarrow[l(\xi)+l(\mu(\xi+1))]{} (\downarrow, \zeta, \mu, \ket\psi, x, y).
      \]
      \item When $\mu(\xi) = 10$, if $\mu(\xi+1) \geq 0$, then
      \[
        (\xi, \zeta, \mu, \ket\psi, x, y) \xrightarrow[l(\xi)+l(\mu(\xi+1))]{} (\xi+2, \zeta, \mu, T_{\mu(\xi+1)}\ket\psi, x, y);
      \]
      otherwise,
      \[
        (\xi, \zeta, \mu, \ket\psi, x, y) \xrightarrow[l(\xi)+l(\mu(\xi+1))]{} (\downarrow, \zeta, \mu, \ket\psi, x, y).
      \]
      \item When $\mu(\xi) = 11$, if $\mu(\xi+1) \geq 0$, then
      \[
        (\xi, \zeta, \mu, \ket\psi, x, y) \xrightarrow[l(\xi)+l(\mu(\xi+1))]{\Abs{P_{\mu(\xi+1)}\ket\psi}^2} (\xi+2, 0, \mu, \frac{P_{\mu(\xi+1)}\ket\psi}{\Abs{P_{\mu(\xi+1)}\ket\psi}^2}, x, y),
      \]
      \[
        (\xi, \zeta, \mu, \ket\psi, x, y) \xrightarrow[l(\xi)+l(\mu(\xi+1))]{1-\Abs{P_{\mu(\xi+1)}\ket\psi}^2} (\xi+2, 1, \mu, \frac{(I-P_{\mu(\xi+1)})\ket\psi}{1-\Abs{P_{\mu(\xi+1)}\ket\psi}^2}, x, y);
      \]
      otherwise,
      \[
        (\xi, \zeta, \mu, \ket\psi, x, y) \xrightarrow[l(\xi)+l(\mu(\xi+1))]{} (\downarrow, \zeta, \mu, \ket\psi, x, y).
      \]
    \end{enumerate}

    \subsection{Denotational semantics} \label{sec:qrasp-denotational}

The notions of execution path, semantic function and  worst case running time for a QRASP can be defined in the same way as those for a QRAM in Subsection \ref{sec:qram-denotational}. Moreover, it is easy to show that Lemma \ref{lem-worst-time} holds for QRASPs too.

    \subsection{QRASP computations} \label{sec:qrasp-computing}

    A QRASP is a sequence $P = P_0, P_1, \dots, P_{L-1}$ of integers with $L = \abs{P}$ the length of $P$, which is initially stored in the classical registers. More precisely, the initial contents of classical registers are described by
    \[
        \mu_0(\xi) = \begin{cases}
            P_\xi & 0 \leq \xi < L, \\
            0 & \text{otherwise}.
        \end{cases}
    \]
    Moreover, initially IC and AC are set to $0$ and all quantum registers are $\ket 0$, i.e.
    \[
        \ket{\psi_0} = \bigotimes_{i=0}^\infty \ket 0_i.
    \]
    Similar to the case of QRAMs, the computation of a QRASP $P$ is defined on finite strings over a finite alphabet $\Sigma$. Suppose the input string is $x \in \Sigma^*$, then the computation starts from the configuration
    \[
    c_x = (0, 0, \mu_0, \ket{\psi_0}, \mathit{in}(x), \epsilon).
    \]
    The computational result $P: \Sigma^* \times \Sigma^* \to [0, 1]$ is then defined by
    \[
        P(x, y) = \sum_{c \in \mathcal{C}_f: \mathit{out}(c.y) = y} \llbracket P \rrbracket (c_x)(c),
    \]
    and the worst case running time $T_P: \Sigma^* \to \mathbb{N} \cup \{\infty\}$ is defined by
    \[
        T_P(x) = \tau_P(c_x).
    \]
    Furthermore, let $T: \mathbb{N} \to \mathbb{N}$. Then $P$ is said to be a $T(n)$-time QRASP, if for every $x \in \Sigma^*$, $T_P(x) \leq T(\abs{x})$.

    \section{Comparison of QRAMs and QRASPs} \label{sec:sim-qram-qrasp}

With the precise definitions of QRAMs and QRASPs given in the previous sections, we are now ready to prove Theorem \ref{thm-qram-qrasp}. Subsection \ref{sec:qrasp-by-qram} shows how QRAMs can simulate QRASPs, and Subsection \ref{sec:qram-by-qrasp} shows how QRASPs can simulate QRAMs.

    \subsection{QRAMs simulate QRASPs} \label{sec:qrasp-by-qram}

    \subsubsection{Simulation construction}

    Let QRASP $P$ be given as a sequence $P_0, P_1, P_2, \dots, P_{L-1}$ of integers.
    The idea of the simulation is to hardcode $P$ into the classical registers of a QRAM $P'$, and then simulate the execution of $P$. The details of the simulation are presented in Algorithm \ref{algo1}.
   For readability, we only present $P'$ as pseudo-codes. (The translation from the pseudo-codes to QRAM instructions can be done in a familiar way, and the details are provided in Appendix \ref{qram-to-qrasp} for completeness).
 \begin{algorithm}[!htp]
        \caption{QRAM pseudo-code for simulating QRASP.}
        \label{algo1}
        \begin{algorithmic}[1]
        \Require The input of the QRASP to be simulated.
        \Ensure The intended output of the QRASP to be simulated.

        \State integer array $\mathit{memory}$;
        \State integer $\text{IC}, \text{AC}, \mathit{flag}, \mathit{op}, j, k$;

        \State $\mathit{memory}[0] \gets P_0; \mathit{memory}[1] \gets P_1; \dots \mathit{memory}[L-1] \gets P_{L-1};$
        \While {$\mathit{flag} = 0$}
            \State $\mathit{op} \gets \mathit{memory}[\text{IC}]$;
            \If {$\mathit{op} = 1$}
                \State $j \gets \mathit{memory}[\text{IC}+1]$; $\text{AC} \gets j$; $\text{IC} \gets \text{IC}+2$;
            \ElsIf {$\mathit{op} = 2$}
                \State $j \gets \mathit{memory}[\text{IC}+1]$; $\text{AC} \gets \text{AC}+\mathit{memory}[j]$; $\text{IC} \gets \text{IC}+2$;
            \ElsIf {$\mathit{op} = 3$}
                \State $j \gets \mathit{memory}[\text{IC}+1]$; $\text{AC} \gets \text{AC}-\mathit{memory}[j]$; $\text{IC} \gets \text{IC}+2$;
            \ElsIf {$\mathit{op} = 4$}
                \State $j \gets \mathit{memory}[\text{IC}+1]$; $\mathit{memory}[j] \gets \text{AC}$; $\text{IC} \gets \text{IC}+2$;
            \ElsIf {$\mathit{op} = 5$}
                \If {$\text{AC} > 0$}
                    \State $j \gets \mathit{memory}[\text{IC}+1]$; $\text{IC} \gets j$;
                \Else
                    \State $\text{IC} \gets \text{IC}+2$;
                \EndIf
            \ElsIf {$\mathit{op} = 6$}
                \State $j \gets \mathit{memory}[\text{IC}+1]$; READ $\mathit{memory}[j]$; $\text{IC} \gets \text{IC}+2$;
            \ElsIf {$\mathit{op} = 7$}
                \State $j \gets \mathit{memory}[\text{IC}+1]$; WRITE $\mathit{memory}[j]$; $\text{IC} \gets \text{IC}+2$;
            \ElsIf {$\mathit{op} = 8$}
                \State $j \gets \mathit{memory}[\text{IC}+1]$; $k \gets \mathit{memory}[\text{IC}+2]$; CNOT$[Q_j, Q_k]$; $\text{IC} \gets \text{IC}+3$;
            \ElsIf {$\mathit{op} = 9$}
                \State $j \gets \mathit{memory}[\text{IC}+1]$; $H[Q_j]$; $\text{IC} \gets \text{IC}+2$;
            \ElsIf {$\mathit{op} = 10$}
                \State $j \gets \mathit{memory}[\text{IC}+1]$; $T[Q_j]$; $\text{IC} \gets \text{IC}+2$;
            \ElsIf {$\mathit{op} = 11$}
                \State $j \gets \mathit{memory}[\text{IC}+1]$; $\text{AC} \gets M[Q_j]$; $\text{IC} \gets \text{IC}+2$;
            \Else
                \State $\mathit{flag} \gets 1$;
            \EndIf
        \EndWhile

        \end{algorithmic}
    \end{algorithm}

    \subsubsection{Correctness proof}\label{proof-qram}

    The remaining part of this subsection is devoted to prove correctness of Algorithm \ref{algo1}; that is, for any QRASP $P$, the QRAM $P'$ constructed by Algorithm \ref{algo1} can simulate $P$ with a suitable time complexity.

    Let $(\xi, \zeta, \mu, \ket\psi, x, y)$ be a configuration of $P$ and $(\xi', \mu', \ket{\psi'}, x', y')$ a configuration of $P'$. We use $\mu'(\mathit{var})$ to denote the value of variable $\mathit{var}$ stored in $P'$ according to $\mu'$.

    \iffalse
    {\color{blue}For example, $\mu'(\text{IC})$ denotes $\mu'(3)$, $\mu'(\mathit{flag})$ denotes $\mu'(5)$, and $\mu'(\mathit{memory}[j])$ denotes $\mu'(j+\delta)$ for $j \geq 0$}????.
    \fi

 \begin{definition}   We say that a QRAM configuration $(\xi', \mu', \ket{\psi'}, x', y')$ agrees with a QRASP configuration $(\xi, \zeta, \mu, \ket\psi, x, y)$, written $(\xi', \mu', \ket{\psi'}, x', y') \models (\xi, \zeta, \mu, \ket\psi, x, y)$, if
    \begin{enumerate}
      \item $\mu'(\text{AC}) = \zeta$, ($\mu'(\mathit{flag}) = 1$ or $\xi' = \downarrow$), $\mu'(\mathit{memory}[j]) = \mu(j)$ for every $j \in \mathbb{N}$, $\ket{\psi'} = \ket\psi$, $x'=x$ and $y' = y$ in the case $\xi = \downarrow$; or
      \item $\mu'(\text{AC}) = \zeta$, $\mu'(\text{IC}) = \xi$, $\mu'(\mathit{memory}[j]) = \mu(j)$ for every $j \in \mathbb{N}$, $\ket{\psi'} = \ket\psi$, $x'=x$ and $y' = y$ in the case $\xi \in \mathbb{N}$.
    \end{enumerate}\end{definition}

    \begin{lemma} \label{lemma-qrasp-by-qram-unique-config}
        For every $c' \in \mathcal{C}'$, there is a unique $c \in \mathcal{C}$ such that $c' \models c$.
    \end{lemma}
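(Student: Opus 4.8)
The plan is to define, for a given QRAM configuration $c' = (\xi', \mu', \ket{\psi'}, x', y')$, the QRASP configuration $c = (\xi, \zeta, \mu, \ket\psi, x, y)$ it is meant to represent by reading each component of $c$ directly off $c'$, and then to verify simultaneously that this $c$ agrees with $c'$ (existence) and that it is the only configuration that does (uniqueness). Since $\models$ is specified component-by-component, the argument naturally splits according to which component of $c$ we are recovering.

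I would first dispatch every component except the QRASP instruction counter $\xi$. Inspecting the definition of $\models$, clauses 1 and 2 impose identical constraints on these components, namely $\zeta = \mu'(\text{AC})$, $\mu(j) = \mu'(\mathit{memory}[j])$ for all $j \in \mathbb{N}$, $\ket\psi = \ket{\psi'}$, $x = x'$ and $y = y'$. Consequently, no matter which clause applies, any $c$ with $c' \models c$ has these five components completely determined by $c'$; this both fixes their definition for the existence direction and yields their uniqueness.

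The remaining component $\xi$ is the crux, as it is the only one whose constraint differs between the clauses. Here the natural rule is to declare the reconstructed QRASP to have terminated, $\xi = \downarrow$, exactly when $\mu'(\mathit{flag}) = 1$ or $\xi' = \downarrow$, and otherwise to set $\xi = \mu'(\text{IC})$. With this choice the clause matching the value of $\xi$ is satisfied by construction, so existence holds. For uniqueness it remains to rule out that a single $c'$ simultaneously admits a terminal reading ($\xi = \downarrow$, via clause 1) and a running reading ($\xi = \mu'(\text{IC}) \in \mathbb{N}$, via clause 2); two terminal readings both give $\xi = \downarrow$ and two running readings both give $\xi = \mu'(\text{IC})$, so this mixed case is the only threat to single-valuedness.

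I expect this exclusivity to be the main obstacle, since it is the one step that rests on the construction of $P'$ in Algorithm \ref{algo1} rather than on unfolding $\models$. To settle it I would invoke the address discipline of the simulation: the dedicated registers $\text{AC}$, $\text{IC}$ and $\mathit{flag}$ and the $\mathit{memory}$ block reside at disjoint, fixed addresses (secured by the shifting of Lemma \ref{lemma-qram-address-shifting}), so $\mu'(\text{AC})$, $\mu'(\text{IC})$ and $\mu'(\mathit{flag})$ are unambiguously read from $\mu'$; and $\mathit{flag}$ is raised to $1$ precisely in the terminating branch of Algorithm \ref{algo1}, i.e. exactly when the simulated QRASP has halted and $P'$ leaves its main loop. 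Granting this invariant, the predicate ``$\mu'(\mathit{flag}) = 1$ or $\xi' = \downarrow$'' signals a halted simulation and cannot coexist with a genuine running reading, the mixed case is excluded, and uniqueness of $\xi$ follows. Together with the explicit reconstruction of the other five components, this proves the lemma.
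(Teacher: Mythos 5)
Your decomposition is the same as the paper's: the paper's entire proof consists of the two observations that $\models$ is single-valued and total in $c$, left to inspection, and you simply unfold them component by component. The five components other than $\xi$ are handled identically in both clauses and your treatment of them is correct.

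The one step you rightly single out as the crux --- excluding the ``mixed case'' in which a single $c'$ admits both a terminal reading via clause 1 and a running reading via clause 2 --- is, however, not closed by the argument you give. Clause 2 of the definition of $\models$ imposes no condition whatsoever on $\mu'(\mathit{flag})$ or on $\xi'$; it only demands $\mu'(\text{IC}) = \xi$ with $\xi \in \mathbb{N}$. Hence for any $c'$ with $\mu'(\text{IC}) \in \mathbb{N}$ and with $\mu'(\mathit{flag}) = 1$ or $\xi' = \downarrow$, the terminal $c$ (clause 1) and the running $c$ with $\xi = \mu'(\text{IC})$ (clause 2) both satisfy $c' \models c$, and they differ. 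The invariant you invoke --- that $\mathit{flag}$ is raised exactly when the simulated QRASP halts --- is a property of the \emph{reachable} configurations of Algorithm \ref{algo1}, whereas the lemma quantifies over all of $\mathcal{C}'$; and even at a reachable halting configuration the register $\text{IC}$ still holds a natural number, so the running reading remains available and the invariant does not render clause 2 inapplicable. The exclusivity you need must come from the definition itself (clause 2 has to carry, tacitly, the negation of clause 1's side condition), which is evidently the intended reading; under a strictly literal reading existence can also fail, e.g.\ when $\mu'(\text{IC}) < 0$ while $\mu'(\mathit{flag}) \neq 1$ and $\xi' \neq \downarrow$, a case in which your reconstruction rule $\xi := \mu'(\text{IC})$ does not yield a legal QRASP instruction counter. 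To be fair, the paper's own one-line proof asserts both observations without confronting either point.
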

    \begin{proof} We only need to observe: (1) if $c' \models c_1$ and $c' \models c_2$, then $c_1 = c_2$; and (2) for every $c'$, there is a $c$ such that $c' \models c$.
    \end{proof}

    Let $\mathcal{C}$ and $\mathcal{C}'$ be the set of configurations of $P$ and $P'$, respectively, and let $c_0 \in \mathcal{C}$ and $c_0' \in \mathcal{C}'$ be their initial configurations.
    We write $\mathcal{C}'_{\text{L}5} = \{ c' \in \mathcal{C}': c'.\xi = 5 \}$ for the set of configurations of $P'$ that reaches Line $5$ in Algorithm \ref{algo1} (Here, we use the line number to indicate the current IC).

    \begin{lemma} \label{lemma-qrasp-by-qram-simulate-single-step}
        Let $c' \in \mathcal{C}'_{\text{L}5}$ and $c, d \in \mathcal{C}$. If $c' \models c$, and $c \xrightarrow[T]{p} d$,
        then there is a $d' \in \mathcal{C}'_{\text{L}5}$ such that $d' \models d$ and $c' \xrightarrow[\Theta(T)]{p} d'$.
    \end{lemma}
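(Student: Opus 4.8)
The plan is to prove the lemma by a case analysis on the operation code $\mu(\xi)$ stored at the current instruction counter of the QRASP. Since $c' \models c$ and $c' \in \mathcal{C}'_{\text{L}5}$, the QRAM $P'$ is poised to execute Line $5$, and because $\mu'(\text{IC}) = \xi$ and $\mu'(\mathit{memory}[k]) = \mu(k)$ for every $k$, the fetch $\mathit{op} \gets \mathit{memory}[\text{IC}]$ loads exactly $\mathit{op} = \mu(\xi)$. The value of $\mathit{op}$ then selects the same branch of the if--else ladder as the transition rule that governs $c \xrightarrow[T]{p} d$, and a single pass through the loop body returns control to Line $5$, producing the candidate configuration $d' \in \mathcal{C}'_{\text{L}5}$. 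So for each of the twelve rules (opcodes $1$ through $11$ together with the out-of-range rule) I would exhibit the corresponding block of QRAM code, trace its effect on $(\xi', \mu', \ket{\psi'}, x', y')$, and verify the two required properties: agreement $d' \models d$ and the cost/probability bound $c' \xrightarrow[\Theta(T)]{p} d'$.

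Establishing $d' \models d$ is the bookkeeping part. For a deterministic arithmetic or control instruction (opcodes $1$--$7$), the loop body updates $\text{AC}$, $\text{IC}$ and at most one cell of $\mathit{memory}$ exactly as the QRASP rule updates $\zeta$, $\xi$ and $\mu$, while leaving $\ket{\psi'}$, $x'$ and $y'$ synchronised with $\ket\psi$, $x$ and $y$; reading off the definition of $\models$ then yields $d' \models d$. The branch instruction BPA (opcode $5$) requires splitting on the sign of $\zeta$ and on $\mu(\xi+1)$, so as to match the QRASP's outcomes. For the gate instructions (opcodes $8$--$10$) the loop body applies the identical unitary to the same quantum registers, so $\ket{\psi'}$ transforms exactly as $\ket\psi$ while the classical data only advance $\text{IC}$.

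The cost estimate is where the real work lies, and I expect it to be the main obstacle. The QRASP charges $T$ equal to a sum of a bounded number of terms of the form $l(\cdot)$ (for instance $l(\xi)+l(\mu(\xi+1))+l(\zeta)+l(\mu(\mu(\xi+1)))$ for ADD). The simulating block consists of $O(1)$ QRAM instructions, and I would argue that (i) each of these instructions costs at most a constant times the sum of precisely those $l$-terms, because indirect accesses such as $\mathit{memory}[\text{IC}]$ and $\mathit{memory}[j]$ cost $l(\text{base}+\text{IC})+l(\mu(\xi))$ and $l(j)+l(\mu(\mu(\xi+1)))$ respectively, with the fixed base offset and the bounded opcode contributing only $\Theta(l(\xi))$ and $O(1)$; and (ii) the QRAM already pays $\Omega(l(\xi))$ merely to fetch $\mathit{memory}[\text{IC}]$ and advance $\text{IC}$, and more generally every $l$-term appearing in $T$ is incurred at least once by the block. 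Together these give both $O(T)$ and $\Omega(T)$, hence $\Theta(T)$. Care is needed to absorb the constant address shift $\delta$ introduced when the pseudo-code is compiled into genuine QRAM instructions, using that $l(m+\delta)=\Theta(l(m))$ for any fixed $\delta$.

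Finally the probability: for opcodes $1$--$10$ every transition is deterministic ($p=1$) and so is the loop body, so $p$ is preserved trivially. The only probabilistic rule is the measurement, opcode $11$, where the QRASP sets $\zeta=0$ with probability $\Abs{P_{\mu(\xi+1)}\ket\psi}^2$ and $\zeta=1$ with the complementary probability, collapsing $\ket\psi$ accordingly. The simulating block performs $\text{AC} \gets M[Q_j]$ with $j=\mu(\xi+1)$, and by the QRAM measurement rule this yields outcome $0$ (stored in $\text{AC}$) with probability $\Abs{P_j\ket\psi}^2$ and outcome $1$ with the complementary probability, together with the matching post-measurement states; hence $d' \models d$ holds on each branch with the same probability. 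I would remark that the statement is to be read over non-terminating transitions, so that $d'$ indeed lands back in $\mathcal{C}'_{\text{L}5}$; the terminating sub-cases (an out-of-range opcode, or a negative address argument) send both machines to $\downarrow$ and are recorded by clause (1) of $\models$, to be handled alongside the halting step of the overall simulation.
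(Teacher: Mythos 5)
Your proposal is correct and is essentially the same argument the paper intends: the paper's own proof of this lemma is the single line ``Direct from the operational semantics,'' and your case analysis on the opcode, the bookkeeping for $\models$, the $\Theta(T)$ cost accounting (including the address-shift absorption via $l(m+\delta)=\Theta(l(m))$), and the branch-matching for the measurement opcode are precisely the details that one-liner leaves implicit. Your closing remark that terminal transitions land $d'$ outside $\mathcal{C}'_{\text{L}5}$ and must be read via clause (1) of $\models$ is a fair observation about an imprecision the paper itself glosses over.
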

    \begin{proof}
        Direct from the operational semantics.
    \end{proof}

    We use $\mathcal{P}$ and $\mathcal{P}'$ to denote the sets of all possible execution paths of $P$ and $P'$, respectively.
    Let $\pi' \in \mathcal{P}'_f(c_0')$ be a path of length $\abs{\pi'} = k$:
    \[
        \pi': c_0' \xrightarrow[T_1']{p_1'} c_1' \xrightarrow[T_2']{p_2'} \dots \xrightarrow[T_{k-1}']{p_{k-1}'} c_{k-1}' \xrightarrow[T_k']{p_k'} c_k',
    \]
    and let $0 < i_0 < i_1 < \dots < i_{m-1} < k$ be all indices that $c'_{(j)} = c'_{i_j} \in \mathcal{C}'_{\text{L}5}$ for $0 \leq j < m$. Then it can be written as
    \[
        \pi': c_0' \xrightarrow[T'_{(0)}]{p'_{(0)}}^* c'_{(0)} \xrightarrow[T'_{(1)}]{p'_{(1)}}^* c'_{(1)} \xrightarrow[T'_{(2)}]{p'_{(2)}}^* \dots \xrightarrow[T'_{(m-1)}]{p'_{(m-1)}}^* c'_{(m-1)} \xrightarrow[T'_{(m)}]{p'_{(m)}}^* c'_{(m)} = c'_k,
    \]
    where
    \[
    p'_{(j)} = \prod_{l=i_{j-1}+1}^{i_j} p'_{l},
    \]
    and
    \[
    T'_{(j)} = \sum_{l=i_{j-1}+1}^{i_j} T'_{l}
    \]
    for $0 \leq j \leq m$, and $i_{-1} = 0, i_m = k$. We define $\Abs{\pi'} = m$.

    \begin{lemma}
        $c'_{(0)} \models c_0$ and $c'_0 \xrightarrow[O(1)]{1}^* c'_{(0)}$.
    \end{lemma}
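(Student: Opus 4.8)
The plan is to trace the deterministic prefix of $P'$ that realises the initialisation part of Algorithm~\ref{algo1} (Lines 1--4) and to verify that the very first configuration reaching Line~5 satisfies the agreement relation $\models$ with the initial QRASP configuration $c_0$. Since $c_0 = (0, 0, \mu_0, \ket{\psi_0}, \mathit{in}(x), \epsilon)$ has $\xi = 0 \in \mathbb{N}$, the relevant clause in the definition of $\models$ is the second one, so it suffices to check the listed equalities for $\text{AC}$, $\text{IC}$, $\mathit{memory}[\cdot]$ together with $\ket{\psi'}=\ket\psi$, $x'=x$ and $y'=y$.

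First I would observe that Lines 1--2 are mere declarations and that Line 3 performs the $L$ assignments $\mathit{memory}[j] \gets P_j$ for $0 \le j < L$, each realised by a single QRAM instruction of the form $X_i \gets C$ whose execution time is $1$. After these assignments, $\mathit{memory}[j]$ holds $P_j = \mu_0(j)$ for $0 \le j < L$; for $j \ge L$ the corresponding QRAM register has never been written and hence retains its initial value $0 = \mu_0(j)$, so $\mu'(\mathit{memory}[j]) = \mu(j)$ for every $j \in \mathbb{N}$. The variables $\text{IC}, \text{AC}, \mathit{flag}, \mathit{op}, j, k$ are untouched by Line 3 and thus still hold their initial value $0$; in particular $\mu'(\text{IC}) = 0 = \xi$ and $\mu'(\text{AC}) = 0 = \zeta$. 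No quantum instruction, $\text{READ}$ or $\text{WRITE}$ occurs before Line 5, so $\ket{\psi'} = \ket{\psi_0} = \ket\psi$, $x' = \mathit{in}(x) = x$ and $y' = \epsilon = y$. On entering the loop the test $\mathit{flag} = 0$ holds, control passes to Line 5, and the resulting configuration $c'_{(0)}$ satisfies $c'_{(0)} \models c_0$.

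For the second assertion, the prefix from $c'_0$ to $c'_{(0)}$ consists of a fixed number of classical instructions: exactly $L = \abs{P}$ assignments in Line 3 plus the single conditional jump implementing the loop-entry test. None of these is a measurement, so the prefix is a single execution path taken with probability $1$ (and is therefore identical for every $\pi'$, which makes $c'_{(0)}$ well defined). Each instruction costs $O(1)$ time (an $X_i \gets C$ assignment costs $1$, and the loop test costs $l(\mathit{flag}) = l(0)$, a constant), and the total instruction count is a constant depending only on $L$, which is fixed once $P$ is given and is independent of the input $x$. Hence $c'_0 \xrightarrow[O(1)]{1}^* c'_{(0)}$.

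The only genuine subtlety to pin down is the correspondence between the abstract pseudocode variables of Algorithm~\ref{algo1} and the concrete QRAM registers underlying $\mu'(\cdot)$, together with the fact that every register not explicitly initialised still holds $0$; this is precisely what makes $\mu'(\mathit{memory}[j]) = \mu_0(j) = 0$ for $j \ge L$ and lets the finite initialisation code faithfully represent the infinite all-zero tail of $\mu_0$. Once the translation of Appendix~\ref{qram-to-qrasp} fixes this correspondence (in particular the offset separating $\mathit{memory}[\cdot]$ from the scalar variables), every step above is immediate from the operational semantics of QRAMs.
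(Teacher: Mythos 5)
Your proof is correct and takes the same (and really the only possible) route as the paper, which simply declares the claim ``Obvious'': you trace the deterministic initialisation prefix of Algorithm~\ref{algo1}, check each clause of the agreement relation (including the key point that unwritten registers retain the value $0$, matching the all-zero tail of $\mu_0$), and note that the prefix is measurement-free, of length depending only on $L$, and hence a unique probability-$1$ path of cost $O(1)$. The only cosmetic quibble is that the loop-entry test compiles to a constant number of QRAM instructions (via the equality-checking gadget of Appendix~\ref{qram-to-qrasp}) rather than a single jump, which does not affect the bound.
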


     \begin{proof} Obvious.\end{proof}

    \begin{definition}
        Let $\pi' \in \mathcal{P}'_f(c'_{(0)})$ and $\pi \in \mathcal{P}_f(c_0)$.
        Then we say that $\pi'$ agrees with $\pi$, denoted $\pi' \models \pi$, if
        \begin{enumerate}
          \item $\Abs{\pi'} = \abs{\pi}$.
          \item $c'_{(j)} \models c_j$ for $0 \leq j \leq \Abs{\pi'}$.
          \item $p'_{(j)} = p_j$ and $T'_{(j)} = \Theta(T_j)$ for $1 \leq j \leq \Abs{\pi'}$.
        \end{enumerate}
    \end{definition}

    \begin{lemma} \label{lemma-qrasp-by-qram-time-bound}
        For every $\pi' \in \mathcal{P}'_f(c'_{(0)})$, there is a unique $\pi \in \mathcal{P}_f(c_0)$ such that $\pi' \models \pi$.
    \end{lemma}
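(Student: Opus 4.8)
The plan is to construct the corresponding QRASP path $\pi$ directly from the Line-5 checkpoints of $\pi'$ and then to verify the three clauses of path-agreement together with terminality and uniqueness. First I would expand $\pi'$ at its Line-5 checkpoints exactly as in the displayed decomposition preceding this lemma, obtaining configurations $c'_{(0)}, c'_{(1)}, \dots, c'_{(m)}$ with $m = \Abs{\pi'}$, where $c'_{(m)} = \pi'.c_{\abs{\pi'}}$ is terminal. By Lemma \ref{lemma-qrasp-by-qram-unique-config}, each $c'_{(j)}$ determines a unique QRASP configuration $c_j$ with $c'_{(j)} \models c_j$, and I set $\pi = c_0, c_1, \dots, c_m$. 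Clause (2) of path-agreement, $c'_{(j)} \models c_j$ for $0 \le j \le m$, then holds by construction, and clause (1), $\Abs{\pi'} = \abs{\pi} = m$, will hold once $\pi$ is shown to be a genuine execution path.

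The heart of the argument is to establish clause (3): that the QRAM segment leading from $c'_{(j-1)}$ to $c'_{(j)}$ realizes a single valid QRASP transition $c_{j-1} \xrightarrow[T_j]{p_j} c_j$ with $p'_{(j)} = p_j$ and $T'_{(j)} = \Theta(T_j)$. This is the converse of Lemma \ref{lemma-qrasp-by-qram-simulate-single-step}, and I would prove it by a determinism-plus-simulation argument. The opcode $\mu(\xi)$ recorded in $c_{j-1}$ selects exactly one operational-semantics rule of the QRASP; applying that rule (with the relevant branch) to $c_{j-1}$ yields a configuration $d$, and Lemma \ref{lemma-qrasp-by-qram-simulate-single-step} supplies a Line-5 configuration $d'$ with $d' \models d$ reachable from $c'_{(j-1)}$ by the matching straight-line block. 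Since the while body returns to Line 5 exactly once, so that precisely one such block runs between consecutive checkpoints, and since that block is deterministic apart from the single measurement outcome in the opcode-$11$ case—an outcome already fixed along $\pi'$—the configuration actually reached must be $c'_{(j)}$, and hence $d = c_j$ by the uniqueness in Lemma \ref{lemma-qrasp-by-qram-unique-config}. Matching of probabilities is then immediate, as the only nondeterministic step is this shared measurement outcome.

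Terminality of $\pi$ follows because $c'_{(m)}$ is terminal, so it agrees with $c_m$ through clause (1) of the configuration-agreement definition, which forces $c_m \in \mathcal{C}_f$; together with $\pi.p = \prod_j p_j > 0$, inherited from $\pi'.p > 0$, this yields $\pi \in \mathcal{P}_f(c_0)$. For uniqueness, any $\pi$ satisfying $\pi' \models \pi$ must satisfy $c'_{(j)} \models c_j$, and by Lemma \ref{lemma-qrasp-by-qram-unique-config} each such $c_j$ is uniquely determined, so $\pi$ is forced to be the path just constructed.

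The hard part will be the time accounting in clause (3), which is why this lemma isolates the bound $T'_{(j)} = \Theta(T_j)$. The delicate point is to confirm, uniformly across all eleven opcodes, that the overhead of indexing into the $\mathit{memory}$ array in Algorithm \ref{algo1} stays within a constant factor of the corresponding QRASP cost terms $l(\text{IC})$, $l(\mu(\xi+1))$, $l(\zeta)$, and the like, and to handle the measurement opcode so that the single probabilistic branch of the QRAM is identified cleanly with the single probabilistic branch of the QRASP rule.
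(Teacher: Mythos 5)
Your proposal is correct and uses exactly the machinery the paper sets up for this purpose (the Line-5 checkpoint decomposition, Lemma \ref{lemma-qrasp-by-qram-unique-config} for uniqueness of the agreeing configuration, and Lemma \ref{lemma-qrasp-by-qram-simulate-single-step} combined with the determinism of the while-body outside the single measurement branch); the paper itself dismisses the proof as ``Obvious,'' so your writeup is a faithful and correctly reasoned expansion of the intended argument rather than a different route.
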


    \begin{proof} Obvious.\end{proof}

    Lemma \ref{lemma-qrasp-by-qram-time-bound} implies that $P'$ is time bounded by $O(T(n))$. Let $T': \mathbb{N} \to \mathbb{N}$ be the worst case running time of $P'$.

    \begin{lemma}
        For every $\pi \in \mathcal{P}_f(c_0)$,
        there is a unique $\pi' \in \mathcal{P}'_f(c'_{(0)})$ such that $\pi' \models \pi$.
        We use $h: \mathcal{P}(c_0) \to \mathcal{P}'(c'_{(0)})$ to denote this bijection.
        \iffalse
        {\color{blue} To avoid conflict, $f$ is replaced by $h$!!!}
        \fi
    \end{lemma}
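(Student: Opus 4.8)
The plan is to establish existence and uniqueness of the agreeing path $\pi'$ by induction on the length $\abs{\pi}$ of the QRASP path, and then to deduce that $h$ is a bijection by pairing it with the map already supplied by Lemma \ref{lemma-qrasp-by-qram-time-bound}. For the existence part I would build $\pi'$ segment by segment. The anchor is the preceding lemma, which gives $c'_{(0)} \models c_0$ together with the deterministic prefix $c'_0 \xrightarrow[O(1)]{1}^* c'_{(0)}$. Writing $\pi: c_0 \xrightarrow[T_1]{p_1} c_1 \xrightarrow[T_2]{p_2} \cdots \xrightarrow[T_n]{p_n} c_n$ with $c_n \in \mathcal{C}_f$, I apply Lemma \ref{lemma-qrasp-by-qram-simulate-single-step} once per QRASP transition: given $c'_{(j)} \in \mathcal{C}'_{\text{L}5}$ with $c'_{(j)} \models c_j$, the lemma produces $c'_{(j+1)} \in \mathcal{C}'_{\text{L}5}$ with $c'_{(j+1)} \models c_{j+1}$ and $c'_{(j)} \xrightarrow[\Theta(T_{j+1})]{p_{j+1}}^* c'_{(j+1)}$. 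Concatenating these $n$ segments after the initial prefix yields a $\pi'$ with $\Abs{\pi'} = n = \abs{\pi}$, with $c'_{(j)} \models c_j$ for all $j$, and with matching per-segment probabilities $p'_{(j)} = p_j$ and times $T'_{(j)} = \Theta(T_j)$; that is, $\pi' \models \pi$.

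For uniqueness, the key observation is that both machines are deterministic away from measurement steps: every QRASP and QRAM transition rule has probability $1$ except the measurement rule, which splits into the outcomes $0$ and $1$. Hence a terminal path is completely determined by its initial configuration together with the sequence of measurement outcomes it realizes. Since $c'_{(0)}$ is fixed and the $\mathit{op}=11$ branch of Algorithm \ref{algo1} translates each QRASP measurement into exactly one QRAM measurement with the same two outcomes and the same outcome probabilities, the outcome sequence of $\pi$ forces the outcome sequence of $\pi'$. Combined with Lemma \ref{lemma-qrasp-by-qram-unique-config}, which pins down the unique QRASP configuration agreeing with each intermediate QRAM configuration, this shows that the segment of $\pi'$ returning to Line $5$ is uniquely determined at each step, so $\pi'$ is unique.

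To conclude that $h$ is a bijection I would note that Lemma \ref{lemma-qrasp-by-qram-time-bound} already supplies a map $g$ in the opposite direction, sending each $\pi'$ to the unique $\pi$ with $\pi' \models \pi$. Because the relation $\models$ pairs a given $\pi'$ with a unique $\pi$ and (by the present lemma) a given $\pi$ with a unique $\pi'$, the maps $h$ and $g$ are mutually inverse, so $h$ is the claimed bijection; moreover $h$ preserves probabilities exactly and running times up to a constant factor, which is exactly what is needed to transfer the $O(T(n))$ time bound and the identity $P(x,y) = P'(x,y)$.

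I expect the main obstacle to lie in the terminal and boundary handling rather than the inductive core. I must check that every way the QRASP can stop—an out-of-range opcode $\mu(\xi) \notin [1,11]$, or a negative register used as an address in a store/load/measure rule—is mirrored by the QRAM halting in a configuration agreeing through clause (1) of the definition of $\models$, i.e. with $\mu'(\mathit{flag}) = 1$ or $\xi' =\ \downarrow$, matching the $\mathit{flag} \gets 1$ branch that exits the while loop of Algorithm \ref{algo1}. Equally delicate is verifying that no spurious QRAM path can escape the Line-$5$ grouping: between two consecutive visits to Line $5$ the control flow through the selected opcode branch must be genuinely deterministic and must always return to Line $5$, so that the $\mathcal{C}'_{\text{L}5}$-segments of Lemma \ref{lemma-qrasp-by-qram-simulate-single-step} compose without gaps or overlaps.
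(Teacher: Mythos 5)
Your proof is correct, and its existence half coincides with the paper's: both anchor at $c'_{(0)} \models c_0$ and iterate Lemma \ref{lemma-qrasp-by-qram-simulate-single-step} segment by segment. Where you genuinely diverge is uniqueness. The paper does not argue locally at all: it picks, for each $\pi$, an arbitrary $\pi'$ with $\pi' \models \pi$, notes that this assignment is injective because Lemma \ref{lemma-qrasp-by-qram-time-bound} already pairs each $\pi'$ with a \emph{unique} $\pi$, and then closes the argument by probability conservation, $1 = \sum_{\pi} \pi.p = \sum_{\pi} h(\pi).p \leq \sum_{\pi'} \pi'.p = 1$; equality forces the image of $h$ to exhaust all positive-probability terminal paths, so no $\pi$ can have two distinct agreeing partners. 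Your route instead exploits the structure of the operational semantics: every rule except the measurement rule is deterministic, so a path from the fixed $c'_{(0)}$ is determined by its sequence of measurement outcomes, and agreement with the configurations $c_j$ (which record the QRASP outcomes in AC and in the collapsed quantum state) forces those outcomes, hence the path. Both arguments are sound. The paper's is shorter and never re-examines the transition rules, but it silently relies on the terminal paths of the $T(n)$-time QRASP carrying total probability $1$ (and on reading $\mathcal{P}'(c'_{(0)})$ as its terminal subset in the displayed sum); yours is longer but purely local, needs no normalisation fact, and makes explicit exactly which branch points could threaten uniqueness, including the terminal cases (out-of-range opcode, invalid address) that you rightly flag. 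Your final step, recovering the bijection by checking that $h$ and the map from Lemma \ref{lemma-qrasp-by-qram-time-bound} are mutually inverse, is the same bookkeeping the paper leaves implicit.
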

    \begin{proof}
        (\textbf{Existence}) Directly by Lemma \ref{lemma-qrasp-by-qram-simulate-single-step}.

        (\textbf{Uniqueness}) For every $\pi \in \mathcal{P}(c_0)$, we choose an arbitrary $\pi' \in \mathcal{P}'(c'_{(0)})$ such that $\pi' \models \pi$ and write $h(\pi) = \pi'$. We note that
        \[
            1 = \sum_{\pi \in \mathcal{P}(c_0)} \pi.p = \sum_{\pi \in \mathcal{P}(c_0)} h(\pi).p \leq \sum_{\pi' \in \mathcal{P}'(c'_{(0)})} \pi'.p = 1,
        \]
        the uniqueness of $h(\pi)$ follows immediately.
    \end{proof}

  Finally, we are ready to show that $P'$ actually simulates $P$. Let $x \in \Sigma^*$ be the input string and the initial configuration of $P$ be $c_0 = (0, 0, \mu_0, \ket{\psi_0}, \mathit{in}(x), \epsilon)$. Since $P$ is a $T(n)$-time QRASP, $\abs{\pi} \leq T(\abs{x})$ is finite for every $\pi \in \mathcal{P}(c_0)$. Now that each transition leads to at most two branches, $\abs{\mathcal{P}(c_0)} \leq 2^{T(\abs{x})}$ must be finite too. Thus, for every $y \in \Sigma^*$, we have:

    \begin{align*}
        P(x, y)
        & = \sum_{c \in \mathcal{C}_f: \mathit{out}(c.y) = y} \llbracket P \rrbracket (c_0) (c) \\
        & = \sum_{c \in \mathcal{C}_f: \mathit{out}(c.y) = y} \llbracket P \rrbracket^{T(\abs{x})} (c_0) (c) \\
        & = \sum_{\pi \in \mathcal{P}^{T(\abs{x})}(c_0): \mathit{out}(\pi.c_{\abs{\pi}}.y) = y} \pi.p \\
        & = \sum_{\pi \in \mathcal{P}^{T(\abs{x})}(c_0): \mathit{out}(h(\pi).c_{\abs{f(\pi)}}.y) = y} h(\pi).p \\
        & = \sum_{\pi' \in {\mathcal{P}'}^{T'(\abs{x})}(c'_{(0)}): \mathit{out}(\pi'.c_{\abs{\pi'}}.y) = y} \pi'.p \\
        & = \sum_{\pi' \in {\mathcal{P}'}^{T'(\abs{x})}(c'_{0}): \mathit{out}(\pi'.c_{\abs{\pi'}}.y) = y} \pi'.p \\
        & = \sum_{c' \in \mathcal{C}'_f: \mathit{out}(c'.y) = y} \llbracket P' \rrbracket^{T'(\abs{x})} (c'_0) (c) \\
        & = \sum_{c' \in \mathcal{C}'_f: \mathit{out}(c'.y) = y} \llbracket P' \rrbracket (c'_0) (c) \\
        & = P'(x, y).
    \end{align*}

    \subsection{QRASPs simulate QRAMs} \label{sec:qram-by-qrasp}

    \subsubsection{Simulation construction}

    Let QRAM $P$ be a sequence $P_0, P_1, \dots, P_{L-1}$ of QRAM instructions. By Lemma \ref{lemma-qram-address}, we may assume that $P$ is address-safe without any loss of generality.
    The QRASP $P'$ that simulates QRAM $P$ is defined as follows. Let $\delta$ be an integer greater than the length of $P'$, i.e. $\delta > \abs{P'}$. It will be shown later that $\delta = 20L$ (a finite number) is enough. Define the simulating length $\mathit{simulate}(P_i)$ of $P_i$ being the length of QRASP code intended to simulate the QRAM instruction $P_i$. The intended value for $\mathit{simulate}(P_i)$ are shown in Table \ref{tab3} according to the instruction type of $P_i$.
    \begin{table}[!htp]
        \centering
        \caption{Simulating length of QRAM instructions by QRASP}
        \begin{tabular}{llc}
        \hline
        Type & Instruction & Simulating length \\
        \hline
        Classical & $X_i \gets C$, $C$ any integer & $4$ \\
        Classical & $X_i \gets X_j+X_k$ & $8$ \\
        Classical & $X_i \gets X_j-X_k$ & $8$ \\
        Classical & $X_i \gets X_{X_j}$ & $12$ \\
        Classical & $X_{X_i} \gets X_j$ & $12$ \\
        Classical & TRA $m$ if $X_j > 0$ & $6$ \\
        Classical & READ $X_i$ & $2$ \\
        Classical & WRITE $X_i$ & $2$ \\
        \hline
        Quantum & $\text{CNOT}[Q_{X_i}, Q_{X_j}]$ & $15$ \\
        Quantum & $H[Q_{X_i}]$ & $8$ \\
        Quantum & $T[Q_{X_i}]$ & $8$ \\
        \hline
        Measurement & $X_i \gets M[Q_{X_j}]$ & $10$ \\
        \hline
        \end{tabular}
        \label{tab3}
    \end{table}
    In order to deal with the jump instruction ``TRA $m$ if $X_j > 0$'', $\mathit{label}(m)$ is needed, which is defined to be the jump address in QRASP corresponding to the jump address $m$ in QRAM. More precisely,
    \[
        \mathit{label}(m) = \sum_{i=0}^{m-1} \mathit{simulate}(P_i).
    \]
The length of our QRASP $P'$ is designed to be $L' = \abs{P'} = \mathit{label}(L)$. For every $0 \leq i < L$, the instruction $P_i$ is interpreted into QRASP instructions as $\mathit{simulate}(P_i)$ integers starting from $\mathit{label}(i)$. In other words, the QRASP instructions $P'_{\mathit{label}(i)}, P'_{\mathit{label}(i)+1}, \dots, P'_{\mathit{label}(i+1)-1}$ are corresponding to QRAM instruction $P_i$.

Now for $0 \leq l < L$, we present the QRASP code for simulating $P_l$. Here we only display those for quantum instructions (The simulations of classical instructions are standard and thus omitted here; they are provided in Appendix \ref{qrasp-to-qram} for completeness). For readability, the QRASP codes are written by means of QRASP mnemonics.
    \begin{enumerate}
       \item $P_l$ is of the form CNOT$[Q_{X_i}, Q_{X_j}]$. The QRASP code is
          \begin{align*}
              \mathit{label}(l): & \text{LOD}, \delta \\
              & \text{ADD}, i+\delta \\
              & \text{STO}, a+1 \\
              & \text{LOD}, \delta \\
              & \text{ADD}, j+\delta \\
              & \text{STO}, a+2 \\
              a: & \text{CNOT}, 0, 0 \\
          \end{align*}
          Note that $a = \mathit{label}(l)+12$.
      \item $P_l$ is of the form $A[Q_{X_i}]$ with $A=H$ or $T$. The QRASP code is
          \begin{align*}
              \mathit{label}(l): & \text{LOD}, \delta \\
              & \text{ADD}, i+\delta \\
              & \text{STO}, a+1 \\
              a: & \text{A}, 0 \\
          \end{align*}
          Note that $a = \mathit{label}(l)+6$.
      \item $P_l$ is of the form $X_i \gets M[Q_{X_j}]$. The QRASP code is
          \begin{align*}
              \mathit{label}(l): & \text{LOD}, \delta \\
              & \text{ADD}, j+\delta \\
              & \text{STO}, a+1 \\
              a: & \text{MEA}, 0 \\
              & \text{STO}, i+\delta \\
          \end{align*}
          Note that $a = \mathit{label}(l)+6$.
    \end{enumerate}

    \subsubsection{Correctness proof}

    The proof is similar to that given in Subsection \ref{proof-qram}. We put it in Appendix \ref{sec:correct-simulate-qrams-by-qrasps} for completeness.

    \section{Comparison of QRAMs and QTMs} \label{sec:sim-qram-qtm}

    In this section, we prove Theorem \ref{thm-qram-qtm}. Subsection \ref{sec:qram-by-qtm} shows how QTMs can simulate QRAMs, and Subsection \ref{sec:qtm-by-qram} describes how QRAMs can simulate QTMs.

    \subsection{QTMs simulate QRAMs} \label{sec:qram-by-qtm}

 Our simulation of QRAMs by QTMs is carried out in two steps. First, we introduce the notion of Turing machines with a quantum device ($\text{TM}^\text{Q}$s) and prove in Subsection \ref{sec:qram-by-tmq} that every QRAM can be simulated by a measurement-postponed $\text{TM}^\text{Q}$. The main technique here is based on the idea of simulating RAMs by TMs given in \cite{Coo73}. Then we show in Subsection \ref{sec:tmq-by-qtm} that a measurement-postponed $\text{TM}^\text{Q}$ can be simulated by a well-formed and normal form QTM. The main technique here is based on the idea of simulating TMs by RTMs given in \cite{Ber97}.

    \subsubsection{TMs with quantum devices simulate QRAMs} \label{sec:qram-by-tmq}

  Let us first define the notion of TM with a quantum device.

    \begin{definition}
        A TM with a quantum device ($\text{TM}^\text{Q}$) is a $8$-tuple
            $M = (Q, Q_s, Q_t, \Sigma, \delta, \lambda, q_s, q_f),$
        where:
        \begin{enumerate}
          \item $Q$ is a finite set of states;
          \item $Q_s \subseteq Q$ and $Q_t \subseteq Q$ are two disjoint sets of states as interactor for the quantum device;
          \item $\delta: (Q \setminus Q_s) \times \Sigma \to \Sigma \times (Q \setminus Q_t) \times \{ L, R \}$ is the transition function;
          \item $\lambda: (Q_s \times \Sigma^\# \times \mathcal{H}) \times (Q_t \times \mathcal{H}) \to [0, 1]$ is the transition function for quantum device, where
             $\mathcal{H} = \bigotimes_{i=0}^\infty \mathcal{H}_i,$
              and $\mathcal{H}_i = \operatorname{span}\{\ket 0_i, \ket 1_i\}$. It is required that for every $p \in Q_s, \mathcal{T} \in \Sigma^\#, \ket\psi \in \mathcal{H}$,
              \[
                \sum_{q \in Q_t, \ket\phi \in \mathcal{H}} \lambda((p,\mathcal{T},\ket\psi),(q, \ket\phi)) = 1;
              \]
          \item $q_s \in Q \setminus Q_s \setminus Q_t$ is the initial state;
          \item $q_f \in Q \setminus Q_s \setminus Q_t$ is the final state.
        \end{enumerate}
    \end{definition}

    A configuration of $\text{TM}^\text{Q}$ is a tuple $c = (q, \mathcal{T}, \xi, \ket{\psi}) \in Q \times \Sigma^\# \times \mathbb{Z} \times \mathcal{H}$. Let $\mathcal{C} = Q \times \Sigma^\# \times \mathbb{Z} \times \mathcal{H}$ be the set of configurations. The one step execution transition of $\text{TM}^\text{Q}$ is a function $\to: \mathcal{C} \times \mathcal{C} \to [0, 1]$ defined by the following rules: let $c = (p, \mathcal{T}, \xi, \ket{\psi})$,
    \begin{enumerate}
      \item if $p = q_f$, then the execution terminates.
      \item if $p \in Q \setminus Q_s \setminus \{q_f\}$ and $\delta(p, \mathcal{T}(\xi)) = (q, \sigma, d)$, then after one step, the configuration will become $c' = (q, \mathcal{T}_\xi^\sigma, \xi+d, \ket\psi)$, i.e.
          \[
            (p, \mathcal{T}, \xi, \ket{\psi}) \xrightarrow[]{1} (q, \mathcal{T}_\xi^\sigma, \xi+d, \ket{\psi}),
          \]
      \item if $p \in Q_s$ and $\lambda((p, \mathcal{T}, \ket\psi), (q, \ket\phi)) = a$, then after one step, the configuration will become $c' = (q, \mathcal{T}, \xi, \ket{\phi})$ with probability $a$, i.e.
          \[
            (p, \mathcal{T}, \xi, \ket{\psi}) \xrightarrow[]{a} (q, \mathcal{T}, \xi, \ket{\phi}).
          \]
    \end{enumerate}
An execution path is then a non-empty sequence of configurations associated with probabilities:
    \[
        \pi: c_0 \xrightarrow[]{a_1} c_1 \xrightarrow[]{a_2} c_2 \dots c_{n-1} \xrightarrow[]{a_{n}} c_n.
    \]
    The length of $\pi$ is $\abs{\pi} = n$, and the probability of path $\pi$ is
    $
        a = \prod_{i=1}^n a_i.
    $
    In this case, we can simply denote $\pi: c_0 \xrightarrow[]{a}^n c_n$.
    Let $T: \mathbb{N} \to \mathbb{N}$ and
    $
        \ket{\psi_0} = \bigotimes_{i=0}^\infty \ket{0}_i.
    $
    A $\text{TM}^\text{Q}$ is called $T(n)$-time, if for every $x \in \{0,1\}^*$, and every execution path $(q_0, \mathcal{T}_x, 0, \ket{\psi_0}) \xrightarrow[]{a}^t c$, if $a > 0$, then $t \leq T(\abs{x})$.

    Now we can explain the basic idea of our simulation of a QRAM by a $\text{TM}^\text{Q}$.
    The $\text{TM}^\text{Q}$ used to simulate a QRAM needs the following (a finite number of) tracks:
    \begin{enumerate}
      \item ``input'': This track initially contains the input.
      \item ``output'': This track contains the output after the machine halts.
      \item ``creg'': This track contains contents of classical registers. The format is designed to be $\# \text{L} a_1 \text{M} b_1 \text{R} \text{L} a_2 \text{M} b_2 \text{R} \dots \text{L} a_r \text{M} b_r \text{R} \#$, indicating that the content of classical register $a_i$ is $b_i$ for $1 \leq i \leq s$. In particular, if the number $x$ is not found among $a_1, a_2, \dots, a_r$, then the content of classical register is $0$.
      \item ``qcnt'' --- the quantum register counter: This track contains a single non-negative number indicating the number of used quantum registers.
      \item ``qreg'': This track contains a correspondence list from virtual addresses to physical addresses. Similar to track ``creg'', the format is designed to be $\# \text{L} u_1 \text{M} v_1 \text{R} \text{L} u_2 \text{M} v_2 \text{R} \dots \text{L} u_s \text{M} v_s \text{R} \#$, indicating that the virtual address $u_i$ corresponds to physical address $v_i$ for $1 \leq i \leq s$. To avoid too large addresses of quantum registers, the machine re-numbers every used address of quantum registers (virtual address) to a small number (physical address). Each time a quantum register $a$ is accessed in the QRAM, the machine checks whether the virtual address $a$ is collected in ``qreg''. If so, convert $a$ to its corresponding physical address; and if not, increment the quantum register counter and assign $a$ with the value of the current quantum register counter (``qcnt'') as its physical address (and add this assignment to the correspondence list).
      \item ``qdev'' --- a track for interactions to the quantum device: This track is used for quantum operation calls to quantum device. In our case, there are four kinds of quantum operations, i.e. CNOT, Hadamard and $\pi/8$ gates and measurements. The format of this track will be defined shortly.
      \item ``qret'' --- a track for calling back after quantum operations: This track contains a single symbol denoting the returning state, i.e. the next state it should be, after the quantum operations. This track is used as a system stack in the computers.
      \item ``work$i$'' for $i \geq 1$ --- work tracks: The contents of these tracks are ignorant and they will be cleaned to an empty track after use as a hub.
    \end{enumerate}

    The output of $\text{TM}^\text{Q}$ is defined to be the contents in track ``output''. Moreover,
    $\text{TM}^\text{Q}$ $M$ defines a function $M: \{0, 1\}^* \times \{0, 1\}^* \to [0, 1]$ with $M(x, y)$ being the probability that $M$ on input $x$ outputs $y$. In our model, there are only four kinds of quantum operations: CNOT, Hadamard and $\pi/8$ gates, and measurements. So, we use $q_s^C, q_s^H, q_s^T, q_s^M$ to denote their initial states and $q_t^C, q_t^H, q_t^T, q_{t0}^M, q_{t1}^M$ to denote their terminating states, and put $Q_s = \{q_s^C, q_s^H, q_s^T, q_s^M\}$ and $Q_t = \{q_t^C, q_t^H, q_t^T, q_{t0}^M, q_{t1}^M\}$. Furthermore, the format of track ``qdev''  is defined as follows:
    \begin{enumerate}
      \item For a CNOT gate, the machine reads the contents $s$ of track ``qdev''. The string $s \in \{0,1,2\}^*$ is assumed to consist of one $1$, one $2$ and some $0$s. Let $1$ and $2$ be the $a$-th and the $b$-th elements of $s$ (0-indexed), respectively, and let $\ket\psi \in \mathcal{H}$ be the quantum state before the application of the gate. When the gate is applied, the state of $\text{TM}^\text{Q}$ is changed from $q_s^C$ to $q_t^C$ with the quantum state becoming $\text{CNOT}[a, b] \ket\psi$; that is, the $a$-th qubit acts as the control qubit and the $b$-th qubit as the target qubit (0-indexed).
      \item For a Hadamard (resp. $\pi/8$) gate, the machine reads the contents $s$ of track ``qdev''. The string $s \in \{0,1\}^*$ is assumed to consist of one $1$ and several $0$s. Let $1$ be the $a$-th element of $s$ (0-indexed), and let $\ket\psi \in \mathcal{H}$ be the quantum state before application of the gate. When the gate operation is applied, the state of $\text{TM}^\text{Q}$ is changed from $q_s^H$ (res. $q_s^T$) to $q_t^H$ (resp. $q_t^T$); that is, the gate is performed on the $a$-th qubit (0-indexed) with the quantum state becoming $H[a] \ket\psi$ (res. $T[a] \ket\psi$).

      \item For a measurement, the machine reads the contents $s$ of track ``qdev''. The string $s \in \{0,1\}^*$ is assumed to consist of one $1$ and several $0$s. Let $1$ be the $a$-th element of $s$ (0-indexed), and let $\ket\psi \in \mathcal{H}$ be the quantum state before the measurement. When the measurement is performed, the state of $\text{TM}^\text{Q}$ is changed from $q_s^M$ to $q_{t0}^M$ such that the quantum state becomes $\ket{\phi_0}$ with probability $p_0$, and to $q_{t1}^M$ such that  the quantum state becomes $\ket{\phi_1}$ with probability $p_1$, where: $$p_0 = \Abs{M_0[a]\ket{\psi}}^2,\quad \ket{\phi_0} = \frac{M_0[a] \ket\psi}{\Abs{M_0[a] \ket\psi}},\quad p_1 = \Abs{M_1[a]\ket{\psi}}^2,\quad \ket{\phi_1} = \frac{M_1[a] \ket\psi}{\Abs{M_1[a] \ket\psi}}$$ and $M_0[a] = \ket 0_a \bra 0$, $M_1[a] = I-M_0[a]$.
    \end{enumerate}

    Now let $P = P_0, P_1, \dots, P_{L-1}$ be a QRAM to be simulated by a $\text{TM}^\text{Q}$. By Lemma \ref{lemma-qram-address} and Lemma \ref{lemma-qram-measurement}, we can assume that $P$ is address-safe and measurement-postponed without any loss of generality. For every $0 \leq l < L$, we use a bunch of states $(p_l, 0), (p_l, 1), \dots, (p_l, k_l)$ to simulate instruction $P_l$, where for every $l$, $k_l$ is an appropriate integer. The state $(p_l, 0)$ indicates the beginning of the simulation of $P_l$. During the simulation of $P_l$, the intermediate states $(p_l, 1), \dots, (p_l, k_l)$ may be visited. In particular, $(p_L, 0)$ indicates the termination of the simulation, and is going to become $q_f$, which indicates the termination of the execution of $\text{TM}^\text{Q}$.

    Before constructing $\text{TM}^\text{Q}$, we first show how integers are stored in our machine. For every integer $n \in \mathbb{Z}$, we use $\mathit{bin}(n) \in \{0,1\}^*$ to denote its binary form. The first symbol of $\mathit{bin}(n)$ is $0$ if $n \geq 0$ and $1$ otherwise, which is followed by a binary representation of $\abs{n}$. Conversely, we use $\mathit{dec}(x)$ to denote the decimal value of the binary string $x$ if it is valid. We also need some TMs that perform arithmetic and other basic operations:
    \begin{itemize}
      \item $M_\text{inc}$ --- a TM for increment by one: for every $a \in \mathbb{Z}$,
        \[
          M_\text{inc}(\mathit{bin}(a)) = \mathit{bin}(a+1).
        \]
        The time of $M_\text{inc}$ is $O(\log \abs{a})$.
      \item $M_\text{add}$ --- a TM for addition: for every $a, b \in \mathbb{Z}$,
        \[
          M_\text{add}(\mathit{bin}(a);\mathit{bin}(b);\epsilon) = M_\text{add}(\mathit{bin}(a);\mathit{bin}(b);\mathit{bin}(a+b)).
        \]
        The time of $M_\text{add}$ is $O(\log \abs{a} + \log \abs{b})$.
      \item $M_\text{sub}$ --- a TM for subtraction. Formally, for every $a, b \in \mathbb{Z}$,
        \[
          M_\text{sub}(\mathit{bin}(a);\mathit{bin}(b);\epsilon) = M_\text{add}(\mathit{bin}(a);\mathit{bin}(b);\mathit{bin}(a-b)).
        \]
        The time of $M_\text{sub}$ is $O(\log \abs{a} + \log \abs{b})$.
      \item $M_\text{gtz}$ --- a TM for checking positivity: for every $a \in \mathbb{Z}$,
        \[
          M_\text{gtz}(\mathit{bin}(a)) = \begin{cases}
            1 & a > 0, \\
            0 & \text{otherwise}.
          \end{cases}
        \]
        The time of $M_\text{gtz}$ is $O(\log \abs{a})$.

  \item $M_\text{clean}$ --- a TM that cleans a track: for every $x \in \{0, 1\}^*$, $M_\text{clean}(x) = \epsilon$. The time of $M_\text{clean}$ on input $x$ is $O(\abs{x})$.
      \item $M_\text{read}$ --- a TM that reads a symbol from a track: for every $x = \sigma_0 \sigma_1 \dots \sigma_{k-1} \in \{0, 1\}^*$,
          \[
            M_\text{read}(x;\epsilon) = \begin{cases}
                \sigma_1\dots\sigma_{k-1};\mathit{bin}(0) & k \geq 1 \text{ and } \sigma_0 = 0, \\
                \sigma_1\dots\sigma_{k-1};\mathit{bin}(1) & k \geq 1 \text{ and } \sigma_0 = 1, \\
                \epsilon;\mathit{bin}(-1) & k = 0.
            \end{cases}
          \]
          The time of $M_\text{read}$ is $O(\abs{x})$.
      \item $M_\text{write($a$)}$ --- a TM that writes a specific (pre-determined) content $a$ to the end of a track: for every $x \in \{0, 1\}^*$, $M_\text{write($a$)}(x) = xa$. The time of $M_\text{write($a$)}$ is $O(\abs{x})$.
      \item $M_\text{append}$ --- a TM that appends the contents in the second track to the end of the first track: for every $x, y \in \{0,1\}^*$, $M_\text{append}(x;y) = xy; y$. The time of $M_\text{append}$ is $O(\abs{x}\abs{y})$.
      \item $M_\text{fetch}$ --- a TM that fetches the contents of registers: suppose the contents in the first track is $z = \text{L} a_1 \text{M} b_1 \text{R} \text{L} a_2 \text{M} b_2 \text{R} \dots \text{L} a_r \text{M} b_r \text{R}$. For every $x \in \{0,1\}^*$,
          \[
              M_\text{fetch}(z; x; \epsilon) = \begin{cases}
                z;x;b_i & a_i = x, \\
                z;x;\mathit{bin}(0) & \text{otherwise}.
              \end{cases}
          \]
          The time of $M_\text{fetch}$ is $O(\abs{z}(\abs{x}+\abs{y}))$, where $y$ is the contents in the third track after execution.
      \item $M_\text{update}$ --- a TM that updates the contents of registers: suppose the contents in the first track is $z = \text{L} a_1 \text{M} b_1 \text{R} \text{L} a_2 \text{M} b_2 \text{R} \dots \text{L} a_r \text{M} b_r \text{R}$. For every $x \in \{0,1\}^*$,
          \[
              M_\text{update}(z; x; y) = \begin{cases}
                \text{L} a_1 \text{M} b_1 \text{R} \dots \text{L} a_{i} \text{M} y \text{R} \dots \text{L} a_r \text{M} b_r \text{R};x;y & a_i = x, \\
                z\text{L}x\text{M}y\text{R};x & \text{otherwise}.
              \end{cases}
          \]
          The time of $M_\text{update}$ is $O(\abs{z}(\abs{x}+\abs{y}))$.
      \item $M_\text{qget}$ --- a TM that gets the physical address of a virtual address: suppose the contents in the first track is $z = \text{L} a_1 \text{M} b_1 \text{R} \text{L} a_2 \text{M} b_2 \text{R} \dots \text{L} a_r \text{M} b_r \text{R}$. For every $c \in \mathbb{N}$ and $x \in \{0, 1\}^+$,
          \[
              M_\text{qget}(z; \mathit{bin}(c); x; \epsilon) = \begin{cases}
                z;\mathit{bin}(c);x;b_i & x = a_i, \\
                z\text{L}x\text{M}\mathit{bin}(c+1)\text{R};\mathit{bin}(c+1);x;\mathit{bin}(c+1) & \text{otherwise}.
              \end{cases}
          \]
          The time of $M_\text{qget}$ is $O(\abs{z}(\abs{x}+\abs{y}+\log \abs{c}))$, where $y$ is the contents in the fourth track after execution.
      \item $M_\text{untary}$ --- a TM that converts a non-negative integer $c$ to a string $0^c1$: for every $c \in \mathbb{N}$,
          \[
            M_\text{untary}(\mathit{bin}(c)) = 0^c1.
          \]
          The time of $M_\text{untary}$ is $O(c\log c)$. This TM is used to produce contents in track ``qdev'' for $q_s^H, q_s^T$ and $q_s^M$ calls.
      \item $M_\text{pair}$ --- a TM that converts a two non-negative integers $a$ and $b$ ($a \neq b$) to a string $s_{ab} \in \{0,1,2\}^*$ of length $\abs{s_{ab}} = \max\{a,b\}+1$ that
          \[
            s_{ab}(c) = \begin{cases}
                1 & c = a, \\
                2 & c = b, \\
                0 & \text{otherwise},
            \end{cases}
          \]
          where $s(c)$ denotes the $c$-th symbol of $s$ (0-indexed). Formally, for every $a, b \in \mathbb{N}$ with $a \neq b$, then
          \[
            M_\text{pair}(\mathit{bin}(a);\mathit{bin}(b);\epsilon) = \mathit{bin}(a);\mathit{bin}(b);s_{ab}.
          \]
          The time of $M_\text{pair}$ is $O((a+b)(\log a+\log b))$. This TM is used to produce contents in track ``qdev'' for $q_s^C$ calls.
    \end{itemize}

    Now we are ready to construct the $\text{TM}^\text{Q}$. We assume that all of the TMs introduced above are stationary.
    Before simulation, we should initialize the ``qcnt'' track to be decimal value of zero, i.e.
    \begin{align*}
        q_0: & M_{\text{write($\mathit{bin}(0)$)}}[\text{qcnt}] \\
        q_1: & \text{transition to } (p_{0}, 0) \\
    \end{align*}
   % The above transitions describe the function of each state, which is described by a TM followed by specified tracks in a bracket. Without explicitly assigning the next state, the next state of the current state is default to be the state in the next line, e.g. the next state of $q_0$ is $q_1$ and the next state of $q_1$ is $(p_0, 0)$ as it is explicitly assigned.
    For every $0 \leq l < L$, if $P_l$ is a classical instruction, it can be simulated in a standard way, and the details are omitted here but provided in Appendix \ref{qtm-in-sim}. The following are the simulation of quantum instruction $P_l$:
    \begin{enumerate}
       \item If $P_l$ has the form $\text{CNOT}[Q_{X_i}, Q_{X_j}]$, then we use:
          \begin{align*}
            (p_l, 0): & M_{\text{write($i$)}}[\text{work1}] \\
            (p_l, 1): & M_{\text{fetch}}[\text{creg}, \text{work1}, \text{work2}] \\
            (p_l, 2): & M_{\text{qget}}[\text{qreg}, \text{qcnt}, \text{work2}, \text{work3}] \\
            (p_l, 3): & M_{\text{write($j$)}}[\text{work4}] \\
            (p_l, 4): & M_{\text{fetch}}[\text{creg}, \text{work4}, \text{work5}] \\
            (p_l, 5): & M_{\text{qget}}[\text{qreg}, \text{qcnt}, \text{work5}, \text{work6}] \\
            (p_l, 6): & M_{\text{pair}}[\text{work3}, \text{work6}, \text{qdev}] \\
            (p_l, 7): & M_{\text{write($(p_l, 9)$)}}[\text{qret}] \\
            (p_l, 8): & \text{transition to } q_s^C \\
            q_t^C, & (p_l, 9)_\text{qret} \to \#_\text{qret}, (p_l, 9), L \\
            (p_l, 9), & \#_\text{qret} \to \#_\text{qret}, (p_l, 10), R \\
            (p_l, 10): & M_{\text{clean}}[\text{work1}] \\
            (p_l, 11): & M_{\text{clean}}[\text{work2}] \\
            (p_l, 12): & M_{\text{clean}}[\text{work3}] \\
            (p_l, 13): & M_{\text{clean}}[\text{work4}] \\
            (p_l, 14): & M_{\text{clean}}[\text{work5}] \\
            (p_l, 15): & M_{\text{clean}}[\text{work6}] \\
            (p_l, 16): & M_{\text{clean}}[\text{qdev}] \\
            (p_l, 17): & \text{transition to } (p_{l+1}, 0) \\
          \end{align*}
      \item If $P_l$ has the form $A[Q_{X_i}]$ with $A=H$ or $T$, then we use:
          \begin{align*}
            (p_l, 0): & M_{\text{write($i$)}}[\text{work1}] \\
            (p_l, 1): & M_{\text{fetch}}[\text{creg}, \text{work1}, \text{work2}] \\
            (p_l, 2): & M_{\text{qget}}[\text{qreg}, \text{qcnt}, \text{work2}, \text{qdev}] \\
            (p_l, 3): & M_{\text{untary}}[\text{qdev}] \\
            (p_l, 4): & M_{\text{write($(p_l, 6)$)}}[\text{qret}] \\
            (p_l, 5): & \text{transition to } q_s^A \\
            q_t^A, & (p_l, 6)_\text{qret} \to \#_\text{qret}, (p_l, 6), L \\
            (p_l, 6), & \#_\text{qret} \to \#_\text{qret}, (p_l, 7), R \\
            (p_l, 7): & M_{\text{clean}}[\text{work1}] \\
            (p_l, 8): & M_{\text{clean}}[\text{work2}] \\
            (p_l, 9): & M_{\text{clean}}[\text{qdev}] \\
            (p_l, 10): & \text{transition to } (p_{l+1}, 0) \\
          \end{align*}
       \item If $P_l$ has the form $X_i \gets M[Q_{X_j}]$, then we use:
          \begin{align*}
            (p_l, 0): & M_{\text{write($j$)}}[\text{work1}] \\
            (p_l, 1): & M_{\text{fetch}}[\text{creg}, \text{work1}, \text{work2}] \\
            (p_l, 2): & M_{\text{qget}}[\text{qreg}, \text{qcnt}, \text{work2}, \text{qdev}] \\
            (p_l, 3): & M_{\text{untary}}[\text{qdev}] \\
            (p_l, 4): & M_{\text{write($(p_l, 6)$)}}[\text{qret}] \\
            (p_l, 5): & \text{transition to } q_s^M \\
            q_{t0}^M, & (p_l, 6)_\text{qret} \to \#_\text{qret}, (p_l, 6), L \\
            q_{t1}^M, & (p_l, 6)_\text{qret} \to \#_\text{qret}, (p_l, 9), L \\
            (p_l, 6), & \#_\text{qret} \to \#_\text{qret}, (p_l, 7), R \\
            (p_l, 7): & M_{\text{write($\mathit{bin}(0)$)}}[\text{work3}] \\
            (p_l, 8): & \text{transition to } (p_l, 11) \\
            (p_l, 9), & \#_\text{qret} \to \#_\text{qret}, (p_l, 10), R \\
            (p_l, 10): & M_{\text{write($\mathit{bin}(1)$)}}[\text{work3}] \\
            (p_l, 11): & M_{\text{write($i$)}}[\text{work4}] \\
            (p_l, 12): & M_{\text{update}}[\text{creg}, \text{work4}, \text{work3}] \\
            (p_l, 13): & M_{\text{clean}}[\text{work1}] \\
            (p_l, 14): & M_{\text{clean}}[\text{work2}] \\
            (p_l, 15): & M_{\text{clean}}[\text{work3}] \\
            (p_l, 16): & M_{\text{clean}}[\text{work4}] \\
            (p_l, 17): & M_{\text{clean}}[\text{qdev}] \\
            (p_l, 18): & \text{transition to } (p_{l+1}, 0) \\
          \end{align*}
    \end{enumerate}

   To conclude this subsection, we prove the correctness of our simulation. Let $\mathcal{M}^\text{Q}$ denote the $\text{TM}^\text{Q}$ constructed according to the above description.

    \begin{lemma}
        For every $x, y \in \{0,1\}^*$, $P(x, y) = \mathcal{M}^\text{Q}(x, y)$.
    \end{lemma}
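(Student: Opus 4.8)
The plan is to follow the same path-bijection strategy used in Subsections \ref{sec:qrasp-by-qram} and \ref{sec:qram-by-qrasp}. I would set up an agreement relation $\models$ between configurations of $P$ and the ``checkpoint'' configurations of $\mathcal{M}^\text{Q}$ (those whose state is $(p_l,0)$ for some $0 \leq l \leq L$), prove that one QRAM step is simulated by one checkpoint-to-checkpoint block of $\mathcal{M}^\text{Q}$ steps with the same probability, lift this to a probability-preserving bijection between terminal execution paths, and finally sum over paths exactly as before. The genuinely new ingredient, absent from the QRAM/QRASP comparison, is the bookkeeping of the quantum state.

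Concretely, a checkpoint configuration $(p_l, \mathcal{T}, \xi, \ket{\psi'})$ should agree with a QRAM configuration $(\xi_Q, \mu, \ket\psi, x, y)$ when $\xi_Q = l$ (with $\downarrow$ corresponding to $(p_L,0)$ turning into $q_f$), the ``input''/``output'' tracks encode $x,y$, the ``creg'' track encodes $\mu$ in the $\text{L}a\text{M}b\text{R}$ format, and --- crucially --- the ``qreg'' and ``qcnt'' tracks record an injection $\rho$ from the virtual (QRAM) quantum addresses accessed so far onto the physical addresses $\{1,\dots,c\}$, where $c$ is the value on ``qcnt'', such that $\ket{\psi'} = V_\rho \ket\psi$. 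Here $V_\rho$ is the isometry that relabels tensor factor $i$ to factor $\rho(i)$ and leaves every untouched factor in $\ket 0$; since only finitely many qubits ever differ from $\ket 0$, $V_\rho$ is well defined. My first task is then to check that the ``agreement is unique'' lemma (the analogue of Lemmas \ref{lemma-qrasp-by-qram-unique-config} and \ref{lemma-qram-by-qrasp-unique}) still holds with this extended relation.

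The single-step simulation lemma would be verified case by case from the construction. Classical instructions reduce to the standard RAM-by-TM bookkeeping of \cite{Coo73} (deferred to the appendix). For the quantum instructions I would trace the listed state blocks: for $\text{CNOT}[Q_{X_i},Q_{X_j}]$ the machine fetches $\mu(i),\mu(j)$ from ``creg'' via $M_\text{fetch}$, passes them through $M_\text{qget}$ to obtain or freshly allocate physical addresses, uses $M_\text{pair}$ to write the ``qdev'' string $s_{ab}$, invokes $q_s^C$, and returns via ``qret''; I must then check that the resulting quantum state $\text{CNOT}[a,b]\ket{\psi'}$ equals $V_{\rho'}(\text{CNOT}_{\mu(i),\mu(j)}\ket\psi)$ for the updated map $\rho'$, and likewise $H[a]\ket{\psi'} = V_{\rho'}(H_{\mu(i)}\ket\psi)$ and $T[a]\ket{\psi'} = V_{\rho'}(T_{\mu(i)}\ket\psi)$ for the $H$/$T$ blocks. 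Address-safety (Lemma \ref{lemma-qram-address}) guarantees $\mu(i),\mu(j)\ge 0$, so no branch to $\downarrow$ is taken. For the measurement block the two device outcomes $q^M_{t0},q^M_{t1}$ split into exactly two $\mathcal{M}^\text{Q}$-paths with probabilities $p_0 = \Abs{M_0[a]\ket{\psi'}}^2$ and $p_1 = 1-p_0$, and I would verify that these coincide with the QRAM probabilities $\Abs{P_{\mu(j)}\ket\psi}^2$ and $1-\Abs{P_{\mu(j)}\ket\psi}^2$ and that the post-measurement states again agree under $V_{\rho'}$; the standing measurement-postponement assumption (Theorem \ref{lemma-qram-measurement}) ensures the measured physical qubit is never reused, so this direct measurement faithfully reproduces the QRAM transition.

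The main obstacle I anticipate is precisely this quantum-state agreement under the time-varying relabeling $\rho$: I have to argue that $V_\rho$ intertwines the gate applications in the sense above --- applying a gate to the physical address $\rho(\cdot)$ is, after relabeling, the same as applying it to the virtual address --- and that each allocation step extends $\rho$ consistently (an already-present virtual address keeps its physical address, a new one is mapped to the next fresh physical address whose factor was $\ket 0$ on both sides). Once this invariant is in place the remaining steps are routine: I define the path map $h$ as in the earlier subsections, use the normalisation identity $1 = \sum_{\pi} \pi.p = \sum_{\pi} h(\pi).p \le \sum_{\pi'} \pi'.p = 1$ to upgrade it to a bijection on terminal paths matching outputs, and conclude
\[
    P(x,y) = \sum_{\pi} \pi.p = \sum_{\pi'} \pi'.p = \mathcal{M}^\text{Q}(x,y)
\]
by the same chain of equalities used at the end of Subsection \ref{sec:qram-by-qrasp}.
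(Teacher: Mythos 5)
Your proposal is correct, but note that the paper's own proof of this lemma consists of the single sentence ``Clear by the construction.'' What you have written is therefore not an alternative route but the argument the authors chose not to spell out, organised according to the same configuration-agreement and path-bijection template that they \emph{do} write out in full for the QRAM--QRASP comparisons in Section \ref{sec:sim-qram-qrasp}. The one genuinely non-routine ingredient you isolate --- the invariant $\ket{\psi'} = V_\rho\ket\psi$ for the virtual-to-physical relabelling isometry, together with the intertwining identity (a gate applied at physical address $\rho(i)$ commutes through $V_\rho$ to the same gate at virtual address $i$) and the fact that $V_\rho$ preserves measurement statistics --- is exactly the content hidden behind ``clear by construction,'' and your treatment of it is sound: $V_\rho$ is well defined because all but finitely many tensor factors are $\ket 0$, each allocation extends $\rho$ only at a fresh physical address whose factor is $\ket 0$ on both sides, uniqueness of the agreeing QRAM configuration follows because $V_\rho$ is injective on its range, and address-safety guarantees the non-terminating branch is always the one taken so the case analysis never leaves the checkpoint structure. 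One small remark: measurement-postponement is not actually needed for \emph{this} lemma, since the $\text{TM}^\text{Q}$ measurement is an honest probabilistic branching that mirrors the QRAM rule whether or not the qubit is later reused; it is a standing assumption here only because it is essential in the subsequent step, where the $\text{TM}^\text{Q}$ is simulated by a reversible QTM and interference between measurement branches must be prevented.
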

    \begin{proof}
        Clear by the construction.
        \iffalse
        \textbf{WARNING: the proof might need more details.}
        \fi
    \end{proof}

    \begin{lemma}
        Suppose $P$ is a $T(n)$-time QRAM. Then:
        \begin{enumerate}
          \item If $l(n)$ is logarithmic, then the number of non-empty symbols in every track of $\mathcal{M}^\text{Q}$ is $O(T(n))$.
          \item If $l(n)$ is constant, then the number of non-empty symbols in every track of $\mathcal{M}^\text{Q}$ is $O(T(n)^2)$.
        \end{enumerate}
    \end{lemma}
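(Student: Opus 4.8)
The plan is to bound the number of non-empty symbols on each track separately, relating TM space to the QRAM running time $T(n)$ through the bit-lengths of the integers manipulated. The starting observation is that every QRAM instruction costs at least $1$ under either criterion (e.g. $X_i \gets C$ costs $1$, and the others are sums of positive quantities $l(\cdot)$), so the number of instructions executed along any path is $O(T(n))$; in particular the number of distinct classical (resp. quantum) registers ever accessed is $O(T(n))$. I will use the standard reading that under the logarithmic criterion $l(n) = \Theta(\log(|n|+1))$ and under the constant criterion $l(n) = \Theta(1)$, so that the cost of an instruction both upper- and lower-bounds, up to constants, the bit-lengths of the integers it touches.

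For the track \emph{creg}, whose content at any moment is the list $\#\mathrm{L}a_1\mathrm{M}b_1\mathrm{R}\cdots\mathrm{L}a_r\mathrm{M}b_r\mathrm{R}\#$ of touched addresses and their current values, I would bound its length by $\sum_i (\log(|a_i|+1) + \log(|b_i|+1) + O(1))$. The key step is a charging argument: each distinct address $a_i$ is either a program literal of bounded size (direct writes) or first created by an indirect write of cost $\Omega(l(a_i))$, and the current value $b_i$ was produced by a write whose cost dominates $\log(|b_i|+1)$ (for $X_i\gets X_j\pm X_k$ one checks $\log|\mu(j)\pm\mu(k)| \le l(\mu(j))+l(\mu(k))+O(1)$, and indirect writes are immediate). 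Charging each register to one such instruction and summing over the disjoint charged instructions gives $\sum_i \log(|a_i|+1) = O(T(n))$ and $\sum_i \log(|b_i|+1) = O(T(n))$ in the logarithmic case. In the constant case every manipulated integer has $O(T(n))$ bits (each addition adds at most one bit and there are $O(T(n))$ instructions) and there are $O(T(n))$ registers, so \emph{creg} has $O(T(n)^2)$ symbols. The tracks \emph{input}, \emph{output}, \emph{qret} and the \emph{work} tracks are handled the same way: \emph{input} has $n = O(T(n))$ symbols, \emph{output} is bounded by the total cost of WRITE instructions, \emph{qret} holds a single symbol, and each \emph{work} track only ever holds the binary form of one manipulated integer.

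The delicate track is \emph{qreg}, the virtual-to-physical list $\mathrm{L}u_1\mathrm{M}v_1\mathrm{R}\cdots$, together with \emph{qcnt} and \emph{qdev}. The physical addresses $v_i$ run over $\{1,\dots,m\}$, where $m$ is the number of distinct quantum registers, and naively $\sum_i \log(v_i+1) = \Theta(m\log m)$ could exceed $O(T(n))$. The resolution, which I expect to be the main obstacle, is that the virtual addresses $u_1,\dots,u_m$ are \emph{distinct} nonnegative integers, so after sorting the $i$-th one is at least $i-1$ and hence $\sum_i \log(u_i+1) \ge \log(m!) = \Theta(m\log m)$; on the other hand each first access to $u_i$ is a quantum instruction of cost $\Omega(l(u_i)) = \Omega(\log(u_i+1))$ under the logarithmic criterion, whence $\sum_i \log(u_i+1) = O(T(n))$. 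Combining the two bounds yields $m\log m = O(T(n))$, which simultaneously controls $\sum_i \log(v_i+1) = O(m\log m) = O(T(n))$, the virtual addresses, the counter \emph{qcnt} (a single integer $\le m$, of size $O(\log T(n))$), and \emph{qdev} (a unary or pairing string of length $O(m) = O(T(n))$). This gives $O(T(n))$ for every track in the logarithmic case. In the constant case one has $m = O(T(n))$ with each virtual address of $O(T(n))$ bits, so \emph{qreg} has $O(T(n)^2)$ symbols, and the remaining tracks are $O(T(n)^2)$ by the same reasoning, matching the claimed bounds.
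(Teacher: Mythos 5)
Your proof is correct and rests on the same basic mechanism as the paper's: amortize the growth of each track against the cost of the instruction that causes it, so that the total track length is bounded by $\sum_i O(t_i) = O(T(n))$ under the logarithmic criterion, and by (number of instructions) $\times$ (maximal bit-length $O(T(n))$ of a manipulated integer) $= O(T(n)^2)$ under the constant criterion. The paper's proof is terser: it simply asserts that each executed instruction increases the number of non-empty symbols on \emph{creg} and \emph{qreg} by at most $O(t_i)$ (resp.\ $O(T(n)\,t_i)$) and sums, restricting attention to those two tracks. Where you genuinely go beyond the paper is the \emph{qreg} track. The paper's per-instruction bound $O(t_i)$ is not literally true there: a quantum instruction touching a \emph{small} virtual address $u$ costs only $l(u)=O(1)$, yet may append a physical address $v\le m$ of $\Theta(\log m)$ bits, so the increment can exceed the instruction's cost. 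Your observation that the $m$ virtual addresses are distinct, hence $\sum_i\log(u_i+1)\ge\log(m!)=\Theta(m\log m)$ while $\sum_i\log(u_i+1)=O(T(n))$ by the cost charging, yields $m\log m=O(T(n))$ and thereby controls $\sum_i\log(v_i+1)$ as well; this is exactly the missing piece needed to make the paper's one-line claim for \emph{qreg} rigorous in the logarithmic case. Your explicit treatment of the remaining tracks (\emph{output} via the cost of WRITE, \emph{qdev} via $O(m)$, the work tracks via single integers) is likewise a welcome elaboration of what the paper dismisses with ``we need only focus on \emph{creg} and \emph{qreg}.''
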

    \begin{proof}
        We need to only focus on the tracks ``creg'' and ``qreg''.

        {\vskip 3pt}

        \textbf{Case 1}. $l(n)$ is logarithmic: For each executed classical-type QRAM instruction, at most one classical register is altered. Let $t_i$ denote the execution time of the $i$-th executed instruction. After the execution of this instruction, the number of non-empty symbols in track ``creg'' increases at most $O(t_i)$. Therefore, the number of non-empty symbols in track ``creg'' is
        \[
            \sum_i O(t_i) = O(T(n)).
        \]
        For each executed quantum-type QRAM instruction, at most one virtual address is assigned a physical address. Similarly, after the $i$-th executed instruction, the number of non-empty symbols in track ``qreg'' increases at most $O(t_i)$. Therefore, the number of non-empty symbols in track ``qreg'' is also $O(T(n))$.

        {\vskip 3pt}

        \textbf{Case 2}. $l(n)$ is constant: The analysis is similar. The only difference is that after each executed instruction, the number of non-empty symbols in track ``creg'' and ``qreg'' increase at most $O(T(n) t_i)$. This is because after executing $T(n)$ QRAM instructions, the largest possible address could be $2^{T(n)}$ (by repeatedly executing the instruction $X_i \gets X_j + X_k$ with $i = j = k$), which is of length $T(n)$ in its binary representation. Therefore, the number of non-empty symbols in track ``qreg'' is $O(T(n)^2)$.
    \end{proof}

    \begin{lemma}
        Suppose $P$ is a $T(n)$-time QRAM. Then,
        \begin{enumerate}
          \item If $l(n)$ is logarithmic, then $\mathcal{M}^\text{Q}$ is a $O(T(n)^2)$-time $\text{TM}^\text{Q}$.
          \item If $l(n)$ is constant, then $\mathcal{M}^\text{Q}$ is a $O(T(n)^4)$-time $\text{TM}^\text{Q}$.
        \end{enumerate}
    \end{lemma}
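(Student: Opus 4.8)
The plan is to express the total running time of $\mathcal{M}^\text{Q}$ as a sum, over all QRAM instructions $P_l$ actually executed during the computation, of the time $\mathcal{M}^\text{Q}$ spends simulating each occurrence, and then to bound this sum using the track-length estimate established in the previous lemma. First I would observe that, by construction, each executed QRAM instruction is simulated by a fixed (constant) number of calls to the auxiliary Turing machines $M_\text{fetch}, M_\text{update}, M_\text{qget}, M_\text{add}, M_\text{pair}, M_\text{untary}, M_\text{clean}$, and so on, together with $O(1)$ transitions of pure bookkeeping. Hence the simulation time of a single instruction is, up to a constant factor, the sum of the running times of these auxiliary calls, and the overall time is the sum of these over the executed instructions.

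Next I would identify the dominant contribution. Writing $S$ for the maximum number of non-empty symbols occurring on any track (so $S = O(T(n))$ for logarithmic $l$ and $S = O(T(n)^2)$ for constant $l$, by the preceding lemma), the only auxiliary machines that scan a long track are $M_\text{fetch}$, $M_\text{update}$ and $M_\text{qget}$, each running in time $O(|z|(|x|+|y|))$, where $z$ is the ``creg'' or ``qreg'' track (so that $|z| \leq S$) and $x, y$ are a register address together with its content. All remaining auxiliary machines touch only short strings---binary representations of the numbers involved, or physical addresses, whose magnitudes are at most $2^{T(n)}$ and $O(T(n))$ respectively---so their times (for instance $O(|x||y|)$ for $M_\text{append}$ or $O(c\log c)$ for $M_\text{untary}$) are dominated by the $O(S(|x|+|y|))$ term. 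Since track contents never shrink, I can safely bound $|z|$ by the final value $S$ at every intermediate step.

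For the logarithmic case I would carry out an amortized analysis. Let $t_i$ denote the QRAM cost of the $i$-th executed instruction, so that $\sum_i t_i \leq T(n)$. The crucial point is that the addresses and contents manipulated by the $i$-th instruction have bit-length $\Theta(l(\cdot))$, whence $|x|+|y| = O(t_i)$; the time $\mathcal{M}^\text{Q}$ spends on step $i$ is therefore $O(S(|x|+|y|)) = O(T(n)\, t_i)$. Summing yields a total of $O(T(n)) \sum_i t_i = O(T(n)) \cdot T(n) = O(T(n)^2)$, as claimed. For the constant case the bound on $|x|+|y|$ is weaker: each $l(n) = O(1)$, so there are $O(T(n))$ executed instructions, yet the numbers stored can be as large as $2^{T(n)}$, giving $|x|+|y| = O(T(n))$. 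Each step then costs $O(S(|x|+|y|)) = O(T(n)^2 \cdot T(n)) = O(T(n)^3)$, and over $O(T(n))$ steps the total is $O(T(n)^4)$.

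I expect the main obstacle to be the logarithmic case: one must resist bounding each step by the crude worst case $O(S \cdot T(n))$ (which would only give $O(T(n)^3)$) and instead tie the per-step operand length to the QRAM cost $t_i$, so that the sum collapses to $O(T(n)) \sum_i t_i$. The identity $|x|+|y| = \Theta(l(\cdot))$ linking the TM tape lengths to the QRAM logarithmic cost criterion is the technical heart of the argument. A secondary, more routine, check is to confirm that none of the short-string auxiliary machines (nor the $O(1)$ bookkeeping transitions per instruction) secretly contributes more than the $M_\text{fetch}$, $M_\text{update}$ and $M_\text{qget}$ calls in either cost model.
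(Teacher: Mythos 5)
Your proposal is correct and follows essentially the same route as the paper: both bound the per-instruction simulation cost by (track length) $\times$ (operand length), use the track-length lemma for $S$, tie the operand length to $O(t_i)$ in the logarithmic case so that the amortized sum $\sum_i O(T(n)\,t_i) = O(T(n)^2)$, and use the cruder $O(T(n)^3)$ per-step bound over $O(T(n))$ steps in the constant case. Your identification of $M_\text{fetch}$, $M_\text{update}$ and $M_\text{qget}$ as the dominant calls simply makes explicit what the paper asserts in passing.
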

    \begin{proof} \textbf{Case 1}. $l(n)$ is logarithmic: Let $t_i$ denote the execution time of the $i$-th executed instruction, then the lengths of addresses accessed are bounded by $O(t_i)$. $\mathcal{M}^\text{Q}$ simulates this instruction with $O(t_i T(n))$ time, which comes from the usage of those basic TMs, e.g. $M_\text{append}$, $M_\text{fetch}$, $M_\text{update}$, $M_\text{qget}$, $M_\text{untary}$ and $M_\text{pair}$. Therefore, $\mathcal{M}^\text{Q}$ is $O(T(n)^2)$-time.

   {\vskip 3pt}

     \textbf{Case 2}. $l(n)$ is constant: The lengths of addresses accessed are bounded by $O(T(n))$ for each executed instruction. Hence, $\mathcal{M}^\text{Q}$ simulates each instruction with $O(T(n)^3)$ time. Consequently, $\mathcal{M}^\text{Q}$ is $O(T(n)^4)$-time.
    \end{proof}

    \subsubsection{QTMs simulate $\text{TM}^\text{Q}$s} \label{sec:tmq-by-qtm}

    Our strategy of simulating $\mathcal{M}^\text{Q}$ - the $\text{TM}^\text{Q}$ constructed above - by a QTM is presented in the following lemma and its proof.

    \begin{lemma}
        There is a well-formed, normal form, stationary and unidirectional QTM $M$ within time $O(T(n)^2)$ such that $M(x, y) = \mathcal{M}^\text{Q}(x, y)$ for every $x, y \in \{0,1\}^*$.
    \end{lemma}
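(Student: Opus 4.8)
The plan is to build $M$ in two conceptual layers on top of $\mathcal{M}^\text{Q}$: a \emph{reversible classical control} that mimics the deterministic transition function $\delta$ together with all the track bookkeeping of $\mathcal{M}^\text{Q}$, and a layer of \emph{genuine unitary operations} realising the abstract quantum device. Since a QTM has no external device, I would first store the quantum registers of $\mathcal{M}^\text{Q}$ on a dedicated track, say ``qubits'', whose length is bounded by the number of physical addresses ever allocated. Because $\mathcal{M}^\text{Q}$ renumbers virtual addresses to consecutive small physical addresses and issues at most one allocation per quantum call, this length is $O(T(n))$. The gates $H$, $T$ and $\text{CNOT}$ are then applied directly by $\delta$-like rules at the head position: to act on the qubit at physical address $a$ (read off in unary from track ``qdev''), the head walks to cell $a$ of track ``qubits'', applies the local unitary, and returns; for $\text{CNOT}$ one of the two qubits is shuttled next to the other. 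Each such call therefore costs $O(T(n))$ head moves.

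The delicate point is the measurement instruction, which in $\mathcal{M}^\text{Q}$ genuinely collapses the state, splits into the two terminating states $q_{t0}^M, q_{t1}^M$, and writes the outcome onto track ``creg'', where it may subsequently steer the classical control. I would eliminate it by the principle of deferred measurement: replace the measurement call by a coherent $\text{CNOT}$ from the addressed qubit onto a \emph{fresh} record ancilla, and write the (now quantum) copied bit to ``creg''. Because $\mathcal{M}^\text{Q}$ is measurement-postponed (Theorem \ref{lemma-qram-measurement}), the measured qubit is never touched by another gate, so this coherent copy reproduces exactly the statistics the real measurement would have produced. After this replacement the quantum device is purely unitary, and the control track ``creg'' is merely allowed to hold superposed contents.

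For the control layer I would invoke the construction of \cite{Ber97} that simulates a deterministic (multi-track) TM by a reversible, hence well-formed, Turing machine, and apply their synchronisation, branching and dovetailing lemmas to force the result to be additionally normal form, stationary and unidirectional. The only thing to check is that this reversibilisation is compatible with the superposed ``creg'' produced above: this holds because the reversible machine realises a permutation of configurations, which extends linearly (unitarily) to superpositions, so branching coherently on a copied measurement bit causes no problem. Crucially, I would \emph{not} uncompute the genuine $H$ and $T$ gates; the history left on the auxiliary tracks is harmless, because the output is extracted by measuring track ``output'' in the computational basis, and the record ancillas already keep distinct measurement-outcome branches orthogonal, so neither the garbage nor the absence of uncomputation perturbs the marginal distribution on the output track.

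Correctness then follows from the deferred-measurement principle: measuring the output track of $M$ at halting yields the same distribution over outputs as performing the measurements step by step, which is precisely $\mathcal{M}^\text{Q}(x,y)$. For the running time, each of the $O(T(n))$ simulated classical steps incurs only constant overhead from the reversibilisation, while each of the at most $T(n)$ quantum-device calls incurs $O(T(n))$ head moves to reach physical addresses bounded by $T(n)$; hence $M$ halts within time $O(T(n)^2)$. The main obstacle I expect is exactly the reconciliation just described --- squaring the collapsing, classically-controlling intermediate measurements of $\mathcal{M}^\text{Q}$ with the measurement-free unitary evolution demanded of a well-formed QTM --- and in particular verifying that deferring every measurement to the end, while leaving the control history uncomputed, leaves the computational-basis statistics of the output track unchanged.
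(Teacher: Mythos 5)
Your proposal is correct and follows essentially the same route as the paper: a Bennett-style history track (via the reversibilisation machinery of \cite{Ber97}) for the classical control, direct unitary transition rules that walk along a dedicated quantum track to apply $H$, $T$ and CNOT, and a coherent, write-once copy of each measured qubit onto a fresh ancilla cell, with the measurement-postponed property guaranteeing that the outcome branches remain orthogonal so the output statistics are unchanged. The only quibble is your claim that the reversibilised classical steps incur constant overhead --- with the history mechanism each simulated step costs $O(t)$ to walk to the end of the history and back --- but this still sums to the same $O(T(n)^2)$ bound you state.
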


    \begin{proof} Let $\mathcal{M}^\text{Q} = (Q, Q_s, Q_t, \Sigma, \delta, \lambda, q_0, q_f)$. We recall that $\mathcal{M}^\text{Q}$ is measurement-postponed and stationary. Our basic idea is to simulate $\mathcal{M}^\text{Q}$ by maintaining a history to make the simulation reversible. The technique we use here is partly borrowed from \cite{Ber97}.

The TM $M$ used to simulate $\mathcal{M}^\text{Q}$ has five tracks: \begin{itemize}\item The first track, with alphabet $\Sigma_1 = \Sigma$, is used to simulate the tape of $\mathcal{M}^\text{Q}$.

\item The second track, with track $\Sigma_2 = \{ \#, @ \}$, is used to store a $@$ indicating the position of tape head of $\mathcal{M}^\text{Q}$.

\item The third track, with alphabet $\Sigma_3 = \{ \#, \$ \} \cup ((Q \setminus Q_s) \times \Sigma)$ is used to write down a list of the transitions taken by $\mathcal{M}^\text{Q}$, starting with the end marker $\$$.

\item The fourth track is a ``quantum'' track and will be defined later.

\item The fifth track is an ``extra'' quantum track for measurements.\end{itemize}

Now we can elaborate the construction of $M$. We use $\forall_i$ to denote any symbol in $\Sigma_i$ and $\forall_3'$ to denote any symbol in $\Sigma_3 \setminus \{ \#, \$ \}$. Since the fourth and the fifth tracks are not used in classical operations, we write only the first three tracks in the transitions unless the fourth or the fifth track are needed. The first stage of simulation needs the state set $$Q_1 = Q \cup ((Q \setminus Q_t) \times (Q \setminus Q_s) \times \Sigma \times \{ 1, 2, 3, 4 \}) \cup ((Q \setminus Q_t) \times \{ 5, 6, 7 \}) \cup \{ q_a, q_b \}.$$ The initial state is $q_a$ and the final state is $q_f$. The transitions are defined as follows:
\begin{enumerate}
  \item
    At the beginning, we write an end marker $@$ on the second track and end marker $\$$ on the third track, and then come back to the initial position with state $q_0$. Include the instructions:
    \[
    \begin{matrix}
        q_a, & (\forall_1, \#_2, \#_3) & \to & (\forall_1, @, \#), & q_b, & R; & 1 \\
        q_b, & (\forall_1, \#_2, \#_3) & \to & (\forall_1, \#, \$), & q_c, & R; & 1 \\
        q_c, & (\forall_1, \forall_2, \forall_3) & \to & (\forall_1, \forall_2, \forall_3), & q_d, & L; & 1 \\
        q_d, & (\forall_1, \forall_2, \forall_3) & \to & (\forall_1, \forall_2, \forall_3), & q_e, & L; & 1 \\
        q_e, & (\forall_1, \forall_2, \forall_3) & \to & (\forall_1, \forall_2, \forall_3), & q_g, & L; & 1 \\
        q_g, & (\forall_1, \forall_2, \forall_3) & \to & (\forall_1, \forall_2, \forall_3), & q_0, & R; & 1 \\
    \end{matrix}
    \]
  \item For $p \in Q \setminus Q_s, \sigma \in \Sigma$ with $p \neq q_f$ and with transition $\delta(p, \sigma) = (\tau, q, d)$ in $M$, we make transitions to go from $p$ to $q$ updating the first track, i.e. the simulated tape of $M$, and adding $(p, \sigma)$ to the end of the history. Include the instructions:
    \[
    \begin{matrix}
        p, & (\sigma, @, \forall_3) & \to & (\tau, \#_2, \forall_3), & (q, p, \sigma, 1), & d; & 1 \\
        (q, p, \sigma, 1), & (\forall_1, \#_2, \$) & \to & (\forall_1, @, \$), & (q, p, \sigma, 3), & R; & 1 \\
        (q, p, \sigma, 1), & (\forall_1, \#_2, \forall_3') & \to & (\forall_1, @, \forall_3'), & (q, p, \sigma, 3), & R; & 1 \\
        (q, p, \sigma, 1), & (\forall_1, \#_2, \#_3) & \to & (\forall_1, @, \#_3), & (q, p, \sigma, 2), & R; & 1 \\
        (q, p, \sigma, 2), & (\forall_1, \#_2, \#_3) & \to & (\forall_1, \#_2, \#_3), & (q, p, \sigma, 2), & R; & 1 \\
        (q, p, \sigma, 2), & (\forall_1, \#_2, \$) & \to & (\forall_1, \#_2, \$), & (q, p, \sigma, 3), & R; & 1 \\
        (q, p, \sigma, 3), & (\forall_1, \#_2, \forall_3') & \to & (\forall_1, \#_2, \forall_3'), & (q, p, \sigma, 3), & R; & 1 \\
        (q, p, \sigma, 3), & (\forall_1, \#_2, \#_3) & \to & (\forall_1, \#_2, (p, \sigma)), & (q, 4), & R; & 1 \\
    \end{matrix}
    \]
    Whenever $(q, 4)$ is reached, the tape head is on the first blank after the end of the history (on the third track). Now we move the tape head back to the position of tape head of $M$ by including the instructions:
    \[
    \begin{matrix}
        (q, 4), & (\forall_1, \#_2, \#_3) & \to & (\forall_1, \#_2, \#_3), & (q, 5), & L; & 1 \\
        (q, 5), & (\forall_1, \#_2, \forall_3') & \to & (\forall_1, \#_2, \forall_3'), & (q, 5), & L; & 1 \\
        (q, 5), & (\forall_1, \#_2, \$) & \to & (\forall_1, \#_2, \$), & (q, 6), & L; & 1 \\
        (q, 5), & (\forall_1, @, \forall_3') & \to & (\forall_1, @, \forall_3'), & (q, 7), & L; & 1 \\
        (q, 5), & (\forall_1, @, \$) & \to & (\forall_1, @, \$), & (q, 7), & L; & 1 \\
        (q, 6), & (\forall_1, \#_2, \#_3) & \to & (\forall_1, \#_2, \#_3), & (q, 6), & L; & 1 \\
        (q, 6), & (\forall_1, @, \#_3) & \to & (\forall_1, @, \#_3), & (q, 7), & L; & 1 \\
        (q, 7), & (\forall_1, \forall_2, \forall_3) & \to & (\forall_1, \forall_2, \forall_3), & q, & R; & 1 \\
    \end{matrix}
    \]
    \item For $p \in Q_s$, we need the ``quantum'' track (namely, the fourth track) to simulate quantum operations, but the second and third tracks are useless now. Thus in the following discussion, we only write the first track and the ``quantum'' track in transitions. Let $\Sigma_q = \{ 0, 1 \}$ denote the alphabet of the ``quantum'' track, whose contents are initially set to $0$ (in other words, $0$ is used to be the empty symbol in the ``quantum'' track, for readability). The four quantum-type operations are implemented as follows.

        \textbf{Case 1}. $p = q_s^H$. Include the instructions:
        \[
        \begin{matrix}
            q_s^H, & (\forall_1, \forall_q) & \to & (\forall_1, \forall_q), & q_1^H, & L; & 1 \\
            q_1^H, & (\#_1, \forall_q) & \to & (\#_1, \forall_q), & q_2^H, & R; & 1 \\
            q_2^H, & (0, \forall_q) & \to & (0, \forall_q), & q_2^H, & R; & 1 \\
            q_2^H, & (1, 0) & \to & (1, 0), & q_2^H, & R; & 1/\sqrt2 \\
            q_2^H, & (1, 0) & \to & (1, 1), & q_2^H, & R; & 1/\sqrt2 \\
            q_2^H, & (1, 1) & \to & (1, 0), & q_2^H, & R; & 1/\sqrt2 \\
            q_2^H, & (1, 1) & \to & (1, 1), & q_2^H, & R; & -1/\sqrt2 \\
            q_2^H, & (\#_1, 0) & \to & (\#_1, 0), & q_3^H, & L; & 1 \\
            q_3^H, & (0, \forall_q) & \to & (0, \forall_q), & q_3^H, & L; & 1 \\
            q_3^H, & (1, \forall_q) & \to & (1, \forall_q), & q_3^H, & L; & 1 \\
            q_3^H, & (\#_1, 0) & \to & (\#_1, 0), & q_t^H, & R; & 1 \\
        \end{matrix}
        \]

        \textbf{Case 2}. $p = q_s^T$. Include the instructions:
        \[
        \begin{matrix}
            q_s^T, & (\forall_1, \forall_q) & \to & (\forall_1, \forall_q), & q_1^T, & L; & 1 \\
            q_1^T, & (\#_1, \forall_q) & \to & (\#_1, \forall_q), & q_2^T, & R; & 1 \\
            q_2^T, & (0, \forall_q) & \to & (0, \forall_q), & q_2^T, & R; & 1 \\
            q_2^T, & (1, 0) & \to & (1, 0), & q_2^T, & R; & 1 \\
            q_2^T, & (1, 1) & \to & (1, 1), & q_2^T, & R; & \exp (i \pi/4) \\
            q_2^T, & (\#_1, 0) & \to & (\#_1, 0), & q_3^T, & L; & 1 \\
            q_3^T, & (0, \forall_q) & \to & (0, \forall_q), & q_3^T, & L; & 1 \\
            q_3^T, & (1, \forall_q) & \to & (1, \forall_q), & q_3^T, & L; & 1 \\
            q_3^T, & (\#_1, 0) & \to & (\#_1, 0), & q_t^T, & R; & 1 \\
        \end{matrix}
        \]

        \textbf{Case 3}. $p = q_s^C$. Include the instructions:
        \[
        \begin{matrix}
            q_s^C, & (\forall_1, \forall_q) & \to & (\forall_1, \forall_q), & q_1^C, & L; & 1 \\
            q_1^C, & (\#_1, \forall_q) & \to & (\#_1, \forall_q), & (q_2^C, 0), & R; & 1 \\
            (q_2^C, 0), & (0, \forall_q) & \to & (0, \forall_q), & (q_2^C, 0), & R; & 1 \\
            (q_2^C, 0), & (1, 0) & \to & (1, 0), & (q_2^C, 0), & R; & 1 \\
            (q_2^C, 0), & (1, 1) & \to & (1, 1), & (q_2^C, 1), & R; & 1 \\
            (q_2^C, 0), & (2, \forall_q) & \to & (2, \forall_q), & (q_2^C, 0), & R; & 1 \\
            (q_2^C, 0), & (\#_1, \forall_q) & \to & (\#_1, \forall_q), & (q_3^C, 0), & L; & 1 \\
            (q_2^C, 1), & (0, \forall_q) & \to & (0, \forall_q), & (q_2^C, 1), & R; & 1 \\
            (q_2^C, 1), & (1, 0) & \to & (1, 0), & (q_2^C, 1), & R; & 1 \\
            (q_2^C, 1), & (1, 1) & \to & (1, 1), & (q_2^C, 0), & R; & 1 \\
            (q_2^C, 1), & (2, \forall_q) & \to & (2, \forall_q), & (q_2^C, 1), & R; & 1 \\
            (q_2^C, 1), & (\#_1, \forall_q) & \to & (\#_1, \forall_q), & (q_3^C, 1), & L; & 1 \\
            (q_3^C, 0), & (0, \forall_q) & \to & (0, \forall_q), & (q_3^C, 0), & L; & 1 \\
            (q_3^C, 0), & (1, \forall_q) & \to & (1, \forall_q), & (q_3^C, 0), & L; & 1 \\
            (q_3^C, 0), & (2, 0) & \to & (2, 0), & (q_3^C, 0), & L; & 1 \\
            (q_3^C, 0), & (2, 1) & \to & (2, 1), & (q_3^C, 0), & L; & 1 \\
            (q_3^C, 0), & (\#_1, \forall_q) & \to & (\#_1, \forall_q), & (q_4^C, 0), & R; & 1 \\
            (q_3^C, 1), & (0, \forall_q) & \to & (0, \forall_q), & (q_3^C, 1), & L; & 1 \\
            (q_3^C, 1), & (1, \forall_q) & \to & (1, \forall_q), & (q_3^C, 1), & L; & 1 \\
            (q_3^C, 1), & (2, 0) & \to & (2, 1), & (q_3^C, 1), & L; & 1 \\
            (q_3^C, 1), & (2, 1) & \to & (2, 0), & (q_3^C, 1), & L; & 1 \\
            (q_3^C, 1), & (\#_1, \forall_q) & \to & (\#_1, \forall_q), & (q_4^C, 1), & R; & 1 \\
            (q_4^C, 0), & (0, \forall_q) & \to & (0, \forall_q), & (q_4^C, 0), & R; & 1 \\
            (q_4^C, 0), & (1, 0) & \to & (1, 0), & (q_4^C, 0), & R; & 1 \\
            (q_4^C, 0), & (1, 1) & \to & (1, 1), & (q_4^C, 1), & R; & 1 \\
            (q_4^C, 0), & (2, \forall_q) & \to & (2, \forall_q), & (q_4^C, 0), & R; & 1 \\
            (q_4^C, 0), & (\#_1, \forall_q) & \to & (\#_1, \forall_q), & q_5^C, & L; & 1 \\
            (q_4^C, 1), & (0, \forall_q) & \to & (0, \forall_q), & (q_4^C, 1), & R; & 1 \\
            (q_4^C, 1), & (1, 0) & \to & (1, 0), & (q_4^C, 1), & R; & 1 \\
            (q_4^C, 1), & (1, 1) & \to & (1, 1), & (q_4^C, 0), & R; & 1 \\
            (q_4^C, 1), & (2, \forall_q) & \to & (2, \forall_q), & (q_4^C, 1), & R; & 1 \\
            q_5^C, & (0, \forall_q) & \to & (0, \forall_q), & q_5^C, & L; & 1 \\
            q_5^C, & (1, \forall_q) & \to & (1, \forall_q), & q_5^C, & L; & 1 \\
            q_5^C, & (2, \forall_q) & \to & (2, \forall_q), & q_5^C, & L; & 1 \\
            q_5^C, & (\#_1, \forall_q) & \to & (\#_1, \forall_q), & q_t^C, & R; & 1 \\
        \end{matrix}
        \]

        \textbf{Case 4}. $p = q_s^M$. In order to make the two possible measurement outcomes distinguishable in the successive configurations, we need an extra track with alphabet $\Sigma_e = \{\#, 0, 1\}$. Include the instructions:
        \[
        \begin{matrix}
            q_s^M, & (\forall_1, \forall_q, \forall_e) & \to & (\forall_1, \forall_q, \forall_e), & q_1^M, & L; & 1 \\
            q_1^M, & (\#_1, \forall_q, \forall_e) & \to & (\#_1, \forall_q, \forall_e), & q_2^M, & R; & 1 \\
            q_2^M, & (0, \forall_q, \forall_e) & \to & (0, \forall_q, \forall_e), & q_2^M, & R; & 1 \\
            q_2^M, & (1, 0, \#_e) & \to & (1, 0, 0), & (q_3^M, 0), & R; & 1 \\
            q_2^M, & (1, 1, \#_e) & \to & (1, 1, 1), & (q_3^M, 1), & R; & 1 \\
            (q_3^M, x), & (0, \forall_q, \forall_e) & \to & (0, \forall_q, \forall_e), & (q_3^M, x), & R; & 1 \\
            (q_3^M, x), & (\#_1, \forall_q, \forall_e) & \to & (\#_1, \forall_q, \forall_e), & (q_4^M, x), & L; & 1 \\
            (q_4^M, x), & (0, \forall_q, \forall_e) & \to & (0, \forall_q, \forall_e), & (q_4^M, x), & L; & 1 \\
            (q_4^M, x), & (1, \forall_q, \forall_e) & \to & (1, \forall_q, \forall_e), & (q_4^M, x), & L; & 1 \\
            (q_4^M, x), & (\#_1, \forall_q, \forall_e) & \to & (\#_1, \forall_q, \forall_e), & q_{tx}^M, & R; & 1 \\
        \end{matrix}
        \]
        In the above construction, the ``extra'' track is used to describe the measurement results at each position in the ``quantum'' track, which guarantees that no interference happens between the branches with different measurement results. This construction heavily depends on the condition that the simulated $\text{TM}^Q$ is measurement-postponed because each position of the extra track is allowed to be altered during the execution only once.

    \item Finally, to make $M$ in normal form, we add the transition:
    \[
    \begin{matrix}
        q_f, & (\forall_1, \forall_2, \forall_3) & \to & (\forall_1, \forall_2, \forall_3), & q_a, & R; & 1 \\
    \end{matrix}
    \]
\end{enumerate}

    It can be easily verified that the QTM $M$ constructed above is well-formed, normal form, stationary, unidirectional, and within time $O(T(n)^2)$.
    \iffalse
    \textbf{WARNING: the proof might need to verify the constructed QTM.}
    \fi
    \end{proof}

    \subsection{QRAMs simulate QTMs} \label{sec:qtm-by-qram}

    Now we turn to consider how to simulate QTMs by QRAMs. Our simulation strategy is divided into the following three steps:
    \begin{enumerate}\item Simulate a QTM by a family of quantum circuits with the technique developed in \cite{Yao93} and \cite{Nis02} --- Subsection \ref{circuit-qtm}.

    \item  Use the Solovay-Kitaev algorithm \cite{Daw05} to decompose the gates used in these quantum circuits into basic gates $H,T$ and CNOT, within bounded errors --- Subsection \ref{Kitaev}.

    \item Translate the family of quantum circuits with the basic gates into a QRAM --- Subsection \ref{sec-concate}.
    \end{enumerate}

    \subsubsection{Quantum Circuit Families simulate QTMs}\label{circuit-qtm}

    We write $[n] = \{1, 2, \dots, n\}$ and let $\mathcal{G}$ be a finite set of gates with their qubits indexed from $1$ to $n$. A $k$-qubit quantum gate ($1 \leq k \leq n$) can be written as $G=U[q_1,...,q_k]$, indicating a $2^k\times 2^k$-unitary operator $U$ on qubits $q_1,\dots,q_k \in [n]$, where $q_1, \dots, q_k$ are pairwise distinct.

   \begin{definition} An $n$-qubit $a$-input $b$-output quantum circuit $C$ over $\mathcal{G}$ is a $4$-tuple
    $C = (\mathcal{U}, A, B, f),$
    where:
    \begin{enumerate}
      \item $\mathcal{U}$ is a finite sequence of gates from $\mathcal{G}$;
      \item $A \subseteq [n]$ with $\abs{A} = a$ is the set of input qubits;
      \item $B \subseteq [n]$ with $\abs{B} = b$ is the set of output qubits;
      \item $f: [n] \setminus A \to \{0,1\}$ is the initial setting for non-input qubits.
    \end{enumerate}\end{definition}

    Now we describe the computation of circuit $C$. Suppose $A = \{ i_1, i_2, \dots, i_a \}$ and $\mathcal{U}=G_1 G_2... G_t$. For every $x = x_1x_2\dots x_a \in \{0, 1\}^a$, the input state is set to $\ket{\psi_x} = \ket{u_1}\ket{u_2}\dots\ket{u_n}$, where
    \[
        u_k = \begin{cases}
            x_l & i_l = k, \\
            f(k) & \text{otherwise}.
        \end{cases}
    \]
    The final state of $\mathcal{U}$ on input $x$ is $\ket{\phi_x} = G_t \dots G_2 G_1 \ket{\psi_x}$. Then we measure it in the computational basis of the output qubits $q_i$ $(i\in B)$, and $C$ outputs $y = y_1y_2\dots y_b \in \{0,1\}^b$ with probability $\Abs{M_y \ket{\phi_x}}^2$,
    where measurement operator $M_y = \sum_{v \in S_y} \ket v \bra v$ and $$S_y = \{ v \in \{0,1\}^n: v_{j_l} = y_l \text{ for every } l \in [n] \}.$$ In this way, $C$ defines a function $C: \{0,1\}^a \times \{0,1\}^b \to [0, 1]$ that $$C(x, y) = \Abs{M_y \ket{\phi_x}}^2,$$ meaning that $C$ on input $x$ outputs $y$ with probability $C(x, y)$.

    \begin{lemma} \label{lemma-qcircuit-error}
        Let $C_1 = (\mathcal{U}_1, A, B, f)$ and $C_2 = (\mathcal{U}_2, A, B, f)$, and let $0 < \epsilon < 1$. If $\Abs{\mathcal{U}_1-\mathcal{U}_2}_2 < \varepsilon$, then for every $x \in \{0,1\}^{\abs{A}}$ and $y \in \{0,1\}^{\abs{B}}$: $$\abs{C_1(x, y)-C_2(x, y)} < 3\varepsilon.$$
    \end{lemma}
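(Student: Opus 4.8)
The plan is to reduce the claim to a single robustness estimate for one projective measurement. Since $C_1$ and $C_2$ share the same input set $A$, output set $B$, and initial assignment $f$, they prepare the \emph{same} input state $\ket{\psi_x}$ and are read out by the \emph{same} measurement operator $M_y$; the two circuits differ only in the final states $\ket{\phi_x^{(1)}} = \mathcal{U}_1\ket{\psi_x}$ and $\ket{\phi_x^{(2)}} = \mathcal{U}_2\ket{\psi_x}$. First I would record that $M_y = \sum_{v \in S_y} \ket v \bra v$ is an orthogonal projector, so $M_y^\dagger M_y = M_y = M_y^\dagger$ and $\Abs{M_y} \le 1$, and therefore each output probability is the quadratic form $C_i(x,y) = \Abs{M_y \ket{\phi_x^{(i)}}}^2 = \bra{\phi_x^{(i)}} M_y \ket{\phi_x^{(i)}}$.

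The quantitative input is that the two final states are close. Writing $\ket\delta = \ket{\phi_x^{(1)}} - \ket{\phi_x^{(2)}} = (\mathcal{U}_1 - \mathcal{U}_2)\ket{\psi_x}$ and using that $\ket{\psi_x}$ is a unit vector together with the definition of the spectral norm, I would bound $\Abs{\ket\delta} \le \Abs{\mathcal{U}_1 - \mathcal{U}_2}_2 \Abs{\ket{\psi_x}} < \varepsilon$. Note that both $\mathcal{U}_1$ and $\mathcal{U}_2$ act as unitaries on the whole Hilbert space, so $\ket{\phi_x^{(1)}}$ and $\ket{\phi_x^{(2)}}$ are genuinely normalized.

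Then I would substitute $\ket{\phi_x^{(1)}} = \ket{\phi_x^{(2)}} + \ket\delta$ into the quadratic form and expand, using that $M_y$ is Hermitian to combine the two cross terms:
\[
    C_1(x,y) - C_2(x,y) = 2\,\mathrm{Re}\,\bra{\phi_x^{(2)}} M_y \ket\delta + \bra\delta M_y \ket\delta.
\]
Bounding each term by the Cauchy--Schwarz inequality with $\Abs{M_y} \le 1$, $\Abs{\ket{\phi_x^{(2)}}} = 1$ and $\Abs{\ket\delta} < \varepsilon$ gives $\abs{C_1(x,y)-C_2(x,y)} \le 2\varepsilon + \varepsilon^2$. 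Finally, since $0 < \varepsilon < 1$ we have $\varepsilon^2 < \varepsilon$, so the right-hand side is strictly below $3\varepsilon$, which is the desired estimate.

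I do not anticipate a genuine obstacle; this is a routine perturbation argument. The only point requiring care is the constant: a cruder triangle-inequality split already produces $2\varepsilon$, so the stated $3\varepsilon$ is comfortably safe, and the expansion above is the cleanest route that isolates the single surviving $\varepsilon^2$ term. I would also take care to state explicitly that both $\mathcal{U}_i$ are unitary on the full space (so that the final states are unit vectors and $\ket{\psi_x}$, $M_y$ are genuinely shared), which is exactly the setting produced by the Solovay--Kitaev decomposition in Subsection \ref{Kitaev} that this lemma is designed to control.
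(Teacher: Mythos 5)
Your proposal is correct and is essentially the paper's argument: the paper writes $\mathcal{U}_1=\mathcal{U}_2+J$ and expands $\mathcal{U}_1^\dag M_y\mathcal{U}_1-\mathcal{U}_2^\dag M_y\mathcal{U}_2$ at the operator level, while you expand the state $\ket{\phi_x^{(1)}}=\ket{\phi_x^{(2)}}+\ket{\delta}$, but both are the same first-order perturbation yielding $2\varepsilon+\varepsilon^2<3\varepsilon$. No gaps.
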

    \begin{proof}
        We can write $\mathcal{U}_1 = \mathcal{U}_2+J$ with $\Abs{J}_2 < \varepsilon$. Then:
        \begin{align*}
            \abs{C_1(x, y) - C_2(x, y)}
            & = \abs{ \Abs{M_y \mathcal{U}_1 \ket{\psi_x}}^2 - \Abs{M_y \mathcal{U}_2 \ket{\psi_x}}^2 } \\
            & = \abs{ \bra{\psi_x} \mathcal{U}_1^\dag M_y \mathcal{U}_1 \ket{\psi_x} - \bra{\psi_x} \mathcal{U}_2^\dag M_y \mathcal{U}_2 \ket{\psi_x} } \\
            & \leq \Abs{\mathcal{U}_1^\dag M_y \mathcal{U}_1-\mathcal{U}_2^\dag M_y \mathcal{U}_2}_2 \\
            & \leq \Abs{(\mathcal{U}_2^\dag+J^\dag) M_y (\mathcal{U}_2+J)-\mathcal{U}_2^\dag M_y \mathcal{U}_2}_2 \\
            & = \Abs{ J^\dag M_y \mathcal{U}_2 + \mathcal{U}_2^\dag M_y J + J^\dag M_y J }_2 \\
            & \leq \Abs{J}_2 + \Abs{J}_2 + \Abs{J}_2^2 \\
            & < 2\varepsilon + \varepsilon^2 < 3\varepsilon.
        \end{align*}
    \end{proof}

    Suppose the unitary operators appearing in $\mathcal{G}$ are $U_1, U_2, \dots, U_m$, and for $1\leq i\leq m$, $U_i$ is a $c_i$-qubit unitary operator, i.e. a $2^{c_i} \times 2^{c_i}$ unitary matrix. Then the description of circuit $C$ is a sequence of integers of the form
    \[
        \begin{matrix}
            g_1, & q_{1,1}, & \dots, & q_{1,c_{g_1}}, \\
            g_2, & q_{2,1}, & \dots, & q_{2,c_{g_2}}, \\
            \dots, \\
            g_t, & q_{t,1}, & \dots, & q_{t,c_{g_t}}, \\
            -1, \\
            i_1, & i_2, & \dots, & i_a, \\
            -1, \\
            j_1, & j_2, & \dots, & j_b, \\
            -1, \\
            f_1, & f_2, & \dots, & f_n, \\
            -1. \\
        \end{matrix}
    \]
    The sequence consists of four parts with $-1$ as the separator.
    \begin{enumerate}
      \item The first part describes $\mathcal{U} = G_1 G_2 \dots G_t$, where $G_i = U_{g_i} [q_{i,1}, \dots, q_{i,c_{g_i}}]$ for $1 \leq i \leq t$.
      \item The second part describes $A = \{ i_1, i_2, \dots, i_a \}$.
      \item The third part describes $B = \{ j_1, j_2, \dots, j_b \}$.
      \item The fourth part describes $f$ such that $f(k) = f_k$ for every $k \notin A$.
    \end{enumerate}
    The arguments $t, a, b$ and $n$ are obtained by counting the integers in their corresponding parts.

       \begin{definition}Let $M$ be a QTM and $\{C_n\}_{n=0}^\infty$ a family of quantum circuits, where $C_n$ is an $n$-input $b(n)$-output quantum circuit for every $n \in \mathbb{N}$ and $b(n) = (2t_n+1){\ceil{\log_2 \abs{\Sigma}}}$. We say that $\{C_n\}_{n=0}^\infty$ simulates $M$ if for every $x \in \{0,1\}^*$ and $y \in \{0,1\}^*$: $$M(x, y) = \sum_{\text{extract}(\text{tape}(z)) = y} C_{\abs{x}}(x, z),$$ where $\text{tape}(z)$ denotes the tape that $z$ represents. More precisely, if we write $z = z_1 z_2 \dots z_{2t+1}$ with $z_i \in \{0, 1\}^{\ceil{\log_2 \abs{\Sigma}}}$ for every $1 \leq i \leq 2t+1$ and regard $z_i$ as an integer of binary form $z_i$, then
              \[
                \text{tape}(z)(m) = \begin{cases}
                    \mathit{out}(z_{m+t+1}) & -t \leq m \leq t, \\
                    \# & \text{otherwise},
                \end{cases}
              \]
              where $\Sigma = \{\sigma_0, \sigma_1, \dots, \sigma_{\abs{\Sigma}-1}\}$.
     \end{definition}

    It was proven in \cite{Yao93, Nis02} that each QTM can be efficiently simulated by a family of quantum circuits. One of the main results in \cite{Yao93, Nis02} can be restated in a way convenient for our purpose as the following:

    \begin{theorem} \label{thm-qcf-qtm}
        Let $T: \mathbb{N} \to \mathbb{N}$ with $T(n) \geq n$ that is time-constructible (for example, by a RAM). For every standard QTM $M$ with exact time $T(n)$, one can find:
        \begin{itemize}
          \item three unitary matrices $U_1, U_2, U_3$ with their elements in $\mathbb{C}(M) \cup \{0,1\}$, each of which acts on at most $6\ell$ qubits, where $\ell = 2+\ceil{\log_2(\abs{Q}+1)}+\ceil{\log_2\abs{\Sigma}}$, and
          \item a classical algorithm $\mathbb{A}$ with time complexity $O(T(n)^2l(T(n)))$ (considered as a RAM with cost function $l(n)$ being constant or logarithmic),
        \end{itemize}
such that \begin{enumerate}
          \item for every $n \in \mathbb{N}$, on input $1^n$, $\mathbb{A}$ outputs the description of a $k(n)$-qubit $n$-input $b(n)$-output quantum circuit $C_n$ of size $O(T(n)^2)$, only using unitary matrices $U_1, U_2, U_3$,
     where: $$k(n)=(2T(n)+4)\ell,\quad b(n)=\ceil{\log_2\abs{\Sigma}}(2T(n)+1);$$
          \item $\{C_n\}_{n=0}^\infty$ simulates $M$.
             \end{enumerate}
    \end{theorem}

    Theorem \ref{thm-qcf-qtm} also holds for QRAMs instead of RAMs, if $T(n)$ is assumed to be QRAM-time constructible.

    \iffalse
    \begin{proof}
    \textbf{WARNING: will complete the proof. }
    \end{proof}
    \fi

    \subsubsection{The Solovay-Kitaev Algorithm}\label{Kitaev}

    Our next step is to decompose the unitary matrices $U_1,U_2,U_3$ given in Theorem \ref{thm-qcf-qtm} into the basic gates $H,T$ and CNOT. Let us first briefly review the Solovay-Kitaev algorithm from \cite{Daw05}.

    \begin{definition}
        A set $\mathcal{W}$ of $d\times d$ matrices is called universal for $\mathit{SU}(d)$ if:
        \begin{enumerate}
          \item $\mathcal{W} \subseteq \mathit{SU}(d)$, i.e. for every $U \in \mathcal{W}$, $U^\dag U = U U^\dag = I$ and $\abs{U} = 1$.
          \item For every $U \in \mathcal{W}$, we also have $U^\dag \in \mathcal{W}$.
          \item for every $U \in \mathit{SU}(d)$ and $\varepsilon > 0$, there is a sequence $U_1, U_2, \dots, U_m \in \mathcal{W}$ such that
              \[
                \Abs{U-U_m\dots U_2U_1}_2 < \varepsilon.
              \]
        \end{enumerate}
    \end{definition}

    \begin{theorem} [Solovay-Kitaev Theorem \cite{Daw05}] \label{thm-sk}
        Let $\mathcal{W} = \{ U_1, U_2, \dots, U_k \}$ be universal for $\mathit{SU}(d)$. Then there is a classical algorithm with time complexity $O(\log^c(1/\varepsilon))$ (of which the constant factor depends on $d$) for some constant $c > 0$ that on input $\varepsilon > 0$ and $U \in \mathit{SU}(d)$, outputs a sequence $i_1, i_2, \dots, i_m \in [k]$ such that
        \begin{enumerate}
          \item $\Abs{U-U_{i_m}\dots U_{i_2}U_{i_1}}_2 < \varepsilon$.
          \item $m = O(\log^c(1/\varepsilon))$.
        \end{enumerate}
        More explicitly, $U$ is represented by a $d \times d$ unitary matrix each of whose elements is described as a floating point number within a high enough precision whose length is bounded by $O(\log^c(1/\varepsilon))$.
    \end{theorem}

    Now we can use the Solovay-Kitaev algorithm to find good approximations of $U_1, U_2, U_3$ in Theorem \ref{thm-qcf-qtm} by the basic gates. Since $U_1, U_2, U_3$ are unitary operators on $d = 6\ell$ qubits, we choose the set of basic gates:
    \[
        \mathcal{G} = \{ \text{CNOT}[a, b], H[a], T[a]: 1 \leq a, b \leq d, a \neq b \},
    \]
    where $\text{CNOT}[a,b]$ denotes a CNOT gate with the $a$-th qubit as its control qubit and the $b$-th qubit as its target qubit, and $H[a]$ and $T[a]$ denote Hadamard and $\pi/8$ gates on the $a$-th qubit. For every $\varepsilon > 0$, since the matrix elements of $U_1,U_2,U_3$ are in $\mathbb{C}(\lambda(n))$ (with $\lambda(n) \geq n$ polynomial), we can compute each element within a high enough precision in $O(\lambda(\log(1/\varepsilon))^c)$ time. Therefore, using the algorithm stated in Theorem \ref{thm-sk}, we can decompose $U_1, U_2, U_3$ into the basic gates $H$, $T$ and CNOT within precision $\varepsilon$ in $O(\lambda(\log(1/\varepsilon))^{c})$ time.

    \subsubsection{QRAMs simulate QTMs}\label{sec-concate}

    Now we can finish the construction of the QRAM $P$ that given $\varepsilon > 0$, simulates a standard QTM $M$ in the sense that $$\abs{P(x, y) - M(x, y)} < \varepsilon.$$ Suppose $M$ is a standard QTM with exact time $T(n)$ and $T(n)$ is QRAM-time constructible. Since a RAM can be seen as a special QRAM, we can turn the algorithm given in Theorem \ref{thm-qcf-qtm} to a $O(T(n)^2l(T(n)))$-time QRAM $P_1$ that on input $1^n$, outputs a description of quantum circuit $C_n$. By the Solovay-Kitaev algorithm, there is a $O(\lambda(\log(1/\epsilon))^{c})$-time QRAM $P_2$ that on input $\epsilon > 0$ and $U \in \mathit{SU}(d)$ with matrix elements in $\mathbb{C}(\lambda(n))$, outputs a sequence of basic gates $G_1, G_2, \dots, G_m$ of length $m = O(\log^c(1/\epsilon))$ such that $$\Abs{U-G_m\dots G_2G_1}_2 < \epsilon.$$
Then the QRAM can be described as follows:

    \textbf{Step 0}. We hardcode the three quantum gates $U_1, U_2, U_3$ in Theorem \ref{thm-qcf-qtm} into our QRAM $P$ for the later use.

    \textbf{Step 1}. Read the input string $x \in \{0, 1\}^*$ and count the length of $x$, i.e. $n = \abs{x}$.

    \textbf{Step 2}. Apply $P_1$ on input $1^n$ and obtain a description of quantum circuit $C_n$. According to Theorem \ref{thm-qcf-qtm}, there are $t = O(T(n)^2)$ gates in $C_n$, and each of them is an application of the unitary operator $U_1,U_2$ or $U_3$.

    \textbf{Step 3}. For each gate $G_i$ in $C_n$, apply $P_2$ to get an approximation $\tilde G_i$ of $G_i$ such that $\Abs{G_i-\tilde G_i} < \epsilon$, where $\epsilon = \varepsilon/3t$. This takes time $O(\lambda(\log(1/\epsilon))^c)$. Note that the size of $\tilde G_i$ is $O(\log^c(1/\epsilon))$. By replacing each $G_i$ in $C_n$ by $\tilde G_i$, we obtain a circuit $\tilde C_n$ consisting of only the basic gates $H,T$ and CNOT.
    By Lemma \ref{lemma-qcircuit-error}, we have: $$\abs{C_n(x, y) - \tilde C_n(x, y)} < \varepsilon.$$

    \textbf{Step 4}. Simulate $\tilde C_n$ with quantum-type QRAM instructions.

    Note that the size of $\tilde C_n$ is $O(t\log^c(1/\epsilon))$. Therefore, QRAM $P$ has running time $$O(t \operatorname{poly}(\lambda(\log(1/\epsilon)))) = O(T(n)^2 \operatorname{poly}(\lambda(\log (T(n)/\varepsilon)))).$$

    \iffalse
    \textbf{WARNING: there are many WARNINGs that may need more details. The current draft consists of all necessary proof steps but some are not detailed enough. will review and fix them one by one. }
    \fi

\section{Standardisation of QTMs} \label{sec:standard-qtm}

 The aim of this section is to prove the Standardisation Theorem for QTMs (Theorem \ref{thm-qtm}).

\subsection{Properties of TMs and QTMs} \label{properties-of-tms-and-qtms}

We first present several lemmas about reversible TMs and QTMs needed in our proof of Theorem \ref{thm-qtm}. Some of them are generalised from \cite{Ber97}, and some are new.

\subsubsection{Several Lemmas for TMs}

    \begin{definition}
        A deterministic (classical) TM is said to be oblivious if its running time and head position at each time step depend only on the length of input. That is, there is a function $T: \mathbb{N} \to \mathbb{N}$ and a function $\mathit{pos}: \mathbb{N} \times \mathbb{N} \to \mathbb{Z}$ such that on input $x \in \{0,1\}^*$, the running time is $T(\abs{x})$ and the head position at time $t$ is $\mathit{pos}(\abs{x}, t)$.
    \end{definition}

    We note that a stationary, normal form, oblivious reversible TM is a standard QTM. We need the following lemmas, and their proofs are given in Appendix \ref{appendix-details-tms-and-qtms}.

    \begin{lemma} [Incrementing] \label{prop-inc}
        There is a stationary oblivious reversible TM $M$ that on input $x \in \{0,1\}^+$, produces $x^+ = (x+1) \bmod 2^{\abs{x}}$ in $O(\abs{x}^2)$ time, where $x+1$ denotes the arithmetic addition of $x$ and $1$, and $\abs{x}$ denotes the length of $x$ in binary. In other words,
        \[
            x \xrightarrow[T]{M} {x^+},
        \]
        where $T = O(\abs{x}^2)$ and depends only on $\abs{x}$.
    \end{lemma}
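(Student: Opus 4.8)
The plan is to reduce the construction to the reversibilisation machinery already established, namely Lemma \ref{prop-unique}, rather than to exhibit a reversible incrementer by hand. The key observation is that $x \mapsto x^+$ is a bijection of $\{0,1\}^{\abs{x}}$ onto itself whose inverse is the analogous decrement map $y \mapsto y^- = (y-1) \bmod 2^{\abs{x}}$. Hence it suffices to supply two ordinary (not necessarily reversible) stationary deterministic TMs $M_1$ and $M_2$ that compute increment and decrement in place, and then let Lemma \ref{prop-unique} upgrade $M_1$ into a stationary reversible TM with the same input/output behaviour.

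First I would build a stationary deterministic TM $M_1$ with $M_1(x) = x^+$. Recall the big-endian convention fixed earlier in the paper, so the least significant bit sits at the right end of the string. $M_1$ sweeps right from position $0$ to the first blank, steps back to the least significant bit, and then sweeps left performing the usual carry: while a carry flag kept in the finite control (initially set) is active it turns a $1$ into $0$ and keeps the flag, turns the first $0$ into $1$ and clears the flag, and once the flag is cleared it leaves the remaining cells untouched; the final carry-out off the most significant end is discarded, which realises the reduction modulo $2^{\abs{x}}$. The crucial design choice is that the leftward pass always runs all the way to position $0$ and then returns the head to the origin, regardless of where the carry is actually absorbed; this forces both the running time and the head position at each step to depend only on $\abs{x}$, so $M_1$ is oblivious and stationary, runs in $T_1 = O(\abs{x})$ steps, and its output has length exactly $\abs{x}$. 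A symmetric machine $M_2$, replacing ``carry'' by ``borrow'' (turn a $0$ into $1$ keeping the flag, the first $1$ into $0$ clearing it), computes $M_2(y) = y^-$ under the same conventions, is likewise oblivious, stationary, with output length depending only on input length, and runs in $T_2 = O(\abs{x})$ steps.

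Since $y^-$ inverts $y^+$ modulo $2^{\abs{x}}$, we have $x \xrightarrow[T_1]{M_1} x^+$ and $x^+ \xrightarrow[T_2]{M_2} x$ for every $x \in \{0,1\}^+$, so the hypotheses of Lemma \ref{prop-unique} are met: $M_1$ and $M_2$ are stationary deterministic TMs that invert each other, both are oblivious, and the length of $M_1(x)$ depends only on $\abs{x}$. Applying Lemma \ref{prop-unique} yields a stationary reversible TM $N_1$, which is moreover oblivious, with $x \xrightarrow[T']{N_1} x^+$ and running time $T' = O(T_1^2 + T_2^2) = O(\abs{x}^2)$; obliviousness guarantees that $T'$ and the head trajectory depend only on $\abs{x}$. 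Taking $M = N_1$ proves the lemma.

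I expect the only delicate point to be the obliviousness bookkeeping of $M_1$ and $M_2$: a naive incrementer halts as soon as the carry is absorbed, which makes its timing data-dependent, so I must be careful to force the full sweep down to position $0$ together with a fixed return to the origin, so that the entire head trajectory is a function of $\abs{x}$ alone. Everything else — correctness of the in-place arithmetic, the length preservation inherent in reduction modulo $2^{\abs{x}}$, and the mutual-inverse relation $M_2 \circ M_1 = \mathrm{id}$ — is routine, and both the reversibility and the quadratic time blow-up are then delivered directly by Lemma \ref{prop-unique}. (Note that the weaker Lemma \ref{prop-RTM} would only give the two-track form $x;\epsilon \to x;x^+$, so it is essential to use Lemma \ref{prop-unique}, which supplies the in-place transformation.)
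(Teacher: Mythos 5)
Your proposal is correct and follows essentially the same route as the paper: the paper's appendix proof also exhibits two mutually inverse oblivious deterministic TMs $M_+$ and $M_-$ (increment and decrement with a carry/borrow flag in the finite control, sweeping right to the end and then left all the way back to the origin so that the head trajectory and the running time $2\abs{x}+4$ depend only on $\abs{x}$) and then invokes Lemma \ref{prop-unique} to obtain the stationary oblivious reversible machine in $O(\abs{x}^2)$ time. The only difference is that the paper writes out the explicit transition tables where you describe the machines in prose, and your observation that Lemma \ref{prop-RTM} would not suffice for the in-place form is accurate.
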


    \begin{lemma} [Equality Checking] \label{prop-eq}
        There is a stationary oblivious reversible TM $M$ that can check whether the contents in the first and second tracks are equal and puts the outcome in the third track. Formally, for $x, y \in \{0,1\}^+$ with $\abs{x} = \abs{y}$,
        \[
            x;y;0 \xrightarrow[T]{M} \begin{cases}
            x;y;1 & x = y \\
            x;y;0 & x \neq y
            \end{cases},
        \]
        where $T = O(\abs{x}^2)$ and depends only on $\abs{x}$.
    \end{lemma}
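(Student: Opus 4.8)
The plan is to reduce the whole construction to the reversibilisation result of Lemma~\ref{prop-RTM}, in the same spirit that one would prove the incrementing machine of Lemma~\ref{prop-inc}. First I would produce an ordinary (possibly irreversible) deterministic TM $N$ that computes the single equality bit $\mathrm{eq}(x,y)\in\{0,1\}$. For the purpose of applying Lemma~\ref{prop-RTM} I view the first two tracks jointly as one input track over the product alphabet $\{0,1\}\times\{0,1\}$; then $N$ is a stationary deterministic machine mapping an input of length $n=\abs{x}=\abs{y}$ to an output of fixed length $1$, and Lemma~\ref{prop-RTM} will take care of preserving the input $(x,y)$ and of reversibility, depositing the bit on a separate track (our third track).

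The machine $N$ works by a single left-to-right sweep. Starting at position $0$ it carries a one-bit flag in its finite control, initialised to ``equal''; at each cell $i$ it compares the two joint-input symbols and, on the first disagreement, switches the flag to ``not equal'' (the flag is never changed back). On reaching the first blank cell at position $n$ it turns around, sweeps back to position $0$, and leaves the final flag value as its length-$1$ output, halting with its head at $0$. The head trajectory is $0\to n\to 0$ and the running time is $2n+O(1)$, both functions of $n$ alone; hence $N$ is stationary and oblivious, and since its output has constant length it depends only on $\abs{x}$.

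Applying Lemma~\ref{prop-RTM} to $N$ then yields a stationary reversible TM $N'$ with
\[
 x;y;\epsilon \xrightarrow[T']{N'} x;y;\mathrm{eq}(x,y),
\]
where $T'=O(T^2)=O(n^2)$, and the ``moreover'' clause of Lemma~\ref{prop-RTM} makes $N'$ oblivious because $N$ is oblivious and the output length is constant. Reading the third track's initial symbol $0$ as its blank symbol (consistent with the track conventions used elsewhere; if one prefers, a single trivial reversible step erasing that $0$ can be prepended), the value $\mathrm{eq}(x,y)$ deposited on the third track is exactly $1$ when $x=y$ and $0$ when $x\neq y$. This $N'$ is the machine $M$ required, with time $O(\abs{x}^2)$ depending only on $\abs{x}$.

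I expect the only genuine obstacle to be reversibility. The direct comparison $N$ is irreversible: once the flag has been set to ``not equal'', the transition merges the many configurations that differ only in where the earlier mismatch occurred, so the time evolution fails to be injective and $N$ is not a legal reversible TM. This is precisely the difficulty that Lemma~\ref{prop-RTM} (Bennett-style history recording) is designed to remove, and invoking it converts the easy $O(n)$ deterministic comparison into a legal stationary oblivious reversible machine at the cost of the claimed quadratic slowdown. The remaining points---absorbing the two input tracks into a single product-alphabet track and matching the third-track blank convention---are routine bookkeeping.
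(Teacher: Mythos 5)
Your proposal is correct and lands on the same basic strategy as the paper --- write down an easy linear-time, oblivious, stationary but irreversible comparison machine and then feed it to the paper's Bennett-style reversibilisation machinery, accepting the quadratic slowdown --- but it routes through a different helper lemma. The paper's proof exhibits an explicit oblivious DTM $M_=$ that sweeps right and then left, carrying the mismatch flag in its finite control, and \emph{XORs} the equality bit into the third track; because of the XOR convention this machine is its own inverse, so Lemma~\ref{prop-unique} applies with $M_1=M_2=M_=$ and the inverse machine comes for free, and the $x;y;0$ input convention of the statement is matched natively (the third track's $0$ is simply flipped or not). You instead compute the bit as a fresh output and invoke Lemma~\ref{prop-RTM}, letting that lemma preserve the input pair and deposit the bit on a new track. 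Both give $T=O(\abs{x}^2)$ for the same underlying reason. What your route costs you is exactly the two loose ends you flag: (i) for Lemma~\ref{prop-RTM} to apply as stated, your $N$ must actually \emph{erase} the product-encoded input during the return sweep so that the halting tape content is the length-$1$ string $\mathrm{eq}(x,y)$ --- otherwise ``$N(x,y)$'' is not the bit and the lemma's conclusion is not the one you want; and (ii) the third track starts as the symbol $0$ rather than blank, so you need either a small normal-form reversible prefix converting $0$ to blank or a relabelling of that track's blank, whereas the paper's XOR semantics absorbs this automatically. Both are fixable bookkeeping, so your argument goes through; the paper's involution trick is just the slicker way to dispose of them.
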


    \begin{lemma} \label{prop-shift}
        There are two stationary reversible TMs $M_{\operatorname{shl}}$ and $M_{\operatorname{shr}}$ such that for every $x \in \{0,1\}^+$,
        \[
            x \xrightarrow[T]{M_{\operatorname{shl}}} \operatorname{shl} x\ {\rm and}\
            x \xrightarrow[T]{M_{\operatorname{shr}}} \operatorname{shr} x,
        \]
        where $T = O(\abs{x}^2)$ and depends only on $\abs{x}$. Here, we write $\operatorname{shl}: \Sigma^\# \to \Sigma^\#$ for ``shift left'' and $\operatorname{shr}: \Sigma^\# \to \Sigma^\#$ for ``shift right''; that is, $(\operatorname{shl} \mathcal{T})(m) = \mathcal{T}(m+1)$ and $(\operatorname{shr} \mathcal{T})(m) = \mathcal{T}(m-1)$ for every $\mathcal{T} \in \Sigma^\#$.
    \end{lemma}

    \subsubsection{Several Lemmas for QTMs}

    \begin{lemma} [Dovetailing Lemma, Lemma 4.9 of \cite{Ber97}] \label{prop-dovetail}
        For any two well-formed, normal form and stationary QTMs $M_1$ and $M_2$, there is a well-formed, normal form and stationary QTM $M$ such that
        \[
            \ket{\mathcal{T}_0} \xrightarrow[T_1]{M_1} \ket{\mathcal{T}_1} \xrightarrow[T_2]{M_2} \ket{\mathcal{T}_2} \Longrightarrow \ket{\mathcal{T}_0} \xrightarrow[T_1+T_2]{M} \ket{\mathcal{T}_2}.
        \]
    \end{lemma}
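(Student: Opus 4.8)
The plan is to build $M$ by running $M_1$ and $M_2$ back-to-back, splicing them at the moment $M_1$ finishes. Concretely, I would take the state set of $M$ to be the disjoint union $Q_1 \sqcup Q_2$ with the single identification $q_{f_1} \sim q_{0_2}$; call the merged state $r$. Set the initial state of $M$ to $q_{0_1}$ and its final state to $q_{f_2}$. The transition function $\delta$ is inherited from $\delta_1$ on every source state in $Q_1 \setminus \{q_{f_1}\}$, from $\delta_2$ on every source state in $Q_2 \setminus \{q_{f_2}\}$ (so that $r$, viewed as a source, uses $M_2$'s rules for $q_{0_2}$), while the transition out of the new final state is set to the normal-form wrap-around $\delta(q_{f_2}, \tau, \sigma, q, d) = 1 \iff (\sigma, q, d) = (\tau, q_{0_1}, R)$. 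Throughout, occurrences of $q_{f_1}$ and of $q_{0_2}$ in the targets of $\delta_1$ and $\delta_2$ are relabelled to $r$. If the two machines use different alphabets, one first passes to their common extension with a shared blank.

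Next I would dispatch the easy parts. $M$ is in normal form by the definition of $\delta(q_{f_2}, \cdot)$, and no unidirectionality claim is needed since the lemma only asserts well-formedness, normal form and stationarity. It is stationary because $M_1$ brings the head back to position $0$ on entering $r$ at time $T_1$, which is precisely the initial configuration $\ket{q_{0_2}, \mathcal{T}_1, 0}$ of $M_2$, and $M_2$ in turn returns the head to $0$ on halting. For the computational claim I would trace the execution: since the final state $q_{f_2} \notin Q_1 \cup \{r\}$, the intermediate measurement for $q_{f_2}$ never fires during the first phase, so the evolution agrees with that of $M_1$ for all $t \le T_1$; the hypothesis $\ket{\mathcal{T}_0} \xrightarrow[T_1]{M_1} \ket{\mathcal{T}_1}$ (exact-time halting, hence no premature occurrence of $q_{f_1}$, so the measurements act trivially) then yields $\ket{r, \mathcal{T}_1, 0}$ at time $T_1$, after which $M$ evolves exactly as $M_2$ for $T_2$ further steps, producing $\ket{q_{f_2}, \mathcal{T}_2, 0}$ at time $T_1+T_2$.

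The main obstacle is to prove that $M$ is well-formed, i.e. that its one-step operator $U$ is unitary. The structural observation I would establish first is that in any well-formed, normal-form QTM the initial state is entered only through the wrap-around from the final state: the normal-form rule sends $\ket{q_f, \mathcal{T}, \xi} \mapsto \ket{q_0, \mathcal{T}, \xi+1}$, which is already a bijection from the $q_f$-subspace onto the whole $q_0$-subspace, so unitarity of the original operator forces every non-$q_f$ configuration to map outside the $q_0$-subspace. Applying this to $M_1$ (for $q_{0_1}$) and to $M_2$ (for $q_{0_2}$), the only surviving transitions into the merged state $r = q_{0_2}$ come from $M_1$'s transitions into $q_{f_1}$, and the only transitions into $q_{0_1}$ come from the new wrap at $q_{f_2}$. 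With these collision-free facts in hand I would verify the Bernstein--Vazirani local conditions source-state by source-state: unit length of each row is inherited from $M_1$ or $M_2$; distinct rows are orthogonal because rows from $Q_1 \setminus \{r\}$ land in $Q_1$-configurations while rows from $Q_2$ land in $Q_2$-configurations, the only candidate overlaps being at $r$ and at $q_{0_1}$, both excluded by the observation above. Since every wrap transition moves $R$, the movement/direction conditions produce no new conflicts either.

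In summary, the construction and the computational correctness are essentially bookkeeping on the spliced transition table, and stationarity follows from the two machines being stationary. The one step that demands genuine care is well-formedness, and the crux there is the lemma that normal form already pins down the preimage of the initial-state subspace, which is exactly what guarantees that merging $q_{f_1}$ with $q_{0_2}$ and rerouting the wrap-around keeps $U$ unitary.
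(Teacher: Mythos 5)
Your construction is correct and is essentially the proof of Lemma 4.9 in Bernstein--Vazirani, which this paper cites without reproving: splice at $q_{f_1}\sim q_{0_2}$, reroute the wrap-around, and use the key structural fact that in a well-formed normal-form QTM the normal-form rule already maps the $q_f$-subspace bijectively onto the $q_0$-subspace, so unitarity forbids any other transition into the initial state. Your added care about the paper's measurement-based halting model (exact halting at $T_1$ forces zero amplitude on the final state at all earlier times, so the intermediate measurements act trivially) is exactly the right adaptation, and nothing further is needed.
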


    \begin{lemma}[Unidirection Lemma, Lemma 5.5 of \cite{Ber97}] \label{lemma-unidirectional}
        For every QTM $M = (Q, \Sigma, \delta, q_0, q_f)$ with time evolution operator $U$, there is a QTM $M' = (Q', \Sigma, \delta', q_0, q_f)$ with $Q \subseteq Q'$ and time evolution operator $U'$ such that for every $q \in Q \setminus \{q_f\}, \mathcal{T} \in \Sigma^\#$ and $\xi \in \mathbb{Z}$, we have
        \[
            U\ket{q, \mathcal{T}, \xi} = U' (P_F^\perp U')^4 \ket{q, \mathcal{T}, \xi},
        \]
        where $P_F^\perp = I - P_F$ and $P_F = \ket{q_f}_Q\bra{q_f}$. Moreover, if $M$ is well-formed, then so is $M'$.
    \end{lemma}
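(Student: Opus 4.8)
The plan is to reproduce the Bernstein--Vazirani construction, in which each single step of $M$ is simulated by exactly five steps of a new machine $M'$, the extra steps being used to force every state of $M'$ to be entered from a single fixed direction. Concretely, I would assign to each state $q \in Q$ a target entry direction $d_q \in \{L, R\}$ and build $M'$ so that whenever the simulation funnels the head into $q$, it does so via a move in direction $d_q$; the auxiliary states introduced along the way are likewise entered from fixed directions. Since the statement constrains only the behaviour on $q \neq q_f$, I would leave the transitions out of $q_f$ unspecified (to be fixed when normal form is later imposed) and design the five sub-steps so that none of the auxiliary states equals $q_f$ — this is what makes the four projections $P_F^\perp$ inside $U'(P_F^\perp U')^4$ act as the identity on every relevant branch.

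For the construction, I would split a transition $\delta(p, \tau, \sigma, q, d)$ into one ``active'' sub-step followed by four ``routing'' sub-steps. The active sub-step carries the amplitudes of $\delta$: from $p$ reading $\tau$ it writes $\sigma$ and passes to an intermediate state $\langle q, d\rangle$ recording both the intended destination state and the intended move direction, while the head advances one cell in a fixed direction independent of $d$. The routing sub-steps are deterministic: they reposition the head so that the total displacement over all five sub-steps is $d$, and they collapse $\langle q, d\rangle$ into the genuine state $q$ by a move in the prescribed direction $d_q$. The new state set is then $Q' = Q \cup \{\langle q, d\rangle : q \in Q,\ d \in \{L, R\}\}$ together with a few transit states indexing the routing phase, and $\delta'$ is read off from this decomposition.

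The verification has three parts. First, unidirectionality: by construction each state of $Q'$ — original, tagged, or transit — is the target of moves in a single direction, so one only inspects the routing table case by case. Second, the five-step identity $U\ket{q, \mathcal{T}, \xi} = U'(P_F^\perp U')^4\ket{q, \mathcal{T}, \xi}$: I would trace the five applications of $U'$ on a basis configuration, confirm that at sub-steps one through four the state is either the non-final input $q$ or a tagged/transit state, all $\neq q_f$ (so each intervening $P_F^\perp$ is the identity on that branch, and a branch destined for $q_f$ reaches it only at the fifth application, after the last projection), and check that the surviving amplitudes, head position and state reproduce $U\ket{q, \mathcal{T}, \xi}$ exactly. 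Third, well-formedness: writing $U' = R\,A$ with $A$ the active-sub-step operator and $R$ the composite of the deterministic routing sub-steps, each routing operator is a permutation of basis configurations and hence unitary, so it remains to see that $A$ is unitary whenever $U$ is.

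The main obstacle is exactly this last point. When the head-move is pulled out of the active sub-step and made direction-independent, the orthogonality furnished by $U^\dag U = I$ — which crucially uses the resulting head displacement — is in danger of being lost. The remedy is to record $d$ in the tag $\langle q, d\rangle$: because the well-formedness of $U$ unpacks into local conditions that separate branches by their resulting displacement, tagging the intermediate state by $d$ makes the active operator $A$ satisfy the corresponding conditions, so that the columns of $A$ indexed by distinct $(q, d)$ stay orthonormal precisely because the columns of $U$ did. Carrying out this check, together with confirming that the four routing sub-steps are honest bijections that never merge two distinct branches, is the delicate bookkeeping on which the whole argument rests, and is where an insufficiently informative tagging would silently break unitarity.
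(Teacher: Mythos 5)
Your proposal is essentially the standard Bernstein--Vazirani unidirection construction (one amplitude-carrying sub-step followed by four deterministic routing sub-steps, with target states tagged by $\langle q,d\rangle$ so that the local well-formedness conditions of $\delta$ transfer to the active operator). The paper itself gives no proof of this lemma --- it imports it as Lemma 5.5 of \cite{Ber97} --- and your reconstruction matches that source, including the one adaptation this paper's version actually needs: arranging that none of the intermediate states in a five-step block equals $q_f$, so that the interleaved projections $P_F^\perp$ act as the identity and the stated operator identity $U\ket{q,\mathcal{T},\xi}=U'(P_F^\perp U')^4\ket{q,\mathcal{T},\xi}$ holds.
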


    Intuitively, the above lemma shows that any (well-formed) QTM can be converted to a (well-formed) unidirectional QTM with slowdown by a factor of $5$.

    \subsection{Proof of Theorem \ref{thm-qtm}}

\iffalse
    \begin{theorem}
        Let $T: \mathbb{N} \to \mathbb{N}$ be a function time constructible by QTM. For every well-formed, normal form QTM $M$ within time $T(n)$, there is a standard QTM $M'$ with exact time $O(T(n) \log^2 T(n))$ such that $M(x, y) = M'(x, y)$ for all $x, y \in \{0, 1\}^*$.
    \end{theorem}
\fi

Now we are ready to prove Theorem \ref{thm-qtm}. The proof is split into the following five steps:

{\vskip 3pt}

        \textbf{Step 1}. Let $\ell = \abs{T(\abs{x})}$ be the length of $T(\abs{x})$ in binary. By the definition of $T(n)$, there is a standard QTM $M_1$ such that
        \[
            x;\epsilon;\epsilon \xrightarrow[O(T(\abs{x}))]{M_1} x;{T(\abs{x})};{0^{\ell}}.
        \]
        It can be easily obtained by binding all non-blank symbols in the second track with a $0$ symbol in the third track.

        {\vskip 3pt}

        \textbf{Step 2}. After Step 1, suppose the state is $q_1$ (and the head position is $0$), we construct a standard QTM $M_2$ that adds a single symbol $0$ into the fourth track by the following transitions:
        \[
        \begin{matrix}
            q_1, & (\forall_1, \forall_2, \forall_3, \#_4) & \to & (\forall_1, \forall_2, \forall_3, 0), & q_2, & L \\
            q_2, & (\forall_1, \forall_2, \forall_3, \forall_4) & \to & (\forall_1, \forall_2, \forall_3, \forall_4), & q_0, & R \\
        \end{matrix}
        \]
      Then we have:
        \[
            x;{T(\abs{x})};{0^{\ell}};\epsilon \xrightarrow[2]{M_2} x;{T(\abs{x})};{0^{\ell}};0.
        \]
        Now the preparation is completed, and we will start from the configuration $\ket{q_0, x;{T(\abs{x})};{0^{\ell}};0, 0}$.

        {\vskip 3pt}

        \textbf{Step 3}. Let $M$ be a QTM to be standardised. By Lemma \ref{lemma-unidirectional}, we may assume that $M$ is unidirectional. Let $d_q$ be the direction of $q$, and let $M_\text{shl}$ and $M_\text{shr}$ be the reversible TMs constructed in Lemma \ref{prop-shift}. We construct $M_3$ as follows. For every $p \in Q \setminus \{q_f\}$, $q \in Q$ and $\tau, \sigma \in \Sigma$ with $\delta(p, \tau, \sigma, q, d_q) \neq 0$,

    \textbf{Case 1}. $d_q = R$. $M_3$ should include these instructions:
    \[
    \begin{matrix}
        p, & (\tau, \forall_2, \forall_3, \forall_4) & \to & (\sigma, \forall_2, \forall_3, \forall_4), & (q, 1), & R; & \delta(p, \tau, \sigma, q, R) \\
        (q, 1), & (\forall_1, \forall_2, \forall_3, \forall_4) & \to & (\forall_1, \forall_2, \forall_3, \forall_4), & (q, 2), & L; & 1 \\
        (q, 2), & (\forall_1, \forall_2, \forall_3, \forall_4) & \to & (\forall_1, \forall_2, \forall_3, \forall_4), & (q, 3), & L; & 1 \\
        (q, 3), & (\forall_1, \#_2, \#_3, \forall_4) & \to & (\forall_1, \#_2, \#_3, \forall_4), & (q, 4), & R; & 1 \\
        (q, 4) & & \to & & (q, 5): & & M_\text{shr}[2, 3, 4] \\
        (q, 5), & (\forall_1, \#_2, \#_3, \forall_4) & \to & (\forall_1, \#_2, \#_3, \forall_4), & (q, 6), & R; & 1
    \end{matrix}
    \]
    $(q, 4)$ and $(q, 5)$ are regarded as the initial state and final state of $M_\text{shr}$, respectively, that shifts the second, third and fourth tracks right by a cell.

    \textbf{Case 2}. $d_q = L$. $M_3$ should include these instructions:
    \[
    \begin{matrix}
        p, & (\tau, \forall_2, \forall_3, \forall_4) & \to & (\sigma, \forall_2, \forall_3, \forall_4), & (q, 1), & L; & \delta(p, \tau, \sigma, q, L) \\
        (q, 1), & (\forall_1, \forall_2, \forall_3, \forall_4) & \to & (\forall_1, \forall_2, \forall_3, \forall_4), & (q, 2), & R; & 1 \\
        (q, 2) & & \to & & (q, 3): & & M_\text{shl}[2, 3, 4] \\
        (q, 3), & (\forall_1, \forall_2, \forall_3, \forall_4) & \to & (\forall_1, \forall_2, \forall_3, \forall_4), & (q, 4), & L; & 1 \\
        (q, 4), & (\forall_1, \forall_2, \forall_3, \forall_4) & \to & (\forall_1, \forall_2, \forall_3, \forall_4), & (q, 5), & L; & 1 \\
        (q, 5), & (\forall_1, \#_2, \#_3, \forall_4) & \to & (\forall_1, \#_2, \#_3, \forall_4), & (q, 6), & R; & 1
    \end{matrix}
    \]
    $(q, 2)$ and $(q, 3)$ are regarded as the initial state and final state of $M_\text{shl}$, respectively, that shifts the second, third and fourth tracks left by a cell.

    Now both cases are in state $(q, 6)$. Let $M_{\text{inc}}$ and $M_{\text{eq}}$ be the RTMs constructed in Lemma \ref{prop-inc} and Lemma \ref{prop-eq}, respectively. We include the instructions:
    \[
    \begin{matrix}
        (q, 6) & \to & (q, 7): & M_{\text{inc}}[3] \\
        (q, 7) & \to & (q, 8): & M_{\text{eq}}[2, 3, 4]
    \end{matrix}
    \]
    The procedure from $(q, 6)$ and $(q, 7)$ performs incrementing on the third track according to $M_{\text{inc}}$. The procedure from $(q, 7)$ to $(q, 8)$ performs equality checking on the second and third tracks and puts the result on the fourth track according to $M_{\text{eq}}$.
    To the end of the simulation at this step, we include these instructions:
    \[
    \begin{matrix}
        (q, 8), & (\forall_1, \forall_2, \forall_3, \forall_4) & \to & (\forall_1, \forall_2, \forall_3, \forall_4), & (q, 9), & L; & 1 \\
        (q, 9), & (\forall_1, \#_2, \#_3, \#_4) & \to & (\forall_1, \#_2, \#_3, \#_4), & q, & R; & 1
    \end{matrix}
    \]

    Note that for every $p \in Q \setminus \{q_f\}$, $\mathcal{T} \in \Sigma^\#$ and $\xi \in \mathbb{Z}$,
    \[
        \ket{p, \mathcal{T}, \xi} \xrightarrow[1]{M} \sum_{\sigma, q} \delta(p, \mathcal{T}(\xi), \sigma, q, d_q) \ket{q, \mathcal{T}_\xi^\sigma, \xi+d_q}.
    \]
    We conclude that for every $T, t \in \mathbb{Z}$ with $0 \leq t < T-1$,
    \[
        \ket{p, \mathcal{T};\operatorname{shr}^{\xi}\left({T};t;0\right), \xi} \xrightarrow[\Delta(\ell)]{M_3} \sum_{\sigma, q} \delta(p, \mathcal{T}(\xi), \sigma, q, d_q) \ket{q, \mathcal{T}_\xi^\sigma; \operatorname{shr}^{\xi+d_q} \left({T};{t+1};0\right), \xi+d_q}.
    \]
    For the case $t = T-1$,
    \[
        \ket{p, \mathcal{T};\operatorname{shr}^{\xi}\left({T};{T-1};0\right), \xi} \xrightarrow[\Delta(\ell)]{M_3} \sum_{\sigma, q} \delta(p, \mathcal{T}(\xi), \sigma, q, d_q) \ket{q, \mathcal{T}_\xi^\sigma; \operatorname{shr}^{\xi+d_q} \left({T};{T};1\right), \xi+d_q},
    \]
    where $\Delta(\ell) = T_{\text{sh}}(\ell)+T_{\text{inc}}(\ell)+T_{\text{eq}}(\ell)+7 = O(\log^2 T)$. Here, $\operatorname{shr}^k \mathcal{T}$ denotes the tape that shifts $\mathcal{T}$ right by $k$ steps, i.e. $(\operatorname{shr}^k \mathcal{T})(m) = \mathcal{T}(m-k)$.

    {\vskip 3pt}

    \textbf{Step 4}. QTM $M_4$ is constructed as follows. We introduce a special state $q_a \notin Q$ and set $q_f' \notin Q$ to be the final state of $M'$. Moreover, we need a fifth track and mark $@$ on the fifth track to distinguish the usual simulation and the extending procedure. For the final state $q_f$, we include the instructions:
    \[
    \begin{matrix}
        q_f, & (\forall_1, \forall_2, \forall_3, \forall_4, \#_5) & \to & (\forall_1, \forall_2, \forall_3, \forall_4, @), & (q_f, 1), & L; & 1 \\
        (q_f, 1), & (\forall_1, \#_2, \#_3, \#_4, \#_5) & \to & (\forall_1, \#_2, \#_3, \#_4, \#_5), & q_a, & R; & 1
    \end{matrix}
    \]
    It takes two steps to transfer state $q_f$ to state $q_a$ with a marker $@$ on the fifth track, i.e.
    \[
        \ket{q_f, \mathcal{T};\operatorname{shr}^{\xi} \left({T};t;{z}\right);\epsilon, \xi} \xrightarrow[2]{M_4} \ket{q_a, \mathcal{T};\operatorname{shr}^{\xi} \left({T};t;{z};{@}\right), \xi}
    \]
    for $0 \leq t \leq T$ and $z \in \{0, 1\}$.

    Now include the instructions of $q_a$ as follows:
    \[
    \begin{matrix}
        q_a, & (\forall_1, \forall_2, \forall_3, 1, @) & \to & (\forall_1, \forall_2, \forall_3, 1, @), & q_f', & R; & 1 \\
        q_a, & (\forall_1, \forall_2, \forall_3, 0, @) & \to & (\forall_1, \forall_2, \forall_3, 0, @), & (q_a, 1), & R; & 1 \\
        (q_a, 1), & (\forall_1, \forall_2, \forall_3, \forall_4, \#_5) & \to & (\forall_1, \forall_2, \forall_3, \forall_4, @), & (q_a, 2), & L; & 1 \\
        (q_a, 2), & (\forall_1, \forall_2, \forall_3, \forall_4, \forall_5) & \to & (\forall_1, \forall_2, \forall_3, \forall_4, \forall_5), & (q_a, 3), & L; & 1 \\
        (q_a, 3), & (\forall_1, \#_2, \#_3, \#_4, \forall_5) & \to & (\forall_1, \#_2, \#_3, \#_4, \forall_5), & (q_a, 4), & R; & 1 \\
        (q_a, 4) & & \to & & (q_a, 5): & & M_\text{shr}[2, 3, 4] \\
        (q_a, 5), & (\forall_1, \#_2, \#_3, \#_4, \forall_5) & \to & (\forall_1, \#_2, \#_3, \#_4, \forall_5), & (q_a, 6), & R; & 1 \\
        (q_a, 6) & & \to & & (q_a, 7): & & M_\text{inc}[3] \\
        (q_a, 7) & & \to & & (q_a, 8): & & M_\text{eq}[2, 3, 4] \\
        (q_a, 8), & (\forall_1, \forall_2, \forall_3, \forall_4, \forall_5) & \to & (\forall_1, \forall_2, \forall_3, \forall_4, \forall_5), & (q_a, 9), & L; & 1 \\
        (q_a, 9), & (\forall_1, \forall_2, \forall_3, \forall_4, @) & \to & (\forall_1, \forall_2, \forall_3, \forall_4, @), & q_a, & R; & 1 \\
    \end{matrix}
    \]

    We conclude that for $\eta \leq \xi$ and $0 \leq t < T-1$,
    \[
    \ket{q_a, \mathcal{T};\operatorname{shr}^{\xi} \left({T};t;{0}\right);\operatorname{shr}^{\eta} {@^{\xi-\eta+1}}, \xi} \xrightarrow[\Delta(\ell)]{M_4} \ket{q_a, \mathcal{T};\operatorname{shr}^{\xi+1} \left({T};{t+1};{0}\right);\operatorname{shr}^{\eta} {@^{\xi-\eta+2}}, \xi+1}.
    \]
    For the case $t = T-1$, we have
    \[
    \ket{q_a, \mathcal{T};\operatorname{shr}^{\xi} \left({T};{T-1};{0}\right);\operatorname{shr}^{\eta} {@^{\xi-\eta+1}}, \xi} \xrightarrow[\Delta(\ell)]{M_4} \ket{q_a, \mathcal{T};\operatorname{shr}^{\xi+1} \left({T};{T};{1}\right);\operatorname{shr}^{\eta} {@^{\xi-\eta+2}}, \xi+1}.
    \]
    And the case $t = T$,
    \[
    \ket{q_a, \mathcal{T};\operatorname{shr}^{\xi} \left({T};{T};{1}\right);\operatorname{shr}^{\eta} {@^{\xi-\eta+1}}, \xi} \xrightarrow[1]{M_4} \ket{q_f', \mathcal{T};\operatorname{shr}^{\xi} \left({T};{T};{1}\right);\operatorname{shr}^{\eta} {@^{\xi-\eta+1}}, \xi+1}.
    \]

    By Lemma \ref{prop-dovetail}, dovetailing the four QTMs $M_1, M_2, M_3$ and $M_4$ will obtain a well-formed, normal form and unidirectional but not stationary QTM $M'$. Since the contents of fifth track allows distinguishing the result obtained at any time the state $\ket{q_f}$ is measured during the execution by the number of $@$ in the fifth track, it can be verified that $M'$ satisfies the condition claimed in the theorem statement (except for that $M'$ is not stationary).

    {\vskip 3pt}

    \textbf{Step 5}. This step fills meaningless instructions, which will not modify the contents of the first track, in order to make $M'$ stationary. We need three time stamps $T_1, T_2$ and $T_3$ with $T < T_1 < T_2 < T_3 = O(T)$. In the construction of Step 4, when a single symbol $1$ is found in the fourth track, we still move right (and print $@$ on the fifth track) until the time accumulator (i.e. the content of the third track) reaches $T_1$ instead of stopping at state $q_f'$. We may set $T_1$ sufficiently large, e.g. $T_1 = 4T$, in order to guarantee that the head position at time $T_1$ is positive. Let $\xi_1$ be the head position at time $T_1$. It should be noted that in order to obtain the current head position, we can use a new track to maintain the current head position in the simulation of each step by $M_\text{inc}$ and $M_\text{dec}$ constructed in Lemma \ref{prop-inc}.

    Our strategy is to move right until the time accumulator reaches $T_2$ and then move left until the time accumulator reaches $T_3$ such that the head position at time $T_3$ is $0$. Note that the head position will be $\xi_1+T_2-T_1$ at time $T_2$. In order to make the head position at time $T_3$ to be $0$, the condition $T_3-T_2 = \xi_1+T_2-T_1$ should hold, i.e. $T_2 = (T_1+T_3-\xi_1)/2$, which allows to compute $T_2$ when the time accumulator reaches $T_1$ (and then $\xi_1$ is known). In order to make the time evolution unitary, we need three more tracks to print symbol $@$ for $T_1, T_2$ and $T_3$ (similar to Step 4).

    We can set appropriate values for $T_1$ and $T_3$ to achieve these, for example $T_1 = 4T$ and $T_3 = 10T$.
    To see that the constructed QTM is stationary, we give an intuitive explanation here. Let $\tau_x$ be the running time of $M$ on input $x$ and $\xi_x$ be the head position of $M$ at time $\tau_x$, which is also the head position of $M'$ when the state $q_a$ is met for the first time. We have $\abs{\xi_x} \leq \tau_x \leq T = T(\abs{x})$. After that, our QTM $M'$ has four procedures:

    {\vskip 3pt}

    \textbf{Procedure 0}. In the simulation of $M$ for time stamp ranged from $\tau_x$ to $T$, the head position keeps going right. A symbol $@$ is printed on each position between $\xi_x$ and $\xi_0$ of the fifth track, where $\xi_0 - \xi_x = T-\tau_x$. We call the fifth track the 0th buffer track.

    {\vskip 3pt}

    \textbf{Procedure 1}. After Procedure 0, the head position keeps going right. Another (empty) track, called the 1st buffer track, is used to print a symbol $@$ on each position between $\xi_0$ and $\xi_1$, where $\xi_1 - \xi_0 = T_1 - T$.

    {\vskip 3pt}

    \textbf{Procedure 2}. After Procedure 1, the head position keeps going right. Another (empty) track, called the 2nd buffer track, is used to print a symbol $@$ on each position between $\xi_1$ and $\xi_2$, where $\xi_2 - \xi_1 = T_2 - T_1$ and $T_2 = (T_1+T_3-\xi_1)/2$.

    {\vskip 3pt}

    \textbf{Procedure 3}. After Procedure 2, the head position keeps going left. Another (empty) track, called the 3rd buffer track, is used to print a symbol $@$ on each position between $\xi_3$ and $\xi_2$, where $\xi_2 - \xi_3 = T_3 - T_2$.

    Note that $T_1$ and $\xi_1$ always have the same parity, and $T_3 = 10T$ is even, we conclude that $T_2$ is always integer. Instead, $T_2 \geq T_3/2 = 5T$ and $T_2 \leq (T_1+T_3)/2 = 7T$. Therefore, it holds that $$T < T_1 = 4T < 5T \leq  T_2 \leq 7T < T_3 = 10T.$$ On the other hand, we have: $$\xi_3 = \xi_2+T_2-T_3 = T_2-T_1+\xi_1+T_2-T_3 = 0,$$ which implies QTM $M'$ is stationary.
    In the simulation of each step of the four procedures, printing each symbol $@$ (except the first printed symbol at each procedure) takes exactly $\Delta(\ell)$ steps (see the construction in Step 4). Therefore, $M'$ halts exactly at time $T' = O(T_3 \Delta(\ell)) = O(T \log^2 T)$, and $M'$ is a standard QTM that simulates $M$.

     \section{Conclusion} \label{sec:conclusion}

    In this paper, we formally define the notions of QRAM and QRASP. The relationships between the computational powers of QRAMs, QRASPs and QTMs are established by overcoming the difficulty of mismatch between the halting scheme of QTMs and that of QRAMs and QRASPs through a technique for standardisation of QTMs. These results further help us to clarify the relationships between complexity classes \textbf{P}, \textbf{EQRAMP}, \textbf{EQP}, \textbf{BQRAMP} and \textbf{BQP}.

    The models of QRAMs and QRASPs defined in this paper can be further extended in several dimensions:
    \begin{itemize}
      \item The addressing adopted in our QRAM model is classical in the sense that an address indicating which quantum register (qubit) to perform a quantum gate is obtained from a classical register.
   This perfectly match the current architecture design of quantum computer, like a co-processor used together with a classical computer. But one can also conceive a fully quantum computer in the future which utilises only quantum registers but no classical registers. Such a machine should allow to access simultaneously the data of several different registers via a superposition of addresses.
          \iffalse $\sum_{j} \alpha_j \ket{j}$:
        \[
            \sum_{j} \alpha_j \ket{j} \ket{0} \to \sum_{j} \alpha_j \ket{j} \ket{X_j},
        \]
        where $X_j$ is the content of the $j$-th (quantum) register.
        \fi
        Indeed, such a notion of quantum addressing was already introduced in the quantum random access memory model \cite{Gio08}, and a possible quantum optical implementation is also proposed there.
        A model of QRAMs with quantum addressing is certainly an interesting topic for future research.

      \item In QRASPs considered in this paper, a program is stored in classical registers, and thus treated as classical data rather than quantum data. For a QRASP modelling a fully quantum computer, however, a program will be encoded as quantum data. Consequently, the quantum programming paradigm of superposition of programs proposed in \cite{Ying16} can be realised in such a generalised QRASP model.

      \item Several new parallel quantum algorithms or parallel implementation of existing quantum algorithms have been developed, e.g. \cite{Bra18, Cleve, Moore}. On the other hand, a parallel quantum programming language was defined in \cite{Yin18, YZLF22}.
       This motivates us to extend our QRAM and QRASP models to parallel quantum random access machines (PQRAMs), as a quantum generalisation of PRAMs \cite{Karp}.
    \end{itemize}

\iffalse
\section*{Acknowledgement}

This work has been partly supported by the National Key R\&D Program of China (2018YFA0306701) and the National Natural Science Foundation of China (61832015).
\fi

\addcontentsline{toc}{section}{References}

\appendix

\renewcommand{\thetheorem}{\Alph{section}.\arabic{theorem}}
\renewcommand{\thesection}{\Alph{section}}

\section{Proofs of Address Shifting and Address-Safe QRAMs} \label{sec:proofs-of-qram-address}

\subsection*{Proof of Lemma \ref{lemma-qram-address-shifting}}

        Suppose $P$ consists of $L$ instructions $P_0, P_1, \dots, P_{L-1}$.
Let $\delta = k+1$. In the following, we shift the address to the right by $\delta$ with the help of $X_0$. The modified instructions for address shifting are listed in Table \ref{tab-shift-address}. Most of them are directly obtained except indirect addressing and jumping.
        \begin{table}[!htp]
            \centering
            \caption{Modified QRAM instructions for address shifting by $\delta$}
            \begin{tabular}{lll}
            \hline
            Type & Instruction & Modified instruction \\
            \hline
            Classical & $X_i \gets C$, $C$ any integer & $X_{i+\delta} \gets C$ \\
            Classical & $X_i \gets X_j+X_k$ & $X_{i+\delta} \gets X_{j+\delta}+X_{k+\delta}$ \\
            Classical & $X_i \gets X_j-X_k$ & $X_{i+\delta} \gets X_{j+\delta}-X_{k+\delta}$ \\
            Classical & $X_i \gets X_{X_j}$ & $X_{i+\delta} \gets X_{X_{j+\delta}+\delta}$ \\
            Classical & $X_{X_i} \gets X_j$ & $X_{X_{i+\delta}+\delta} \gets X_{j+\delta}$ \\
            Classical & TRA $m$ if $X_j > 0$ & TRA $m'$ if $X_{j+\delta} > 0$ \\
            Classical & READ $X_i$ & READ $X_{i+\delta}$ \\
            Classical & WRITE $X_i$ & WRITE $X_{i+\delta}$ \\
            \hline
            Quantum & $\text{CNOT}[Q_{X_i}, Q_{X_j}]$ & $\text{CNOT}[Q_{X_{i+\delta}}, Q_{X_{j+\delta}}]$ \\
            Quantum & $H[Q_{X_i}]$ & $H[Q_{X_{i+\delta}}]$ \\
            Quantum & $T[Q_{X_i}]$ & $T[Q_{X_{i+\delta}}]$ \\
            \hline
            Measurement & $X_i \gets M[Q_{X_j}]$ & $X_{i+\delta} \gets M[Q_{X_{j+\delta}}]$ \\
            \hline
            \end{tabular}
            \label{tab-shift-address}
        \end{table}

        In order to precisely describe how to make this shifting, we first list the lengths needed for all instructions in Table \ref{tab-readdress}. For $0 \leq l < L$, we write $\mathit{length}(l)$ to denote the length needed for address shifting according to Table \ref{tab-readdress}. In order to label the instructions in $P'$, we define:
        \[
            \mathit{label}(l) = \sum_{i=0}^{l-1} \mathit{length}(i)
        \]
        for $0 \leq l \leq L$. Especially, the length of $P'$ is defined to be $L' = \mathit{label}(L)$.

        \begin{table}[!htp]
            \centering
            \caption{Lengths of QRAM instructions for address shifting}
            \begin{tabular}{llc}
            \hline
            Type & Instruction & length \\
            \hline
            Classical & $X_i \gets C$, $C$ any integer & $1$ \\
            Classical & $X_i \gets X_j+X_k$ & $1$ \\
            Classical & $X_i \gets X_j-X_k$ & $1$ \\
            Classical & $X_i \gets X_{X_j}$ & $6$ \\
            Classical & $X_{X_i} \gets X_j$ & $6$ \\
            Classical & TRA $m$ if $X_j > 0$ & $1$ \\
            Classical & READ $X_i$ & $1$ \\
            Classical & WRITE $X_i$ & $1$ \\
            \hline
            Quantum & $\text{CNOT}[Q_{X_i}, Q_{X_j}]$ & $1$ \\
            Quantum & $H[Q_{X_i}]$ & $1$ \\
            Quantum & $T[Q_{X_i}]$ & $1$ \\
            \hline
            Measurement & $X_i \gets M[Q_{X_j}]$ & $1$ \\
            \hline
            \end{tabular}
            \label{tab-readdress}
        \end{table}

        Now we are ready to describe how to construct $P'$. For every $0 \leq l < L$, we convert $P_l$ to one or more instructions in $P'$.

        {\vskip 3pt}

        \textbf{Case 1}. If $P_l$ is indirect addressing, the two modified instructions for $X_i \gets X_{X_j}$ and $X_{X_i} \gets X_j$ are indeed problematic in Table \ref{tab-shift-address}. To resolve this issue, we consider $X_i \gets X_{X_j}$ for example and use the following instructions with the help of $X_0$:
        \begin{align*}
          \mathit{label}(l): & X_0 \gets 0 \\
          & X_0 \gets X_0 - X_{j+\delta} \\
          & \text{TRA } L' \text{ if } X_0 > 0  \\
          & X_0 \gets \delta \\
          & X_0 \gets X_0 + X_{j+\delta} \\
          & X_{i+\delta} \gets X_{X_0}
        \end{align*}
Similarly, the instructions for $X_{X_i} \gets X_j$ are as follows:
        \begin{align*}
          \mathit{label}(l): & X_0 \gets 0 \\
          & X_0 \gets X_0 - X_{i+\delta} \\
          & \text{TRA } L' \text{ if } X_0 > 0  \\
          & X_0 \gets \delta \\
          & X_0 \gets X_0 + X_{i+\delta} \\
          & X_{X_0} \gets X_{j+\delta}
        \end{align*}

        \textbf{Case 2}. If $P_l$ is jumping, i.e. TRA $m$ if $X_j > 0$, we use a single modified instruction:
        \begin{align*}
          \mathit{label}(l): & \text{TRA } m' \text{ if } X_{j+\delta} > 0
        \end{align*}
        with $m' = \mathit{label}(m)$.

  {\vskip 3pt}

        \textbf{Case 3}. For other cases, use the instructions according to Table \ref{tab-readdress}.

  {\vskip 3pt}

        It can be seen that the constructed QRAM $P'$ can  simulate QRAM $P$ through shifting the address to the right by $\delta = k+1$, with $X_1, X_2, \dots, X_k$ untouched. Instead, the slowdown is a constant factor, which depends on $k$.

\subsection*{Proof of Lemma \ref{lemma-qram-address}}

        The construction of $P'$ is straightforward. Note that the only way to access to an invalid address is indirect addressing. Thus, we could avoid accessing to an invalid address by checking whether the indirect address is valid beforehand. For example, if an instruction $P_l$ is of the form $X_i \gets X_{X_j}$, then it can be replaced in $P'$ by introducing an independent variable $\mathit{tmp}$ with the code
        \begin{align*}
            & \mathit{tmp} \gets 0 \\
            & \mathit{tmp} \gets \mathit{tmp}-X_j \\
            & \text{TRA } L' \text{ if } \mathit{tmp}>0 \\
            & X_i \gets X_{X_j}
        \end{align*}
        where $L'$ denotes the length of $P'$. Instead, the value of $m$ in the jumping instruction should be changed to an appropriate value $m'$ similar to Lemma \ref{lemma-qram-address-shifting}.

        According to Lemma \ref{lemma-qram-address-shifting}, the variable $\mathit{tmp}$ is involved by shifting the address to the right by $\delta = 2$.

\section{QRAM instructions for simulating QRASPs}\label{qram-to-qrasp}

To obtain a program $P'$ consisting of only QRAM instructions from the pseudo-code given in Algorithm \ref{algo1}, we have to:
    \begin{enumerate}
      \item transfer the \textbf{if} and \textbf{while} statements to QRAM instructions; and
      \item replace every variable by a classical register with an explicit index.
    \end{enumerate} We will carefully describe how to transform the pseudo-code given in Algorithm \ref{algo1} into QRAM $P'$ in Appendices \ref{e-check}-\ref{sec:assert-valid-addr}.

    \subsection{Equality Checking}\label{e-check}

    Given two classical registers $a$ and $b$, it is a basic operation to check whether their contents are equal: $a = b$?
    We observe that $a \neq b$ if and only if $\abs{a-b} > 0$. Algorithm \ref{algo2} provides a simple method to compare $a$ and $b$ using three extra disposable registers, with the result $\mathit{res} = \abs{a-b}$.

    To simplify the QRAM code, we use $\mathit{res} \gets \abs{a-b}$ to indicate the code in Algorithm \ref{algo2} in the following discussions.

    \begin{algorithm}[!htp]
        \caption{QRAM code for checking whether $a = b$.}
        \label{algo2}
        \begin{algorithmic}[1]
        \Require $a$ and $b$.
        \Ensure $res > 0$ if $a \neq b$ and $0$ otherwise.

        \State $\mathit{tmp0} \gets a$; $\mathit{tmp1} \gets b$;
        \State $\mathit{tmp0} \gets \mathit{tmp0}-\mathit{tmp1}$;
        \State TRA 6 if $\mathit{tmp0} > 0$;
        \State $\mathit{tmp1} \gets 0$;
        \State $\mathit{tmp0} \gets \mathit{tmp1}-\mathit{tmp0}$;
        \State $\mathit{res} \gets \mathit{tmp0}$;

        \end{algorithmic}
    \end{algorithm}

    \subsection{Encoding the \textbf{if} and \textbf{while} statements by QRAM instructions}

    We interpret \textbf{if} and \textbf{while} statements by QRAM instructions in the general case separately.

    For \textbf{if} statement, e.g. Algorithm \ref{algo3}, we provide a QRAM interpretation in Algorithm \ref{algo4}.

    \begin{algorithm}[!htp]
        \caption{Example code for if $a = b$.}
        \label{algo3}
        \begin{algorithmic}[1]

        \If {$a = b$}
            \State label0: statements;
        \Else
            \State label1: statements;
        \EndIf
        \State label2: statements;

        \end{algorithmic}
    \end{algorithm}

    \begin{algorithm}[!htp]
        \caption{QRAM code for if $a = b$.}
        \label{algo4}
        \begin{algorithmic}[1]

        \State $\mathit{res} \gets \abs{a-b}$;
        \State TRA label1 if $\mathit{res} > 0$;
        \State label0: statements;
        \State $\mathit{res} \gets 1$;
        \State TRA label2 if $\mathit{res} > 0$;
        \State label1: statements;
        \State label2: statements;

        \end{algorithmic}
    \end{algorithm}

    For \textbf{while} statement, e.g. Algorithm \ref{algo5}, we provide a QRAM interpretation in Algorithm \ref{algo6}.

    \begin{algorithm}[!htp]
        \caption{Example code for while $a = b$.}
        \label{algo5}
        \begin{algorithmic}[1]

        \While {$a = b$}
            \State label1: statements;
        \EndWhile
        \State label2: statements;

        \end{algorithmic}
    \end{algorithm}

    \begin{algorithm}[!htp]
        \caption{QRAM code for while $a = b$.}
        \label{algo6}
        \begin{algorithmic}[1]

        \State label0: $\mathit{res} \gets \abs{a-b}$;
        \State TRA label2 if $\mathit{res} > 0$;
        \State label1: statements;
        \State $\mathit{res} \gets 1$;
        \State TRA label0 if $\mathit{res} > 0$;
        \State label2: statements;

        \end{algorithmic}
    \end{algorithm}

    \subsection{Replacing every variable by a classical register with an explicit index}

    In the previous interpretation, there are only three extra classical registers used, namely $\mathit{tmp0}, \mathit{tmp1}$ and $\mathit{res}$. We assign each of the nine classical registers $\mathit{tmp0}, \mathit{tmp1}, \mathit{res}, \text{IC}, \text{AC}, \mathit{flag}, \mathit{op}, j, k$ from the $0$-th to the $8$-th classical registers, respectively. Let $\delta = 9$ indicate the offset. Then the array $\mathit{memory}$ is assigned to begin at the $\delta$-th classical register. More precisely, $\mathit{memory}[j]$ is assigned to the $(\delta+j)$-th classical register.

    \subsection{Assertion for valid addressing} \label{sec:assert-valid-addr}

    In the QRAM construction, accessing to $\mathit{memory}[j]$ is dangerous, because $j$ can be negative but the address it is assigned to, i.e. $(\delta+j)$, could still be valid. Therefore, an assertion is needed before each access to $\mathit{memory}[j]$. Algorithm \ref{algo7} provides a possible solution. We use a QRAM instruction trick here: before accessing to $X_{j+\delta}$, we try to access to $X_j$ (in QRAM address) but ignore the addressing results. This works because if $j < 0$, $X_j$ will be invalid, then the QRAM terminates as we want; if $j \geq 0$, it goes as if nothing happened (we have accessed to $X_j$ without modifying anything).

    \begin{algorithm}[!htp]
        \caption{QRAM code for accessing $\mathit{memory}[j]$.}
        \label{algo7}
        \begin{algorithmic}[1]

        \State $\mathit{tmp1} \gets X_{j}$;
        \State $\mathit{tmp0} \gets \delta$
        \State $j \gets j+\mathit{tmp0}$;
        \State $\mathit{res} \gets X_{j}$;

        \end{algorithmic}
    \end{algorithm}

    \section{QRASP instructions for simulating QRAMs}\label{qrasp-to-qram}

     \begin{enumerate}
      \item $P_l$ is of the form $X_i \gets C$. The QRASP code is
          \begin{align*}
              \mathit{label}(l): & \text{LOD}, C \\
              & \text{STO}, i+\delta
          \end{align*}
      \item $P_l$ is of the form $X_i \gets X_j+X_k$. The QRASP code is
          \begin{align*}
              \mathit{label}(l): & \text{LOD}, 0 \\
              & \text{ADD}, j+\delta \\
              & \text{ADD}, k+\delta \\
              & \text{STO}, i+\delta \\
          \end{align*}
      \item $P_l$ is of the form $X_i \gets X_j-X_k$. The QRASP code is
          \begin{align*}
              \mathit{label}(l): & \text{LOD}, 0 \\
              & \text{ADD}, j+\delta \\
              & \text{SUB}, k+\delta \\
              & \text{STO}, i+\delta \\
          \end{align*}
      \item $P_l$ is of the form $X_i \gets X_{X_j}$. The QRASP code is
          \begin{align*}
              \mathit{label}(l): & \text{LOD}, \delta \\
              & \text{ADD}, j+\delta \\
              & \text{STO}, a+1 \\
              & \text{LOD}, 0 \\
              a: & \text{ADD}, 0 \\
              & \text{STO}, i+\delta \\
          \end{align*}
          It is noted that $a = \mathit{label}(l)+8$.
      \item $P_l$ is of the form $X_{X_i} \gets X_j$. The QRASP code is
          \begin{align*}
              \mathit{label}(l): & \text{LOD}, \delta \\
              & \text{ADD}, i+\delta \\
              & \text{STO}, a+1 \\
              & \text{LOD}, 0 \\
              & \text{ADD}, j+\delta \\
              a: & \text{STO}, 0 \\
          \end{align*}
          It is noted that $a = \mathit{label}(l)+10$.
      \item $P_l$ is of the form TRA $m$ if $X_j > 0$. The QRASP code is
          \begin{align*}
              \mathit{label}(l): & \text{LOD}, 0 \\
              & \text{ADD}, j+\delta \\
              & \text{BPA}, \mathit{label}(m) \\
          \end{align*}
      \item $P_l$ is of the form READ $X_i$. The QRASP code is
          \begin{align*}
              \mathit{label}(l): & \text{RD}, i+\delta \\
          \end{align*}
      \item $P_l$ is of the form WRITE $X_i$. The QRASP code is
          \begin{align*}
              \mathit{label}(l): & \text{PRI}, i+\delta \\
          \end{align*}
     \end{enumerate}

     \section{Correctness of simulating QRAMs by QRASPs} \label{sec:correct-simulate-qrams-by-qrasps}

     We note that $L' = \abs{P'} = \mathit{label}(L) = \mathit{label}(\abs{P}) \leq 15L < 20L = \delta$.

  \begin{definition}  We say that a QRASP configuration $c' = (\xi', \zeta', \mu', \ket{\psi'}, x', y')$ agrees with a QRAM configuration $c = (\xi, \mu, \ket\psi, x, y)$, denoted $c' \models c$, if
    \begin{enumerate}
    \item $\xi' = \downarrow$, $\mu'(i+\delta) = \mu(i)$ for every $i \in \mathbb{N}$, $\ket{\psi'} = \ket\psi$, $x'=x$ and $y' = y$ in the case $\xi = \downarrow$; or
    \item $\xi' = \mathit{label}(\xi)$, $\mu'(i+\delta) = \mu(i)$ for every $i \in \mathbb{N}$, $\ket{\psi'} = \ket\psi$, $x'=x$ and $y' = y$ in the case $\xi \in \mathbb{N}$.
    \end{enumerate}\end{definition}

    \begin{lemma} \label{lemma-qram-by-qrasp-unique}
        For every $c'$, there is a unique $c$ such that $c' \models c$.
    \end{lemma}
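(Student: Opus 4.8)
The plan is to follow the template of Lemma~\ref{lemma-qrasp-by-qram-unique-config}, reducing the claim to two observations: (i) if $c' \models c_1$ and $c' \models c_2$ then $c_1 = c_2$, and (ii) every milestone $c'$ agrees with some $c$. Both amount to reading the defining clauses of $\models$ componentwise, the only genuinely nontrivial component being the instruction counter.

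First I would treat uniqueness. Writing $c' = (\xi', \zeta', \mu', \ket{\psi'}, x', y')$, note that \emph{both} clauses of the definition force $\ket{\psi} = \ket{\psi'}$, $x = x'$, $y = y'$ and $\mu(i) = \mu'(i+\delta)$ for all $i \in \mathbb{N}$, so these components of $c_1$ and $c_2$ are identical and determined by $c'$ alone (the accumulator $\zeta'$ is simply discarded, since a QRAM carries no AC). For the instruction counter the two clauses are mutually exclusive, distinguished by whether $\xi' = \downarrow$: if $\xi' = \downarrow$ then both $c_1, c_2$ use clause~1 and take $\xi = \downarrow$; if $\xi' \in \mathbb{N}$ then both use clause~2, so $\mathit{label}(\xi_1) = \xi' = \mathit{label}(\xi_2)$. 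Since every simulating length in Table~\ref{tab3} is positive, $\mathit{label}(m) = \sum_{i=0}^{m-1} \mathit{simulate}(P_i)$ is strictly increasing, hence injective, and therefore $\xi_1 = \xi_2$. This gives $c_1 = c_2$.

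For existence I would define $c$ from $c'$ by copying $\mu(i) = \mu'(i+\delta)$, $\ket\psi = \ket{\psi'}$, $x = x'$, $y = y'$, and setting $\xi = \downarrow$ when $\xi' = \downarrow$ and $\xi = k$ when $\xi' = \mathit{label}(k)$. The configurations $c'$ to which the relation is applied are the \emph{milestone} configurations arriving at the start of the block simulating a QRAM instruction, i.e.\ those with $\xi' \in \{\mathit{label}(0), \dots, \mathit{label}(L)\} \cup \{\downarrow\}$, which play here the role that $\mathcal{C}'_{\text{L}5}$ plays in Subsection~\ref{proof-qram}. For such $c'$ the value $\xi'$ is either $\downarrow$ or lies in the image of $\mathit{label}$, so the prescription above is well-defined and yields $c' \models c$.

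The only point needing care --- and hence the main obstacle --- is the instruction counter: one must confirm that $\xi'$ genuinely lands on a label value whenever the relation is invoked (so clause~2 applies and existence holds) and that $\mathit{label}$ is injective (so uniqueness holds). Injectivity follows immediately from positivity of the simulating lengths, while the landing-on-labels property is precisely the design invariant of the block-by-block translation of Subsection~\ref{sec:qram-by-qrasp}; every remaining component of $c$ is fixed verbatim by $c'$, so no further argument is required.
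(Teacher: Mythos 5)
Your proposal is correct and follows essentially the same route as the paper, which states this lemma without proof and, for its mirror image (Lemma~\ref{lemma-qrasp-by-qram-unique-config}), reduces it to exactly your two observations; you simply fill in the details, namely injectivity of $\mathit{label}$ (from positivity of the simulating lengths) for uniqueness and the explicit componentwise construction for existence. You are also right that existence only holds for configurations whose IC is $\downarrow$ or lies in the image of $\mathit{label}$, i.e.\ the milestone configurations in $\mathcal{C}'_{\mathcal{L}}$ to which the relation is actually applied, which is the implicit reading of the lemma.
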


    Let $\mathcal{C}$ and $\mathcal{C}'$ be the set of configurations of $P$ and $P'$, respectively, and $c_0 \in \mathcal{C}$ and $c'_0 \in \mathcal{C}'$ be their initial configurations.
    Define $\mathcal{C}'_{\mathcal{L}} \subseteq \mathcal{C}'$ being the set of configurations, whose ICs are in
    \[
    \mathcal{L} = \{ \mathit{label}(0), \mathit{label}(1), \dots, \mathit{label}(L) \}.
    \]

    \begin{lemma} \label{lemma-qram-by-qrasp-simulate-single-step}
        Let $c' \in \mathcal{C}'_{\mathcal{L}}$ and $c, d \in \mathcal{C}$. If $c' \models c$, and $c \xrightarrow[T]{p} d$,
        then there is a $d' \in \mathcal{C}'_{\mathcal{L}}$ such that $d' \models d$ and $c' \xrightarrow[\Theta(T)]{p} d'$.
    \end{lemma}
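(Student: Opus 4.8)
The plan is to prove the lemma by a case analysis on the instruction $P_\xi$ executed by the QRAM in the transition $c \xrightarrow[T]{p} d$. Since $c' \models c$ and the transition is defined, $c$ is non-terminal, so $\xi = c.\xi \in \mathbb{N}$ and, by the second clause of the agreement relation, $c'.\xi = \mathit{label}(\xi)$; that is, $c'$ sits at the head of the block $P'_{\mathit{label}(\xi)}, \dots, P'_{\mathit{label}(\xi+1)-1}$ designated to simulate $P_\xi$. If $\xi$ is out of range (the only possibility being $\xi = L$), then the QRAM step terminates by rule 1, while $c'.\xi = \mathit{label}(L) = L'$ lies outside the QRASP program, so the QRASP also terminates to $\downarrow$, and $d' \models d$ holds trivially. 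Otherwise $\xi \in [0,L)$ and for each instruction form I would trace the block step by step, reading off transitions from the QRASP operational semantics of Section \ref{sec:qrasp-operational}, checking three things: (i) the resulting IC is $\mathit{label}(\xi+1)$ (or $\mathit{label}(m)\in\mathcal{L}$ for a jump, or $\downarrow$), so $d' \in \mathcal{C}'_{\mathcal{L}}$; (ii) $d' \models d$; and (iii) the accumulated probability equals $p$.

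First I would dispatch the classical instructions. Assignment, addition, subtraction, read, write and the jump translate directly: the setup code moves the relevant operand through the accumulator, the $\delta$-offset built into every memory reference (recall $\mu'(i+\delta)=\mu(i)$) guarantees that the register actually touched is $i+\delta$, and each block advances the IC by exactly $\mathit{simulate}(P_\xi)$ as tabulated in Table \ref{tab3}, landing on $\mathit{label}(\xi+1)$. For ``TRA $m$ if $X_j>0$'' the branch target is $\mathit{label}(m)\in\mathcal{L}$, matching the QRAM jump to $m$. Indirect addressing is the only classical form needing care, since QRASPs lack an indirect mode; here I would invoke the self-modifying code that writes the computed address into the parameter slot of a later instruction (the same device used in the displayed quantum blocks), and rely on the address-safety of $P$ from Lemma \ref{lemma-qram-address} so that no transition to an invalid-address terminal state can occur.

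Next I would treat the quantum and measurement instructions using the displayed blocks. For CNOT, $H$ and $T$ the setup instructions compute the shifted address $\delta+\mu(i)$ (and $\delta+\mu(j)$) and store them into the parameter slots of the trailing gate by self-modification, so that the gate acts exactly as required to preserve $\ket{\psi'}=\ket\psi$; these transitions are deterministic, so $p=1$. The measurement block is the only place where $p\neq 1$ arises: executing MEA on the designated register yields the two branches with probabilities $\Abs{P_{\mu(j)}\ket\psi}^2$ and $1-\Abs{P_{\mu(j)}\ket\psi}^2$ together with the correct collapsed states, and the closing STO copies the outcome into $X_{i+\delta}$, reproducing the QRAM rule $X_i\gets M[Q_{X_j}]$ exactly. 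Matching both branch probabilities and both post-measurement states against the QRAM semantics is the delicate bookkeeping step of this case.

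The step I expect to be the real obstacle is the timing claim $T'_{\text{block}}=\Theta(T)$. Each block carries a fixed number of bookkeeping instructions whose operands are $\mathrm{IC}$, $\delta$, the labels and the scratch slots $a+1,a+2$; all of these are bounded by $O(L')=O(L)$, a constant independent of the input, so their $l$-cost is $O(1)$ under either criterion. The dominant contributions are the $l(\mu(\cdot))$ terms, and here I would use monotonicity and subadditivity of $l$ to argue $l(\delta+\mu(i))=\Theta(l(\mu(i)))+O(1)$, so the block cost has the same leading order as the QRAM cost $T$; combined with $l(n)\geq 1$, hence $T\geq 1$, this absorbs the additive $O(1)$ and gives the upper bound, while the trailing gate or arithmetic instruction alone already costs $\Omega(T)$, yielding the matching lower bound. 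Carrying this estimate uniformly across all instruction forms, and in particular verifying that the $\delta$-shift never inflates the asymptotic cost, is where the bulk of the careful work lies.
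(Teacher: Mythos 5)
Your proposal is correct and follows exactly the route the paper intends: the paper's proof of this lemma is the single line ``Direct from the operational semantics,'' and your case analysis over the instruction forms --- checking that each simulating block lands on the next label, preserves the agreement relation via the $\delta$-offset and self-modifying parameter slots, reproduces the measurement branch probabilities, and costs $\Theta(T)$ --- is precisely the verification that line leaves implicit. No gaps; your treatment of the timing bound (constant-size bookkeeping operands plus the dominant $l(\mu(\cdot))$ terms) is the right way to justify the $\Theta(T)$ claim.
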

    \begin{proof}
        Direct from the operational semantics.
    \end{proof}

    We write $\mathcal{P}$ and $\mathcal{P}'$ for the sets of all possible execution paths of $P$ and $P'$, respectively.
    Let $\pi' \in \mathcal{P}'_f(c_0')$ be a path of length $\abs{\pi'} = k$:
    \[
        \pi': c_0' \xrightarrow[T_1']{p_1'} c_1' \xrightarrow[T_2']{p_2'} \dots \xrightarrow[T_{k-1}']{p_{k-1}'} c_{k-1}' \xrightarrow[T_k']{p_k'} c_k'.
    \]
    Let $0 = i_0 < i_1 < \dots < i_{m-1} < i_m = k$ be all indices such that $c'_{(j)} = c'_{i_j} \in \mathcal{C}'_{\mathcal{L}}$ for $0 \leq j \leq m$. Then the execution path $\pi'$ can be written as
    \[
        \pi': c_0' = c'_{(0)} \xrightarrow[T'_{(1)}]{p'_{(1)}}^* c'_{(1)} \xrightarrow[T'_{(2)}]{p'_{(2)}}^* \dots \xrightarrow[T'_{(m-1)}]{p'_{(m-1)}}^* c'_{(m-1)} \xrightarrow[T'_{(m)}]{p'_{(m)}}^* c'_{(m)} = c'_k,
    \]
    where
    \[
    p'_{(j)} = \prod_{l=i_{j-1}+1}^{i_j} p'_{l},
    \]
    and
    \[
    T'_{(j)} = \sum_{l=i_{j-1}+1}^{i_j} T'_{l}
    \]
    for $1 \leq j \leq m$. We write $\Abs{\pi'} = m$.

    \begin{lemma}
        $c'_{(0)} \models c_0$.
    \end{lemma}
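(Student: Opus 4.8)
The plan is to observe that $c'_{(0)}$ is nothing but the initial QRASP configuration $c'_0$ itself: since $i_0 = 0$, we have $c'_{(0)} = c'_{i_0} = c'_0$, and indeed $c'_0 \in \mathcal{C}'_{\mathcal{L}}$ because its IC equals $0 = \mathit{label}(0) \in \mathcal{L}$. So the claim reduces to $c'_0 \models c_0$. First I would write out both initial configurations explicitly. On the QRAM side, $c_0 = (0, \mu_0, \ket{\psi_0}, \mathit{in}(x), \epsilon)$ with $\mu_0 \equiv 0$ and $\ket{\psi_0} = \bigotimes_{i=0}^\infty \ket 0_i$. On the QRASP side, $c'_0 = (0, 0, \mu'_0, \ket{\psi_0}, \mathit{in}(x), \epsilon)$, where $\mu'_0$ stores the program $P'$ in registers $0, \dots, L'-1$ and is zero elsewhere, the accumulator is $0$, and all quantum registers are $\bigotimes_{i=0}^\infty \ket 0_i$.

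Since here $\xi = 0 \in \mathbb{N}$, I would verify the five requirements of clause (2) of the agreement relation $\models$ one by one. The instruction-counter condition $\xi' = \mathit{label}(\xi)$ holds because $\xi' = 0$ and $\mathit{label}(0) = \sum_{i=0}^{-1} \mathit{simulate}(P_i) = 0$ is an empty sum. The register condition $\mu'_0(i+\delta) = \mu_0(i)$ for all $i \in \mathbb{N}$ is the only place requiring a word: the right-hand side is $0$ since all QRAM registers start empty, and the left-hand side is also $0$ because $i + \delta \geq \delta > \abs{P'} = L'$, so the address $i+\delta$ lies strictly beyond the block $0, \dots, L'-1$ occupied by the hard-coded program, where $\mu'_0$ vanishes. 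The remaining three conditions $\ket{\psi'} = \ket\psi$, $x' = x$ and $y' = y$ are immediate, since both machines initialise all quantum registers to $\bigotimes_{i=0}^\infty \ket 0_i$, both read tapes are $\mathit{in}(x)$, and both write tapes are $\epsilon$.

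There is no real obstacle here; this is a direct unfolding of definitions. The only two points that merit care are the convention that $\mathit{label}(0)$ is an empty sum evaluating to $0$, and the role of the offset $\delta$: the bound $\delta > \abs{P'}$ (guaranteed by the choice $\delta = 20L$ recorded just before this lemma) is exactly what keeps the simulated QRAM data, which lives in registers $i + \delta$, from overlapping the program code, and hence ensures those registers are initially zero in agreement with the QRAM's empty memory.
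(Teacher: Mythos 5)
Your proof is correct and matches what the paper intends: the paper states this lemma without proof, treating it as an immediate unfolding of the definitions, and your argument is exactly that unfolding (identifying $c'_{(0)}$ with $c'_0$ via $i_0=0$ and $\mathit{label}(0)=0$, and using $\delta > L'$ to see that the hard-coded program does not overlap the shifted register block, so $\mu'_0(i+\delta)=0=\mu_0(i)$). No issues.
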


    \begin{definition}
        Let $\pi' \in \mathcal{P}'_f(c'_{(0)})$ and $\pi \in \mathcal{P}_f(c_0)$.
        Then we say that $\pi'$ agrees with $\pi$, denoted $\pi' \models \pi$, if
        \begin{enumerate}
          \item $\Abs{\pi'} = \abs{\pi}$.
          \item $c'_{(j)} \models c_j$ for $0 \leq j \leq \Abs{\pi'}$.
          \item $p'_{(j)} = p_j$ and $T'_{(j)} = \Theta(T_j)$ for $1 \leq j \leq \Abs{\pi'}$.
        \end{enumerate}
    \end{definition}

    \begin{lemma} \label{lemma-qram-by-qrasp-time-bound}
        For every $\pi' \in \mathcal{P}'_f(c'_{(0)})$, there is a unique $\pi \in \mathcal{P}_f(c_0)$ such that $\pi' \models \pi$.
    \end{lemma}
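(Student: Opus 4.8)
The plan is to mirror the parallel development in Subsection~\ref{sec:qrasp-by-qram} and read off, from a terminal QRASP path $\pi'$, the unique QRAM path that it simulates by collapsing each block of QRASP instructions back to the single QRAM instruction it encodes. Concretely, I would start from the checkpoint decomposition $c'_{(0)}, c'_{(1)}, \dots, c'_{(m)} = c'_k$ of $\pi'$ introduced above, where $m = \Abs{\pi'}$, and apply Lemma~\ref{lemma-qram-by-qrasp-unique} to each $c'_{(j)}$ to obtain the unique QRAM configuration $c_j$ with $c'_{(j)} \models c_j$. The candidate path is $\pi : c_0, c_1, \dots, c_m$; it then remains to check that $\pi$ is a genuine terminal execution path of $P$ with $\pi' \models \pi$, and that it is the only such path.

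For existence, the first two conditions in the definition of $\pi' \models \pi$, namely $\Abs{\pi'} = \abs{\pi} = m$ and $c'_{(j)} \models c_j$, hold by construction, so the real content is to show that each $c_{j-1} \xrightarrow[T_j]{p_j} c_j$ is an actual transition of $P$ with $p'_{(j)} = p_j$ and $T'_{(j)} = \Theta(T_j)$. This is the converse reading of Lemma~\ref{lemma-qram-by-qrasp-simulate-single-step}. Since $c'_{(j-1)} \models c_{j-1}$ and $c_{j-1}$ is non-terminal, its instruction $P_{\xi_{j-1}}$ admits one outgoing transition, or two if it is a measurement. Applying Lemma~\ref{lemma-qram-by-qrasp-simulate-single-step} to each QRAM transition $c_{j-1} \to d$ yields a checkpoint-to-checkpoint QRASP segment $c'_{(j-1)} \to^* d'$ with $d' \models d$, of probability $p$ and time $\Theta(T)$. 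Because the checkpoints are exactly the block boundaries in $\mathcal{L}$ and no $\mathit{label}(\cdot)$ lies strictly inside a simulation block, the sub-path of $\pi'$ running from $c'_{(j-1)}$ to the next checkpoint $c'_{(j)}$ must coincide with one of these segments; identifying it singles out the QRAM transition $c_{j-1} \to d$ with $d' = c'_{(j)}$, whence $d = c_j$ by Lemma~\ref{lemma-qram-by-qrasp-unique}, and the matching probability $p'_{(j)} = p_j$ and time $T'_{(j)} = \Theta(T_j)$ are inherited from the single-step lemma. Terminality of $c_m$ follows because $c'_{(m)} = c'_k$ is terminal and the first clause of $\models$ ties $\xi' = \downarrow$ to $\xi = \downarrow$.

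Uniqueness is then immediate. If $\pi_1 \models \pi'$ and $\pi_2 \models \pi'$, both have length $m = \Abs{\pi'}$, and for each $0 \leq j \leq m$ the second agreement clause gives $c'_{(j)} \models c^{(1)}_j$ and $c'_{(j)} \models c^{(2)}_j$, so $c^{(1)}_j = c^{(2)}_j$ by Lemma~\ref{lemma-qram-by-qrasp-unique}; hence $\pi_1 = \pi_2$.

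I expect the existence step to be the main obstacle, and within it the bookkeeping that makes the checkpoint decomposition faithful: one must confirm that every QRASP block for a given QRAM instruction type (Table~\ref{tab3} together with the code snippets above) enters and leaves the set $\mathcal{C}'_{\mathcal{L}}$ exactly once per simulated instruction, never landing on a stray $\mathit{label}$ in between, and that the only branching inside a block is the measurement branch, whose two outcomes carry precisely the amplitudes $\Abs{P_{\mu(\xi+1)}\ket\psi}^2$ and $1 - \Abs{P_{\mu(\xi+1)}\ket\psi}^2$ prescribed by the QRAM measurement rule. Since $\mathit{label}(i+1) = \mathit{label}(i) + \mathit{simulate}(P_i)$ with $\mathit{label}$ strictly increasing, the ``no stray $\mathit{label}$'' point reduces to a finite check over the twelve instruction types, all of which are already subsumed in Lemma~\ref{lemma-qram-by-qrasp-simulate-single-step}; once that lemma is granted, the remaining verification is routine.
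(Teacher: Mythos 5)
Your proof is correct and takes essentially the approach the paper intends: the paper leaves this lemma unproved (its mirror image in Subsection \ref{proof-qram} is dismissed as ``Obvious''), and your argument --- extracting the unique agreeing QRAM configurations at the checkpoints via Lemma \ref{lemma-qram-by-qrasp-unique}, matching each checkpoint-to-checkpoint segment of $\pi'$ against the one or two segments supplied by Lemma \ref{lemma-qram-by-qrasp-simulate-single-step}, and getting uniqueness from the uniqueness of the agreeing configuration --- is exactly the intended elaboration. The only blemishes are cosmetic: in the uniqueness paragraph the relation should read $\pi' \models \pi_1$ and $\pi' \models \pi_2$ rather than $\pi_i \models \pi'$, and the ``no stray label'' and ``exhaustive branching'' checks you flag are precisely the content already subsumed in the single-step lemma.
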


    It follows immediately from Lemma \ref{lemma-qram-by-qrasp-time-bound} that $P'$ is time bounded by $O(T(n))$. Let $T': \mathbb{N} \to \mathbb{N}$ be the worst case running time of $P'$.

    \begin{lemma}
        For every $\pi \in \mathcal{P}_f(c_0)$,
        there is a unique $\pi' \in \mathcal{P}'_f(c'_{(0)})$ such that $\pi' \models \pi$.
        We use $h: \mathcal{P}(c_0) \to \mathcal{P}'(c'_{(0)})$ to denote this bijection.
    \end{lemma}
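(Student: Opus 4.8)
The plan is to prove existence of an agreeing $\pi'$ by iterating the single-step simulation, and then to upgrade this to a genuine bijection through a probability-mass counting argument, exactly mirroring the parallel lemma proved in Subsection \ref{sec:qrasp-by-qram}.

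For existence I would induct on the length of $\pi$, starting from $c'_{(0)} \models c_0$. Given $\pi: c_0 \xrightarrow[T_1]{p_1} c_1 \xrightarrow[T_2]{p_2} \cdots \xrightarrow[T_t]{p_t} c_t$ in $\mathcal{P}_f(c_0)$, I apply Lemma \ref{lemma-qram-by-qrasp-simulate-single-step} to each transition $c_{j-1} \xrightarrow[T_j]{p_j} c_j$: since $c'_{(j-1)} \models c_{j-1}$, the lemma furnishes a configuration $c'_{(j)} \in \mathcal{C}'_{\mathcal{L}}$ with $c'_{(j)} \models c_j$ and $c'_{(j-1)} \xrightarrow[\Theta(T_j)]{p_j}^* c'_{(j)}$. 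Concatenating these macro-steps yields an execution path $\pi' \in \mathcal{P}'_f(c'_{(0)})$ with $\Abs{\pi'} = \abs{\pi} = t$, with $c'_{(j)} \models c_j$ for all $j$, and with matching probabilities $p'_{(j)} = p_j$; hence $\pi' \models \pi$. This produces at least one such $\pi'$, which lets me define a map $h$ by choosing one arbitrarily for each $\pi$.

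The crux is showing this $\pi'$ is unique, which at the same time forces $h$ to be a bijection. Here I would combine the halting guarantees of both machines with Lemma \ref{lemma-qram-by-qrasp-time-bound}. Since $P$ is a $T(n)$-time QRAM it halts with probability $1$, so $\sum_{\pi \in \mathcal{P}(c_0)} \pi.p = 1$; likewise $P'$ halts with probability $1$, giving $\sum_{\pi' \in \mathcal{P}'(c'_{(0)})} \pi'.p = 1$. Because $\pi' \models \pi$ preserves probability, $h(\pi).p = \pi.p$; and $h$ is injective, since $h(\pi_1) = h(\pi_2) = \pi'$ would make $\pi'$ agree with both $\pi_1$ and $\pi_2$, whence $\pi_1 = \pi_2$ by the uniqueness in Lemma \ref{lemma-qram-by-qrasp-time-bound}. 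These facts assemble into
\[
    1 = \sum_{\pi \in \mathcal{P}(c_0)} \pi.p = \sum_{\pi \in \mathcal{P}(c_0)} h(\pi).p \le \sum_{\pi' \in \mathcal{P}'(c'_{(0)})} \pi'.p = 1,
\]
so the middle inequality is in fact an equality.

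The step I expect to require the most care is extracting uniqueness from this chain of equalities. Suppose some $\pi_0$ had two distinct agreeing paths $\pi'_1 \ne \pi'_2$, each of positive probability $\pi_0.p > 0$. By Lemma \ref{lemma-qram-by-qrasp-time-bound} each of $\pi'_1, \pi'_2$ agrees with a unique QRAM path, which must be $\pi_0$; hence the only $\pi$ that either could serve as $h(\pi)$ is $\pi_0$, and $h(\pi_0)$ equals only one of them. The omitted path therefore contributes positive mass to $\sum_{\pi'} \pi'.p$ that is absent from $\sum_{\pi} h(\pi).p$, forcing a strict inequality and contradicting the equality just established. Consequently the agreeing $\pi'$ is unique for every $\pi$, so $h$ is well-defined independently of the arbitrary choice, and the equality of total masses together with injectivity makes $h$ a bijection between $\mathcal{P}(c_0)$ and $\mathcal{P}'(c'_{(0)})$.
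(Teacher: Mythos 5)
Your proposal is correct and follows essentially the same route as the paper: existence by iterating Lemma \ref{lemma-qram-by-qrasp-simulate-single-step} along the path, and uniqueness via the probability-mass chain $1 = \sum_{\pi} \pi.p = \sum_{\pi} h(\pi).p \leq \sum_{\pi'} \pi'.p = 1$. You merely spell out two steps the paper leaves implicit (injectivity of $h$ via Lemma \ref{lemma-qram-by-qrasp-time-bound}, and how a second agreeing path would force a strict inequality), which is a faithful elaboration rather than a different argument.
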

    \begin{proof}
        (\textbf{Existence}) Direct from Lemma \ref{lemma-qram-by-qrasp-simulate-single-step}.

        (\textbf{Uniqueness}) For every $\pi \in \mathcal{P}(c_0)$, we choose an arbitrary $\pi' \in \mathcal{P}'(c'_{(0)})$ such that $\pi' \models \pi$ and define $h(\pi) = \pi'$. Then
        \[
            1 = \sum_{\pi \in \mathcal{P}(c_0)} \pi.p = \sum_{\pi \in \mathcal{P}(c_0)} h(\pi).p \leq \sum_{\pi' \in \mathcal{P}'(c'_{(0)})} \pi'.p = 1.
        \]
        The uniqueness of $h(\pi)$ follows.
    \end{proof}

    Now let $x \in \Sigma^*$ be the input string and the initial configuration of $P$ is $c_0 = (0, \mu_0, \ket{\psi_0}, \mathit{in}(x), \epsilon)$. Since $P$ is a $T(n)$-time QRAM, $\abs{\pi} \leq T(\abs{x})$ is finite for every $\pi \in \mathcal{P}(c_0)$, and $\abs{\mathcal{P}(c_0)} \leq 2^{T(\abs{x})}$ is also finite because each transition leads to at most two branches. So for every $y \in \Sigma^*$, we have:
    \begin{align*}
        P(x, y)
        & = \sum_{c \in \mathcal{C}_f: \mathit{out}(c.y) = y} \llbracket P \rrbracket (c_0) (c) \\
        & = \sum_{c \in \mathcal{C}_f: \mathit{out}(c.y) = y} \llbracket P \rrbracket^{T(\abs{x})} (c_0) (c) \\
        & = \sum_{\pi \in \mathcal{P}^{T(\abs{x})}(c_0): \mathit{out}(\pi.c_{\abs{\pi}}.y) = y} \pi.p \\
        & = \sum_{\pi \in \mathcal{P}^{T(\abs{x})}(c_0): \mathit{out}(h(\pi).c_{\abs{f(\pi)}}.y) = y} h(\pi).p \\
        & = \sum_{\pi' \in {\mathcal{P}'}^{T'(\abs{x})}(c'_{(0)}): \mathit{out}(\pi'.c_{\abs{\pi'}}.y) = y} \pi'.p \\
        & = \sum_{\pi' \in {\mathcal{P}'}^{T'(\abs{x})}(c'_{0}): \mathit{out}(\pi'.c_{\abs{\pi'}}.y) = y} \pi'.p \\
        & = \sum_{c' \in \mathcal{C}'_f: \mathit{out}(c'.y) = y} \llbracket P' \rrbracket^{T'(\abs{x})} (c'_0) (c) \\
        & = \sum_{c' \in \mathcal{C}'_f: \mathit{out}(c'.y) = y} \llbracket P' \rrbracket (c'_0) (c) \\
        & = P'(x, y).
    \end{align*}

     \section{$\text{TM}^\text{Q}$  instructions for simulating QRAMs}\label{qtm-in-sim}

     \begin{enumerate}
      \item If $P_l$ has the form $X_i \gets C$, the following shows several steps to achieve the simulation with two work tracks work1 and work2.
          \begin{align*}
            (p_l, 0): & M_{\text{write($i$)}}[\text{work1}] \\
            (p_l, 1): & M_{\text{write($C$)}}[\text{work2}] \\
            (p_l, 2): & M_{\text{update}}[\text{creg}, \text{work1}, \text{work2}] \\
            (p_l, 3): & M_{\text{clean}}[\text{work1}] \\
            (p_l, 4): & M_{\text{clean}}[\text{work2}] \\
            (p_l, 5): & \text{transition to } (p_{l+1}, 0) \\
          \end{align*}

      \item If $P_l$ has the form $X_i \gets X_j+X_k$,
          \begin{align*}
            (p_l, 0): & M_{\text{write($j$)}}[\text{work1}] \\
            (p_l, 1): & M_{\text{fetch}}[\text{creg}, \text{work1}, \text{work2}] \\
            (p_l, 2): & M_{\text{write($k$)}}[\text{work3}] \\
            (p_l, 3): & M_{\text{fetch}}[\text{creg}, \text{work3}, \text{work4}] \\
            (p_l, 4): & M_{\text{add}}[\text{work2}, \text{work4}, \text{work5}] \\
            (p_l, 5): & M_{\text{write($i$)}}[\text{work6}] \\
            (p_l, 6): & M_{\text{update}}[\text{creg}, \text{work6}, \text{work5}] \\
            (p_l, 7): & M_{\text{clean}}[\text{work1}] \\
            (p_l, 8): & M_{\text{clean}}[\text{work2}] \\
            (p_l, 9): & M_{\text{clean}}[\text{work3}] \\
            (p_l, 10): & M_{\text{clean}}[\text{work4}] \\
            (p_l, 11): & M_{\text{clean}}[\text{work5}] \\
            (p_l, 12): & M_{\text{clean}}[\text{work6}] \\
            (p_l, 13): & \text{transition to } (p_{l+1}, 0) \\
          \end{align*}
      \item If $P_l$ has the form $X_i \gets X_j-X_k$,
          \begin{align*}
            (p_l, 0): & M_{\text{write($j$)}}[\text{work1}] \\
            (p_l, 1): & M_{\text{fetch}}[\text{creg}, \text{work1}, \text{work2}] \\
            (p_l, 2): & M_{\text{write($k$)}}[\text{work3}] \\
            (p_l, 3): & M_{\text{fetch}}[\text{creg}, \text{work3}, \text{work4}] \\
            (p_l, 4): & M_{\text{sub}}[\text{work2}, \text{work4}, \text{work5}] \\
            (p_l, 5): & M_{\text{write($i$)}}[\text{work6}] \\
            (p_l, 6): & M_{\text{update}}[\text{creg}, \text{work6}, \text{work5}] \\
            (p_l, 7): & M_{\text{clean}}[\text{work1}] \\
            (p_l, 8): & M_{\text{clean}}[\text{work2}] \\
            (p_l, 9): & M_{\text{clean}}[\text{work3}] \\
            (p_l, 10): & M_{\text{clean}}[\text{work4}] \\
            (p_l, 11): & M_{\text{clean}}[\text{work5}] \\
            (p_l, 12): & M_{\text{clean}}[\text{work6}] \\
            (p_l, 13): & \text{transition to } (p_{l+1}, 0) \\
          \end{align*}
      \item If $P_l$ has the form $X_i \gets X_{X_j}$,
          \begin{align*}
            (p_l, 0): & M_{\text{write($j$)}}[\text{work1}] \\
            (p_l, 1): & M_{\text{fetch}}[\text{creg}, \text{work1}, \text{work2}] \\
            (p_l, 2): & M_{\text{fetch}}[\text{creg}, \text{work2}, \text{work3}] \\
            (p_l, 3): & M_{\text{write($i$)}}[\text{work4}] \\
            (p_l, 4): & M_{\text{update}}[\text{creg}, \text{work4}, \text{work3}] \\
            (p_l, 5): & M_{\text{clean}}[\text{work1}] \\
            (p_l, 6): & M_{\text{clean}}[\text{work2}] \\
            (p_l, 7): & M_{\text{clean}}[\text{work3}] \\
            (p_l, 8): & M_{\text{clean}}[\text{work4}] \\
            (p_l, 9): & \text{transition to } (p_{l+1}, 0) \\
          \end{align*}
      \item If $P_l$ has the form $X_{X_i} \gets X_j$,
          \begin{align*}
            (p_l, 0): & M_{\text{write($j$)}}[\text{work1}] \\
            (p_l, 1): & M_{\text{fetch}}[\text{creg}, \text{work1}, \text{work2}] \\
            (p_l, 0): & M_{\text{write($i$)}}[\text{work3}] \\
            (p_l, 1): & M_{\text{fetch}}[\text{creg}, \text{work3}, \text{work4}] \\
            (p_l, 4): & M_{\text{update}}[\text{creg}, \text{work4}, \text{work2}] \\
            (p_l, 5): & M_{\text{clean}}[\text{work1}] \\
            (p_l, 6): & M_{\text{clean}}[\text{work2}] \\
            (p_l, 7): & M_{\text{clean}}[\text{work3}] \\
            (p_l, 8): & M_{\text{clean}}[\text{work4}] \\
            (p_l, 9): & \text{transition to } (p_{l+1}, 0) \\
          \end{align*}
      \item If $P_l$ has the form TRA $m$ if $X_j > 0$,
          \begin{align*}
            (p_l, 0): & M_{\text{write($j$)}}[\text{work1}] \\
            (p_l, 1): & M_{\text{fetch}}[\text{creg}, \text{work1}, \text{work2}] \\
            (p_l, 2): & M_{\text{gtz}}[\text{work2}] \\
            (p_l, 3): & M_{\text{clean}}[\text{work1}] \\
            (p_l, 4),~ & 0_\text{work2} \to \#_\text{work2}, (p_l, 5), L \\
            (p_l, 4),~ & 1_\text{work2} \to \#_\text{work2}, (p_l, 6), L \\
            (p_l, 5),~ & \#_\text{work2} \to \#_\text{work2}, (p_{l+1}, 0), R \\
            (p_l, 6),~ & \#_\text{work2} \to \#_\text{work2}, (p_{m}, 0), R \\
          \end{align*}
      \item If $P_l$ has the form READ $X_i$,
          \begin{align*}
            (p_l, 0): & M_{\text{read}}[\text{input}, \text{work1}] \\
            (p_l, 1): & M_{\text{write($i$)}}[\text{work2}] \\
            (p_l, 2): & M_{\text{update}}[\text{creg}, \text{work2}, \text{work1}] \\
            (p_l, 3): & M_{\text{clean}}[\text{work1}] \\
            (p_l, 4): & M_{\text{clean}}[\text{work2}] \\
            (p_l, 5): & \text{transition to } (p_{l+1}, 0) \\
          \end{align*}
      \item If $P_l$ has the form WRITE $X_i$,
          \begin{align*}
            (p_l, 0): & M_{\text{write($i$)}}[\text{work1}] \\
            (p_l, 1): & M_{\text{fetch}}[\text{creg}, \text{work1}, \text{work2}] \\
            (p_l, 2): & M_{\text{gtz}}[\text{work2}] \\
            (p_l, 3): & M_{\text{append}}[\text{output}, \text{work2}] \\
            (p_l, 4): & M_{\text{clean}}[\text{work1}] \\
            (p_l, 5): & M_{\text{clean}}[\text{work2}] \\
            (p_l, 6): & \text{transition to } (p_{l+1}, 0) \\
          \end{align*}
      \end{enumerate}

      \section{Details of Section \ref{properties-of-tms-and-qtms}} \label{appendix-details-tms-and-qtms}

      \subsection{Several Lemmas for QTMs}

      In order to prove our lemmas, we need some lemmas mainly from \cite{Ber97}.

      \begin{lemma} [Lemma B.7 of \cite{Ber97}] \label{prop-copy}
        There is a stationary oblivious reversible TM $M$ such that
        \[
            x; \epsilon \xrightarrow[T]{M} x; x\ {\rm
        and}\
            x; x \xrightarrow[T]{M} x; \epsilon,
        \]
        where $T = 2\abs{x}+4$.
    \end{lemma}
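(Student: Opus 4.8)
The plan is to realise $M$ as a two-track reversible Turing machine that computes, cell by cell, a \emph{self-inverse} copy operation, so that the single machine simultaneously witnesses both claims. Writing $a \in \{0,1,\#\}$ for the first-track symbol and $\beta$ for the second-track symbol at a cell, I would apply to the second track the involution $\iota_a$ defined by: $\iota_0$ transposes $\#$ and $0$ and fixes $1$; $\iota_1$ transposes $\#$ and $1$ and fixes $0$; and $\iota_\#$ is the identity. Each $\iota_a$ is a bijection of $\{0,1,\#\}$ with $\iota_a \circ \iota_a = \mathrm{id}$, so the local map $(a,\beta) \mapsto (a, \iota_a(\beta))$ is an involution of the tape contents. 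On $x;\epsilon$ the second track is blank, so each cell of the $x$-region is sent from $\#$ to the corresponding bit of $x$, producing $x;x$; on $x;x$ the very same rule sends each matching bit back to $\#$, producing $x;\epsilon$. Because the head never consults the second track to decide its motion, its trajectory is governed by the first track alone, which equals $x$ in both inputs; this is exactly what makes the machine oblivious and forces the same running time on both.

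For the control flow I would use an initial rightward state $s$ that, at each cell carrying a first-track bit, rewrites the second track by $\iota_a$ and moves right, staying in $s$; upon reading the first first-track blank (position $\abs{x}$, where the second track is also blank in both inputs) it turns around into a leftward state $r$. In $r$ the head moves left across the $x$-region without altering the tape until it reads a first-track blank at position $-1$, whereupon it moves right, returns to position $0$, and enters $q_f$. The head thus visits positions $0,1,\dots,\abs{x}$ and back to $0$, so $M$ is stationary, and its running time is $2\abs{x}+O(1)$; a careful accounting of the two turnarounds together with the initial and final boundary steps yields exactly $T = 2\abs{x}+4$, independent of the second-track content.

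The main obstacle is reversibility, i.e.\ showing the time evolution is a permutation of configurations. I would verify the standard sufficient condition used in \cite{Ber97}: the machine is \emph{unidirectional} (every state is entered by moves in a single direction) and, for any two distinct transitions leading into a common state along that direction, the written symbol pairs differ. Unidirectionality is immediate from the design, since $s$ and the final return step are entered only by right moves and $r$ only by left moves; and the per-cell step is reverse-deterministic because each $\iota_a$ is a bijection, so the previous second-track symbol is recovered as $\iota_a$ of the current one while the first track is untouched. The delicate points are precisely the two turnarounds: the step entering $r$ from $s$ must be distinguishable, upon reversal, from the left-moving $r \to r$ steps, which holds because the former writes a first-track $\#$ whereas the latter writes a first-track bit, and the symmetric check applies at the left boundary. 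Once these local checks are in place, distinct configurations have distinct predecessors, so $M$ is a well-formed (hence reversible) DTM; combined with the obliviousness and stationarity established above, it is the required stationary oblivious reversible TM. This is the routine reversible-copy construction whose existence \cite{Ber97} records as Lemma B.7.
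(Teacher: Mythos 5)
The paper does not prove this lemma itself; it imports it verbatim as Lemma~B.7 of \cite{Ber97}, and your construction --- the cell-wise self-inverse ``involution/XOR copy'' of the first track onto the second, performed in one right sweep with a return pass, verified reversible via unidirectionality plus injectivity of the written symbols into each state --- is exactly the standard Bernstein--Vazirani construction and is correct. The only pieces left implicit are routine: completing the transition table to a total function on unreachable symbol pairs, and the exact bookkeeping of the boundary steps that yields the constant in $T = 2\abs{x}+4$.
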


    \begin{lemma} [Lemma B.6 of \cite{Ber97}] \label{prop-swap}
        There is a stationary oblivious reversible TM $M$ such that
        \[
            x;y \xrightarrow[T]{M} y;x,
        \]
        where $T = 2\max\{\abs{x},\abs{y}\}+4$.
    \end{lemma}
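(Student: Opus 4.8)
The plan is to construct $M$ explicitly as a two-track reversible TM that performs the swap by a single left-to-right sweep exchanging the symbol-pairs on the two tracks, followed by a return sweep to the origin; this parallels the copy machine of Lemma~\ref{prop-copy}, with ``copy'' replaced by ``swap''. Concretely, I would have $M$ sweep rightward from position $0$, and at each cell read the pair $(a,b)$ formed by the track-$1$ and track-$2$ symbols, write back $(b,a)$, move one cell right, and remain in a single ``sweep-right'' state. Since $x$ and $y$ consist of non-blank symbols, the cell at position $m$ is all-blank exactly when $m \geq \abs{x}$ and $m \geq \abs{y}$; hence the first all-blank cell encountered is at position $\max\{\abs{x},\abs{y}\}$, which $M$ uses as its right boundary marker. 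A symbol-wise check then shows the sweep is correct even when $\abs{x}\neq \abs{y}$: at a position $m$ with $\abs{y}\leq m<\abs{x}$ we swap $(x(m),\#_2)$ into $(\#_1,x(m))$, so track $1$ becomes blank past $\abs{y}$ and track $2$ receives $x(m)$, and symmetrically for $\abs{x}\leq m<\abs{y}$; thus after the sweep track~$1$ holds $y$ and track~$2$ holds $x$.

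Having reached the right boundary, the plan is to turn around into a distinct ``sweep-left'' state that moves left without altering any symbols until it reads the all-blank cell at position $-1$, whereupon it moves one cell right and halts in the final state at position $0$. To pin down the additive constant, I would prepend a short boundary probe (move left to position $-1$, detect the blank, move back to $0$) before the swapping sweep; the resulting head trajectory is $0,-1,0,1,\dots,\max\{\abs{x},\abs{y}\},\dots,0,-1,0$, and counting transitions gives exactly $1+1+\max\{\abs{x},\abs{y}\}+(\max\{\abs{x},\abs{y}\}+1)+1 = 2\max\{\abs{x},\abs{y}\}+4$, matching the claim. Obliviousness and stationarity are then immediate, since this trajectory and the running time depend only on $\max\{\abs{x},\abs{y}\}$ and not on the contents of $x$ or $y$, and $M$ halts with its head back at position $0$.

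The step I expect to be the main obstacle is reversibility, i.e.\ checking that the time-evolution is injective so that $M$ is a well-formed DTM (and hence an RTM). I would verify this against the standard criterion from \cite{Ber97}: whenever two distinct transition rules enter the same state, they must do so from the same direction and must write distinct symbols. This forces a careful separation of roles among the states---the sweep-right state is always entered while moving right, while the sweep-left and final states are always entered while moving left and right respectively, so unidirectionality holds---and it forces the written-symbol check: the self-loops of the sweep-right state write the genuinely swapped pairs $(b,a)$, which are pairwise distinct across distinct inputs $(a,b)$ and distinct from the all-blank pair written by the turn-around rule, while the sweep-left self-loops rewrite each read symbol unchanged, again pairwise distinct and distinct from the blank written at the turn. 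Confirming these inequalities for \emph{every} rule, including the two turn-arounds and the boundary-probe transitions, is the delicate bookkeeping that makes $M$ reversible; once it is in place, $M$ is a stationary, oblivious reversible TM with the stated running time, completing the proof.
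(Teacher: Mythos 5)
Your construction is correct: the paper itself gives no proof of this lemma (it is imported verbatim as Lemma B.6 of \cite{Ber97}), and your single sweep-and-return machine with a preliminary left probe is exactly the standard construction behind the stated bound --- the head trajectory $0,-1,0,1,\dots,\max\{\abs{x},\abs{y}\},\dots,0,-1,0$ accounts for all $2\max\{\abs{x},\abs{y}\}+4$ steps, and your check of the Bernstein--Vazirani reversibility criterion (unidirectional entry into each state plus pairwise distinct written symbols among the transitions entering it) is the right one and goes through for the sweep-right, sweep-left and turn-around rules. The only point worth adding is the routine one that the transition function must be totalised on the unreachable state--symbol pairs (e.g.\ the probe state reading a non-blank pair) in a way that preserves injectivity, so that the time evolution is genuinely a permutation of configurations; this is standard bookkeeping and does not affect the argument.
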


    \begin{lemma} [Reversal Lemma, Lemma 4.12 of \cite{Ber97}] \label{prop-reversal}
        For every well-formed and stationary QTM $M$, there is a well-formed and stationary QTM $M'$ such that
        \[
            \ket{\mathcal{T}} \xrightarrow[T]{M} \ket{\mathcal{T}'} \Longrightarrow \ket{\mathcal{T}'} \xrightarrow[T+2]{M'} \ket{\mathcal{T}}.
        \]
    \end{lemma}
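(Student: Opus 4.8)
The plan is to realise the reverse machine as a quantum Turing machine whose time evolution is (a relabelling of) the adjoint $U^\dagger$ of the time evolution $U$ of $M$. The first observation is that, because $M$ is stationary and the computation $\ket{\mathcal{T}} \xrightarrow[T]{M} \ket{\mathcal{T}'}$ halts \emph{exactly} at time $T$, all intermediate measurements are inert: from $p(t)=0$ for $t<T$ one gets $P_F U \ket{c_{t-1}} = 0$, hence $P_F^\perp U \ket{c_{t-1}} = U\ket{c_{t-1}}$, so $\ket{c_t} = U^t \ket{c_0}$ for all $t \le T$ and $\ket{c_T} = U^T \ket{c_0}$ lies entirely in the final state. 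Thus reversing $M$ amounts to running $U^{-1} = U^\dagger$ for $T$ steps, starting from $\ket{q_f, \mathcal{T}', 0}$ and ending at $\ket{q_0, \mathcal{T}, 0}$.

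The difficulty is that $U^\dagger$ is not literally the one-step operator of any QTM: a QTM reads and rewrites the symbol under its head and \emph{then} moves, whereas a direct computation of $U^\dagger \ket{q, \mathcal{T}', \xi'}$ shows that it must restore the symbol one cell behind the head (at $\xi'-d$) rather than at $\xi'$. I would resolve this position mismatch with a one-cell \emph{offset} representation, which is the crux of the proof. Assume first that $M$ is unidirectional (the general well-formed case reduces to this by the standard unidirectionalisation of \cite{Ber97}), so that every state $q$ is entered only while moving in a fixed direction $d_q$. The reverse machine $M'$ keeps its head one step behind the simulated forward head, on the cell that $M$ wrote most recently, so that the configuration $\ket{q, \mathcal{T}', \xi'}$ of $M$ is represented by $\ket{q, \mathcal{T}', \xi' - d_q}$ of $M'$. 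I then define the reverse transition function by conjugating amplitudes and reversing directions,
\[
   \delta'(q, \sigma, \tau, p, \bar d_p) = \overline{\delta(p, \tau, \sigma, q, d_q)},
\]
and $0$ otherwise. Under the offset representation a single application of $\delta'$ reads the symbol $\sigma$ that $M$ had just written (now under $M'$'s head), restores the original symbol $\tau$, changes the state from $q$ back to $p$, and moves by $\bar d_p$ — landing exactly on the offset representation of $M$'s previous configuration. Hence $M'$ undoes one step of $M$ per step, and $M'$ is itself unidirectional with entering directions $\bar d_p$.

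Well-formedness of $M'$ would be checked by verifying the usual local (row/column orthonormality) conditions on $\delta'$: since $\delta'$ is obtained from $\delta$ by the transposition $(p,\tau) \leftrightarrow (q,\sigma)$ together with complex conjugation, these conditions for $M'$ are precisely the well-formedness conditions for $M$, which hold because $U$ is unitary; equivalently, the induced evolution of $M'$ is $U^\dagger$ up to the offset relabelling, and $U$ unitary $\Rightarrow U^\dagger$ unitary. Finally, the additive ``$+2$'' is accounted for by two boundary corrections reconciling the one-cell offset with the requirement that $M'$ be stationary. Since $M$ halts with its head at $0$, the offset representation of $M$'s final configuration has head at $-d_{q_f}$; I prepend one transition moving $M'$ from its start position $0$ into this representation in state $q_f$. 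After $T$ reversal steps $M'$ reaches the offset representation $\ket{q_0, \mathcal{T}, -d_{q_0}}$ of $M$'s initial configuration; I append one transition that moves the head back to $0$ and enters a fresh final state $q_f'$. These give total time $1 + T + 1 = T+2$, leave $M'$ stationary (the head returns to $0$), and keep the halting clean (the new final state is reached only at the last step, so no intermediate measurement fires), completing $\ket{\mathcal{T}'} \xrightarrow[T+2]{M'} \ket{\mathcal{T}}$. The main obstacle throughout is the single-step position mismatch; once the offset representation is in place, the amplitude bookkeeping and the verification of well-formedness are routine.
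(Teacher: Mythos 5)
The paper does not prove this lemma at all: it is quoted verbatim from Bernstein--Vazirani (Lemma 4.12 of \cite{Ber97}), so there is no in-paper proof to compare against. Your argument is essentially the standard proof from that source: observe that exact halting at time $T$ makes the intermediate measurements inert so that the evolution is $U^T$; reduce to unidirectional machines; define the reversed transition function by conjugating amplitudes and swapping $(p,\tau)\leftrightarrow(q,\sigma)$ with reversed directions; absorb the one-cell head mismatch into an offset representation; and spend one step at each end to enter and leave that representation, giving $T+2$. The offset bookkeeping, the verification that $M'$ is unidirectional with entering directions $\bar d$, and the identification of the local well-formedness conditions of $\delta'$ with those of $\delta$ are all correct.

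The one genuine gap is your reduction of the general case to the unidirectional case. The Unidirection Lemma (Lemma 5.5 of \cite{Ber97}, restated in this paper as Lemma \ref{lemma-unidirectional}) simulates each step of $M$ by \emph{five} steps of the unidirectional machine, so if $M$ is not already unidirectional your construction reverses the computation in $5T+2$ steps, not $T+2$. This is precisely why Bernstein and Vazirani state their Reversal Lemma only for normal form \emph{unidirectional} QTMs, and why the present paper's hypothesis (``well-formed and stationary'') is looser than what your argument actually supports. To close the gap you must either add unidirectionality (and normal form) to the hypotheses, as in \cite{Ber97} and as is in fact satisfied everywhere the lemma is invoked in this paper, or give a direct reversal of a non-unidirectional $\delta$, which your offset trick does not handle because a state entered from both directions has no well-defined $d_q$. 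A secondary, more minor point: you assert rather than verify that the two boundary transitions can be completed to a globally unitary $U'$; in \cite{Ber97} this requires the completion machinery for partially specified well-formed machines, and it is worth at least naming that step.
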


    \subsection{Proof of lemmas in Section \ref{properties-of-tms-and-qtms}}

    The following Lemmas \ref{prop-history}, \ref{prop-RTM} and \ref{prop-unique} are essentially Theorem B.8 and Theorem B.9 in \cite{Ber97}, but here they are slightly strengthened for our purpose.

    \begin{lemma} \label{prop-history}
        For every stationary deterministic TM $M$, there is a stationary reversible TM $M'$ such that
        \[
            \mathcal{T} \xrightarrow[T]{M} \mathcal{T}' \Longrightarrow \mathcal{T};\epsilon;\epsilon \xrightarrow[T']{M'} \mathcal{T}';@;\mathcal{T}^{h},
        \]
        where:
        \begin{enumerate}
          \item $\mathcal{T}^h = {\#\$(p_0,\sigma_0)\dots(p_{T-1}, \sigma_{T-1})}$ encodes the history, and $p_t$ and $\sigma_t$ denote the state and the symbol at the head position at time $t$ in the execution of $M$, respectively;
          \item $T' = O(T^2)$.
        \end{enumerate}
        Moreover, if $M$ is oblivious, then so is $M'$.
    \end{lemma}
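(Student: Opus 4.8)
The plan is to give an explicit Bennett-style construction that records the computation history of $M$ on a fresh track, so that the resulting three-track machine $M'$ has an injective (and hence reversible) transition function. Since the definitions classify an RTM as precisely a well-formed DTM, it suffices to build a deterministic $M'$ and then verify that no two distinct configurations of $M'$ share a common successor configuration; this is exactly the point at which the recorded history is needed.

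Concretely, $M'$ operates on three tracks. Track $1$ holds the simulated tape, initialized to $\mathcal{T}$; track $2$ carries a single marker $@$ recording the simulated head position of $M$; and track $3$ holds the growing history, initialized to $\#\$$ (the start marker $\$$ followed by blanks). One simulated step proceeds as follows. First, at the cell carrying $@$, read the symbol $\sigma$ on track $1$; with the current simulated state $p$ held in the finite control, compute $\delta(p,\sigma)=(\tau,q,d)$. Second, overwrite track $1$ with $\tau$, erase the $@$ marker, move one cell in direction $d$, and re-plant $@$ there, so that track $2$ now marks the new head position. Third, carrying the pair $(p,\sigma)$ in the finite control, walk right along track $3$ to the first blank past the last recorded entry, write $(p,\sigma)$ there, and walk back left until $@$ is found again, entering state $q$. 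Because $Q$ and $\Sigma$ are finite, $(p,\sigma)$ fits in the control and each history cell is a single symbol drawn from $\{\#,\$\}\cup(Q\times\Sigma)$. Halting is handled by stationarity of $M$: when $M$ reaches its final state at position $0$, $M'$ performs no further append and passes to its own final state with the physical head at $0$.

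For reversibility, observe that the last entry $(p_{t-1},\sigma_{t-1})$ of the history deterministically encodes the predecessor configuration: it names the state $p_{t-1}$ and the symbol $\sigma_{t-1}$ that were present, and since $\delta$ is a function, both the written symbol $\tau$ and the direction $d$ are recovered as $\delta(p_{t-1},\sigma_{t-1})$. Thus the inverse of one simulated step is forced --- restore $\sigma_{t-1}$ on track $1$, move the marker back by $\bar d$, reinstate state $p_{t-1}$, and delete the last history entry --- so the logical step is one-to-one. It then remains to arrange each bookkeeping sweep so that its intermediate states are entered from a unique direction and terminate on a uniquely recognizable marker ($\$$, $@$, or the first blank), which makes the traversals themselves locally reversible. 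This meticulous check that the marker-chasing states keep the \emph{full} transition function injective, rather than only the logical step, is the crux of the argument and the part demanding the most care, exactly as in the reversible simulations of \cite{Ber97}.

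Finally, the timing and obliviousness claims are routine. At step $t$ the history has length $\Theta(t)$, so the round trip to its end and back costs $O(t)$; summing over $t=1,\dots,T$ gives $T'=\sum_{t=1}^{T}O(t)=O(T^2)$. If $M$ is oblivious, then its head position $\mathit{pos}(\abs{x},t)$ and running time depend only on $\abs{x}$, hence the history length and every traversal distance of $M'$ depend only on $\abs{x}$ and $t$; consequently the physical head position of $M'$ at each of its own steps is a function of $\abs{x}$ alone, so $M'$ is oblivious. Stationarity of $M'$ follows because every step returns the physical head to the simulated head marker, and the final simulated position is $0$ by stationarity of $M$. The main obstacle, as noted above, is not the time accounting but the line-by-line verification that the auxiliary traversal states preserve injectivity of the transition function.
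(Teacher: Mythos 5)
Your proposal is correct and follows essentially the same route as the paper's proof: an initialization pass that plants the $@$ marker on track $2$ and the $\$$ end marker on track $3$, followed by a per-step simulation that updates track $1$, moves the marker, appends $(p,\sigma)$ to the history by a round trip to its end, and returns to the marker, with the quadratic time bound from summing the $O(t)$ round trips and obliviousness inherited from that of $M$. The only difference is one of presentation: the paper writes out the explicit transition tables for the marker-chasing states, whereas you describe them at the level of sweeps and correctly identify the injectivity check on those auxiliary states as the remaining (routine but necessary) verification.
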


    \begin{proof}
    \textbf{Step 1}. At the very beginning, we construct a RTM $M_1$ that writes an end marker $@$ on the second track and end marker $\$$ on the third track, and then comes back to the initial position with state $q_0$, by including these instructions:
        \[
        \begin{matrix}
            q_a, & (\forall_1, \#_2, \#_3) & \to & (\forall_1, @, \#), & q_b, & R \\
            q_b, & (\forall_1, \#_2, \#_3) & \to & (\forall_1, \#, \$), & q_c, & R \\
            q_c, & (\forall_1, \forall_2, \forall_3) & \to & (\forall_1, \forall_2, \forall_3), & q_d, & L \\
            q_d, & (\forall_1, \forall_2, \forall_3) & \to & (\forall_1, \forall_2, \forall_3), & q_e, & L \\
            q_e, & (\forall_1, \forall_2, \forall_3) & \to & (\forall_1, \forall_2, \forall_3), & q_g, & L \\
            q_g, & (\forall_1, \forall_2, \forall_3) & \to & (\forall_1, \forall_2, \forall_3), & q_0, & R \\
        \end{matrix}
        \]
        It is easy to see that
        \[
            \ket{q_a, \mathcal{T};\epsilon;\epsilon, 0} \xrightarrow[T_1]{M_1} \ket{q_0, \mathcal{T};@;{\#\$}, 0},
        \]
        where $T_1 = 6$.

        \textbf{Step 2}. We construct RTM $M_2$ as follows. For $p \in Q \setminus \{q_f\}$ and $\sigma \in \Sigma$ with transition $\delta(p, \sigma) = (\tau, q, d)$ in $M$, we make transitions to go from $p$ to $q$ updating the first track, i.e. the simulated tape of $M$, and adding $(p, \sigma)$ to the end of the history. Include these instructions:
        \[
        \begin{matrix}
            p, & (\sigma, @, \forall_3) & \to & (\tau, \#_2, \forall_3), & (q, p, \sigma, 1), & d \\
            (q, p, \sigma, 1), & (\forall_1, \#_2, \$) & \to & (\forall_1, @, \$), & (q, p, \sigma, 3), & R \\
            (q, p, \sigma, 1), & (\forall_1, \#_2, \forall_3') & \to & (\forall_1, @, \forall_3'), & (q, p, \sigma, 3), & R \\
            (q, p, \sigma, 1), & (\forall_1, \#_2, \#_3) & \to & (\forall_1, @, \#_3), & (q, p, \sigma, 2), & R \\
            (q, p, \sigma, 2), & (\forall_1, \#_2, \#_3) & \to & (\forall_1, \#_2, \#_3), & (q, p, \sigma, 2), & R \\
            (q, p, \sigma, 2), & (\forall_1, \#_2, \$) & \to & (\forall_1, \#_2, \$), & (q, p, \sigma, 3), & R \\
            (q, p, \sigma, 3), & (\forall_1, \#_2, \forall_3') & \to & (\forall_1, \#_2, \forall_3'), & (q, p, \sigma, 3), & R \\
            (q, p, \sigma, 3), & (\forall_1, \#_2, \#_3) & \to & (\forall_1, \#_2, (p, \sigma)), & (q, 4), & R \\
        \end{matrix}
        \]
        When $(q, 4)$ is reached, the tape head is on the first blank after the end of the history (on the third track). Now we move the tape head back to the position of tape head of $M$ by including these instructions:
        \[
        \begin{matrix}
            (q, 4), & (\forall_1, \#_2, \#_3) & \to & (\forall_1, \#_2, \#_3), & (q, 5), & L \\
            (q, 5), & (\forall_1, \#_2, \forall_3') & \to & (\forall_1, \#_2, \forall_3'), & (q, 5), & L \\
            (q, 5), & (\forall_1, \#_2, \$) & \to & (\forall_1, \#_2, \$), & (q, 6), & L \\
            (q, 5), & (\forall_1, @, \forall_3') & \to & (\forall_1, @, \forall_3'), & (q, 7), & L \\
            (q, 5), & (\forall_1, @, \$) & \to & (\forall_1, @, \$), & (q, 7), & L \\
            (q, 6), & (\forall_1, \#_2, \#_3) & \to & (\forall_1, \#_2, \#_3), & (q, 6), & L \\
            (q, 6), & (\forall_1, @, \#_3) & \to & (\forall_1, @, \#_3), & (q, 7), & L \\
            (q, 7), & (\forall_1, \forall_2, \forall_3) & \to & (\forall_1, \forall_2, \forall_3), & q, & R \\
        \end{matrix}
        \]
        It is easy to see that if $\ket{q_0, \mathcal{T}, 0} \xrightarrow[T]{M} \ket{q_f, \mathcal{T}', 0}$, then
        \[
            \ket{q_0, \mathcal{T};@;{\#\$}, 0} \xrightarrow[T_2]{M_2} \ket{q_f, \mathcal{T}';@;\mathcal{T}^h, 0},
        \]
        where
        \[
            T_2 = \sum_{t=0}^{T-1} \left( 2\left[(t+3)-\mathit{pos}(\mathcal{T}, t)\right] + \begin{cases} 5 & d(\mathcal{T}, t)=L \\ 1 & d(\mathcal{T}, t)=R \end{cases} \right) = O(T^2),
        \]
        $\mathit{pos}(\mathcal{T}, t)$ and $d(\mathcal{T}, t)$ denote the head position and the chosen direction at time $t$ in the execution of $\mathcal{M}$ starting from tape content $\mathcal{T}$, respectively. We note that $d(\mathcal{T}, t) = \mathit{pos}(\mathcal{T}, t+1)-\mathit{pos}(\mathcal{T}, t)$.

        \textbf{Conclusion}. The RTM $M'$ is obtained by dovetailing the two RTMs $M_1$ and $M_2$ by Lemma \ref{prop-dovetail}, which immediately yields
        \[
            \mathcal{T} \xrightarrow[T]{M} \mathcal{T}' \Longrightarrow \mathcal{T};\epsilon;\epsilon \xrightarrow[T']{M'} \mathcal{T}';@;\mathcal{T}^{h},
        \]
        where $T' = T_1+T_2 = O(T^2)$. Moreover, if $M$ is oblivious, for every input $x \in \{0,1\}^*$, i.e. the initial tape content is $\mathcal{T}_x$, the running time and the head position of $M$ can be denoted by $T = T(\abs{x})$ and $\mathit{pos}(\mathcal{T}, t) = \mathit{pos}(\abs{x}, t)$, respectively. It can be seen that the constructed $M'$ is also oblivious by noticing that
        \begin{enumerate}
          \item During the simulation for time $t$ of $M$, the head position starts at $\mathit{pos}(\abs{x}, t)$ and goes right to $t+3$ and back. The head position of $M'$ during the whole execution of this part of simulation only depends on $\mathit{pos}(\abs{x}, t)$.
          \item $T_2$ depends only on $\mathit{pos}(\abs{x}, t)$ because $\mathit{pos}(\mathcal{T}, t) = \mathit{pos}(\abs{x}, t)$ and $d(\mathcal{T}, t) = d(\abs{x}, t) = \mathit{pos}(\abs{x}, t+1)-\mathit{pos}(\abs{x}, t)$. Therefore, the running time $T' = T_1+T_2$ of $M'$ depends only on $\abs{x}$.
        \end{enumerate}
    \end{proof}

    \begin{lemma} \label{prop-RTM}
        Let $M$ be a stationary deterministic TM such that for every input $x \in \{0, 1\}^*$,
        \[
            x \xrightarrow[T]{M} {M(x)}.
        \]
        There is a stationary reversible TM $M'$ such that
        \[
            x;\epsilon \xrightarrow[T']{M'} x;{M(x)}\ {\rm
        and}\
            x;{M(x)} \xrightarrow[T']{M'} x;{\epsilon},
        \]
        where $T' = O(T^2)$. Moreover, if $M$ is oblivious and the length of $M(x)$ only depends on $\abs{x}$, then $M'$ is oblivious.
    \end{lemma}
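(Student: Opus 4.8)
The plan is to use Bennett's reversibilisation trick, realised as an involution that XORs $M(x)$ onto the output track, so that a single machine $M'$ performs both the forward map $x;\epsilon \mapsto x;M(x)$ and its inverse $x;M(x)\mapsto x;\epsilon$. Concretely, I would work with four tracks: a working track (track $1$), the output track (track $2$), and two scratch tracks (tracks $3,4$) for the history bookkeeping of Lemma \ref{prop-history}. The machine $M'$ is the sequential composition of three stationary reversible phases, each of which returns the head to the starting position so that they may be dovetailed.

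In Phase A, apply the stationary reversible TM $M_1$ provided by Lemma \ref{prop-history} to tracks $1,3,4$: since $x \xrightarrow[T]{M} M(x)$, this turns the working track from $x$ into $M(x)$ while writing the history $x^h$ (and the marker $@$) onto the scratch tracks, and it never touches the output track. In Phase B, apply the reversible copy machine of Lemma \ref{prop-copy} with track $1$ as source and track $2$ as destination; because that machine realises the involution $x;\epsilon\leftrightarrow x;x$ (i.e.\ it XORs the source into the destination), the output track is replaced by $y\oplus M(x)$, where $y$ is its current content. In Phase C, run the reverse $M_1^{-1}$ of the machine from Phase A on tracks $1,3,4$; this restores the working track to $x$ and erases the scratch tracks. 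The net effect on (track $1$; track $2$) is $x;y \mapsto x;\,y\oplus M(x)$. Taking $y=\epsilon$ gives $x;\epsilon \mapsto x;M(x)$, and taking $y=M(x)$ gives $x;M(x)\mapsto x;\epsilon$, so the same $M'$ performs both maps. Reversibility and stationarity follow because each phase is a stationary reversible TM and these are closed under composition, while $M_1^{-1}$ is obtained by inverting the transitions of $M_1$.

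For the timing, Phase A costs $O(T^2)$ by Lemma \ref{prop-history}; Phase B costs $2|M(x)|+4 = O(T)$ by Lemma \ref{prop-copy}, using that the head of $M$ visits at most $O(T)$ cells in $T$ steps, so $|M(x)| = O(T)$; and Phase C again costs $O(T^2)$, since reversing a reversible TM preserves its running time. Hence $T' = O(T^2)$. I expect the obliviousness claim to be the main subtlety rather than the timing: each phase is built from oblivious machines (Lemma \ref{prop-history} preserves obliviousness, and Lemma \ref{prop-copy} is oblivious), but for the composite to be oblivious I must check that the head trajectory is identical in the forward and reverse modes. This is exactly where the hypothesis that $|M(x)|$ depends only on $|x|$ is used: in both modes Phase A starts from track $1$ equal to $x$, and the copy phase operates on a source track of length $|M(x)|$ and a destination track whose relevant length is also $|M(x)|$ (empty versus $M(x)$), so the oblivious copy machine follows the same length-determined trajectory and takes the same time $2|M(x)|+4$ regardless of the contents. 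With the lengths of all tracks at every phase boundary fixed by $|x|$, the trajectories of the oblivious sub-machines are determined by $|x|$ alone, and therefore $M'$ is oblivious.
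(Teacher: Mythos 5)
Your proposal is correct and follows essentially the same route as the paper: compute with history via Lemma \ref{prop-history}, apply the involutive copy machine of Lemma \ref{prop-copy} to the output track, then uncompute via the reversal of the history machine, dovetailing the three stationary reversible phases. The only cosmetic difference is your framing of the copy step as a general XOR $y \mapsto y \oplus M(x)$ --- Lemma \ref{prop-copy} only guarantees the two cases $y=\epsilon$ and $y=M(x)$, but those are exactly the two cases needed, so the argument goes through unchanged.
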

    \begin{proof}
    Let $M_h$ be the constructed RTM corresponding to $M$ in Lemma \ref{prop-history} and $M_h^{-1}$ be its reversal by Lemma \ref{prop-reversal}, and $M_{c}$ be the constructed RTM in Lemma \ref{prop-copy}. Then $M'$ is constructed by dovetailing $M_h[1,2,3]$, $M_c[1,4]$ and $M_h^{-1}[1,2,3]$ by Lemma \ref{prop-dovetail}.
        We could verify that:
        \begin{align*}
            x;\epsilon;\epsilon;\epsilon
            & \xrightarrow[T_h]{M_h[1,2,3]} {M(x)};@;\mathcal{T}^{h};\epsilon \\
            & \xrightarrow[T_c]{M_c[1,4]} {M(x)};@;\mathcal{T}^{h};{M(x)} \\
            & \xrightarrow[T_h+2]{M_h^{-1}[1,2,3]} x;\epsilon;\epsilon;{M(x)} \\
        \end{align*}
        and
        \begin{align*}
            x;\epsilon;\epsilon;{M(x)}
            & \xrightarrow[T_h]{M_h[1,2,3]} {M(x)};@;\mathcal{T}^{h};{M(x)} \\
            & \xrightarrow[T_c]{M_c[1,4]} {M(x)};@;\mathcal{T}^{h};\epsilon \\
            & \xrightarrow[T_h+2]{M_h^{-1}[1,2,3]} x;\epsilon;\epsilon;\epsilon \\
        \end{align*}
        with running time $T' = T_h+T_c+T_h+2 = O(T^2)$.
    \end{proof}

    \begin{lemma} \label{prop-unique}
        Let $M_1, M_2$ be two stationary deterministic TMs such that for every $x \in \{0, 1\}^*$,
        \[
            x \xrightarrow[T_1]{M_1} {M_1(x)} \
            {\rm and}\
            {M_1(x)} \xrightarrow[T_2]{M_2} x.
        \]
        There  are two stationary reversible TMs $N_1$ and $N_2$ such that
        \[
            x \xrightarrow[T']{N_1} {M_1(x)}\
        {\rm and}\
            {M_1(x)} \xrightarrow[T']{N_2} x,
        \]
        where $T' = O(T_1^2+T_2^2)$. Moreover, if $M_1$ and $M_2$ are oblivious and the length of $M_1(x)$ only depends on $\abs{x}$, then $N_1$ and $N_2$ are oblivious.
    \end{lemma}
    \begin{proof}
    Let $M_1'$ and $M_2'$ be the RTMs constructed by Lemma \ref{prop-RTM} and $M_\mathit{swap}$ be the RTM in Lemma \ref{prop-swap}. $N_1$ is constructed by dovetailing $M_1'[1,2]$, $M_\mathit{swap}[1,2]$ and $M_2'[1,2]$. Verify that
        \begin{align*}
            x;\epsilon \xrightarrow[T_1']{M_1'[1,2]} x;{M_1(x)} \xrightarrow[T_\mathit{swap}]{M_\mathit{swap}[1,2]} {M_1(x)};x \xrightarrow[T_2']{M_2'[1,2]} {M_1(x)};\epsilon,
        \end{align*}
        where $T_1' = O(T_1^2), T_2' = O(T_2^2)$ and $T_\mathit{swap} = O(\abs{x}+\abs{M(x)})$.
        $N_2$ is constructed by dovetailing $M_2'[1,2]$, $M_\mathit{swap}[1,2]$ and $M_1'[1,2]$. Verify that
        \begin{align*}
            {M_1(x)};\epsilon \xrightarrow[T_2']{M_2'[1,2]} {M_1(x)};x \xrightarrow[T_\mathit{swap}]{M_\mathit{swap}[1,2]} x;{M_1(x)} \xrightarrow[T_1']{M_1'[1,2]} x;\epsilon,
        \end{align*}
    \end{proof}

    Now we are ready to prove Lemma \ref{prop-inc} and Lemma \ref{prop-eq}.

    \subsection*{Proof of Lemma \ref{prop-inc}}
        The proof is immediately shown by giving two oblivious DTMs using Lemma \ref{prop-unique}.

        Below is an oblivious TM $M_+$:
        \[
        \begin{matrix}
            q_0, & \forall & \to & \forall, & q_1, & L \\
            q_1, & \# & \to & \#, & q_2, & R \\
            q_2, & x & \to & x, & q_2, & R \\
            q_2, & \# & \to & \#, & (q_3, 1), & L \\
            (q_3, 0), & 0 & \to & 0, & (q_3, 0), & L \\
            (q_3, 0), & 1 & \to & 1, & (q_3, 0), & L \\
            (q_3, 0), & \# & \to & \#, & q_f, & R \\
            (q_3, 1), & 0 & \to & 1, & (q_3, 0), & L \\
            (q_3, 1), & 1 & \to & 0, & (q_3, 1), & L \\
            (q_3, 1), & \# & \to & \#, & q_f, & R
        \end{matrix}
        \]
        It can be verified that $M_+$ increments $x$ by $1$ and has running time $2\abs{x}+4 = O(\abs{x})$.

        Below is an oblivious TM $M_-$:
        \[
        \begin{matrix}
            q_0, & \forall & \to & \forall, & q_1, & L \\
            q_1, & \# & \to & \#, & q_2, & R \\
            q_2, & x & \to & x, & q_2, & R \\
            q_2, & \# & \to & \#, & (q_3, 1), & L \\
            (q_3, 0), & 0 & \to & 0, & (q_3, 0), & L \\
            (q_3, 0), & 1 & \to & 1, & (q_3, 0), & L \\
            (q_3, 0), & \# & \to & \#, & q_f, & R \\
            (q_3, 1), & 0 & \to & 1, & (q_3, 1), & L \\
            (q_3, 1), & 1 & \to & 0, & (q_3, 0), & L \\
            (q_3, 1), & \# & \to & \#, & q_f, & R
        \end{matrix}
        \]
        It can be verified that $M_-$ decrements $x$ by $1$ and has running time $2\abs{x}+4 = O(\abs{x})$.

    \subsection*{Proof of Lemma \ref{prop-eq}}
        The proof is immediately shown by giving an oblivious DTM using Lemma \ref{prop-unique}. It is noted that the given DTM itself is the reversal of it.

        The an oblivious TM $M_=$ is as below:
        \[
        \begin{matrix}
            q_0, & (\forall_1, \forall_2, \forall_3) & \to & (\forall_1, \forall_2, \forall_3), & q_1, & L \\
            q_1, & (\forall_1, \forall_2, \forall_3) & \to & (\forall_1, \forall_2, \forall_3), & q_2, & R \\
            q_2, & (x_1, x_2, \forall_3) & \to & (x_1, x_2, \forall_3), & q_2, & R \\
            q_2, & (\#_1, \#_2, \#_3) & \to & (\#_1, \#_2, \#_3), & (q_3, 0), & L \\
            (q_3, 0), & (x, x, \forall_3) & \to & (x, x, \forall_3), & (q_3, 0), & L \\
            (q_3, 0), & (0, 1, \forall_3) & \to & (0, 1, \forall_3), & (q_3, 1), & L \\
            (q_3, 0), & (1, 0, \forall_3) & \to & (1, 0, \forall_3), & (q_3, 1), & L \\
            (q_3, 0), & (\#_1, \#_2, \#_3) & \to & (\#_1, \#_2, \#_3), & (q_4, 0), & R \\
            (q_3, 1), & (x_1, x_2, \forall_3) & \to & (x_1, x_2, \forall_3), & (q_3, 1), & L \\
            (q_3, 1), & (\#_1, \#_2, \#_3) & \to & (\#_1, \#_2, \#_3), & (q_4, 1), & R \\
            (q_4, 0), & (x_1, x_2, 0) & \to & (x_1, x_2, 0), & q_5, & L \\
            (q_4, 0), & (x_1, x_2, 1) & \to & (x_1, x_2, 1), & q_5, & L \\
            (q_4, 1), & (x_1, x_2, 0) & \to & (x_1, x_2, 1), & q_5, & L \\
            (q_4, 1), & (x_1, x_2, 1) & \to & (x_1, x_2, 0), & q_5, & L \\
            q_5, & (\forall_1, \forall_2, \forall_3) & \to & (\forall_1, \forall_2, \forall_3), & q_f, & R
        \end{matrix}
        \]
        It can be verified that $M_=$ checks whether $x$ and $y$ are equal and has running time $2\abs{x}+6 = O(\abs{x})$. Moreover, we have that $M_=(M_=(x;y;z)) = x;y;z$.

    \subsection*{}
        Finally, we give a tape shifting lemma in order to prove Lemma \ref{prop-shift}.

    \begin{lemma} [Tape Shifting] \label{prop-copy2}
        There is a stationary reversible TM $M$ that copies the first track to the second track if the content of the first track is shifted left or right by one step. Formally, for every $x \in \{0,1\}^+$,
        \[
            \operatorname{shl} x; \epsilon \xrightarrow[T]{M} \operatorname{shl} x; \operatorname{shl} x\ {\rm
        and}\
            \operatorname{shr} x; \epsilon \xrightarrow[T]{M} \operatorname{shl} x; \operatorname{shr} x,
        \]
        where $T = 2\abs{x}+8$.
    \end{lemma}

    \begin{proof}
        Below is the construction. We use $\sigma$ to denote any symbol other than $\#$, $d$ to denote both directions $L$ and $R$ and $\bar d$ to denote the reverse direction of $d$.
        \[
        \begin{matrix}
            q_0, & (\sigma, \#) & \to & (\sigma, \#), & (q_L, 1), & L \\
            q_0, & (\#, \#) & \to & (\#, \#), & (q_R, 1), & R \\
            (q_d, 1), & (\sigma, \#) & \to & (\sigma, \#), & (q_d, 2), & L \\
            (q_d, 2), & (\#, \#) & \to & (\#, \#), & (q_d, 3), & R \\
            (q_d, 3), & (\sigma, \#) & \to & (\sigma, \sigma), & (q_d, 3), & R \\
            (q_d, 3), & (\#, \#) & \to & (\#, \#), & (q_d, 4), & L \\
            (q_d, 4), & (\sigma, \sigma) & \to & (\sigma, \sigma), & (q_d, 4), & L \\
            (q_d, 4), & (\#, \#) & \to & (\#, \#), & (q_d, 5), & R \\
            (q_d, 5), & (\sigma, \sigma) & \to & (\sigma, \sigma), & (q_d, 6), & \bar d \\
            (q_L, 6), & (\sigma, \sigma) & \to & (\sigma, \sigma), & q_7, & L \\
            (q_R, 6), & (\#, \#) & \to & (\#, \#), & q_7, & L \\
            q_7, & (\forall, \forall) & \to & (\forall, \forall), & q_f, & R \\
        \end{matrix}
        \]
    \end{proof}

        Now we are ready to prove Lemma \ref{prop-shift}.

    \subsection*{Proof of Lemma \ref{prop-shift}}\label{prop-shift-proof}
        Below is an oblivious TM $M_r$:
        \[
        \begin{matrix}
            q_0, & \forall & \to & \forall, & q_1, & L \\
            q_1, & \# & \to & \#, & q_2, & R \\
            q_2, & x & \to & x, & q_2, & R \\
            q_2, & \# & \to & \#, & q_3, & L \\
            q_3, & x & \to & x, & (q_4, x), & R \\
            q_3, & \# & \to & \#, & q_6, & R \\
            (q_4, x), & \forall & \to & x, & q_5, & L \\
            q_5, & \forall & \to & \forall, & q_3, & L \\
            q_6, & \forall & \to & \#, & q_7, & L \\
            q_7, & \# & \to & \#, & q_f, & R \\
        \end{matrix}
        \]
        $\forall$ denotes any symbol in $\Sigma$ while $x$ denotes any symbol in $\Sigma$ other than $\#$. It can be verified that $M_r$ shifts the tape right by a cell and has running time $4\abs{x}+6 = O(\abs{x})$.

        Below is an oblivious TM $M_l$:
        \[
        \begin{matrix}
            q_0, & \forall & \to & \forall, & q_1, & L \\
            q_1, & \# & \to & \#, & q_2, & R \\
            q_2, & x & \to & x, & (q_3, x), & L \\
            q_2, & \# & \to & \#, & q_5, & L \\
            (q_3, x) & \forall & \to & x, & q_4, & R \\
            q_4 & \forall & \to & \forall, & q_2, & R \\
            q_5, & \forall & \to & \#, & q_6, & L \\
            q_6, & x & \to & x, & q_6, & L \\
            q_6, & \# & \to & \#, & q_7, & R \\
            q_7, & \forall & \to & \forall, & q_f, & R \\
        \end{matrix}
        \]
        It can be verified that $M_r$ shifts the tape left by a cell and has running time $4\abs{x}+6 = O(\abs{x})$.

        Let $M_l'$ and $M_r'$ be the RTMs constructed by Lemma \ref{prop-history} corresponding to $M_l$ and $M_r$, respectively. Let $M_c$ be the RTM in Lemma \ref{prop-copy2}. It is noted that for every $x \in \{0,1\}^+$,
        \[
            x;\epsilon;\epsilon;\epsilon \xrightarrow[T_l]{M_l'[1,2,3]} \operatorname{shl}x;@;\mathcal{T}^{hl};\epsilon \xrightarrow[T_c]{M_c[1,4]} \operatorname{shl}x;@;\mathcal{T}^{hl};\operatorname{shl}x
        \]
        and
        \[
            x;\epsilon;\epsilon;\epsilon \xrightarrow[T_r]{M_r'[1,2,3]} \operatorname{shr}x;@;\mathcal{T}^{hr};\epsilon \xrightarrow[T_c]{M_c[1,4]} \operatorname{shr}x;@;\mathcal{T}^{hr};\operatorname{shr}x.
        \]
        Moreover, it can be verified that $T_l = T_r = O(\abs{x}^2)$.

        We note that
        \[
        \operatorname{shl}x;@;\mathcal{T}^{hl};\operatorname{shl}x \xrightarrow[T_l+2]{M_l'^{-1}[1,2,3]} x;\epsilon;\epsilon;\operatorname{shl}x,
        \]
        \[
        \operatorname{shl}x;@;\mathcal{T}^{hl};\operatorname{shl}x \xrightarrow[T_l+2]{M_l'^{-1}[4,2,3]} \operatorname{shl}x;\epsilon;\epsilon;x,
        \]
        \[
        \operatorname{shr}x;@;\mathcal{T}^{hr};\operatorname{shr}x \xrightarrow[T_r+2]{M_r'^{-1}[1,2,3]} x;\epsilon;\epsilon;\operatorname{shr}x,
        \]
        \[
        \operatorname{shr}x;@;\mathcal{T}^{hr};\operatorname{shr}x \xrightarrow[T_r+2]{M_l'^{-1}[4,2,3]} \operatorname{shr}x;\epsilon;\epsilon;x.
        \]
        According to these four cases, we can obtain four RTMs as follows:
        \[
        x;\epsilon \xrightarrow[T_l^1]{M_l^1} \operatorname{shl}x;x,
        \]
        \[
        x;\epsilon \xrightarrow[T_l^2]{M_l^2} x;\operatorname{shl}x,
        \]
        \[
        x;\epsilon \xrightarrow[T_r^1]{M_r^1} \operatorname{shr}x;x,
        \]
        \[
        x;\epsilon \xrightarrow[T_r^2]{M_r^2} x;\operatorname{shr}x
        \]
        with $T_l^1 = T_l^2 = T_r^1 = T_r^2 = O(\abs{x}^2)$.

        We use $M_L$ and $M_R$ to denote the RTM that moves the tape head left and right, respectively, without modifying anything. Formally,
        \[
            \ket{\mathcal{T}, \xi} \xrightarrow[3]{M_L} \ket{\mathcal{T}, \xi-1}
        \]
        and
        \[
            \ket{\mathcal{T}, \xi} \xrightarrow[3]{M_R} \ket{\mathcal{T}, \xi+1}.
        \]
        The running time is $3$ because $M_L$ and $M_R$ should be in normal form, and we achieve this by making $M_L$ go left, left and right and making $M_R$ go right, left, right, both of which need three steps.
        The construction of $M_L$ and $M_R$ is trivial. Now we are able to build two RTMs that just shift left or right the whole tape. Note that
        \[
            \ket{x;\epsilon, 0} \xrightarrow[T_l^1]{M_l^1} \ket{\operatorname{shl}x;x, 0} \xrightarrow[3]{M_L} \ket{\operatorname{shl}x;x, -1} \xrightarrow[T_r^2]{M_r^2} \ket{\operatorname{shl}x;\epsilon, -1} \xrightarrow[3]{M_R} \ket{\operatorname{shl}x;\epsilon, 0},
        \]
        and
        \[
            \ket{x;\epsilon, 0} \xrightarrow[T_r^1]{M_r^1} \ket{\operatorname{shr}x;x, 0} \xrightarrow[3]{M_R} \ket{\operatorname{shr}x;x, 1} \xrightarrow[T_l^2]{M_l^2} \ket{\operatorname{shr}x;\epsilon, 1} \xrightarrow[3]{M_L} \ket{\operatorname{shr}x;\epsilon, 0}.
        \]
    
\end{document}